\documentclass[11pt,a4paper]{article}
\pdfoutput=1
\usepackage{jheppub}
\usepackage{amssymb,amsfonts,amsmath,amsthm}
\usepackage{enumerate}
\usepackage{tikz}
\usepackage{subcaption}
\usepackage{epsfig}
\usepackage{titlesec}
\titlelabel{\thetitle.\!\!\quad}
\let \savenumberline \numberline
\def \numberline#1{\savenumberline{#1.}}
\makeatletter
\def\@fpheader{\relax}
\makeatother

\newcommand{\R}{{\bf R}}

\newcommand{\CF}{{\cal F}}

\newcommand{\CJ}{{\cal J}}

\newcommand{\CN}{{\cal N}}
\newcommand{\CO}{{\cal O}}
\newcommand{\CP}{{\cal P}}
\newcommand{\CT}{{\cal T}}

\newcommand{\bk}{{\bf k}}
\newcommand{\bx}{{\bf x}}

\newcommand{\p}{\partial}

\renewcommand{\tilde}[1]{\widetilde{#1}}
\newcommand{\be}{\begin{equation}}
\newcommand{\ee}{\end{equation}}
\newcommand{\bea}{\begin{eqnarray}}
\newcommand{\eea}{\end{eqnarray}}

\newcommand{\diff}{{\rm Diff}}
\newcommand{\muir}{{\mu^{\ }_{\rm IR}}}
\usepackage{graphicx}
\usepackage{latexsym}

\newtheorem{Thm}{Theorem}
\newtheorem{Cor}{Corollary}
\newtheorem{Def}{Definition}
\newtheorem{Lem}{Lemma}
\newtheorem{Prop}{Proposition}
\newcommand{\nd}{n}
\newcommand{\nf}{F}
\newcommand{\nv}{N}
\newcommand{\nbv}{N(\bullet)}
\newcommand{\nbsv}{N^{\bullet} (\star)}
\newcommand{\nsv}{N(\star)}
\newcommand{\ntv}{N( \times )}
\newcommand{\pd}{P}
\newcommand{\sdim}{D}
\newcommand{\red}{black}
\newcommand{\blue}{black}
\newcommand{\Tred}{T_\text{A}}
\newcommand{\Tblue}{T_\text{B}}
\newcommand{\1}{0.8}
\newcommand{\mini}{1.4}
\newcommand{\nts}{Q}
\title{\ \vspace{1.6cm} \\ Scalar \hspace{-0.05cm}Field \hspace{-0.05cm}Theories \hspace{-0.05cm}with \hspace{-0.05cm}Polynomial \hspace{-0.05cm}Shift \hspace{-0.05cm}Symmetries}
\author{Tom Griffin${}^a$, Kevin T. Grosvenor${}^{b,c}$, 
Petr Ho\v{r}ava${}^{b,c}$ and Ziqi Yan${}^{b,c}$} 
\affiliation{${}^a$Blackett Laboratory, Department of Physics\\
Imperial College, London, SW7 2AZ, UK\medskip\\ 
${}^b$Berkeley Center for Theoretical Physics and Department of Physics\\
University of California, Berkeley, CA, 94720-7300, USA\medskip\\ 
${}^c$Theoretical Physics Group, Lawrence Berkeley National Laboratory\\
Berkeley, CA 94720-8162, USA}
\abstract{We continue our study of naturalness in nonrelativistic QFTs of the 
Lifshitz type, focusing on scalar fields that can play the role of 
Nambu-Goldstone (NG) modes associated with spontaneous symmetry breaking.  Such 
systems allow for an extension of the constant shift symmetry to a shift by a 
polynomial of degree $\pd$ in spatial coordinates.  These ``polynomial shift 
symmetries'' in turn protect the technical naturalness of modes with 
a higher-order dispersion relation, and lead to a refinement of the proposed 
classification of infrared Gaussian fixed points available to describe 
NG modes in nonrelativistic theories.  Generic interactions in such theories 
break the polynomial shift symmetry explicitly to the constant shift.  It is 
thus natural to ask:  Given a Gaussian fixed point with polynomial shift 
symmetry of degree $\pd$, what are the lowest-dimension operators that preserve 
this symmetry, and deform the theory into a self-interacting scalar field 
theory with the shift symmetry of degree $\pd$?  To answer this (essentially 
cohomological) question, we develop a new graph-theoretical technique, and use 
it to prove several classification theorems.  First, in the special case of 
$\pd =1$ (essentially equivalent to Galileons), we reproduce the known Galileon 
$\nv$-point invariants, and find their novel interpretation in terms of graph 
theory, as an equal-weight sum over all labeled trees with $\nv$ 
vertices.  Then we extend the classification to $\pd >1$ and find a whole host 
of new invariants, including those that represent the most relevant 
(or least irrelevant) deformations of the corresponding Gaussian fixed points, 
and we study their uniqueness.}
\begin{document}
\maketitle
\section{Introduction: Landscapes of Naturalness}

Some of the most fundamental questions of modern theoretical physics can 
be formulated as puzzles of naturalness \cite{th}.  Why is the observed 
cosmological constant so small compared to the Planck scale?  Why is the 
observed Higgs mass so small compared to any high particle-physics scale 
(be it the quantum gravity scale, or the scale of grand unification, or some 
other scale of new physics)?  In both cases, the expected quantum corrections 
estimated in the framework of relativistic effective field theory (EFT) 
predict natural values 
at a much higher scale, many orders of magnitude larger than the observed 
ones.  The principle of naturalness is rooted in the time-honored physical 
principles of causality and the hierarchy of energy scales from the 
microscopic to the macroscopic.  It is conceivable that some puzzles of 
naturalness may only have environmental explanations, based on the landscape 
of many vacua in the multiverse.  However, before we give up naturalness as 
our guiding principle, it is important to investigate more systematically the 
``landscape of naturalness'':  To map out the various quantum systems and 
scenarios in which technical naturalness does hold, identifying possible 
surprises and new pieces of the puzzle that might help restore the power of 
naturalness in fundamental physics.

One area in which naturalness has not yet been fully explored is 
nonrelativistic gravity theory \cite{lif,mqc}.  This approach to quantum 
gravity 
has attracted a lot of attention in recent years, largely because of its 
improved quantum behavior at short distances, novel phenomenology at long 
distances \cite{shinji}, its connection to the nonperturbative Causal Dynamical 
Triangulations approach to quantum gravity \cite{ajlh,gnx,carlip}, as well as 
for its applications to holography and the AdS/CFT correspondence of 
nonrelativistic systems \cite{jkone,jktwo,lglh}.  This 
area of research in quantum gravity is still developing rapidly, with new 
surprises already encountered and other ones presumably still awaiting 
discovery.   Mapping out the quantum structure of nonrelativistic gravity 
theories, and in particular investigating the role of naturalness, represents 
an intriguing and largely outstanding challenge.  

Before embarking on a systematic study of the quantum properties of 
nonrelativistic gravity, one 
can probe some of the new conceptual features of quantum field theories (QFTs) 
with 
Lifshitz-type symmetries in simpler systems, without gauge symmetry, dynamical 
gravity and fluctuating spacetime geometry.  In \cite{msb}, we considered one 
of the ubiquitous themes of modern physics:  The phenomenon of spontaneous 
symmetry breaking, in the simplest case of global, continous internal 
symmetries.  According to Goldstone's theorem, spontaneous breaking of such 
symmetries implies 
the existence of a gapless Nambu-Goldstone (NG) mode in the system.  For 
Lorentz invariant systems, the relativistic version of Goldstone's theorem 
is stronger, and we know more:  There is a one-to-one correspondence between 
broken symmetry generators and the NG modes, whose gaplessness implies that 
they all share the same dispersion relation $\omega=ck$. On the other hand, 
nonrelativistic systems are phenomenologically known to exhibit a more complex 
pattern:  Sometimes, the number of NG modes is smaller than the number of 
broken symmetry generators, and sometimes they disperse quadratically instead 
of linearly.  This rich phenomenology opens up the question of a full 
classification of possible NG modes.  A natural and elegant approach to this 
problem has been pursued in \cite{mw}:  In order to classify NG modes, one 
classifies the low-energy EFTs available to control their 
dynamics.  

In the case of systems with nonrelativistic Lifshitz symmetries, this approach 
suggests a classification of NG modes into two categories \cite{mw}: Type A, 
and Type B NG modes.  Each Type A mode is associated with a single broken 
symmetry generator, and each Type B mode with a pair.  Upon closer inspection 
\cite{msb}, it turns out that even such simple examples of Lifshitz-type 
QFTs exhibit rich and surprising features, often contrasting or 
contradicting the intuition developed in relativistic QFTs.  Our analysis of 
naturalness in the patterns of spontaneous symmetry breaking in systems 
with Lifshitz symmetries has revealed a refined hierarchy of the Type A and B 
universality classes of NG dynamics, with rich low-energy phenomenology 
dominated by multicritical NG modes whose dispersion is of higher degree in 
momentum.  These results shed some new, and perhaps surprising, light 
on the concept of naturalness in nonrelativistic quantum field theory. 
However, as usual, the naturalness of the multicritical dispersion relation 
turns out to be protected by a symmetry.  This new kind of symmetry is 
generated by the shifts of the NG fields by a polynomial in spatial coordinates.\footnote{This symmetry is 
a natural generalization of two types of symmetries well-known in the 
literature: The famous constant shift symmetry observed in systems with 
relativistic NG modes, and the shift linear in the spacetime coordinates known 
from the relativistic theory of Galileons \cite{gal}.}

In this paper, we continue our investigation of scalar field theories with 
such polynomial shift symmetries of degree $\pd$.  
The paper is organized into several relatively self-contained blocks.  
In \S{\ref{sec: mng}}, we review the physics background and discuss the 
structure of multicritical symmetry breaking in nonrelativistic systems of 
the Lifshitz type, summarizing and expanding on the findings of 
\cite{msb,cmu}.  We discuss the Goldstone theorem in the nonrelativistic 
regime, and give the refined classification of Nambu-Goldstone modes for 
systems with Lifshitz symmetries  into two hierarchies of multicritical fixed 
points of Type A${}_\nd$ and B${}_{2 \nd}$, 
with $\nd =1,2,\ldots$.  We present the nonrelativistic analog of the 
Coleman-Hohenberg-Mermin-Wagner (CHMW) theorem, and discuss its implications 
for the dynamics of the multicritical NG modes.  Throughout, we stress 
the role played by the polynomial shift symmetry, as an approximate symmetry 
restored at the Gaussian infrared fixed points.  In many general examples of 
multicritical symmetry breaking, the polynomial shift symmetry is broken by 
the self-interactions of the NG modes.  It is then natural to ask:  What if 
we impose the polynomial shift symmetry as an exact symmetry?  What is the 
lowest-dimension operator that can be added to the action while preserving 
the symmetry?  This is the task we address in the remainder of the paper.  

The classification of Lagrangians invariant under the polynomial shift of 
degree $\pd$ up to a total derivative (which we will refer to as 
``$\pd$-invariants'' for short) is essentially a cohomological problem.  
In \S{\ref{sec: Galileons}}, we consider the polynomial-shift invariants 
in the simplest case of linear shifts (i.e., ``1-invariants'').  In order to 
prepare for the general case of $\pd >1$, we develop a novel technique, based 
on graph theory.  Having rephrased the defining relation for the invariants 
into the language of graphs, we can address the classification problem using 
the abstract mathematical machinery of graph theory.  The basic ingredients 
of this technique are explained as needed in \S{\ref{sec: Galileons}} and 
\S{\ref{sec: beyond}}.  However, we relegate all the technicalities of the 
graphical technique into a self-contained Appendix~\ref{AppB} (preceded by 
Appendix~\ref{graph theory}, in which we offer a glossary of the basic terms 
from graph theory).  Appendix~\ref{AppB} is rather mathematical in nature, as 
it contains a systematic exposition of all our definitions, theorems and 
proofs that we found useful in the process of generating the invariants 
discussed in the body of the paper.  The good news is that Appendix~B is not 
required for the understanding of the results presented in 
\S{\ref{sec: Galileons}} and \S{\ref{sec: beyond}}: Once the invariants have 
been found using the techniques in Appendix~\ref{AppB}, their actual 
invariance can be checked by explicit 
calculation (for example, on a computer).  In this sense, the bulk of the 
paper (\S{\ref{sec: Galileons}} and \S{\ref{sec: beyond}}) is also 
self-contained, and can be read independently of the Appendices.  

The $\nv$-point 1-invariants discussed in \S{\ref{sec: Galileons}} are known 
in the literature, where they have been generated in the closely related 
context of the relativistic Galileon theories \cite{gal}.  While it is 
reassuring to see that our graph-theoretical technique easily reproduces these 
known {1-invariants}, the novelty of our results presented in 
\S{\ref{sec: Galileons}} lies elsewhere:  We find 
a surprisingly simple and elegant interpretation of the known $\nv$-point 
1-invariants in the language of graphs.  They are simply given 
by the equal-weight sum over all labeled trees with $\nv$ vertices!  

In \S{\ref{sec: beyond}} we move beyond the 1-invariants, and initiate 
a systematic study of $\pd$-invariants with $\pd >1$, organized in the order 
of their scaling dimension.  We find several series of invariants; some of 
them we prove to be the unique and most relevant (or, more accurately, least 
irrelevant) $\nv$-point $\pd$-invariants, while others represent hierarchies 
of $\pd$-invariants of higher dimensions.  We also show how to construct 
higher $\pd$-invariants from superposing several graphs that represent 
invariants of lower $\pd$.  Appendix~\ref{AppC} contains a brief 
discussion of the connection between our invariant Lagrangians and the 
Chevalley-Eilenberg Lie algebra cohomology theory.  In 
\S{\ref{sec: conclusion}} we present our conclusions.

\section{M\hspace{-0.007cm}u\hspace{-0.007cm}l\hspace{-0.007cm}t\hspace{-0.007cm}i\hspace{-0.007cm}c\hspace{-0.007cm}r\hspace{-0.007cm}i\hspace{-0.007cm}t\hspace{-0.007cm}i\hspace{-0.007cm}c\hspace{-0.007cm}a\hspace{-0.007cm}l \hspace{-0.075cm}N\hspace{-0.007cm}a\hspace{-0.007cm}m\hspace{-0.007cm}b\hspace{-0.007cm}u\hspace{-0.007cm}-\hspace{-0.007cm}G\hspace{-0.007cm}o\hspace{-0.007cm}l\hspace{-0.007cm}d\hspace{-0.007cm}s\hspace{-0.007cm}t\hspace{-0.007cm}o\hspace{-0.007cm}n\hspace{-0.007cm}e \hspace{-0.075cm}B\hspace{-0.007cm}o\hspace{-0.007cm}s\hspace{-0.007cm}o\hspace{-0.007cm}n\hspace{-0.007cm}s \hspace{-0.075cm}a\hspace{-0.007cm}n\hspace{-0.007cm}d \hspace{-0.075cm}P\hspace{-0.007cm}o\hspace{-0.007cm}l\hspace{-0.007cm}y\hspace{-0.007cm}n\hspace{-0.007cm}o\hspace{-0.007cm}m\hspace{-0.007cm}i\hspace{-0.007cm}a\hspace{-0.007cm}l \hspace{-0.075cm}S\hspace{-0.007cm}h\hspace{-0.007cm}i\hspace{-0.007cm}f\hspace{-0.007cm}t \hspace{-0.075cm}S\hspace{-0.007cm}y\hspace{-0.007cm}m\hspace{-0.007cm}m\hspace{-0.007cm}e\hspace{-0.007cm}t\hspace{-0.007cm}r\hspace{-0.007cm}i\hspace{-0.007cm}e\hspace{-0.007cm}s} \label{sec: mng}

Some surprising features of naturalness in the regime of nonrelativistic field 
theories are illustrated by considering one of the classic problems in physics: 
The classification of NG modes associated with possible patterns 
of spontaneous breaking of continuous global internal symmetries.  In this 
section, we summarize and explain the results found in \cite{msb,cmu}, which 
lead to a refinement in the classification of NG modes in systems with 
Lifshitz symmetries, characterized by a multicritical behavior which is 
technically natural, and protected by a symmetry.  

\subsection{Geometry of the Spacetime with Lifshitz Symmetries} 
\label{subset: geometry}

For clarity and simplicity, as in \cite{msb}, in this paper we focus on 
systems on the flat spacetime with Lifshitz spacetime symmetries.  
We define this spacetime to be $M=\R^{\sdim+1}$ with a preferred foliation 
$\CF$ by fixed spatial slices $\R^\sdim$, and equipped with a flat metric.  Such a 
spacetime with the preferred foliation $\CF$ would for example appear as 
a ground-state solution of nonrelativistic gravity \cite{lif} whose gauge 
symmetry is given by the group of foliation-preserving spacetime 
diffeomorphisms,  $\diff(M,\CF)$ (or a nonrelativistic extension thereof 
\cite{genc}).  It is useful to 
parametrize $M$ by coordinates $(t,\bx=\{x^i,i=1,\ldots \sdim\})$, such that 
the leaves of $\CF$ are the leaves of constant $t$, and the metric has the 
canonical form 
\be
\label{camet}
g_{ij}(t,\bx)=\delta_{ij},\quad\CN(t,\bx)=1,\quad \CN_i(t,\bx)=0
\ee
(here $g_{ij}$ is the spatial metric on the leaves of $\CF$, $\CN$ is the lapse 
function, and $\CN_i$ the shift vector).  

The isometries of this spacetime are, by definition, those elements of 
$\diff(M,\CF)$ that preserve this flat metric \cite{aci}.  Explicitly, the 
connected component of this isometry group is generated by infinitesimal 
spatial rotations and spacetime translations, 
\be
\label{lgtr}
\delta t=b,\qquad\delta x^i=\omega^i_{\phantom{i}j} \, x^j+b^i,\qquad\omega_{ij}=-\omega_{ji}.
\ee
At fixed points of the renormalization group, systems with Lifshitz isometries 
develop an additional scaling symmetry, generated by 
\be
\label{scalez}
\delta x^i=\lambda x^i,\qquad \delta t=z\lambda t.
\ee
The dynamical critical exponent $z$ is an important observable associated with 
the fixed point, and characterizes the degree of anisotropy between space and 
time at the fixed point.      

The connected component of the group of isometries of our spacetime $M$ with 
the flat metric (\ref{camet}) is generated by (\ref{lgtr}), and we will refer 
to it as the ``Lifshitz symmetry'' group.%
\footnote{It would be natural to refer to $M$ with the flat metric 
(\ref{camet}) as the ``Lifshitz spacetime''.  Unfortunately, this term already 
has another widely accepted meaning in the holography literature, where it 
denotes the curved spacetime geometry in one dimension higher, whose 
isometries realize the Lifshitz symmetries (\ref{lgtr}) plus the Lifshitz 
scaling symmetry (\ref{scalez}) for some fixed value of $z$ \cite{kachru}.}
The full isometry group of this spacetime has four disconnected components, 
which can be obtained by combining the Lifshitz symmetry group generated by 
(\ref{lgtr}) with two discrete 
symmetries:  The time-reversal symmetry $\CT$, and a discrete symmetry $\CP$ 
that reverses the orientation of space.  In this paper, we shall be interested 
in systems that are invariant under the Lifshitz symmetry group.  Note that 
this mandatory Lifshitz symmetry does not contain either the discrete 
symmetries $\CT$ and $\CP$, or the anisotropic scaling symmetry (\ref{scalez}).

\subsection{Effective Field Theories of Type A and B Nambu-Goldstone Bosons}

We are interested in the patterns of spontaneous symmetry breaking of global 
continuous internal symmetries in the flat spacetime with the Lifshitz 
symmetries, as defined in the previous paragraph.  Our analysis gives an 
example of phenomena that are novel to Goldstone's theorem 
in nonrelativistic settings, and can in principle be generalized to 
nonrelativistic systems with even less symmetry.   

An elegant strategy has been proposed in \cite{mw}:  In order to classify 
Nambu-Goldstone modes, we can classify the corresponding EFTs available to 
describe their low-energy dynamics.  In this EFT approach, we organize the 
terms in the effective action by their increasing dimension.  Such 
dimensions are defined close enough to the infrared fixed point.  However, 
until we identify the infrared fixed point, we don't a priori know the value 
of the dynamical critical exponent, and hence the relative dimension of 
the time and space derivatives -- it is then natural to count the time 
derivatives and spatial derivatives separately.  Consider first the 
``potential terms'' 
in the action, i.e., terms with no time derivatives.  The general statement 
of Goldstone's theorem implies that non-derivative terms will be absent, and 
the spatial rotational symmetry further implies that (for $D > 1$) all derivatives will 
appear in pairs contracted with the flat spatial metric.  Hence, we can write 
the general ``potential term'' in the action as 
\be
\label{seffv}
S_{{\rm eff},\, V}=\int dt\,d\bx\left\{\frac{1}{2}g_{IJ}(\pi)\p_i\pi^I\p_i\pi^J
+\ldots\right\}
\ee
where $g_{IJ}(\pi)$ is the most general metric on the vacuum manifold which 
is compatible with all the global symmetries, and $\ldots$ stand for all 
the terms of higher order in spatial derivatives.  

If the system is also invariant under the primitive version $\CT$ of time 
reversal, defined as the transformation that acts trivially on fields, 
\be
\label{trev}
\CT:\left\{\begin{array}{ccc}
t&\to&-t,\\
\pi^I&\to&\pi^I,\\
\end{array}
\right.
\ee
the time derivatives will similarly have to appear in pairs, and the kinetic 
term will be given by
\be
\label{seffk}
S_{{\rm eff},\, K}=\int dt\,d\bx\left\{\frac{1}{2}h_{IJ}(\pi)\dot\pi^I\dot\pi^J
+\ldots\right\},
\ee
where again $h_{IJ}$ is a general metric on the vacuum manifold compatible 
with all symmetries, but not necessarily equal to the $g_{IJ}$ that appeared 
in (\ref{seffv}); and $\ldots$ are higher-derivative terms.  

However, invariance under $\CT$ is not mandated by the Lifshitz symmetry.  
If it is absent, the Lifshitz symmetries allow a new, more relevant kinetic 
term, 
\be
\label{somega}
\tilde S_{{\rm eff},\, K}=\int dt\,d\bx\left\{\Omega_I(\pi)\dot\pi^I
+\ldots\right\},
\ee
assuming one can define the suitable object $\Omega_I(\pi)$ on the vacuum 
manifold so that all the symmetry requirements are satisfied, and 
$\Omega_I(\pi)\dot\pi^I$ is not a total derivative.  Since $\Omega_I(\pi)$ 
plays the role of the canonical momentum conjugate to $\pi^I$, if such 
$\Omega$-terms are present in the action, they induce a natural canonical 
pairing on an even-dimensional subset of the coordinates on the vacuum 
manifold.

In specific dimensions, new terms in the effective action that are odd under 
spatial parity $\CP$ may exist.  For example, in $\sdim =2$ spatial dimensions, 
we can add new terms to the ``potential'' part of the action, of the form
\be
\tilde S_{{\rm eff},\, V}=\int dt\,d\bx\left\{\frac{1}{2} \, \Omega_{IJ}(\pi) \,
\varepsilon_{ij} \, \p_i\pi^I\p_j\pi^J+\ldots\right\},
\ee
where $\Omega_{IJ}$ is any two-form on the vacuum manifold that respects all 
the symmetries.%
\footnote{For example, if $\Omega_I(\pi)$ suitable for (\ref{somega}) exist, 
one can take $\Omega_{IJ}=\p_{[I}\Omega_{J]}$.}
In the interest of simplicity, we wish to forbid such terms, and will do so 
by imposing the $\CP$ invariance of the action, focusing on the symmetry 
breaking patterns that respect spatial parity.  This condition can of course 
be easily relaxed, without changing our conclusions significantly.  

This structure of low-energy effective theories suggests the following 
classification of NG modes, into two general types:

\begin{itemize}
\item
Type A: One NG mode per broken symmetry generator (not paired by 
$\Omega_I$).  The low-energy dispersion relation is linear, $\omega\propto 
k$.  
\item
Type B: One NG mode per each pair of broken symmetry generators (paired by 
$\Omega_I$).  The low-energy dispersion relation is quadratic,  $\omega\propto 
k^2$.  
\end{itemize}

\noindent In general, Type A and Type B NG modes may coexist in one system.  
Some examples from condensed matter theory can be found in \cite{mw}.

Based on the intuition developed in the context of relativistic quantum field 
theory, one might be tempted to conclude that everything else would be fine 
tuning, as quantum corrections would be likely to generate large terms 
of the form (\ref{seffv}) in the effective action if we attempted to tune 
them to zero.  

\subsection{Naturalness of Slow Nambu-Goldstone Modes}

Our careful study of a number of explicit examples revealed \cite{msb} that 
the naive intuition about fine-tuning summarized in the previous paragraph 
is incorrect.  It turns out that the leading spatial-derivative term in 
(\ref{seffv}) can be naturally small (or even zero), as we illustrated in 
\cite{msb} by explicit calculations of loop corrections in a series of 
examples.  The leading contribution to $S_{{\rm eff},\,V}$ then comes at fourth 
order in spatial derivatives; schematically,  
\be
S_{{\rm eff},\, V}=\int dt\,d\bx\left\{\frac{1}{2}g_{IJ}(\pi)\p^2\pi^I\p^2\pi^J
+\ldots\right\},
\ee
where the ``$\ldots$'' stand for all other terms of order four and higher 
in $\p_i$.  The dispersion relation of this NG mode at the Gaussian infrared 
fixed point is then $\omega\propto k^2$ (or $\omega\propto k^4$), if the 
kinetic term is of Type A as in (\ref{seffk}) (or of Type B as in 
(\ref{somega})).  The reason why this behavior does not require fine tuning 
is simple \cite{msb}:  As we approach this new Gaussian infrared fixed point, 
the theory develops a new enhanced symmetry.  Specifically, the symmetry that 
protects the Type A NG modes with the quadratic dispersion relation is a 
generalization of the constant shift symmetry of conventional NG modes:  
The generators of the new symmetry act by shifting each field component 
$\pi^I$ by a quadratic polynomial in spatial coordinates,
\be
\label{quadsh}
\delta\pi^I(t,\bx)=a^I_{ij}x^ix^j+a^I_ix^i+a^I_0.
\ee
The leading, quadratic part of this symmetry forbids the term (\ref{seffv}) 
allowing only terms of fourth order in $\p_i$ and higher to appear in the 
action in the free-field limit.  The subleading linear and constant terms 
have been included in (\ref{quadsh}) because they would be generated anyway by 
the action of spatial translations and rotations, which are a part of the 
assumed Lifshitz symmetry of the system.  Similarly, quadratic shift symmetries can
also be extended to the Gaussian limit of Type B NG modes. At the Gaussian fixed point,
$\Omega_I ( \pi )$ of \eqref{somega} reduces to a linear function of $\pi$, such that
$\tilde{S}_{\text{eff}, \, K}$ is invariant under the quadratic shift up to a total derivative,
and the extra shift symmetry yields Type B NG modes with a quartic dispersion
relation.

This construction can obviously be iterated.  The quadratic shift symmetry 
(\ref{quadsh}) can be promoted to a polynomial shift symmetry by polynomials 
of degree $\pd =2,3,\ldots$, leading to a natural protection of higher-order 
dispersion relations $\omega\propto k^\nd$ (or $\omega\propto k^{2 \nd}$) for 
Type A (or Type B) NG modes.  

\subsection{Polynomial Shift Symmetries}
\label{sec:pss}

Since the polynomial shift symmetries act on the fields $\pi^I(t,\bx)$ 
separately component by component, from now on we shall focus 
on just one field component, and rename it $\phi(t,\bx)$.  

The generators of the polynomial shift symmetry of degree $\pd$ act on 
$\phi$ by 
\be
\label{pssp}
\delta_\pd\phi=a_{i_1\ldots i_\pd}x^{i_1}\cdots x^{i_\pd}+\ldots + a_i x^i +a.
\ee
The multicritical Gaussian fixed point with dynamical exponent $z= \nd$ is 
described by
\be
\label{sn}
S_\nd =\int dt\,d\bx\,\left\{\frac{1}{2}\dot\phi^2-\frac{1}{2}\zeta^2_\nd
\left(\p_{i_1}\ldots\p_{i_\nd}\phi\right)\left(\p_{i_1}\ldots\p_{i_\nd}\phi\right)
\right\}.
\ee
In fact, it is a one-parameter family of fixed points, parametrized by 
the real positive coupling $\zeta^2_\nd$.  (Sometimes it is convenient to absorb 
$\zeta_\nd$ into the rescaling of space, and we will often do so when there is 
no competition between different fixed points.)  

The action $S_\nd$ is invariant under polynomial shift symmetries (\ref{pssp}) 
of degree 
$\pd \leq 2 \nd -1$:  It is strictly invariant under the symmetries of degree 
$\pd < \nd$, and invariant up to a total derivative for degrees $\nd \leq \pd \leq 2 \nd -1$.  

Morally, this infinite hierarchy of symmetries can be viewed as a natural 
generalization of the Galileon symmetry, proposed in \cite{gal} and much 
studied since, mostly in the cosmological literature.  In the case of the 
Galileons, the theory is relativistic, and the symmetry is linear in 
space-{\it time\/} coordinates.  The requirement of relativistic invariance 
is presumably the main reason that has precluded the generalization of the 
Galileon symmetries past the linear shift:  The higher polynomial shift 
symmetries in spacetime coordinates would lead to actions dominated by higher 
time derivatives, endangering perturbative unitarity.  

So far, we considered shifts by generic polynomials of degree $\pd$, whose 
coefficients $a_{i_1\ldots i_\ell}$ are arbitrary symmetric real tensors of 
rank $\ell$ for $\ell=0, \ldots, \pd$.  We note here in passing that for degrees 
$\pd\geq 2$, the polynomial shift symmetries allow an interesting refinement.   
To illustrate this feature, we use the example of the quadratic shift, 
\be
\delta_2\phi=a_{ij}x^ix^j+a_ix^i+a_0.
\ee
The coefficient $a_{ij}$ of the quadratic part is a general symmetric 
2-tensor.  It can be decomposed into its traceless part $\tilde a_{ij}$ and 
the trace part $a_{ii}$,
\be
a_{ij}=\tilde a_{ij}+\frac{1}{D}a_{kk}\delta_{ij}.  
\ee
Since this decomposition is compatible with the spacetime Lifshitz symmetries 
(\ref{lgtr}), one can restrict the symmetry group to be generated by a strictly 
smaller invariant subalgebra in the original algebra generated by $a_{ij}$.  
For example, setting the traceless part $\tilde a_{ij}$ of the quadratic shift 
symmetry to zero reduces the number of independent generators from 
$(D+2)(D+1)/2$ to $D+2$, but it is still sufficient to prevent 
$\p_i\phi \, \p_i\phi$ from being an invariant under the smaller symmetry.  
This intriguing pattern extends to $\pd>2$, leading to intricate hierarchies 
of polynomial shift symmetries whose coefficients $a_{i_1\ldots i_\ell}$ have been 
restricted by various invariant conditions.  As another example, invariance under the traceless part has been studied in \cite{extended_shift}.  In the interest of simplicity, 
we concentrate in the rest of this paper on the maximal case of 
polynomial shift symmetries with arbitrary unrestricted real coefficients 
$a_{i_1 \ldots i_\ell}$.  

The invariance of the action under each polynomial shift leads to a conserved 
Noether current.  Each such current then implies a set of Ward identities on 
the correlation functions and the effective action.  Take, for example, the 
case of $\nd =2$ in (\ref{sn}):  The currents for the infinitesimal shift by 
a general function $a(\bx)$ of the spatial coordinates $x^i$ are collectively 
given by
\be
\CJ_t =a(\bx)\dot\phi,\qquad\CJ_i=a(\bx)\p_i\p^2\phi-\p_j a(\bx)\p_i\p_j\phi
+\p_i\p_ja(\bx)\p_j\phi-\p_i\p^2a(\bx) \, \phi,
\ee
and their conservation requires
\be
\dot\CJ_t+\p_i\CJ_i\equiv a(\bx)\left\{\ddot\phi+(\p^2)^2\phi\right\}
-(\p^2)^2a(\bx) \, \phi=0.
\ee
The term in the curly brackets is zero on shell, and the current conservation 
thus reduces to the condition $(\p^2)^2a(\bx)\phi=0$, which is certainly 
satisfied by a polynomial of degree three, 
\be
a(\bx)=a_{ijk}x^ix^jx^k+a_{ij}x^ix^j+a_ix^i+a.  
\ee

Note that if we start instead with the equivalent form of the classical action
\be
\tilde S_2=\int dt\,d\bx\,\left\{\frac{1}{2}\dot\phi^2-\frac{1}{2}
\left(\p_i\p_i\phi\right)^2\right\},
\ee
the Noether currents will be related, as expected, by
\bea
\tilde\CJ_t&=&\CJ_t,\nonumber\\
\tilde\CJ_i&=&a(\bx)\p_i\p^2\phi-\p_ia(\bx)\p^2\phi
+\p^2a(\bx)\p_i\phi-\p_i\p^2a(\bx)\phi\\
&&\quad{}=\CJ_i+\p_j\left[\p_ia(\bx)\p_j\phi-\p_ja(\bx)\p_i\phi\right].\nonumber
\eea

From these conserved currents, one can formally define the charges 
\be
Q[a]=\int_\Sigma d\bx\,\CJ_t.
\ee
However, for infinite spatial slices $\Sigma=\R^\sdim$, such charges are all zero 
on the entire Hilbert space of states generated by the normalizable excitations 
of the fields $\phi$.  This behavior is quite analogous to the standard case 
of NG modes invariant under the constant shifts, and it simply indicates that 
the polynomial shift symmetry is being spontaneously broken by the vacuum.  

\subsection{Refinement of the Goldstone Theorem in the Nonrelativistic Regime}

In its original form, Goldstone's theorem guarantees the existence of a 
gapless mode when a global continuous internal symmetry is spontaneously 
broken.  However, in the absence of Lorentz symmetry, it does not predict 
the number of such modes, or their low-energy dispersion relation.  

The classification of the effective field theories which are available to 
describe the low-energy limit of the Nambu-Goldstone mode dynamics leads to a 
natural refinement of the Goldstone theorem in the nonrelativistic regime.  
In the specific case of spacetimes with Lifshitz symmetry, we get two 
hierarchies of NG modes:

\begin{itemize}
\item
Type A: One NG mode per broken symmetry generator (not paired by 
$\Omega_I$) The low-energy dispersion relation is $\omega\propto k^\nd$, 
where $\nd =1,2,3,\ldots$.
\item
Type B: One NG mode per each pair of broken symmetry generators (paired by 
$\Omega_I$). The low-energy dispersion relation is $\omega\propto k^{2 \nd}$, 
where $\nd =1,2,3,\ldots$. 
\end{itemize}

It is natural to label the members of these two hierarchies by the value of 
the dynamical critical exponent of their corresponding Gaussian fixed point.  
From now on, we will refer to these multicritical universality classes of 
Nambu-Goldstone modes as ``Type A${}_\nd$'' and ``Type B${}_{2 \nd}$'', respectively.

\begin{itemize}
\item[]
The following few comments may be useful:
\item[(1)]
While Type B NG modes represent a true infinite hierarchy of consistent 
fixed points, the Type A NG modes hit against the nonrelativistic analog of the 
Coleman-Hohenberg-Mermin-Wagner (CHMW) theorem:  At the critical value of 
$\nd = \sdim$, they develop infrared singularities and cease to exist as well-defined 
quantum fields.  We comment on this behavior further in \S\ref{secchmw}.
\item[(2)]
Type A preserve $\CT$ invariance, while Type B break $\CT$.  (This does not 
mean that a suitable time reversal invariance cannot be defined on Type B 
modes, but it would have to extend $\CT$ of (\ref{trev}) to act nontrivially 
on the fields.)
\item[(3)] 
Our classification shows the existence of A${}_\nd$ and B${}_{2 \nd}$ hierarchies of 
NG modes described by Gaussian fixed points, and therefore represents a 
refinement of the classifications studied in the literature so far.  However, 
it does not pretend to completeness:  We find it plausible that nontrivial 
fixed points (and fixed points at non-integer values of $\nd$) suitable for 
describing NG modes may also exist.  In this sense, the full classification 
of all possible types of nonrelativistic NG dynamics -- even under the 
assumption of Lifshitz symmetries -- still remains a fascinating open 
question.  
\item[(4)]
For simplicity, we worked under the assumption of spacetime Lifshitz symmetry.
Obviously, this simplifying restriction can be removed, and the classification 
of multicritical NG modes in principle extended to cases whereby some of the 
the spacetime symmetries are further broken by additional features of the 
system -- such as spatial anisotropy, layers, an underlying lattice structure, 
etc.  We also expect that the classification can be naturally extended to 
Nambu-Goldstone fermions associated with spontaneous breaking of symmetries 
associated with supergroups.  Such generalizations, however, are beyond the 
scope of this paper.
\end{itemize}

\subsection{Infrared Behavior and the Nonrelativistic CHMW Theorem}
\label{secchmw}

In this section, we consider the Type A${}_\nd$ and Type B${}_{2 \nd}$ hierarchies 
of NG modes, and their infrared behavior.  For simplicity, we will focus on theories that consist of Type A${}_\nd$ (or Type B${}_\nd$) NG modes with a fixed $\nd$, and leave the generalizations to interacting systems that mix different types of NG modes for future studies.

In relativistic systems, all NG bosons -- if they exist -- are Type A${}_1$. 
However, whether or not the corresponding symmetry is spontanously broken famously depends on the spacetime dimension.  This phenomenon is 
controlled by the celebrated theorem discovered independently in condensed 
matter by Mermin and Wagner \cite{mermw} and by Hohenberg \cite{hoh}, and in 
high-energy physics by Coleman \cite{cole}; we therefore refer to it, in the 
alphabetical order, as the Coleman-Hohenberg-Mermin-Wagner (CHMW) theorem.  

This famous CHMW theorem states that no spontaneous breaking of global 
continuous internal symmetries is possible in $1+1$ spacetime dimensions.  
The proof is beautifully simple:  $1+1$ represents the ``lower critical 
dimension'' of the massless scalar field $\phi$, defined as the dimension 
where $\phi$ is formally dimensionless at the Gaussian fixed point.  Quantum 
mechanically, this means that its propagator is logarithmically divergent, 
and we need to regulate it by introducing an infrared regulator $\muir$:
\be
\label{iras}
\langle\phi(x)\phi(0)\rangle=\int\frac{d^2k}{(2\pi)^2}\frac{e^{ik\cdot x}}{k^2
+\mu_{\rm IR}^2}\approx-\frac{1}{2\pi}\log(\muir|x|)+{\rm const.}
+\CO(\muir|x|).
\ee
The asymptotic expansion in (\ref{iras}), valid in the regime $\muir|x|\ll 1$, 
clearly shows that as we try to take $\muir\to 0$ the propagator stays 
sensitive at 
long scales to the infrared regulator $\muir$.  We can still construct 
various composite operators out of derivatives and exponentials of $\phi$, 
yielding consistent and finite renormalized correlation functions in the 
$\muir\to 0$ limit, but the field $\phi$ itself does not exist as a quantum 
object.  And since the candidate NG mode $\phi$ does not exist, the 
corresponding symmetry could never have been broken in the first place,
which concludes the proof.  

Going back to the general class of Type A${}_\nd$ NG modes, we find an intriguing 
nonrelativistic analog of the CHMW theorem.  The dimension of $\phi(t,\bx)$ at 
the A${}_\nd$ Gaussian fixed point in $\sdim + 1$ dimensions -- measured in the units 
of spatial momentum -- is
\be
[\phi(t,\bx)]^{\ }_{{\rm A}{}_\nd}=\frac{\sdim - \nd}{2}.
\ee
The Type A${}_\nd$ field $\phi$ is at its lower critical dimension when 
$\sdim = \nd$.  
Its propagator also requires an infrared regulator.  There are many ways 
how to introduce $\muir$ in this case, for example by
\be
\langle\phi(t,\bx)\phi(0)\rangle=\int\frac{d\omega\,d^\sdim \bk}{(2\pi)^{\sdim + 1}}
\frac{e^{i\bk\cdot\bx-i\omega t}}{\displaystyle{\omega^2+k^{2 \sdim}
+\mu_{\rm IR}^{2 \sdim}}},
\ee
or by
\be
\langle\phi(t,\bx)\phi(0)\rangle=\int\frac{d\omega\,d^\sdim \bk}{(2\pi)^{\sdim +1}}
\frac{e^{i\bk\cdot\bx-i\omega t}}{\displaystyle{\omega^2+(k^2+\mu_{\rm IR}^2)^\sdim}}.
\ee

Either way, as we try to take $\muir\to 0$, the asymptotics of the propagator 
again behaves logarithmically, both in space
\be
\langle\phi(t,\bx)\phi(0)\rangle\approx-\frac{1}{(4\pi)^{\sdim /2}\Gamma(
\sdim /2)}
\log(\muir|\bx|)+\ldots\qquad{\rm for}\ |\bx|^\sdim \gg t
\ee
and in time,
\be
\langle\phi(t,\bx)\phi(0)\rangle\approx-\frac{1}{(4\pi)^{\sdim /2} \sdim \, \Gamma( \sdim /2)}
\log(\mu_{\rm IR}^\sdim t)+\ldots\qquad{\rm for}\ |\bx|^\sdim \ll t.
\ee
Most importantly, the propagator remains sensitive to the infrared regulator 
$\muir$.  Consequently, we obtain the nonrelativistic, multicritical version 
of the CHMW 
theorem for Type A NG modes and their associated symmetry breaking:  

\begin{itemize}
\item[]
{\it The Type A${}_\nd$ would-be NG mode $\phi(t,\bx)$ at its lower critical 
dimension $\sdim = \nd$ exhibits a propagator which is logarithmically sensitive to 
the infrared regulator $\muir$, and therefore $\phi(t,\bx)$ does not exist as 
a quantum mechanical object.  Consequently, no spontaneous symmetry breaking 
with Type A${}_\nd$ NG modes is possible in $\sdim = \nd$ dimensions.}
\end{itemize}

By extension, this also invalidates all Type A${}_\nd$ would-be NG modes with 
$\nd > \sdim$:  Their propagator grows polynomially at long distances, destabilizing 
the would-be condensate and disallowing the associated symmetry breaking 
pattern. 

In contrast, in the Type B${}_{\nd}$ case (and assuming that all the NG field components are assigned the same dimension), we have
$$[\phi(t,\bx)]^{\ }_{{\rm B}{}_{2 \nd}}=\frac{\sdim}{2},$$
and the lower critical dimension is $\sdim =0$.  Hence, in all dimensions $\sdim > 0$, 
the Type B${}_{2 \nd}$ NG modes are free of infrared divergences and well-defined 
quantum mechanically for all $\nd =1,2,\ldots$, and the Type B 
nonrelativistic, multicritical CHMW theorem is limited to the following statement:
\begin{itemize}
\item[]
{\it The Type B${}_{2 \nd}$ symmetry breaking is possible in any $\sdim >0$ and for 
any $\nd =1,2,\ldots$.}
\end{itemize}

In the special cases for Type A${}_2$ and Type B${}_2$ NG modes, the multicritical CHMW theorems stated above reproduce the results reported in \cite{mw2}.

\subsection{Cascading Multicriticality}

The conclusions of the nonrelativistic CHMW theorem appear rather 
unfavorable for Type A${}_\nd$ NG modes with $\nd \geq \sdim$.  However, unlike in the 
relativistic case of $\nd =1$ in $1+1$ dimensions, the nonrelativistic systems 
offer an intriguing way out \cite{cmu}, as we now illustrate for the case of 
the lower critical dimension $\sdim = \nd$, with $\sdim >1$.    

At this Gaussian fixed point, the propagator for $\phi$ is logarithmically 
sensitive to the infrared regulator.  However, all is not lost -- unlike in 
the relativistic case, the system can now provide its own natural infrared 
regulator, and flow under the influence of some of the relevant terms to 
another infrared fixed point of Type A${}_{\nd'}$, with a lower value of 
$\nd' < \nd$.  And we know that this phenomenon can be arranged to happen 
hierarchically, in a pattern protected by the hierarchical breaking of the 
polynomial symmetries.  Thus, we can break the polynomial shift symmetry 
at a high energy scale $\mu$ only partially, to a polynomial symmetry of a 
lower degree which is then broken at a lower energy scale $\mu'$.  This 
process can continue until at some low scale $\mu''$ the symmetry is broken 
all the way to the constant shift and $\nd''=1$.%
\footnote{In the case of preudo-NG modes, even the constant shift symmetry can 
be broken explicitly at some scale.}
This process of consecutive partial symmetry breaking opens up a hierarchy 
of energy scales 
\be
\mu\gg\mu'\gg\ldots \gg\mu'',
\ee
over which the propagator for $\phi$ exhibits a cascading behavior:  
First it appears logarithmic and the formation of a condensate seems 
precluded, and then it undergoes a series of crossovers to lower values of 
$z < \sdim$ until in the far infrared the condensate is no longer 
destabilized by infrared fluctuations.  The separation between two consecutive 
scales $\mu$ and $\mu'$ can be kept large, as a result of the symmetry that is 
given by a larger-degree polynomial at scale $\mu$ than at scale $\mu'$.  
All in all, whether or not the original continuous global internal symmetry (for which 
the field is the NG mode) is spontaneously broken is now a question 
about the competition of various scales in the system. 

\subsection{Polynomial Shift Symmetries as Exact Symmetries}

We have established a new infinite sequence of symmetries in scalar field 
theories, and have shown that they can protect the smallness of quantum 
mechanical 
corrections to their low-energy dispersion relations near the Gaussian fixed 
points.  The symmetries are exact at the infrared Gaussian fixed point, and 
turning on interactions typically breaks them explicitly -- as we have seen in 
the series of examples in \cite{msb}.  Yet, the polynomial shift symmetry at 
the Gaussian fixed point is useful for the interacting theory as well:  It 
controls the interaction terms, allowing them to be naturally small, 
parametrized by the amount $\varepsilon$ of the explicit polynomial symmetry 
breaking near the fixed point.  

Generally, this explicit breaking by interactions breaks the polynomial 
shift symmetries of NG modes all the way to the constant shift, which 
remains mandated by the original form of the Goldstone theorem (guaranteeing 
the existence of gapless modes).\footnote{Strictly speaking, moving away from the Gaussian fixed point by turning on the self-interactions generally yields additional corrections to the constant shift symmetries, if the underlying symmetry group of the interacting theory is non-Abelian.  Such non-Abelian corrections vanish at the Gaussian fixed point, and each NG component effectively becomes an Abelian field with its own constant shift symmetry.  In this paper, we will concentrate solely on the simplest Abelian case, with one Type A NG field $\phi$ and the symmetry group $U(1)$.}
However, one can now turn the argument 
around, and ask the following question:  Starting at a given Type A${}_\nd$ or 
B${}_{2 \nd}$ fixed point,  what are the lowest-dimension scalar composite 
operators that involve $\nv$ fields $\phi$ and respect the polynomial shift 
symmetry of degree $\pd$ exactly, up to a total derivative?  Such operators can 
be added to the action, and 
for $\nv =3,4,\ldots$ they represent self-interactions of the system, invariant 
under the polynomial shift of degree $\pd$.  More generally, one can attempt to 
classify all independent composite operators invariant under the polynomial 
shift symmetry of degree $\pd$, organized in the order of their increasing 
dimensions.   

These are the questions on which we focus in the rest of this paper.  
In order to provide some answers, we will first translate this classification 
problem into a more precise mathematical language, and then we will develop 
techniques -- largely based on abstract graph theory -- that lead us to 
systematic answers.  For some low values of the degree $\pd$ of the polynomial 
symmetry and of the number $\nv$ of fields involved, we can even find the most 
relevant invariants and prove their uniqueness.


\section{Galileon Invariants} \label{sec: Galileons} 

Consider a quantum field theory of a single scalar field $\phi ( t, \textbf{x} )$ in $\sdim$ spatial dimensions and one time dimension.  Consider the transformation of the field which is linear in spatial coordinates: $\delta \phi = a_i x^i + a_0$, where $a_i$ and $a_0$ are arbitrary real coefficients.  Other than the split between time and space and the exclusion of the time coordinate from the linear shift transformation, this is the same as the theory of the Galileon \cite{gal}. 

	The goal is to find Lagrangian terms which are invariant (up to a total derivative) under this linear shift transformation.  We will classify the Lagrangian terms by their numbers of fields $\nv$ and derivatives $2 \Delta$.  Imposing spatial rotation invariance requires that spatial derivatives be contracted in pairs by the flat metric $\delta_{i j}$.  Thus $\Delta$ counts the number of contracted pairs of derivatives.  It is easy to find Lagrangian terms which are exactly invariant (i.e., not just up to a total derivative): Let $\Delta \geq \nv$ and let at least two spatial derivatives act on every $\phi$.  For the linear shift case, all terms with at least twice as many derivatives as there are fields are equal to exact invariants, up to total derivatives (Theorem \ref{thm: classification of 1-invariants}).  However, it is possible for a term to have fewer derivatives than this and still be invariant up to a non-vanishing total derivative.  For fixed $\nv$, the terms with the lowest $\Delta$ are more relevant in the sense of the renormalization group.  Therefore, we will focus on invariant terms with the lowest number of derivatives, which we refer to as \textit{minimal invariants}.  

These minimal invariants have already been classified for the case of the linear shift.  There is a unique (up to total derivatives and an overall constant prefactor) $\nv$-point minimal invariant, which contains $2(\nv-1)$ derivatives (i.e., $\Delta = \nv -1$).  These are listed below up to $\nv =5$.
\begin{subequations} \label{eq: galileon_invariants}
\begin{align}
	L_{1\text{-pt}} &= \phi, \label{eq: galileon_invariant_1} \\
	L_{2\text{-pt}} &= \partial_i \phi \, \partial_i \phi, \label{eq: galileon_invariant_2} \\
	L_{3\text{-pt}} &= 3 \, \partial_i \phi \, \partial_j \phi \, \partial_i \partial_j \phi, \label{eq: galileon_invariant_3} \\
	L_{4\text{-pt}} &= 12 \, \partial_i \phi \, \partial_i \partial_j \phi \partial_j \partial_k \phi \, \partial_k \phi + 4 \, \partial_i \phi \, \partial_j \phi \, \partial_k \phi \, \partial_i \partial_j \partial_k \phi, \label{eq: galileon_invariant_4} \\
	L_{5\text{-pt}} &= 60 \, \partial_i \phi \, \partial_i \partial_j \phi \, \partial_j \partial_k \phi \, \partial_k \partial_{\ell} \phi \, \partial_{\ell} \phi + 60 \, \partial_i \phi \, \partial_i \partial_j \phi \, \partial_j \partial_k \partial_{\ell} \phi \, \partial_k \phi \, \partial_{\ell} \phi \notag \\
	&\quad + 5 \, \partial_i \phi \, \partial_j \phi \, \partial_k \phi \, \partial_{\ell} \phi \, \partial_i \partial_j \partial_j \partial_k \partial_{\ell} \phi. \label{eq: galileon_invariant_5}
\end{align}
\end{subequations}

\noindent These are not identical to the usual expressions (e.g., in \cite{gal}), but one can easily check that they are equivalent.  


\subsection{The Graphical Representation} \label{subsec: graph_rep}

We can represent the terms in \eqref{eq: galileon_invariants} as formal linear combinations of graphs.  In these graphs, $\phi$ is represented by a $\bullet$-vertex.  An edge joining two vertices represents a pair of contracted derivatives, one derivative acting on each of the $\phi$'s representing the endpoints of the edge.  The graphical representations of the above terms are given below:
\begin{subequations} \label{eq: galileon_graphs}
\begin{eqnarray}
	L_{1\text{-pt}} &=& \bullet, \label{eq: galileon_graph_1} \\[5pt]
	L_{2\text{-pt}} &=& \hspace{0.1cm}
	\begin{minipage}{1.3cm}
    	\begin{tikzpicture}
        		\draw [thick] (0,0) -- (\1,0);
        		\filldraw (0,0) circle [radius=0.08];
        		\filldraw (\1,0) circle [radius=0.08];
    	\end{tikzpicture}
    	\end{minipage}
    	\label{eq: galileon_graph_2}, \\[5pt]
	L_{3\text{-pt}} &=& 3 \hspace{0.1cm}
	\begin{minipage}{1.3cm}
    	\begin{tikzpicture}
        		\draw [thick] (0.4,0.69) -- (0,0) -- (\1,0);
        		\filldraw (0,0) circle [radius=0.08];
        		\filldraw (\1,0) circle [radius=0.08];
		\filldraw (0.4,0.69) circle [radius=0.08];
    	\end{tikzpicture}
    	\end{minipage}
    	\label{eq: galileon_graph_3}, \\[5pt]
	L_{4\text{-pt}} &=& 12 \hspace{0.1cm}
	\begin{minipage}{1.3cm}
    	\begin{tikzpicture}
        		\draw [thick] (0,\1) -- (0,0) -- (\1,0) -- (\1,\1);
        		\filldraw (0,0) circle [radius=0.08];
        		\filldraw (\1,0) circle [radius=0.08];
		\filldraw (0,\1) circle [radius=0.08];
		\filldraw (\1,\1) circle [radius=0.08];
    	\end{tikzpicture}
    	\end{minipage}
	+ 4 \hspace{0.1cm}
	\begin{minipage}{1.3cm}
    	\begin{tikzpicture}
        		\draw [thick] (0,\1) -- (0,0) -- (\1,0);
		\draw [thick] (0,0) -- (\1,\1);
        		\filldraw (0,0) circle [radius=0.08];
        		\filldraw (\1,0) circle [radius=0.08];
		\filldraw (0,\1) circle [radius=0.08];
		\filldraw (\1,\1) circle [radius=0.08];
    	\end{tikzpicture}
    	\end{minipage}
    	\label{eq: galileon_graph_4}, \\[5pt]
	L_{5\text{-pt}} &=& 60 \hspace{0.1cm}
	\begin{minipage}{1.3cm}
	\begin{tikzpicture}
	\filldraw (0,0) circle [radius=0.08];
	\filldraw (.6,0) circle [radius=0.08];
	\filldraw (.8,.6) circle [radius=0.08];
	\filldraw (.3,1) circle [radius=0.08];
	\filldraw (-.2,.6) circle [radius=0.08];
	\draw [thick]	(0,0) -- (-0.2,0.6) -- (0.3,1) -- (0.8,0.6) -- (0.6,0);
	\end{tikzpicture}
	\end{minipage}
	+ 60 \hspace{0.1cm}
	\begin{minipage}{1.3cm}
	\begin{tikzpicture}
	\filldraw (0,0) circle [radius=0.08];
	\filldraw (.6,0) circle [radius=0.08];
	\filldraw (.8,.6) circle [radius=0.08];
	\filldraw (.3,1) circle [radius=0.08];
	\filldraw (-.2,.6) circle [radius=0.08];
	\draw [thick]	(0,0) -- (.6,0)
				(0,0) -- (.8,.6)
				(-.2,.6) -- (0,0)
				(.3,1) -- (-.2,.6);
	\end{tikzpicture}
	\end{minipage}
	+ 5 \hspace{0.1cm}
	\begin{minipage}{1.3cm}
	\begin{tikzpicture}
	\filldraw (0,0) circle [radius=0.08];
	\filldraw (.6,0) circle [radius=0.08];
	\filldraw (.8,.6) circle [radius=0.08];
	\filldraw (.3,1) circle [radius=0.08];
	\filldraw (-.2,.6) circle [radius=0.08];
	\draw [thick]	(-0.2,0.6) -- (0.3,1) -- (0.8,0.6)
				(0,0) -- (0.3,1) -- (0.6,0);
	\end{tikzpicture}
	\end{minipage}
    	\label{eq: galileon_graph_5}.
\end{eqnarray}
\end{subequations}

\noindent The structure of the graph (i.e., the connectivity of the vertices) is what distinguishes graphs; the placement of the vertices is immaterial.  This reflects the fact that the order of the $\phi$'s in the algebraic expressions is immaterial and the only thing that matters is which contracted pairs of derivatives act on which pairs of $\phi$'s.  Therefore, for example, the graphs below all represent the same algebraic expression.
\begin{equation} \label{eq: graph_iso_3}
	\begin{minipage}{1.8cm}
    	\begin{tikzpicture}
        		\draw [thick] (0.4,0.69) -- (0,0) -- (\1,0);
        		\filldraw (0,0) circle [radius=0.08];
        		\filldraw (\1,0) circle [radius=0.08];
		\filldraw (0.4,0.69) circle [radius=0.08];
    	\end{tikzpicture}
    	\end{minipage}
	\qquad
	\begin{minipage}{1.8cm}
    	\begin{tikzpicture}
        		\draw [thick] (0,0) -- (0.4,0.69) -- (\1,0);
        		\filldraw (0,0) circle [radius=0.08];
        		\filldraw (\1,0) circle [radius=0.08];
		\filldraw (0.4,0.69) circle [radius=0.08];
    	\end{tikzpicture}
    	\end{minipage}
	\qquad
	\begin{minipage}{1.3cm}
    	\begin{tikzpicture}
        		\draw [thick] (0,0) -- (\1,0) -- (0.4,0.69);
        		\filldraw (0,0) circle [radius=0.08];
        		\filldraw (\1,0) circle [radius=0.08];
		\filldraw (0.4,0.69) circle [radius=0.08];
    	\end{tikzpicture}
    	\end{minipage}
\end{equation}

\noindent Similarly, the four graphs below represent the same algebraic expression.
\begin{equation} \label{eq: graph_iso_4_star}
	\hspace{0.5cm}
	\begin{minipage}{1.3cm}
    	\begin{tikzpicture}
        		\draw [thick] (0,\1) -- (0,0) -- (\1,0);
		\draw [thick] (0,0) -- (\1,\1);
        		\filldraw (0,0) circle [radius=0.08];
        		\filldraw (\1,0) circle [radius=0.08];
		\filldraw (0,\1) circle [radius=0.08];
		\filldraw (\1,\1) circle [radius=0.08];
    	\end{tikzpicture}
    	\end{minipage}
	\qquad
	\begin{minipage}{1.3cm}
    	\begin{tikzpicture}
        		\draw [thick] (0,0) -- (\1,0) -- (\1,\1);
		\draw [thick] (0,\1) -- (\1,0);
        		\filldraw (0,0) circle [radius=0.08];
        		\filldraw (\1,0) circle [radius=0.08];
		\filldraw (0,\1) circle [radius=0.08];
		\filldraw (\1,\1) circle [radius=0.08];
    	\end{tikzpicture}
    	\end{minipage}
	\qquad
	\begin{minipage}{1.3cm}
    	\begin{tikzpicture}
        		\draw [thick] (\1,0) -- (\1,\1) -- (0,\1);
		\draw [thick] (0,0) -- (\1,\1);
        		\filldraw (0,0) circle [radius=0.08];
        		\filldraw (\1,0) circle [radius=0.08];
		\filldraw (0,\1) circle [radius=0.08];
		\filldraw (\1,\1) circle [radius=0.08];
    	\end{tikzpicture}
    	\end{minipage}
	\qquad
	\begin{minipage}{1.3cm}
    	\begin{tikzpicture}
        		\draw [thick] (\1,\1) -- (0,\1) -- (0,0);
		\draw [thick] (0,\1) -- (\1,0);
        		\filldraw (0,0) circle [radius=0.08];
        		\filldraw (\1,0) circle [radius=0.08];
		\filldraw (0,\1) circle [radius=0.08];
		\filldraw (\1,\1) circle [radius=0.08];
    	\end{tikzpicture}
    	\end{minipage}
\end{equation}

\noindent A more nontrivial example is given by the following twelve graphs, which all represent the same algebraic expression.
\begin{align} \label{eq: graph_iso_4_line}
	&
	\begin{minipage}{1.3cm}
    	\begin{tikzpicture}
        		\draw [thick] (0,\1) -- (0,0) -- (\1,0) -- (\1,\1);
        		\filldraw (0,0) circle [radius=0.08];
        		\filldraw (\1,0) circle [radius=0.08];
		\filldraw (0,\1) circle [radius=0.08];
		\filldraw (\1,\1) circle [radius=0.08];
    	\end{tikzpicture}
    	\end{minipage}
	\qquad
	\begin{minipage}{1.3cm}
    	\begin{tikzpicture}
        		\draw [thick]  (0,0) -- (\1,0) -- (\1,\1) -- (0,\1);
        		\filldraw (0,0) circle [radius=0.08];
        		\filldraw (\1,0) circle [radius=0.08];
		\filldraw (0,\1) circle [radius=0.08];
		\filldraw (\1,\1) circle [radius=0.08];
    	\end{tikzpicture}
    	\end{minipage}
	\qquad
	\begin{minipage}{1.3cm}
    	\begin{tikzpicture}
        		\draw [thick] (\1,0) -- (\1,\1) -- (0,\1) -- (0,0);
        		\filldraw (0,0) circle [radius=0.08];
        		\filldraw (\1,0) circle [radius=0.08];
		\filldraw (0,\1) circle [radius=0.08];
		\filldraw (\1,\1) circle [radius=0.08];
    	\end{tikzpicture}
    	\end{minipage}
	\qquad
	\begin{minipage}{1.3cm}
    	\begin{tikzpicture}
        		\draw [thick] (\1,\1) -- (0,\1) -- (0,0) -- (\1,0);
        		\filldraw (0,0) circle [radius=0.08];
        		\filldraw (\1,0) circle [radius=0.08];
		\filldraw (0,\1) circle [radius=0.08];
		\filldraw (\1,\1) circle [radius=0.08];
    	\end{tikzpicture}
    	\end{minipage}
	\notag \\[10pt]
	&
	\begin{minipage}{1.3cm}
    	\begin{tikzpicture}
        		\draw [thick] (0,\1) -- (\1,0) -- (0,0) -- (\1,\1);
        		\filldraw (0,0) circle [radius=0.08];
        		\filldraw (\1,0) circle [radius=0.08];
		\filldraw (0,\1) circle [radius=0.08];
		\filldraw (\1,\1) circle [radius=0.08];
    	\end{tikzpicture}
    	\end{minipage}
	\qquad
	\begin{minipage}{1.3cm}
    	\begin{tikzpicture}
        		\draw [thick] (0,0) -- (\1,\1) -- (\1,0) -- (0,\1);
        		\filldraw (0,0) circle [radius=0.08];
        		\filldraw (\1,0) circle [radius=0.08];
		\filldraw (0,\1) circle [radius=0.08];
		\filldraw (\1,\1) circle [radius=0.08];
    	\end{tikzpicture}
    	\end{minipage}
	\qquad
	\begin{minipage}{1.3cm}
    	\begin{tikzpicture}
        		\draw [thick] (\1,0) -- (0,\1) -- (\1,\1) -- (0,0);
        		\filldraw (0,0) circle [radius=0.08];
        		\filldraw (\1,0) circle [radius=0.08];
		\filldraw (0,\1) circle [radius=0.08];
		\filldraw (\1,\1) circle [radius=0.08];
    	\end{tikzpicture}
    	\end{minipage}
	\qquad
	\begin{minipage}{1.3cm}
    	\begin{tikzpicture}
        		\draw [thick] (\1,\1) -- (0,0) -- (0,\1) -- (\1,0);
        		\filldraw (0,0) circle [radius=0.08];
        		\filldraw (\1,0) circle [radius=0.08];
		\filldraw (0,\1) circle [radius=0.08];
		\filldraw (\1,\1) circle [radius=0.08];
    	\end{tikzpicture}
    	\end{minipage} \\[10pt]
	&
	\begin{minipage}{1.3cm}
    	\begin{tikzpicture}
        		\draw [thick] (\1,0) -- (0,0) -- (\1,\1) -- (0,\1);
        		\filldraw (0,0) circle [radius=0.08];
        		\filldraw (\1,0) circle [radius=0.08];
		\filldraw (0,\1) circle [radius=0.08];
		\filldraw (\1,\1) circle [radius=0.08];
    	\end{tikzpicture}
    	\end{minipage}
	\qquad
	\begin{minipage}{1.3cm}
    	\begin{tikzpicture}
        		\draw [thick] (\1,\1) -- (\1,0) -- (0,\1) -- (0,0);
        		\filldraw (0,0) circle [radius=0.08];
        		\filldraw (\1,0) circle [radius=0.08];
		\filldraw (0,\1) circle [radius=0.08];
		\filldraw (\1,\1) circle [radius=0.08];
    	\end{tikzpicture}
    	\end{minipage}
	\qquad
	\begin{minipage}{1.3cm}
    	\begin{tikzpicture}
        		\draw [thick] (\1,\1) -- (0,\1) -- (\1,0) -- (0,0);
        		\filldraw (0,0) circle [radius=0.08];
        		\filldraw (\1,0) circle [radius=0.08];
		\filldraw (0,\1) circle [radius=0.08];
		\filldraw (\1,\1) circle [radius=0.08];
    	\end{tikzpicture}
    	\end{minipage}
	\qquad
	\begin{minipage}{1.3cm}
    	\begin{tikzpicture}
        		\draw [thick] (0,\1) -- (0,0) -- (\1,\1) -- (\1,0);
        		\filldraw (0,0) circle [radius=0.08];
        		\filldraw (\1,0) circle [radius=0.08];
		\filldraw (0,\1) circle [radius=0.08];
		\filldraw (\1,\1) circle [radius=0.08];
    	\end{tikzpicture}
    	\end{minipage} \notag
\end{align}

\noindent The graphs in the second line above appear to have intersecting edges.  However, since there is no $\bullet$-vertex at the would-be intersection, these edges do not actually intersect.


\subsection{Galileon Invariants as Equal-Weight Sums of Trees} \label{subsec: equal_weight}

There are three times as many graphs in $\eqref{eq: graph_iso_4_line}$ as there are in $\eqref{eq: graph_iso_4_star}$.  It so happens that the coefficient with which the first graph in $\eqref{eq: graph_iso_4_line}$ appears in $L_{4\text{-pt}}$ $\eqref{eq: galileon_graph_4}$ is also three times the coefficient with which the first graph in $\eqref{eq: graph_iso_4_star}$ appears in $L_{4\text{-pt}}$.  This suggests that the coefficient with which a graph appears in a minimal term is precisely the number of graphs with the exact same structure (i.e., isomorphic), just with various vertices and edges permuted.

One simple way to state this is to actually label the vertices in the graphs.  If the vertices were labeled, and thus distinguished from each other, then all of the graphs in each one of $\eqref{eq: graph_iso_3}$, $\eqref{eq: graph_iso_4_star}$ and $\eqref{eq: graph_iso_4_line}$ would actually be distinct graphs.  Of course, this means that the corresponding algebraic expressions have $\phi$'s similarly labeled, but this labeling is fiducial and may be removed afterwards.  Note the simplicity that this labeled convention introduces: $L_{4\text{-pt}}$ is the sum of all of the graphs in $\eqref{eq: graph_iso_4_star}$ and $\eqref{eq: graph_iso_4_line}$ with unit coefficients.

The graphs in $\eqref{eq: graph_iso_4_star}$ and $\eqref{eq: graph_iso_4_line}$ have an elegant and unified interpretation in graph theory.  These graphs are called \textit{trees}.  A tree is a graph which is connected (i.e., cannot be split into two or more separate graphs without cutting an edge), and contains no loops (edges joining a vertex to itself) or cycles (edges joining vertices in a closed cyclic manner).  One can check that there are exactly 16 trees with four vertices and they are given by $\eqref{eq: graph_iso_4_star}$ and $\eqref{eq: graph_iso_4_line}$.  Cayley's formula, a well-known result in graph theory, says that the number of trees with $\nv$ vertices is $\nv^{\nv -2}$.

For $\nv =3$, the $3^{3-2} = 3$ trees are in $\eqref{eq: graph_iso_3}$, and we indeed find that $L_{3\text{-pt}}$ is the sum of all three graphs with unit coefficients.  The same can be said for $L_{2\text{-pt}}$ and $L_{1\text{-pt}}$.  Therefore, the minimal terms for $\nv =1, 2, 3$ and $4$ are represented graphically as a sum of trees with unit coefficients (an equal-weight sum of trees).  If this were to hold for the $\nv =5$ case, it would strongly suggest that this may hold for all $\nv$.  

There are $5^3 = 125$ trees for $\nv =5$.  They can be divided into three sets such that the trees in each set are isomorphic to one of the three graphs appearing in $L_{5\text{-pt}}$ $\eqref{eq: galileon_graph_5}$.  There are 60 graphs which are isomorphic to the first graph appearing in $L_{5\text{-pt}}$; 12 of these are listed below and the rest are given by the five rotations acting on each of these 12 graphs:
\begin{equation} \label{eq: graph_iso_5_line}
\begin{aligned}
	&
	\begin{minipage}{\mini cm}
	\begin{tikzpicture}
	\filldraw (0,0) circle [radius=0.08];
	\filldraw (.6,0) circle [radius=0.08];
	\filldraw (.8,.6) circle [radius=0.08];
	\filldraw (.3,1) circle [radius=0.08];
	\filldraw (-.2,.6) circle [radius=0.08];
	\draw [thick]	(0,0) -- (-0.2,0.6) -- (0.3,1) -- (0.8,0.6) -- (0.6,0);
	\end{tikzpicture}
	\end{minipage} &
	&
	\begin{minipage}{\mini cm}
	\begin{tikzpicture}
	\filldraw (0,0) circle [radius=0.08];
	\filldraw (.6,0) circle [radius=0.08];
	\filldraw (.8,.6) circle [radius=0.08];
	\filldraw (.3,1) circle [radius=0.08];
	\filldraw (-.2,.6) circle [radius=0.08];
	\draw [thick]	(0.6,0) -- (0,0) -- (-0.2,0.6) -- (0.8,0.6) -- (0.3,1);
	\end{tikzpicture}
	\end{minipage} &
	&
	\begin{minipage}{\mini cm}
	\begin{tikzpicture}
	\filldraw (0,0) circle [radius=0.08];
	\filldraw (.6,0) circle [radius=0.08];
	\filldraw (.8,.6) circle [radius=0.08];
	\filldraw (.3,1) circle [radius=0.08];
	\filldraw (-.2,.6) circle [radius=0.08];
	\draw [thick]	(0,0) -- (0.6,0) -- (0.8,0.6) -- (-0.2,0.6) -- (0.3,1);
	\end{tikzpicture}
	\end{minipage} &
	&
	\begin{minipage}{\mini cm}
	\begin{tikzpicture}
	\filldraw (0,0) circle [radius=0.08];
	\filldraw (.6,0) circle [radius=0.08];
	\filldraw (.8,.6) circle [radius=0.08];
	\filldraw (.3,1) circle [radius=0.08];
	\filldraw (-.2,.6) circle [radius=0.08];
	\draw [thick]	(-0.2,0.6) -- (0,0) -- (0.3,1) -- (0.6,0) -- (0.8,0.6);
	\end{tikzpicture}
	\end{minipage} &
	&
	\begin{minipage}{\mini cm}
	\begin{tikzpicture}
	\filldraw (0,0) circle [radius=0.08];
	\filldraw (.6,0) circle [radius=0.08];
	\filldraw (.8,.6) circle [radius=0.08];
	\filldraw (.3,1) circle [radius=0.08];
	\filldraw (-.2,.6) circle [radius=0.08];
	\draw [thick]	(0.6,0) -- (0,0) -- (0.3,1) -- (0.8,0.6) -- (-0.2,0.6);
	\end{tikzpicture}
	\end{minipage} &
	&
	\begin{minipage}{\mini cm}
	\begin{tikzpicture}
	\filldraw (0,0) circle [radius=0.08];
	\filldraw (.6,0) circle [radius=0.08];
	\filldraw (.8,.6) circle [radius=0.08];
	\filldraw (.3,1) circle [radius=0.08];
	\filldraw (-.2,.6) circle [radius=0.08];
	\draw [thick]	(0,0) -- (0.6,0) -- (0.3,1) -- (-0.2,0.6) -- (0.8,0.6);
	\end{tikzpicture}
	\end{minipage} \\[5pt]
	&
	\begin{minipage}{\mini cm}
	\begin{tikzpicture}
	\filldraw (0,0) circle [radius=0.08];
	\filldraw (.6,0) circle [radius=0.08];
	\filldraw (.8,.6) circle [radius=0.08];
	\filldraw (.3,1) circle [radius=0.08];
	\filldraw (-.2,.6) circle [radius=0.08];
	\draw [thick]	(0,0) -- (0.8,0.6) -- (0.3,1) -- (-0.2,0.6) -- (0.6,0);
	\end{tikzpicture}
	\end{minipage} &
	&
	\begin{minipage}{1.3cm}
	\begin{tikzpicture}
	\filldraw (0,0) circle [radius=0.08];
	\filldraw (.6,0) circle [radius=0.08];
	\filldraw (.8,.6) circle [radius=0.08];
	\filldraw (.3,1) circle [radius=0.08];
	\filldraw (-.2,.6) circle [radius=0.08];
	\draw [thick]	(0.3,1) -- (-0.2,0.6) -- (0.6,0) -- (0,0) -- (0.8,0.6);
	\end{tikzpicture}
	\end{minipage} &
	&
	\begin{minipage}{1.3cm}
	\begin{tikzpicture}
	\filldraw (0,0) circle [radius=0.08];
	\filldraw (.6,0) circle [radius=0.08];
	\filldraw (.8,.6) circle [radius=0.08];
	\filldraw (.3,1) circle [radius=0.08];
	\filldraw (-.2,.6) circle [radius=0.08];
	\draw [thick]	(0.3,1) -- (0.8,0.6) -- (0,0) -- (0.6,0) -- (-0.2,0.6);
	\end{tikzpicture}
	\end{minipage} &
	&
	\begin{minipage}{1.3cm}
	\begin{tikzpicture}
	\filldraw (0,0) circle [radius=0.08];
	\filldraw (.6,0) circle [radius=0.08];
	\filldraw (.8,.6) circle [radius=0.08];
	\filldraw (.3,1) circle [radius=0.08];
	\filldraw (-.2,.6) circle [radius=0.08];
	\draw [thick]	(-0.2,0.6) -- (0.6,0) -- (0.8,0.6) -- (0,0) -- (0.3,1);
	\end{tikzpicture}
	\end{minipage} &
	&
	\begin{minipage}{1.3cm}
	\begin{tikzpicture}
	\filldraw (0,0) circle [radius=0.08];
	\filldraw (.6,0) circle [radius=0.08];
	\filldraw (.8,.6) circle [radius=0.08];
	\filldraw (.3,1) circle [radius=0.08];
	\filldraw (-.2,.6) circle [radius=0.08];
	\draw [thick]	(0.8,0.6) -- (0,0) -- (-0.2,0.6) -- (0.6,0) -- (0.3,1);
	\end{tikzpicture}
	\end{minipage} &
	&
	\begin{minipage}{1.3cm}
	\begin{tikzpicture}
	\filldraw (0,0) circle [radius=0.08];
	\filldraw (.6,0) circle [radius=0.08];
	\filldraw (.8,.6) circle [radius=0.08];
	\filldraw (.3,1) circle [radius=0.08];
	\filldraw (-.2,.6) circle [radius=0.08];
	\draw [thick]	(-0.2,0.6) -- (0.6,0) -- (0.3,1) -- (0,0) -- (0.8,0.6);
	\end{tikzpicture}
	\end{minipage}
\end{aligned}
\end{equation}

\noindent There are 60 graphs which are isomorphic to the second graph appearing in $L_{5\text{-pt}}$; 12 of these are listed below and the rest are given by their rotations:
\begin{equation} \label{eq: graph_iso_5_scorpion}
\begin{aligned}
	&
	\begin{minipage}{\mini cm}
	\begin{tikzpicture}
	\filldraw (0,0) circle [radius=0.08];
	\filldraw (.6,0) circle [radius=0.08];
	\filldraw (.8,.6) circle [radius=0.08];
	\filldraw (.3,1) circle [radius=0.08];
	\filldraw (-.2,.6) circle [radius=0.08];
	\draw [thick]	(0.3,1) -- (-0.2,0.6) -- (0,0) -- (0.6,0);
	\draw [thick]	(0,0) -- (0.8,0.6);
	\end{tikzpicture}
	\end{minipage} &
	&
	\begin{minipage}{\mini cm}
	\begin{tikzpicture}
	\filldraw (0,0) circle [radius=0.08];
	\filldraw (.6,0) circle [radius=0.08];
	\filldraw (.8,.6) circle [radius=0.08];
	\filldraw (.3,1) circle [radius=0.08];
	\filldraw (-.2,.6) circle [radius=0.08];
	\draw [thick]	(0.3,1) -- (0.8,0.6) -- (0.6,0) -- (0,0);
	\draw [thick]	(0.6,0) -- (-0.2,0.6);
	\end{tikzpicture}
	\end{minipage} &
	&
	\begin{minipage}{\mini cm}
	\begin{tikzpicture}
	\filldraw (0,0) circle [radius=0.08];
	\filldraw (.6,0) circle [radius=0.08];
	\filldraw (.8,.6) circle [radius=0.08];
	\filldraw (.3,1) circle [radius=0.08];
	\filldraw (-.2,.6) circle [radius=0.08];
	\draw [thick]	(-0.2,0.6) -- (0.3,1) -- (0,0) -- (0.6,0);
	\draw [thick]	(0,0) -- (0.8,0.6);
	\end{tikzpicture}
	\end{minipage} &
	&
	\begin{minipage}{\mini cm}
	\begin{tikzpicture}
	\filldraw (0,0) circle [radius=0.08];
	\filldraw (.6,0) circle [radius=0.08];
	\filldraw (.8,.6) circle [radius=0.08];
	\filldraw (.3,1) circle [radius=0.08];
	\filldraw (-.2,.6) circle [radius=0.08];
	\draw [thick]	(0.8,0.6) -- (0.3,1) -- (0.6,0) -- (0,0);
	\draw [thick]	(0.6,0) -- (-0.2,0.6);
	\end{tikzpicture}
	\end{minipage} &
	&
	\begin{minipage}{\mini cm}
	\begin{tikzpicture}
	\filldraw (0,0) circle [radius=0.08];
	\filldraw (.6,0) circle [radius=0.08];
	\filldraw (.8,.6) circle [radius=0.08];
	\filldraw (.3,1) circle [radius=0.08];
	\filldraw (-.2,.6) circle [radius=0.08];
	\draw [thick]	(0.6,0) -- (0,0) -- (0.3,1) -- (-0.2,0.6);
	\draw [thick]	(0.3,1) -- (0.8,0.6);
	\end{tikzpicture}
	\end{minipage} &
	&
	\begin{minipage}{\mini cm}
	\begin{tikzpicture}
	\filldraw (0,0) circle [radius=0.08];
	\filldraw (.6,0) circle [radius=0.08];
	\filldraw (.8,.6) circle [radius=0.08];
	\filldraw (.3,1) circle [radius=0.08];
	\filldraw (-.2,.6) circle [radius=0.08];
	\draw [thick]	(0,0) -- (0.6,0) -- (0.3,1) -- (0.8,0.6);
	\draw [thick]	(0.3,1) -- (-0.2,0.6);
	\end{tikzpicture}
	\end{minipage} \\[5pt]
	&
	\begin{minipage}{1.3cm}
	\begin{tikzpicture}
	\filldraw (0,0) circle [radius=0.08];
	\filldraw (.6,0) circle [radius=0.08];
	\filldraw (.8,.6) circle [radius=0.08];
	\filldraw (.3,1) circle [radius=0.08];
	\filldraw (-.2,.6) circle [radius=0.08];
	\draw [thick]	(-0.2,0.6) -- (0.8,0.6) -- (0,0) -- (0.3,1);
	\draw [thick] 	(0,0) -- (0.6,0);
	\end{tikzpicture}
	\end{minipage} &
	&
	\begin{minipage}{1.3cm}
	\begin{tikzpicture}
	\filldraw (0,0) circle [radius=0.08];
	\filldraw (.6,0) circle [radius=0.08];
	\filldraw (.8,.6) circle [radius=0.08];
	\filldraw (.3,1) circle [radius=0.08];
	\filldraw (-.2,.6) circle [radius=0.08];
	\draw [thick]	(0.8,0.6) -- (-0.2,0.6) -- (0.6,0) -- (0.3,1);
	\draw [thick] 	(0,0) -- (0.6,0);
	\end{tikzpicture}
	\end{minipage} &
	&
	\begin{minipage}{1.3cm}
	\begin{tikzpicture}
	\filldraw (0,0) circle [radius=0.08];
	\filldraw (.6,0) circle [radius=0.08];
	\filldraw (.8,.6) circle [radius=0.08];
	\filldraw (.3,1) circle [radius=0.08];
	\filldraw (-.2,.6) circle [radius=0.08];
	\draw [thick]	(0.3,1) -- (0.6,0) -- (0,0) -- (0.8,0.6);
	\draw [thick] 	(0,0) -- (-0.2,0.6);
	\end{tikzpicture}
	\end{minipage} &
	&
	\begin{minipage}{1.3cm}
	\begin{tikzpicture}
	\filldraw (0,0) circle [radius=0.08];
	\filldraw (.6,0) circle [radius=0.08];
	\filldraw (.8,.6) circle [radius=0.08];
	\filldraw (.3,1) circle [radius=0.08];
	\filldraw (-.2,.6) circle [radius=0.08];
	\draw [thick]	(0.3,1) -- (0,0) -- (0.6,0) -- (0.8,0.6);
	\draw [thick] 	(0.6,0) -- (-0.2,0.6);
	\end{tikzpicture}
	\end{minipage} &
	&
	\begin{minipage}{1.3cm}
	\begin{tikzpicture}
	\filldraw (0,0) circle [radius=0.08];
	\filldraw (.6,0) circle [radius=0.08];
	\filldraw (.8,.6) circle [radius=0.08];
	\filldraw (.3,1) circle [radius=0.08];
	\filldraw (-.2,.6) circle [radius=0.08];
	\draw [thick]	(0.8,0.6) -- (0,0) -- (0.6,0) -- (0.3,1);
	\draw [thick]	(0.6,0) -- (-0.2,0.6);
	\end{tikzpicture}
	\end{minipage} &
	&
	\begin{minipage}{1.3cm}
	\begin{tikzpicture}
	\filldraw (0,0) circle [radius=0.08];
	\filldraw (.6,0) circle [radius=0.08];
	\filldraw (.8,.6) circle [radius=0.08];
	\filldraw (.3,1) circle [radius=0.08];
	\filldraw (-.2,.6) circle [radius=0.08];
	\draw [thick]	(-0.2,0.6) -- (0.6,0) -- (0,0) -- (0.3,1);
	\draw [thick]	(0,0) -- (0.8,0.6);
	\end{tikzpicture}
	\end{minipage}
\end{aligned}
\end{equation}

\noindent Finally, there are five graphs which are isomorphic to the third graph appearing in $L_{5\text{-pt}}$, which are simply the five rotations acting on that graph.  Therefore, $L_{5\text{-pt}}$ is indeed the sum with unit coefficients of all trees with five vertices! 

Thus, we arrive at the main result of this section (proven in Appendix \ref{sec: linear}): 

\begin{itemize}
\item[]
{\it The unique minimal $\nv$-point linear shift-invariant Lagrangian term is represented graphically as a sum with unit coefficients of all labeled trees with $\nv$ vertices.}
\end{itemize}


\section{Beyond the Galileons} \label{sec: beyond} 

Now, we extend the linear shift transformation to polynomials of higher degree.  We will need to develop the graphical approach further in order to tackle this problem and numerous technicalities will arise.  However, a rather elegant and beautiful description of these polynomial shift invariants will emerge.

Consider the problem of determining all possible terms in a Lagrangian that are invariant under the polynomial shift symmetry:
\begin{align}
	& \phi ( t , x^i ) \rightarrow \phi (t, x^i ) + \delta_\pd \phi, 
	& \delta_\pd \phi = a_{i_1 \cdots i_\pd} x^{i_1} \cdots x^{i_\pd} + \cdots + a_i x^i + a.  
\end{align}
\noindent The $a$'s are arbitrary real coefficients that parametrize the symmetry transformation, and are symmetric in any pair of indices.  $\pd = 0, 1, 2, \ldots$ corresponds to constant shift, linear shift, quadratic shift, and so on.  Obviously, if a term is invariant under a polynomial shift of order $\pd$, then it is also invariant under a polynomial shift of order $\pd'$ with $0 \leq \pd' \leq \pd$.

We will call a term with $\nv$ fields and $2\Delta$ derivatives an $(\nv, \Delta)$ term.  We are interested in interaction terms, for which $\nv \geq 3$.  As previously mentioned, terms with the lowest possible value of $\Delta$ are of greatest interest.  It is straightforward to write down invariant terms with $\Delta \geq \frac{1}{2} \nv (\pd +1)$ since, if each $\phi$ has more than $\pd$ derivatives acting on it, then the term is exactly invariant.  Are there any invariant terms with lower values of $\Delta$? If so, then these invariant terms will be more relevant than the exact invariants.

To be invariant, a term must transform into a total derivative under the polynomial shift symmetry.  In other words, for a specific $\pd$ and given $(\nv , \Delta)$, we are searching for terms $L$ such that
\begin{equation}\label{conseq}
	\delta_\pd L = \partial_i (L_i).
\end{equation}
\noindent Here $L$ is a linear combination of terms with $\nv$ $\phi$'s and $2\Delta$ $\partial$'s, and $L_i$ is a linear combination of terms with $\nv -1$ $\phi$'s.  Such $L$'s are called \emph{\pd-invariants}.

How might we determine such invariant terms in general? For a given $(\nv ,\Delta)$, the most brute-force method for determining invariant terms can be described as follows.  First, write down all possible terms in the Lagrangian with a given $(\nv ,\Delta)$ and ensure that they are independent up to integration by parts.  Next, take the variation of all these terms under the polynomial shift.  There may exist linear combinations of these variations which are equal to a total derivative, which we call \emph{total derivative relations}.  If we use these total derivative relations to maximally reduce the number of variation terms, then the required $\pd$-invariants form the kernel of the map from the independent Lagrangian terms to the independent variation terms (Corollary \ref{Cor: kernel}).  Let us consider some examples of this brute-force procedure in action.

\subsection{Brute-force Examples} \label{sec: brute force}

\subsubsection{\texorpdfstring{$(\pd , \nv , \Delta ) = (1,3,2)$}{P=1, n=3, Delta=2}} \label{P=1, n=3, m=2}

In this case, a general Lagrangian is made up of two independent terms, after integrating by parts, given by
\begin{align*}
	& L_1 = \partial_i\phi \, \partial_j\phi \, \partial_i\partial_j\phi, 
	& L_2 = \phi \, \partial_i\partial_j\phi \, \partial_i\partial_j\phi.
\end{align*}

\noindent The variation under the linear shift symmetry (for $\pd =1$) of these terms is given by
\begin{align*}
	& \delta_1(L_1) = 2L_a^{ \times }, 
	& \delta_1(L_2) = L_b^{ \times },
\end{align*}

\noindent where $L_a^{ \times } =a_i \, \partial_j\phi \, \partial_i\partial_j\phi$ and $L_b^{ \times } = ( a_k x^k + a ) \, \partial_i \partial_j \phi \, \partial_i \partial_j \phi$.  There is only one total derivative that can be formed from these terms, namely
\begin{equation*}
 \partial_i(a_i\partial_j\phi \, \partial_j\phi) = 2L_a^{ \times }.
\end{equation*}

\noindent Therefore, there is a single invariant term for $(\pd , \nv , \Delta ) = (1,3,2)$, given by 
\begin{equation*}
	L_1=\partial_i\phi \, \partial_j\phi \, \partial_i\partial_j\phi.
\end{equation*}

\subsubsection{\texorpdfstring{$(\pd , \nv , \Delta ) = (3,3,4)$}{P=3, n=3, Delta=4}}\label{P=3, n=3, m=4}

In this case, a general Lagrangian is made up of four independent terms, after integrating by parts, given by
\begin{align*}
	L_1 &= \partial_i \partial_j \phi \, \partial_k \partial_l \phi \, \partial_i \partial_j \partial_k \partial_l \phi, &%
	L_2 &= \partial_i \partial_j \phi \, \partial_i \partial_k \partial_l\phi \, \partial_j \partial_k \partial_l \phi, \\
	L_3 &= \partial_i \phi \, \partial_j \partial_k \partial_l \phi \, \partial_i \partial_j \partial_k \partial_l \phi, &%
	L_4 &= \phi \, \partial_i \partial_j \partial_k \partial_l \phi \, \partial_i \partial_j \partial_k \partial_l \phi.
\end{align*}
The variation under the cubic shift symmetry (for $\pd =3$) of these terms is given by
\begin{align*}
	\delta_3(L_1) &= 2L_a^{ \times }, &%
	\delta_3(L_2) &= L_b^{ \times }+2L_c^{ \times }, \\
	\delta_3(L_3) &= L_d^{ \times }+L_e^{ \times }, &%
	\delta_3(L_4) &= L_f^{ \times }. 
\end{align*}
where
\begin{eqnarray*}
L_a^{ \times }&=&(6a_{ijm}x^m+2a_{ij})\partial_k\partial_l\phi \, \partial_i\partial_j\partial_k\partial_l\phi\cr
L_b^{ \times }&=&(6a_{ijm}x^m+2a_{ij})\partial_i\partial_k\partial_l\phi \, \partial_j\partial_k\partial_l\phi\cr
L_c^{ \times }&=&6a_{ikl}\partial_i\partial_j\phi \, \partial_j\partial_k\partial_l\phi\cr
L_d^{ \times }&=&(3a_{imn}x^mx^n+2a_{im}x^m+a_i)\partial_i\phi \, \partial_j\partial_k\partial_l\phi \, \partial_i\partial_j\partial_k\partial_l\phi\cr
L_e^{ \times }&=&6a_{jkl}\partial_i\phi \, \partial_i\partial_j\partial_k\partial_l\phi\cr
L_f^{ \times }&=&(a_{mnp}x^mx^nx^p+a_{mn}x^mx^n+a_mx^m+a)\partial_i\partial_j\partial_k\partial_l\phi \, \partial_i\partial_j\partial_k\partial_l\phi.
\end{eqnarray*}
There are three independent total derivatives that can be formed out of these: 
\begin{align*}
	& \partial_i[2(6a_{ijm}x^m+2a_{ij})\partial_k\partial_l\phi \, \partial_j\partial_k\partial_l\phi-6a_{ijj}\partial_k\partial_l\phi \, \partial_k\partial_l\phi] = 2 ( L_a^{ \times }+ L_b^{ \times } ), \\
	& \partial_i[6a_{ijk}\partial_j\partial_l\phi \, \partial_k\partial_l\phi] = 2L_c^{ \times }, \\
	& \partial_i [ 6a_{ijk} \partial_j \partial_k \partial_{\ell} \phi \, \partial_{\ell} \phi ] = L^{ \times } _{c}+ L^{ \times }_{e}.
 \end{align*}
 
\noindent It is a non-trivial exercise to find and verify this, and a more systematic way of finding the total derivative relations will be introduced later.
 
Applying these relations, one finds a single invariant for $(\pd, \nv , \Delta ) = (3,3,4)$: 
\begin{equation*}
	L_1+2L_2=\partial_i\partial_j\phi \, \partial_k\partial_l\phi \, \partial_i\partial_j\partial_k\partial_l\phi
+2\partial_i\partial_j\phi \, \partial_i\partial_k\partial_l\phi \, \partial_j\partial_k\partial_l\phi. 
\end{equation*}

\noindent Note that $\delta_3(L_1+2L_2)=2(L_a^{ \times } +L_b^{ \times })+2(2L_c^{ \times })$, which is a total derivative.

\subsection{Introduction to the Graphical Representation} \label{sec: intro graphs}

	It is clear that even for these simple examples, the calculations quickly become unwieldy, and it becomes increasingly difficult to classify all of the total derivative relations.  At this point we will rewrite these results in a graphical notation which will make it easier to keep track of the contractions of indices in the partial derivatives.   Full details about this graphical approach can be found in Appendix \ref{AppB}, but we will summarize them here.  In addition to the $\bullet$-vertex and edges we introduced in \S\ref{sec: Galileons}, we represent $\delta_\pd \phi$ by a $\otimes$ (a $\times$-vertex).  Note that there are at most $\pd$ edges incident to a $\times$-vertex since $\pd +1$ derivatives acting on $\delta_\pd \phi$ yields zero, whereas an arbitrary number of edges can be incident to a $\bullet$-vertex.  Moreover, we introduce another vertex, called a {$\star$}-vertex, which will be used to represent terms that are total derivatives.  We require that a {$\star$}-vertex always be incident to exactly one edge, and that this edge be incident to a $\bullet$-vertex or $\times$-vertex.  This edge represents a derivative acting on the entire term as a whole, and the index of that derivative is contracted with the index of another derivative acting on the $\phi$ or $\delta_\pd \phi$ of the $\bullet$- or $\times$-vertex, respectively, to which the $\star$-vertex is adjacent.  Therefore, directly from the definition, any graph with a {$\star$}-vertex represents a total derivative term.  The expansion of this derivative using the Leibniz rule is graphically represented by the summation of the graphs formed by removing the {$\star$}-vertex and attaching the edge that was incident to the {$\star$}-vertex to each remaining vertex.  This operation is denoted by the \emph{derivative map} $\rho$.  The symbols $\nbv$, $\ntv$ and $\nsv$ represent the numbers of each type of vertex.  Note that $\nv = \nbv + \ntv$ does not include $\nsv$ since $\star$-vertices represent neither $\phi$ nor $\delta_\pd \phi$.

	We define three special types of graphs: A \emph{plain-graph} is a graph in which all vertices are $\bullet$-vertices.  A \emph{$\times$-graph} is a plain-graph with one $\bullet$-vertex replaced with a $\times$-vertex.  A \emph{{$\star$}-graph} is a graph with one $\times$-vertex and at least one {$\star$}-vertex. 

	Note that the variation $\delta_\pd$ of a plain-graph under the polynomial shift symmetry is given by summing over all graphs that have exactly one $\bullet$-vertex in the original graph replaced with a $\times$-vertex.  To illustrate the graphical approach, we rewrite the examples from sections \ref{P=1, n=3, m=2} and \ref{P=3, n=3, m=4} using this new graphical notation.  Since the algebraic expressions have unlabeled $\phi$'s, the graphs in this section will be unlabeled.

\subsubsection{\texorpdfstring{$(\pd , \nv , \Delta) = (1,3,2)$}{P=1, n=3, Delta=2}}

	The two independent terms are written in the graphical notation as
\begin{align*}
L_1 &= \hspace{0.1cm}
\begin{minipage}{\mini cm}
\begin{tikzpicture}
\draw [thick] (0,0) -- (0,\1);
\filldraw (0,0) circle [radius=0.08];
\filldraw (\1,0) circle [radius=0.08];
\filldraw (0,\1) circle [radius=0.08];
\draw [thick] (0,0) -- (\1,0);
\end{tikzpicture}
\end{minipage}
& 
L_2 &= \hspace{0.2cm}
\begin{minipage}{\mini cm}
\begin{tikzpicture}
\filldraw (0,0) circle [radius=0.08];
\filldraw (\1,0) circle [radius=0.08];
\filldraw (0,\1) circle [radius=0.08];
\draw [thick] (0,0) to [out=20,in=160] (\1,0);
\draw [thick] (0,0) to [out=-20,in=-160] (\1,0);
\end{tikzpicture}
\end{minipage}
\end{align*}

\noindent The variation under the linear shift symmetry (for $\pd =1$) is given by
\begin{align*}
\delta_1 \left( \hspace{-0.1cm} 
\begin{minipage}{\mini cm}
\begin{tikzpicture}
\filldraw [white] (0,\1) circle [radius=0.115];
\draw [thick,white] (0,\1) circle [radius=0.115];
\node [white] at (0,\1) {$\times$};
\draw [thick] (0,0) --  (0,\1);
\filldraw (0,0) circle [radius=0.08];
\filldraw (\1,0) circle [radius=0.08];
\filldraw (0,\1) circle [radius=0.08];
\draw [thick] (0,0) -- (\1,0);
\end{tikzpicture}
\end{minipage} 
\hspace{-0.1cm}
\right) &=2
\begin{minipage}{\mini cm}
\begin{tikzpicture}
\draw [thick] (0,0) -- (0,\1);
\filldraw (0,0) circle [radius=0.08];
\filldraw (\1,0) circle [radius=0.08];
\filldraw [white] (0,\1) circle [radius=0.115];
\draw [thick] (0,\1) circle [radius=0.115];
\node at (0,\1) {$\times$};
\draw [thick] (0,0) -- (\1,0);
\end{tikzpicture}
\end{minipage} 
&   
\delta_1 \left(  \hspace{-0.1cm}
\begin{minipage}{\mini cm}
\begin{tikzpicture}
\filldraw [white] (0,\1) circle [radius=0.115];
\draw [thick,white] (0,\1) circle [radius=0.115];
\node [white] at (0,\1) {$\times$};
\filldraw (0,0) circle [radius=0.08];
\filldraw (\1,0) circle [radius=0.08];
\filldraw (0,\1) circle [radius=0.08];
\draw [thick] (0,0) to [out=20,in=160] (\1,0);
\draw [thick] (0,0) to [out=-20,in=-160] (\1,0);
\end{tikzpicture}
\end{minipage}
\hspace{-0.1cm}
\right) &=
\begin{minipage}{\mini cm}
\begin{tikzpicture}
\filldraw (0,0) circle [radius=0.08];
\filldraw (\1,0) circle [radius=0.08];
\filldraw [white] (0,\1) circle [radius=0.115];
\draw [thick] (0,\1) circle [radius=0.115];
\node at (0,\1) {$\times$};
\draw [thick] (0,0) to [out=20,in=160] (\1,0);
\draw [thick] (0,0) to [out=-20,in=-160] (\1,0);
\end{tikzpicture}
\end{minipage}
\end{align*}

\noindent The only independent total derivative that can be formed out of these terms is
\begin{align*}
\rho \left(
\begin{minipage}{\mini cm}
\begin{tikzpicture}
\draw [thick] (0,\1) -- (\1,\1);
\node at (\1,\1) {\scalebox{0.8}{$\bigstar$}};
\filldraw (0,0) circle [radius=0.08];
\filldraw (\1,0) circle [radius=0.08];
\filldraw [white] (0,\1) circle [radius=0.115];
\draw [thick] (0,\1) circle [radius=0.115];
\node at (0,\1) {$\times$};
\draw [thick] (0,0) -- (\1,0);
\end{tikzpicture}
\end{minipage}
\right)
=2 
\begin{minipage}{\mini cm}
\begin{tikzpicture}
\filldraw (\1,0) circle [radius=0.08];
\filldraw (0,0) circle [radius=0.08];
\draw [thick] (0,\1) -- (0,0)  -- (\1,0);
\filldraw [white] (0,\1) circle [radius=0.115];
\draw [thick] (0,\1) circle [radius=0.115];
\node at (0,\1) {$\times$};
\end{tikzpicture}
\end{minipage}
\end{align*}

\noindent As before, there is a single invariant for $(\pd , \nv , \Delta ) = (1,3,2)$ given by $L_1$.  In this case, the graphical version of (\ref{conseq}) is given by
\begin{align*}
\delta_1 \left(  \hspace{-0.1cm}
\begin{minipage}{\mini cm}
\begin{tikzpicture}
\filldraw [white] (0,\1) circle [radius=0.115];
\draw [thick,white] (0,\1) circle [radius=0.115];
\node [white] at (0,\1) {$\times$};
\draw [thick] (0,0) --  (0,\1);
\filldraw (0,0) circle [radius=0.08];
\filldraw (\1,0) circle [radius=0.08];
\filldraw (0,\1) circle [radius=0.08];
\draw [thick] (0,0) -- (\1,0);
\end{tikzpicture}
\end{minipage} 
\hspace{-0.1cm}
\right)=2
\begin{minipage}{\mini cm}
\begin{tikzpicture}
\draw [thick] (0,0) -- (0,\1);
\filldraw (0,0) circle [radius=0.08];
\filldraw (\1,0) circle [radius=0.08];
\filldraw [white] (0,\1) circle [radius=0.115];
\draw [thick] (0,\1) circle [radius=0.115];
\node at (0,\1) {$\times$};
\draw [thick] (0,0) -- (\1,0);
\end{tikzpicture}
\end{minipage}
=\rho \left(
\begin{minipage}{\mini cm}
\begin{tikzpicture}
\draw [thick] (0,\1) -- (\1,\1);
\node at (\1,\1) {\scalebox{0.8}{$\bigstar$}};
\filldraw (0,0) circle [radius=0.08];
\filldraw (\1,0) circle [radius=0.08];
\filldraw [white] (0,\1) circle [radius=0.115];
\draw [thick] (0,\1) circle [radius=0.115];
\node at (0,\1) {$\times$};
\draw [thick] (0,0) -- (\1,0);
\end{tikzpicture}
\end{minipage}
\right) 
\end{align*}

\subsubsection{\texorpdfstring{$(\pd , \nv , \Delta ) = (3,3,4)$}{P=3, n=3, Delta=4}} \label{sec: 334} 

The four independent terms are written in the graphical notation as
\begin{align*}
& L_1= 
\begin{minipage}{\mini cm}
\begin{tikzpicture}
\draw [thick] (0,0) to [out=70,in=-70] (0,\1);
\draw [thick] (0,0) to [out=110,in=-110] (0,\1);
\filldraw (0,0) circle [radius=0.08];
\filldraw (\1,0) circle [radius=0.08];
\filldraw (0,\1) circle [radius=0.08];
\draw [thick] (0,0) to [out=20,in=160] (\1,0);
\draw [thick] (0,0) to [out=-20,in=-160] (\1,0);
\end{tikzpicture}
\end{minipage}
&
& L_2=
\begin{minipage}{\mini cm}
\begin{tikzpicture}
\filldraw (0,0) circle [radius=0.08];
\filldraw (\1,0) circle [radius=0.08];
\filldraw (0,\1) circle [radius=0.08];
\draw [thick] (0,0) -- (0,\1) -- (\1,0);
\draw [thick] (0,0) to [out=20,in=160] (\1,0);
\draw [thick] (0,0) to [out=-20,in=-160] (\1,0);
\end{tikzpicture}
\end{minipage}
&
& L_3= 
\begin{minipage}{\mini cm}
\begin{tikzpicture}
\draw [thick] (\1,0) -- (0,0) -- (0,\1);
\filldraw (0,0) circle [radius=0.08];
\filldraw (\1,0) circle [radius=0.08];
\filldraw (0,\1) circle [radius=0.08];
\draw [thick] (0,0) to [out=20,in=160] (\1,0);
\draw [thick] (0,0) to [out=-20,in=-160] (\1,0);
\end{tikzpicture}
\end{minipage}
&
& L_4=
\begin{minipage}{\mini cm}
\begin{tikzpicture}
\filldraw (0,0) circle [radius=0.08];
\filldraw (\1,0) circle [radius=0.08];
\filldraw (0,\1) circle [radius=0.08];
\draw [thick] (0,0) to [out=10,in=170] (\1,0);
\draw [thick] (0,0) to [out=-10,in=-170] (\1,0);
\draw [thick] (0,0) to [out=30,in=150] (\1,0);
\draw [thick] (0,0) to [out=-30,in=-150] (\1,0);
\end{tikzpicture}
\end{minipage}
\end{align*}

\noindent The variation under the cubic shift symmetry (for $\pd =3$) is given by
\begin{align*}
\delta_3 \Bigg(  \hspace{-0.1cm}
\begin{minipage}{\mini cm}
\begin{tikzpicture}
\filldraw [white] (0,\1) circle [radius=0.115];
\draw [thick,white] (0,\1) circle [radius=0.115];
\node [white] at (0,\1) {$\times$};
\draw [thick] (0,0) to [out=70,in=-70] (0,\1);
\draw [thick] (0,0) to [out=110,in=-110] (0,\1);
\filldraw (0,0) circle [radius=0.08];
\filldraw (\1,0) circle [radius=0.08];
\filldraw (0,\1) circle [radius=0.08];
\draw [thick] (0,0) to [out=20,in=160] (\1,0);
\draw [thick] (0,0) to [out=-20,in=-160] (\1,0);
\end{tikzpicture}
\end{minipage} 
\hspace{1mm}
\hspace{-0.2cm} \Bigg)
&=
2\begin{minipage}{\mini cm}
\begin{tikzpicture}
\draw [thick] (0,0) to [out=70,in=-70] (0,\1);
\draw [thick] (0,0) to [out=110,in=-110] (0,\1);
\filldraw (0,0) circle [radius=0.08];
\filldraw (\1,0) circle [radius=0.08];
\filldraw [white] (0,\1) circle [radius=0.115];
\draw [thick] (0,\1) circle [radius=0.115];
\node at (0,\1) {$\times$};
\draw [thick] (0,0) to [out=20,in=160] (\1,0);
\draw [thick] (0,0) to [out=-20,in=-160] (\1,0);
\end{tikzpicture}
\end{minipage}
&
\delta_3 \Bigg(  \hspace{-0.1cm}
\begin{minipage}{\mini cm}
\begin{tikzpicture}
\filldraw [white] (0,\1) circle [radius=0.115];
\draw [thick,white] (0,\1) circle [radius=0.115];
\node [white] at (0,\1) {$\times$};
\filldraw (0,0) circle [radius=0.08];
\filldraw (\1,0) circle [radius=0.08];
\filldraw (0,\1) circle [radius=0.08];
\draw [thick] (0,0) -- (0,\1) -- (\1,0);
\draw [thick] (0,0) to [out=20,in=160] (\1,0);
\draw [thick] (0,0) to [out=-20,in=-160] (\1,0);
\end{tikzpicture}
\end{minipage}
\hspace{-0.1cm} \Bigg)
&=
\begin{minipage}{\mini cm}
\begin{tikzpicture}
\filldraw (0,0) circle [radius=0.08];
\filldraw (\1,0) circle [radius=0.08];
\draw [thick] (0,0) -- (0,\1) -- (\1,0);
\draw [thick] (0,0) to [out=20,in=160] (\1,0);
\draw [thick] (0,0) to [out=-20,in=-160] (\1,0);
\filldraw [white] (0,\1) circle [radius=0.115];
\draw [thick] (0,\1) circle [radius=0.115];
\node at (0,\1) {$\times$};
\end{tikzpicture}
\end{minipage}
\hspace{-0.2cm}
+2 
\hspace{-0.1cm}
\begin{minipage}{\mini cm}
\begin{tikzpicture}
\filldraw (\1,0) circle [radius=0.08];
\filldraw (0,0) circle [radius=0.08];
\draw [thick] (0,\1) -- (0,0)  -- (\1,0);
\draw [thick] (0,\1) to [out=-25,in=115] (\1,0);
\draw [thick] (0,\1) to [out=-65,in=155] (\1,0);
\filldraw [white] (0,\1) circle [radius=0.115];
\draw [thick] (0,\1) circle [radius=0.115];
\node at (0,\1) {$\times$};
\end{tikzpicture}
\end{minipage}
\\[5pt]
\delta_3 \Bigg(  \hspace{-0.1cm}
\begin{minipage}{\mini cm}
\begin{tikzpicture}
\filldraw [white] (0,\1) circle [radius=0.115];
\draw [thick,white] (0,\1) circle [radius=0.115];
\node [white] at (0,\1) {$\times$};
\draw [thick, white] (0,0) to [out=110,in=-110] (0,\1);
\draw [thick] (\1,0) -- (0,0) -- (0,\1);
\filldraw (0,0) circle [radius=0.08];
\filldraw (\1,0) circle [radius=0.08];
\filldraw (0,\1) circle [radius=0.08];
\draw [thick] (0,0) to [out=20,in=160] (\1,0);
\draw [thick] (0,0) to [out=-20,in=-160] (\1,0);
\end{tikzpicture}
\end{minipage}
\hspace{-0.1cm} \Bigg)
&=
\phantom{2}
\begin{minipage}{\mini cm}
\begin{tikzpicture}
\draw [thick] (\1,0) -- (0,0) -- (0,\1);
\filldraw (0,0) circle [radius=0.08];
\filldraw (\1,0) circle [radius=0.08];
\draw [thick] (0,0) to [out=20,in=160] (\1,0);
\draw [thick] (0,0) to [out=70,in=-70] (0,\1);
\draw [thick] (0,0) to [out=110,in=-110] (0,\1);
\draw [thick] (0,0) to [out=-20,in=-160] (\1,0);
\filldraw [white] (0,\1) circle [radius=0.115];
\draw [thick] (0,\1) circle [radius=0.115];
\node at (0,\1) {$\times$};
\end{tikzpicture}
\end{minipage}
\hspace{-0.2cm} + \hspace{-0.1cm}
\begin{minipage}{\mini cm}
\begin{tikzpicture}
\draw [thick] (\1,0) -- (0,0) -- (0,\1);
\filldraw (0,0) circle [radius=0.08];
\filldraw (\1,0) circle [radius=0.08];
\draw [thick] (0,0) to [out=70,in=-70] (0,\1);
\draw [thick] (0,0) to [out=110,in=-110] (0,\1);
\filldraw [white] (0,\1) circle [radius=0.115];
\draw [thick] (0,\1) circle [radius=0.115];
\node at (0,\1) {$\times$};
\end{tikzpicture}
\end{minipage}
&
\delta_3 \Bigg(  \hspace{-0.1cm}
\begin{minipage}{\mini cm}
\begin{tikzpicture}
\filldraw [white] (0,\1) circle [radius=0.115];
\draw [thick,white] (0,\1) circle [radius=0.115];
\node [white] at (0,\1) {$\times$};
\filldraw (0,0) circle [radius=0.08];
\filldraw (\1,0) circle [radius=0.08];
\filldraw (0,\1) circle [radius=0.08];
\draw [thick] (0,0) to [out=10,in=170] (\1,0);
\draw [thick] (0,0) to [out=-10,in=-170] (\1,0);
\draw [thick] (0,0) to [out=30,in=150] (\1,0);
\draw [thick] (0,0) to [out=-30,in=-150] (\1,0);
\end{tikzpicture}
\end{minipage}
\hspace{-0.1cm} \Bigg)
&=
\begin{minipage}{\mini cm}
\begin{tikzpicture}
\filldraw (0,0) circle [radius=0.08];
\filldraw (\1,0) circle [radius=0.08];
\filldraw [white] (0,\1) circle [radius=0.115];
\draw [thick] (0,\1) circle [radius=0.115];
\node at (0,\1) {$\times$};
\draw [thick] (0,0) to [out=10,in=170] (\1,0);
\draw [thick] (0,0) to [out=-10,in=-170] (\1,0);
\draw [thick] (0,0) to [out=30,in=150] (\1,0);
\draw [thick] (0,0) to [out=-30,in=-150] (\1,0);
\end{tikzpicture}
\end{minipage}
\end{align*}

\noindent The independent total derivatives that can be formed out of these terms are
\begin{align*}
& \rho \Bigg(
2
\begin{minipage}{\mini cm}
\begin{tikzpicture}
\draw [thick] (0,\1) -- (\1,\1);
\draw [thick] (0,\1) -- (0,0);
\node at (\1,\1) {\scalebox{0.8}{$\bigstar$}};
\filldraw (0,0) circle [radius=0.08];
\filldraw (\1,0) circle [radius=0.08];
\filldraw [white] (0,\1) circle [radius=0.115];
\draw [thick] (0,\1) circle [radius=0.115];
\node at (0,\1) {$\times$};
\draw [thick] (0,0) to [out=20,in=160] (\1,0);
\draw [thick] (0,0) to [out=-20,in=-160] (\1,0);
\end{tikzpicture}
\end{minipage}
\hspace{-0.1cm} -
\begin{minipage}{\mini cm}
\begin{tikzpicture}
\draw [thick] (0,\1) -- (\1,\1);
\draw [thick] (0,\1) .. controls (0,0) and (\1,\1) .. (0,\1);
\node at (\1,\1) {\scalebox{0.8}{$\bigstar$}};
\filldraw (0,0) circle [radius=0.08];
\filldraw (\1,0) circle [radius=0.08];
\filldraw [white] (0,\1) circle [radius=0.115];
\draw [thick] (0,\1) circle [radius=0.115];
\node at (0,\1) {$\times$};
\draw [thick] (0,0) to [out=20,in=160] (\1,0);
\draw [thick] (0,0) to [out=-20,in=-160] (\1,0);
\end{tikzpicture}
\end{minipage}
\Bigg)
=2 
\begin{minipage}{\mini cm}
\begin{tikzpicture}
\draw [thick] (0,0) to [out=70,in=-70] (0,\1);
\draw [thick] (0,0) to [out=110,in=-110] (0,\1);
\filldraw (0,0) circle [radius=0.08];
\filldraw (\1,0) circle [radius=0.08];
\filldraw [white] (0,\1) circle [radius=0.115];
\draw [thick] (0,\1) circle [radius=0.115];
\node at (0,\1) {$\times$};
\draw [thick] (0,0) to [out=20,in=160] (\1,0);
\draw [thick] (0,0) to [out=-20,in=-160] (\1,0);
\end{tikzpicture}
\end{minipage}
\hspace{-0.2cm} + 2
\begin{minipage}{\mini cm}
\begin{tikzpicture}
\filldraw (0,0) circle [radius=0.08];
\filldraw (\1,0) circle [radius=0.08];
\draw [thick] (0,0) -- (0,\1) -- (\1,0);
\draw [thick] (0,0) to [out=20,in=160] (\1,0);
\draw [thick] (0,0) to [out=-20,in=-160] (\1,0);
\filldraw [white] (0,\1) circle [radius=0.115];
\draw [thick] (0,\1) circle [radius=0.115];
\node at (0,\1) {$\times$};
\end{tikzpicture}
\end{minipage} \\[5pt]
& \rho \Bigg(
\begin{minipage}{\mini cm}
\begin{tikzpicture}
\draw [thick] (0,\1) -- (\1,\1);
\node at (\1,\1) {\scalebox{0.8}{$\bigstar$}};
\filldraw (0,0) circle [radius=0.08];
\filldraw (\1,0) circle [radius=0.08];
\draw [thick] (0,0) -- (0,\1) -- (\1,0) -- (0,0);
\filldraw [white] (0,\1) circle [radius=0.115];
\draw [thick] (0,\1) circle [radius=0.115];
\node at (0,\1) {$\times$};
\end{tikzpicture}
\end{minipage}
\Bigg)
=2 \begin{minipage}{\mini cm}
\begin{tikzpicture}
\filldraw (\1,0) circle [radius=0.08];
\filldraw (0,0) circle [radius=0.08];
\draw [thick] (0,\1) -- (0,0)  -- (\1,0);
\draw [thick] (0,\1) to [out=-25,in=115] (\1,0);
\draw [thick] (0,\1) to [out=-65,in=165] (\1,0);
\filldraw [white] (0,\1) circle [radius=0.115];
\draw [thick] (0,\1) circle [radius=0.115];
\node at (0,\1) {$\times$};
\end{tikzpicture}
\end{minipage} \\[5pt]
& \rho \Bigg(
\begin{minipage}{\mini cm}
\begin{tikzpicture}
\draw [thick] (0,\1) -- (\1,\1);
\node at (\1,\1) {\scalebox{0.8}{$\bigstar$}};
\filldraw (0,0) circle [radius=0.08];
\filldraw (\1,0) circle [radius=0.08];
\draw [thick] (\1,0) -- (0,0);
\draw [thick] (0,0) to [out=70,in=-70] (0,\1);
\draw [thick] (0,0) to [out=110,in=-110] (0,\1);
\filldraw [white] (0,\1) circle [radius=0.115];
\draw [thick] (0,\1) circle [radius=0.115];
\node at (0,\1) {$\times$};
\end{tikzpicture}
\end{minipage}
\Bigg) 
= 
\begin{minipage}{\mini cm}
\begin{tikzpicture}
\draw [thick] (0,\1) -- (\1,0);
\filldraw (0,0) circle [radius=0.08];
\filldraw (\1,0) circle [radius=0.08];
\draw [thick] (\1,0) -- (0,0);
\draw [thick] (0,0) to [out=70,in=-70] (0,\1);
\draw [thick] (0,0) to [out=110,in=-110] (0,\1);
\filldraw [white] (0,\1) circle [radius=0.115];
\draw [thick] (0,\1) circle [radius=0.115];
\node at (0,\1) {$\times$};
\end{tikzpicture}
\end{minipage} 
\hspace{-0.1cm} +
\begin{minipage}{\mini cm}
\begin{tikzpicture}
\draw [thick] (0,0) -- (0,\1);
\filldraw (0,0) circle [radius=0.08];
\filldraw (\1,0) circle [radius=0.08];
\draw [thick] (\1,0) -- (0,0);
\draw [thick] (0,0) to [out=70,in=-70] (0,\1);
\draw [thick] (0,0) to [out=110,in=-110] (0,\1);
\filldraw [white] (0,\1) circle [radius=0.115];
\draw [thick] (0,\1) circle [radius=0.115];
\node at (0,\1) {$\times$};
\end{tikzpicture}
\end{minipage}
\end{align*}

\noindent Once again, there is a single invariant for $(\pd , \nv , \Delta ) = (3,4)$ given by $L_1+2L_2$.  The graphical version of \eqref{conseq} is given by
\begin{align} \label{eq: n=3 graph}
\delta_3 \Bigg( \hspace{-0.1cm} 
\begin{minipage}{\mini cm}
\begin{tikzpicture}
\filldraw [white] (0,\1) circle [radius=0.115];
\draw [thick,white] (0,\1) circle [radius=0.115];
\node [white] at (0,\1) {$\times$};
\draw [thick] (0,0) to [out=70,in=-70] (0,\1);
\draw [thick] (0,0) to [out=110,in=-110] (0,\1);
\filldraw (0,0) circle [radius=0.08];
\filldraw (\1,0) circle [radius=0.08];
\filldraw (0,\1) circle [radius=0.08];
\draw [thick] (0,0) to [out=20,in=160] (\1,0);
\draw [thick] (0,0) to [out=-20,in=-160] (\1,0);
\end{tikzpicture}
\end{minipage}
\hspace{-0.3cm} + 2 \
\begin{minipage}{\mini cm}
\begin{tikzpicture}
\filldraw [white] (0,\1) circle [radius=0.115];
\draw [thick,white] (0,\1) circle [radius=0.115];
\node [white] at (0,\1) {$\times$};
\filldraw (0,0) circle [radius=0.08];
\filldraw (\1,0) circle [radius=0.08];
\filldraw (0,\1) circle [radius=0.08];
\draw [thick] (0,0) -- (0,\1) -- (\1,0);
\draw [thick] (0,0) to [out=20,in=160] (\1,0);
\draw [thick] (0,0) to [out=-20,in=-160] (\1,0);
\end{tikzpicture}
\end{minipage}
\hspace{-0.1cm} \Bigg)
&=2 
\begin{minipage}{\mini cm}
\begin{tikzpicture}
\draw [thick] (0,0) to [out=70,in=-70] (0,\1);
\draw [thick] (0,0) to [out=110,in=-110] (0,\1);
\filldraw (0,0) circle [radius=0.08];
\filldraw (\1,0) circle [radius=0.08];
\filldraw [white] (0,\1) circle [radius=0.115];
\draw [thick] (0,\1) circle [radius=0.115];
\node at (0,\1) {$\times$};
\draw [thick] (0,0) to [out=20,in=160] (\1,0);
\draw [thick] (0,0) to [out=-20,in=-160] (\1,0);
\end{tikzpicture}
\end{minipage}
\hspace{-0.18cm} + 2
\begin{minipage}{\mini cm}
\begin{tikzpicture}
\filldraw (0,0) circle [radius=0.08];
\filldraw (\1,0) circle [radius=0.08];
\draw [thick] (0,0) -- (0,\1) -- (\1,0);
\draw [thick] (0,0) to [out=20,in=160] (\1,0);
\draw [thick] (0,0) to [out=-20,in=-160] (\1,0);
\filldraw [white] (0,\1) circle [radius=0.115];
\draw [thick] (0,\1) circle [radius=0.115];
\node at (0,\1) {$\times$};
\end{tikzpicture}
\end{minipage}
\hspace{-0.18cm} +4
\begin{minipage}{\mini cm}
\begin{tikzpicture}
\filldraw (\1,0) circle [radius=0.08];
\filldraw (0,0) circle [radius=0.08];
\draw [thick] (0,\1) -- (\1,0) (0,0)  -- (\1,0);
\draw [thick] (0,0) to [out=70,in=-70] (0,\1);
\draw [thick] (0,0) to [out=110,in=-110] (0,\1);
\filldraw [white] (0,\1) circle [radius=0.115];
\draw [thick] (0,\1) circle [radius=0.115];
\node at (0,\1) {$\times$};
\end{tikzpicture}
\end{minipage} \notag \\
&= \rho\left(
2\begin{minipage}{\mini cm}
\begin{tikzpicture}
\draw [thick] (0,\1) -- (\1,\1);
\draw [thick] (0,\1) -- (0,0);
\node at (\1,\1) {\scalebox{0.8}{$\bigstar$}};
\filldraw (0,0) circle [radius=0.08];
\filldraw (\1,0) circle [radius=0.08];
\filldraw [white] (0,\1) circle [radius=0.115];
\draw [thick] (0,\1) circle [radius=0.115];
\node at (0,\1) {$\times$};
\draw [thick] (0,0) to [out=20,in=160] (\1,0);
\draw [thick] (0,0) to [out=-20,in=-160] (\1,0);
\end{tikzpicture}
\end{minipage}
 -
\begin{minipage}{\mini cm}
\begin{tikzpicture}
\draw [thick] (0,\1) -- (\1,\1);
\draw [thick] (0,\1) .. controls (0,0) and (\1,\1) .. (0,\1);
\node at (\1,\1) {\scalebox{0.8}{$\bigstar$}};
\filldraw (0,0) circle [radius=0.08];
\filldraw (\1,0) circle [radius=0.08];
\filldraw [white] (0,\1) circle [radius=0.115];
\draw [thick] (0,\1) circle [radius=0.115];
\node at (0,\1) {$\times$};
\draw [thick] (0,0) to [out=20,in=160] (\1,0);
\draw [thick] (0,0) to [out=-20,in=-160] (\1,0);
\end{tikzpicture}
\end{minipage}
 + 2
\begin{minipage}{\mini cm}
\begin{tikzpicture}
\draw [thick] (0,\1) -- (\1,\1);
\node at (\1,\1) {\scalebox{0.8}{$\bigstar$}};
\filldraw (0,0) circle [radius=0.08];
\filldraw (\1,0) circle [radius=0.08];
\draw [thick] (0,0) -- (0,\1) -- (\1,0) -- (0,0);
\filldraw [white] (0,\1) circle [radius=0.115];
\draw [thick] (0,\1) circle [radius=0.115];
\node at (0,\1) {$\times$};
\end{tikzpicture}
\end{minipage}
\right)
\end{align}

\noindent So far all we have done is rewrite our results in a new notation.  But the graphical notation is more than just a succinct visual way of expressing the invariant terms.  The following section illustrates the virtue of this approach.

\subsection{New Invariants via the Graphical Approach}

As shown in Appendix \ref{AppB}, the graphical approach allows us to prove many general theorems.  In particular, we have the following useful outcomes:
\begin{enumerate}

\item \label{point: 1}

	Without loss of generality, we can limit our search for invariants to graphs with very specific properties (Appendix \ref{sec: graphical basis}).

\item \label{point: 2}

	There is a simple procedure for obtaining all the independent total derivative relations between the variation terms for each $\pd$, $\nv$ and $\Delta$ (Theorem \ref{thm: loopless times basis}).

\item \label{point: 3}
	
	The graphical method allows a complete classification of $1$-invariants (Theorem \ref{thm: classification of 1-invariants}).
	
\item \label{point: 4}

	The graphical method allows many higher $\pd$-invariant terms to be constructed from lower $\pd$ invariants (Appendix \ref{sec: superposition}).
\end{enumerate}

\noindent We will expound upon the above points by presenting explicit examples.  These examples are generalizable and their invariance is proven in Appendix \ref{AppB}.  However, the reader can also check by brute force that the terms we present are indeed invariant.   We will now summarize points \ref{point: 1} and \ref{point: 2} and will return to point \ref{point: 4} in \S\ref{sec: intro to superposition}.  Point \ref{point: 3} was discussed in \S\ref{sec: Galileons} with technical details in Appendix \ref{sec: linear}.

When building invariants, we need only consider loopless plain-graphs (Proposition \ref{prop: loopless_basis}), since a loop represents $\partial_i \partial_i$ acting on a single $\phi$ and one can always integrate by parts to move one of the $\partial_i$'s to act on the remaining $\phi$'s.  We can also restrict to plain-graphs with vertices of degree no less that $\frac{1}{2}(\pd +1)$ (Proposition \ref{prop: min plain}).  This represents a significant simplification from the previous procedure (\S\ref{sec: intro graphs}).  For instance, in \S\ref{sec: 334}, the graphs $L_3$ and $L_4$ are immediately discarded. 

	Taking the variation of these terms yields $\times$-graphs and we need to determine the total derivative relations between them.  Since all plain-graphs we are considering are loopless, any $\times$-graphs involved in these total derivative relations are also loopless.  The total derivative relations that we need to consider can be obtained with the use of graphs called \emph{Medusas} (Definition \ref{def: Medusa}).  A Medusa is a loopless {$\star$}-graph with all of its {$\star$}-vertices adjacent to the $\times$-vertex and such that the degree of the $\times$-vertex is given by:
	\begin{equation}\label{medusaeqn}
	 \text{deg}(\times)= \pd +1- \nsv,
	 \end{equation} 
	 where, again, $\nsv$ is the number of {$\star$}-vertices.  Note that because  $\text{deg}(\times)\geq \nsv$ for a Medusa, \eqref{medusaeqn} implies that $\nsv \leq \frac{1}{2}( \pd +1) \leq \text{deg} (\times)$ for any Medusa.  Furthermore, we need only consider Medusas with $\times$-vertex and $\bullet$-vertices of degree no less than $\frac{1}{2}( \pd +1)$ (Proposition \ref{prop: min star}).  From each of these Medusas, we obtain a total derivative relation, containing only loopless $\times$-graphs, by applying the map $\rho$ and then omitting all looped graphs (Proposition \ref{prop: pyramid}).  This map is denoted as $\rho^{(0)}$ in Definition \ref{def: loopless r}.  In \S\ref{sec: intro to Medusa}, we will give an introduction to the construction of such total derivative relations from Medusas.  Moreover, this procedure captures all relevant total derivative relations (Theorem \ref{thm: loopless times basis}).  Appendix \ref{sec: graphical basis} provides a systematic treatment of Medusas.  
	 
	 In general, for given $\nv$ and $\pd$, we call an invariant consisting of graphs containing the lowest possible value of $\Delta$ a \emph{minimal} invariant (Definition \ref{def: minimal}). Minimal invariants are of particular interest in a QFT, since they are the most relevant $\nv$-point interactions. In \S\ref{sec: 245} and \S\ref{sec: 346} we apply the graphical approach to classify all minimal invariants for $\nv =4$ and $\pd =2,3$.
	
\subsubsection{The Minimal Invariant: \texorpdfstring{$(\pd , \nv , \Delta )= (2,4,5)$}{P=2, n=4, Delta=5}} \label{sec: 245} 

	As our first example, let us find the minimal 2-invariant for $\nv =4$.  For $\pd = 2$, any Medusa must have $\nsv \leq \frac{1}{2}( \pd +1)=\frac{3}{2}$, and thus there is exactly one $\star$-vertex in a $\pd =2$ Medusa.  Furthermore, we need only consider Medusas in which each vertex has degree at least 2, since $\frac{1}{2}( \pd +1) = \frac{3}{2}$.  Therefore, the counting implies that we need only consider $\pd =2$ Medusas with $\Delta \geq \nv + 1$.  In particular, when $\nv =4$, the minimal $\Delta$ is 5 (representing terms with 10 derivatives).  In the following we show that there is exactly one 2-invariant with $\Delta = 5$.  The relevant Medusas are:
\begin{align*}
& M_1=\begin{minipage}{\mini cm}
\begin{tikzpicture}
\draw [white] (0,0) -- (0.5,-0.8);
\filldraw (0,0) circle [radius=0.08];
\filldraw (\1,\1) circle [radius=0.08];
\filldraw (\1,0) circle [radius=0.08];
\draw [thick] (0,\1) -- (0,0) -- (\1,0) -- (\1,\1) (0,0) -- (\1,\1);
\draw [thick] (0,\1) -- (0.4,1.2);
\node at (0.4,1.2) {\scalebox{0.8}{$\bigstar$}};
\filldraw [white] (0,\1) circle [radius=0.115];
\draw [thick] (0,\1) circle [radius=0.115];
\node at (0,\1) {$\times$};
\end{tikzpicture}
\end{minipage}
&
& M_2=
\begin{minipage}{\mini cm}
\begin{tikzpicture}
\draw [white] (0,0) -- (0.5,-0.8);
\filldraw (0,0) circle [radius=0.08];
\filldraw (\1,0) circle [radius=0.08];
\filldraw (\1,\1) circle [radius=0.08];
\draw [thick] (0,\1)--(0,0) -- (\1,0);
\draw [thick] (0,\1) -- (0.4,1.2);
\node at (0.4,1.2) {\scalebox{0.8}{$\bigstar$}};
\draw [thick] (\1,0) to [out=110,in=-110] (\1,\1);
\draw [thick] (\1,0) to [out=70,in=-70] (\1,\1);
\filldraw [white] (0,\1) circle [radius=0.115];
\draw [thick] (0,\1) circle [radius=0.115];
\node at (0,\1) {$\times$};
\end{tikzpicture}
\end{minipage}
\end{align*}

\vspace{-0.6cm}

\noindent The resulting loopless total derivative relations are:
\begin{equation} \label{eq: 245 derivative relations}
\begin{aligned}
\rho^{(0)}(M_1) = 2
\begin{minipage}{\mini cm}
\begin{tikzpicture}
\filldraw (0,0) circle [radius=0.08];
\filldraw (\1,\1) circle [radius=0.08];
\filldraw (\1,0) circle [radius=0.08];
\draw [thick] (0,\1) -- (0,0) -- (\1,0) -- (\1,\1) (0,0) -- (\1,\1);
\draw [thick] (0,\1) -- (\1,\1);
\filldraw [white] (0,\1) circle [radius=0.115];
\draw [thick] (0,\1) circle [radius=0.115];
\node at (0,\1) {$\times$};
\end{tikzpicture}
\end{minipage}
+
\begin{minipage}{\mini cm}
\begin{tikzpicture}
\filldraw (0,0) circle [radius=0.08];
\filldraw (\1,\1) circle [radius=0.08];
\filldraw (\1,0) circle [radius=0.08];
\draw [thick]  (0,0) -- (\1,0) -- (\1,\1) (0,0) -- (\1,\1);
\draw [thick] (0,0) to [out=110,in=-110] (0,\1);
\draw [thick] (0,0) to [out=70,in=-70] (0,\1);
\filldraw [white] (0,\1) circle [radius=0.115];
\draw [thick] (0,\1) circle [radius=0.115];
\node at (0,\1) {$\times$};
\end{tikzpicture}
\end{minipage}&\equiv 2L_a^{ \times }+L_e^{ \times }
\cr
\rho^{(0)}(M_2)=
\begin{minipage}{\mini cm}
\begin{tikzpicture}
\filldraw (0,0) circle [radius=0.08];
\filldraw (\1,0) circle [radius=0.08];
\filldraw (\1,\1) circle [radius=0.08];
\draw [thick] (0,0) -- (\1,0);
\draw [thick] (0,0) to [out=110,in=-110] (0,\1);
\draw [thick] (0,0) to [out=70,in=-70] (0,\1);
\draw [thick] (\1,0) to [out=110,in=-110] (\1,\1);
\draw [thick] (\1,0) to [out=70,in=-70] (\1,\1);
\filldraw [white] (0,\1) circle [radius=0.115];
\draw [thick] (0,\1) circle [radius=0.115];
\node at (0,\1) {$\times$};\end{tikzpicture}
\end{minipage}
+
\begin{minipage}{\mini cm}
\begin{tikzpicture}
\filldraw (0,0) circle [radius=0.08];
\filldraw (\1,0) circle [radius=0.08];
\filldraw (\1,\1) circle [radius=0.08];
\draw [thick] (0,\1)--(0,0) -- (\1,0);
\draw [thick] (0,\1) -- (\1,\1);
\draw [thick] (\1,0) to [out=110,in=-110] (\1,\1);
\draw [thick] (\1,0) to [out=70,in=-70] (\1,\1);
\filldraw [white] (0,\1) circle [radius=0.115];
\draw [thick] (0,\1) circle [radius=0.115];
\node at (0,\1) {$\times$};
\end{tikzpicture}
\end{minipage}
+
\begin{minipage}{\mini cm}
\begin{tikzpicture}
\filldraw (0,0) circle [radius=0.08];
\filldraw (\1,0) circle [radius=0.08];
\filldraw (\1,\1) circle [radius=0.08];
\draw [thick] (0,\1)--(0,0) -- (\1,0);
\draw [thick] (0,\1) -- (\1,0);
\draw [thick] (\1,0) to [out=110,in=-110] (\1,\1);
\draw [thick] (\1,0) to [out=70,in=-70] (\1,\1);
\filldraw [white] (0,\1) circle [radius=0.115];
\draw [thick] (0,\1) circle [radius=0.115];
\node at (0,\1) {$\times$};
\end{tikzpicture}
\end{minipage}
&\equiv L_b^{ \times }+L_c^{ \times }+L_d^{ \times }
\end{aligned}
\end{equation}

\noindent Note that, when acting on these $\pd =2$ Medusas, $\rho$ and $\rho^{(0)}$ are in fact the same.
	
	On the other hand, the invariants must be constructed out of plain-graphs containing vertices of degree no less than $\frac{1}{2}( \pd +1)=\frac{3}{2}$.  The only possibilities are:
\begin{align*}
L_1 &= 
\begin{minipage}{1.4cm}
\begin{tikzpicture}
\filldraw (0,0) circle [radius=0.08];
\filldraw (\1,0) circle [radius=0.08];
\filldraw (\1,\1) circle [radius=0.08];
\filldraw (0,\1) circle [radius=0.08];
\draw [thick] (0,\1) -- (0,0) -- (\1,0) -- (\1,\1) -- (0,\1) -- (\1,0);
\end{tikzpicture}
\end{minipage} &%
L_2 &=
\begin{minipage}{1.4cm}
\begin{tikzpicture}
\filldraw (0,0) circle [radius=0.08];
\filldraw (\1,0) circle [radius=0.08];
\filldraw (\1,\1) circle [radius=0.08];
\filldraw (0,\1) circle [radius=0.08];
\draw [thick] (\1,0) -- (\1,\1);
\draw [thick] (0,0) to [out=20,in=160] (\1,0);
\draw [thick] (0,0) to [out=-20,in=-160] (\1,0);
\draw [thick] (0,\1) to [out=20,in=160] (\1,\1);
\draw [thick] (0,\1) to [out=-20,in=-160] (\1,\1);
\end{tikzpicture}
\end{minipage} &%
L_3 &= 
\begin{minipage}{1.4cm}
\begin{tikzpicture}
\filldraw (0,0) circle [radius=0.08];
\filldraw (\1,0) circle [radius=0.08];
\filldraw (\1,\1) circle [radius=0.08];
\filldraw (0,\1) circle [radius=0.08];
\draw [thick] (\1,\1) -- (0,\1) -- (0,0) (\1,0) -- (\1,\1);
\draw [thick] (0,0) to [out=20,in=160] (\1,0);
\draw [thick] (0,0) to [out=-20,in=-160] (\1,0);
\end{tikzpicture}
\end{minipage} &%
L_4 &= 
\begin{minipage}{1.4cm}
\begin{tikzpicture}
\filldraw (0,0) circle [radius=0.08];
\filldraw (\1,0) circle [radius=0.08];
\filldraw (\1,\1) circle [radius=0.08];
\filldraw (0,\1) circle [radius=0.08];
\draw [thick] (\1,0) -- (0,\1) --(\1,\1) (\1,0) -- (\1,\1);
\draw [thick] (0,0) to [out=20,in=160] (\1,0);
\draw [thick] (0,0) to [out=-20,in=-160] (\1,0);
\end{tikzpicture}
\end{minipage} &%
L_5 &= 
\begin{minipage}{1.4cm}
\begin{tikzpicture}
\filldraw (0,0) circle [radius=0.08];
\filldraw (\1,0) circle [radius=0.08];
\filldraw (\1,\1) circle [radius=0.08];
\filldraw (0,\1) circle [radius=0.08];
\draw [thick] (0,0) -- (\1,0);
\draw [thick] (0,0) to [out=20,in=160] (\1,0);
\draw [thick] (0,0) to [out=-20,in=-160] (\1,0);
\draw [thick] (0,\1) to [out=20,in=160] (\1,\1);
\draw [thick] (0,\1) to [out=-20,in=-160] (\1,\1);
\end{tikzpicture}
\end{minipage}
\end{align*}

\noindent The invariant cannot be constructed out of $L_5$ since $\delta_2(L_5)$ is absent from the total derivative relations \eqref{eq: 245 derivative relations}.  Hence, we need only consider the variations of $L_1$, $L_2$, $L_3$ and $L_4$.

	We can now determine the 2-invariants.  The total derivative relations allow us to identify $L_d^{ \times }\sim-L_b^{ \times }-L_c^{ \times }$ and $L_e^{ \times }\sim-2L_a^{ \times }$.  Up to total derivatives,
\begin{align} \label{eq: matrix}
	\delta_2\left(\begin{array}{c}L_1 \\L_2 \\L_3 \\L_4\end{array}\right)=\left(\begin{array}{ccccc}2 & 0 & 0 & 0 & 0 \\0 & 2 & 0 & 0 & 0 \\0 & 0 & 2 & 0 & 0 \\0 & 0 & 0 & 2 & 1\end{array}\right)\left(\begin{array}{c}L_a^{ \times } \\L_b^{ \times } \\L_c^{ \times } \\L_d^{ \times } \\L_e^{ \times }\end{array}\right)\sim\left(\begin{array}{ccc}2 & 0 & 0\\0 & 2 & 0 \\0 & 0 & 2 \\-2 & -1 & -1 \end{array}\right)\left(\begin{array}{c}L_a^{ \times } \\L_b^{ \times } \\L_c^{ \times } \end{array}\right)
\end{align}
The invariants form the nullspace of the transpose of the final $4 \times 3$ matrix in \eqref{eq: matrix}.  The nullspace is spanned by $(1,1,1,1)$.  Therefore, there is one 2-invariant given by the linear combination $L_1+L_2+L_3+L_4$, i.e.,
\begin{align} \label{eq: 245}
\begin{minipage}{1.4cm}
\begin{tikzpicture}
\filldraw (0,0) circle [radius=0.08];
\filldraw (\1,0) circle [radius=0.08];
\filldraw (\1,\1) circle [radius=0.08];
\filldraw (0,\1) circle [radius=0.08];
\draw [thick] (0,\1) -- (0,0) -- (\1,0) -- (\1,\1) -- (0,\1) -- (\1,0);
\end{tikzpicture}
\end{minipage}
+
\begin{minipage}{1.4cm}
\begin{tikzpicture}
\filldraw (0,0) circle [radius=0.08];
\filldraw (\1,0) circle [radius=0.08];
\filldraw (\1,\1) circle [radius=0.08];
\filldraw (0,\1) circle [radius=0.08];
\draw [thick] (\1,0) -- (\1,\1);
\draw [thick] (0,0) to [out=20,in=160] (\1,0);
\draw [thick] (0,0) to [out=-20,in=-160] (\1,0);
\draw [thick] (0,\1) to [out=20,in=160] (\1,\1);
\draw [thick] (0,\1) to [out=-20,in=-160] (\1,\1);
\end{tikzpicture}
\end{minipage}
+
\begin{minipage}{1.4cm}
\begin{tikzpicture}
\filldraw (0,0) circle [radius=0.08];
\filldraw (\1,0) circle [radius=0.08];
\filldraw (\1,\1) circle [radius=0.08];
\filldraw (0,\1) circle [radius=0.08];
\draw [thick] (\1,\1) -- (0,\1) -- (0,0) (\1,0) -- (\1,\1);
\draw [thick] (0,0) to [out=20,in=160] (\1,0);
\draw [thick] (0,0) to [out=-20,in=-160] (\1,0);
\end{tikzpicture}
\end{minipage}
+
\begin{minipage}{1.4cm}
\begin{tikzpicture}
\filldraw (0,0) circle [radius=0.08];
\filldraw (\1,0) circle [radius=0.08];
\filldraw (\1,\1) circle [radius=0.08];
\filldraw (0,\1) circle [radius=0.08];
\draw [thick] (\1,0) -- (0,\1) --(\1,\1) (\1,0) -- (\1,\1);
\draw [thick] (0,0) to [out=20,in=160] (\1,0);
\draw [thick] (0,0) to [out=-20,in=-160] (\1,0);
\end{tikzpicture}
\end{minipage}
\end{align}

\noindent Therefore, \eqref{eq: 245} gives the only independent minimal 2-invariant for $\nv =4$.

\subsubsection{The Minimal Invariant: \texorpdfstring{$(\pd , \nv , \Delta ) = (3,4,6)$}{P=3, n=4, Delta=6}} \label{sec: 346} 

	Next, let us consider $\pd =3$, $\nv =4$, for which each vertex degree must be at least $\frac{1}{2} ( \pd+1) = 2$.  Again we would like to find the minimal invariant in this case.  By counting alone, it is possible to write down Medusas with $\Delta=5$.  In fact, a 3-invariant with $\nv =4$ and $\Delta=5$ would also be 2-invariant.  The only possible 2-invariant with $(\nv , \Delta ) = (4,5)$ is \eqref{eq: 245}.  However, this is not a 3-invariant because it is impossible for some graphs contained in it to appear in a 3-invariant.  For example, by replacing a degree-2 vertex in $L_2$ in $\eqref{eq: 245}$ with a $\times$-vertex, a $\times$-graph $\Gamma^{ \times }$ is produced;  $\Gamma^{ \times }$ and the only $\pd =3$ Medusa $M$ that generates $\Gamma^{ \times }$ are given below:
\vspace{-0.5cm}
\begin{align*}
&
\Gamma^{ \times } = \hspace{-0.6cm}
\begin{minipage}{2cm}

\end{minipage}
\end{align*}

\noindent This Medusa has an empty vertex, which violates the lower bound on vertex degree.

\subsubsection{Medusas and Total Derivative Relations} \label{sec: intro to Medusa} 

	We have seen that Medusas play a central role in the search for $\pd$-invariants.  It is thus worthwhile to discuss the key features of Medusas and to demonstrate how a total derivative relation consisting of loopless $\times$-graphs is constructed from a Medusa.  Given any Medusa, $\rho^{(0)} ( M )$ is in fact a total derivative relation, as can be seen from the following construction.
	
	For fixed $\pd$ and $\nv$, consider a Medusa $M$ that contains $\nsv$ $\star$-vertices.  Then, by definition, it has a $\times$-vertex of degree deg$(\times) = \pd +1- \nsv$.  Since the maximal degree of a $\times$-vertex is $\pd$, graphs in $\rho ( M )$ contain at most $\nsv - 1$ loops.  Form a $\star$-graph $\Gamma^{(\ell)}$ from $M$ by deleting $\ell \leq \nsv - 1$ $\star$-vertices in $M$ and then adding $\ell$ loops to the $\times$-vertex.  By this definition, $M = \Gamma^{(0)}$.   In the algebraic expression represented by $\Gamma^{(\ell)}$, the $\nsv - \ell$ $\star$-vertices stand for $\nsv - \ell$ partial derivatives acting on the whole term.  Distributing $\nsv - \ell - 1$ $\partial$'s over all $\phi$'s in this algebraic expression will result in a linear combination of total derivative terms.  In the graphical representation, this is equivalent to acting $\rho$ on $\Gamma^{(\ell)}$ but keeping fixed exactly one $\star$-vertex and its incident edge.  Setting to zero all coefficients of graphs in the resulting linear combination, except for the ones containing exactly $\ell$ loops, generates a linear combination $L^{(\ell)}$ of $\star$-graphs, each containing exactly one $\star$-vertex.  By construction,
\begin{equation} \label{eq: loopless rel Medusa}
	\rho^{(0)} ( M ) = \rho \left ( \sum_{\alpha=0}^{\nsv -1} (-1)^\alpha L^{(\alpha)} \right ).
\end{equation}

\noindent The algebraic form of the RHS of \eqref{eq: loopless rel Medusa} is explicitly a total derivative relation.  For a rigorous treatment of the above discussion, refer to Proposition \ref{prop: pyramid} in Appendix \ref{sec: pyramid}.  

	As a simple example, we consider the Medusa $M_7$ referred to in \S\ref{sec: 346}. We have
\vspace{-0.2cm}
\begin{align*}
M_7 = 
\hspace{-0.4cm}
\begin{minipage}{2cm}
\begin{tikzpicture}
\node at (-0.4,1.2) {$\phantom{\bigstar}$};
\draw [white] (0,0) -- (0.5,-0.6);
\filldraw (\1,0) circle [radius=0.08];
\filldraw (0,0) circle [radius=0.08];
\filldraw (\1,\1) circle [radius=0.08];
\draw [thick] (\1,0) -- (0,0) -- (\1,\1) ;
\draw [thick] (0,\1) -- (0.4,1.2);
\node at (0.4,1.2) {\scalebox{0.8}{$\bigstar$}};
\draw [thick] (0,\1) -- (-0.4,1.2);
\node at (-0.4,1.2) {\scalebox{0.8}{$\bigstar$}};
\draw [thick] (\1,0) to [out=110,in=-110] (\1,\1);
\draw [thick] (\1,0) to [out=70,in=-70] (\1,\1);
\filldraw [white] (0,\1) circle [radius=0.115];
\draw [thick] (0,\1) circle [radius=0.115];
\node at (0,\1) {$\times$};
\end{tikzpicture}
\end{minipage}
\Rightarrow
\hspace{0.2cm}
\rho^{(0)} ( M_7 )
=
\rho \Biggl ( 
\begin{minipage}{\mini cm}
\begin{tikzpicture}
\draw [white] (0,0) -- (0.5,-0.6);
\draw [thick] (0,\1) -- (0,0);
\filldraw (\1,0) circle [radius=0.08];
\filldraw (0,0) circle [radius=0.08];
\filldraw (\1,\1) circle [radius=0.08];
\draw [thick] (\1,0) -- (0,0) -- (\1,\1) ;
\draw [thick] (0,\1) -- (0.4,1.2);
\node at (0.4,1.2) {\scalebox{0.8}{$\bigstar$}};
\draw [thick] (\1,0) to [out=110,in=-110] (\1,\1);
\draw [thick] (\1,0) to [out=70,in=-70] (\1,\1);
\filldraw [white] (0,\1) circle [radius=0.115];
\draw [thick] (0,\1) circle [radius=0.115];
\node at (0,\1) {$\times$};
\end{tikzpicture}
\end{minipage}
+ 2
\begin{minipage}{\mini cm}
\begin{tikzpicture}
\draw [white] (0,0) -- (0.5,-0.6);
\draw [thick] (0,\1) -- (\1,\1);
\filldraw (\1,0) circle [radius=0.08];
\filldraw (0,0) circle [radius=0.08];
\filldraw (\1,\1) circle [radius=0.08];
\draw [thick] (\1,0) -- (0,0) -- (\1,\1) ;
\draw [thick] (0,\1) -- (0.4,1.2);
\node at (0.4,1.2) {\scalebox{0.8}{$\bigstar$}};
\draw [thick] (\1,0) to [out=110,in=-110] (\1,\1);
\draw [thick] (\1,0) to [out=70,in=-70] (\1,\1);
\filldraw [white] (0,\1) circle [radius=0.115];
\draw [thick] (0,\1) circle [radius=0.115];
\node at (0,\1) {$\times$};
\end{tikzpicture}
\end{minipage} 
-
\begin{minipage}{\mini cm}
\begin{tikzpicture}
\draw [white] (0,0) -- (0.5,-0.6);
\draw [thick] (0,\1) .. controls (0,0) and (\1,\1) .. (0,\1);
\filldraw (\1,0) circle [radius=0.08];
\filldraw (0,0) circle [radius=0.08];
\filldraw (\1,\1) circle [radius=0.08];
\draw [thick] (\1,0) -- (0,0) -- (\1,\1) ;
\draw [thick] (0,\1) -- (0.4,1.2);
\node at (0.4,1.2) {\scalebox{0.8}{$\bigstar$}};
\draw [thick] (\1,0) to [out=110,in=-110] (\1,\1);
\draw [thick] (\1,0) to [out=70,in=-70] (\1,\1);
\filldraw [white] (0,\1) circle [radius=0.115];
\draw [thick] (0,\1) circle [radius=0.115];
\node at (0,\1) {$\times$};
\end{tikzpicture}
\end{minipage}
\Biggr ) 
\end{align*}

\vspace{-0.5cm}

\noindent For a second example, we consider $(\pd , \nv , \Delta ) = (5,3,6)$ and the Medusa
\begin{align*}
M = \hspace{-0.4cm}
\begin{minipage}{2.3cm}
\begin{tikzpicture}
\draw [white] (0,0) -- (0.5,-0.8);
\draw [thick] (0,\1) -- (0.4,1.2);
\node at (0.4,1.2) {\scalebox{0.8}{$\bigstar$}};
\draw [thick] (0,\1) -- (-0.4,1.2);
\node at (-0.4,1.2) {\scalebox{0.8}{$\bigstar$}};
\draw [thick] (0,\1) -- (\1,\1);
\node at (\1,\1) {\scalebox{0.8}{$\bigstar$}};
\filldraw (0,0) circle [radius=0.08];
\filldraw (\1,0) circle [radius=0.08];
\draw [thick] (0,0) to [out=20,in=160] (\1,0);
\draw [thick] (0,0) to [out=-20,in=-160] (\1,0);
\draw [thick] (\1,0) -- (0,0);
\filldraw [white] (0,\1) circle [radius=0.115];
\draw [thick] (0,\1) circle [radius=0.115];
\node at (0,\1) {$\times$};
\end{tikzpicture}
\end{minipage}
\end{align*}

\vspace{-0.7cm}

\noindent By \eqref{eq: loopless rel Medusa}, we obtain
\begin{equation*}
	\rho^{(0)} ( M ) = \rho \left ( 
2
\begin{minipage}{\mini cm}
\begin{tikzpicture}
\draw [thick] (0,0) to [out=110,in=-110] (0,\1);
\draw [thick] (0,0) to [out=70,in=-70] (0,\1);
\draw [thick] (0,\1) -- (\1,\1);
\node at (\1,\1) {\scalebox{0.8}{$\bigstar$}};
\filldraw (0,0) circle [radius=0.08];
\filldraw (\1,0) circle [radius=0.08];
\draw [thick] (0,0) to [out=20,in=160] (\1,0);
\draw [thick] (0,0) to [out=-20,in=-160] (\1,0);
\draw [thick] (0,0) to [out=20,in=160] (\1,0);
\draw [thick] (0,0) to [out=-20,in=-160] (\1,0);
\draw [thick] (\1,0) -- (0,0);
\filldraw [white] (0,\1) circle [radius=0.115];
\draw [thick] (0,\1) circle [radius=0.115];
\node at (0,\1) {$\times$};
\end{tikzpicture}
\end{minipage}
+2
\begin{minipage}{\mini cm}
\begin{tikzpicture}
\draw [thick] (0,0) -- (0,\1) -- (\1,\1) (\1,0) -- (0,\1);
\node at (\1,\1) {\scalebox{0.8}{$\bigstar$}};
\filldraw (0,0) circle [radius=0.08];
\filldraw (\1,0) circle [radius=0.08];
\draw [thick] (0,0) to [out=20,in=160] (\1,0);
\draw [thick] (0,0) to [out=-20,in=-160] (\1,0);
\draw [thick] (\1,0) -- (0,0);
\filldraw [white] (0,\1) circle [radius=0.115];
\draw [thick] (0,\1) circle [radius=0.115];
\node at (0,\1) {$\times$};
\end{tikzpicture}
\end{minipage}
- 2
\begin{minipage}{\mini cm}
\begin{tikzpicture}
\draw [thick] (0,\1) .. controls (0,0) and (\1,\1) .. (0,\1);
\draw [thick] (0,0) -- (0,\1) -- (\1,\1);
\node at (\1,\1) {\scalebox{0.8}{$\bigstar$}};
\filldraw (0,0) circle [radius=0.08];
\filldraw (\1,0) circle [radius=0.08];
\draw [thick] (0,0) to [out=20,in=160] (\1,0);
\draw [thick] (0,0) to [out=-20,in=-160] (\1,0);
\draw [thick] (\1,0) -- (0,0);
\filldraw [white] (0,\1) circle [radius=0.115];
\draw [thick] (0,\1) circle [radius=0.115];
\node at (0,\1) {$\times$};
\end{tikzpicture}
\end{minipage}
+\hspace{-0.4cm}
\begin{minipage}{3cm}
\begin{tikzpicture}
\draw [thick] (0,\1) .. controls (0,0) and (\1,\1) .. (0,\1);
\draw [thick] (0,\1) .. controls (0,0) and (-\1,\1) .. (0,\1);
\draw [thick] (0,\1) -- (\1,\1);
\node at (\1,\1) {\scalebox{0.8}{$\bigstar$}};
\filldraw (0,0) circle [radius=0.08];
\filldraw (\1,0) circle [radius=0.08];
\draw [thick] (0,0) to [out=20,in=160] (\1,0);
\draw [thick] (0,0) to [out=-20,in=-160] (\1,0);
\draw [thick] (\1,0) -- (0,0);
\filldraw [white] (0,\1) circle [radius=0.115];
\draw [thick] (0,\1) circle [radius=0.115];
\node at (0,\1) {$\times$};
\end{tikzpicture}
\end{minipage}
\hspace{-1cm} 
\right ).
\end{equation*}

\noindent This Medusa is involved in a 5-invariant that we will construct in \S\ref{sec: intro to superposition}.

\subsection{Superposition of Graphs} \label{sec: intro to superposition} 

	In \S\ref{sec: Galileons}, we discovered an intriguing construction of the minimal 1-invariant for given $\nv$, which is a sum with equal coefficients of all possible trees with $\nv$ vertices.  A close study of
the $\pd$-invariants with $\pd > 1$ in \S\ref{sec: beyond} also reveals an elegant structure in these invariants: They can all be decomposed as a \emph{superposition} of equal-weight tree summations and \emph{exact} invariants.  Recall that an exact invariant is a linear combination that is invariant exactly, instead of up to a total derivative.  In the graphical representation, a linear combination of graphs is an exact $\pd_E$-invariant if and only if all vertices are of degree larger than $\pd_E$ (Corollary \ref{lem: exact}).  Each graph in an exact invariant is itself exactly invariant.  
	
	Next we illustrate ``the superposition of linear combinations" by explicit examples.  When appropriate, we will consider labeled graphs and only remove the labels at the end.  Consider two labeled graphs,
\begin{align*}
\Gamma_1 = 
\begin{minipage}{1.3cm}
\begin{tikzpicture}
\draw [thick] (0.4,0.69) -- (0,0) -- (\1,0);
\filldraw (0,0) circle [radius=0.08];
\filldraw (\1,0) circle [radius=0.08];
\filldraw (0.4,0.69) circle [radius=0.08];
\end{tikzpicture}
\end{minipage}
\quad\quad
\Gamma_2 = 
\begin{minipage}{1.3cm}
\begin{tikzpicture}
\draw [thick, \red] (0,0) to [out=-30,in=-150] (\1,0);
\draw [thick, \red] (\1,0) to [out=90,in=-30] (0.4,0.69);
\filldraw (0,0) circle [radius=0.08];
\filldraw (\1,0) circle [radius=0.08];
\filldraw (0.4,0.69) circle [radius=0.08];
\end{tikzpicture}
\end{minipage}
\end{align*}

\noindent The superposition of $\Gamma_1$ and $\Gamma_2$ is defined to be the graph formed by taking all edges in $\Gamma_2$ and adding them to $\Gamma_1$, i.e.,
\begin{align*}
\Gamma_1 \cup \Gamma_2 =
\begin{minipage}{1.3cm}
\begin{tikzpicture}
\draw [thick] (0,0) -- (\1,0);
\draw [thick, \red] (0,0) to [out=-30,in=-150] (\1,0);
\draw [thick, \red] (\1,0) to [out=90,in=-30] (0.4,0.69);
\draw [thick] (0.4,0.69) -- (0,0);
\filldraw (0,0) circle [radius=0.08];
\filldraw (\1,0) circle [radius=0.08];
\filldraw (0.4,0.69) circle [radius=0.08];
\end{tikzpicture}
\end{minipage}
\end{align*}

\noindent The superposition of two linear combinations, $L_A = \sum_{i=1}^{k_A} a_i \hspace{0.6mm} \Gamma^A_i$ and $L_B = \sum_{i=1}^{k_B} b_i \hspace{0.6mm} \Gamma^B_i$, of plain graphs $\Gamma^A_i, \Gamma^B_j$ with the same $N$, is defined as 
\begin{equation*}
	L_A \cup L_B \equiv \sum_{i=1}^{k_A} \sum_{j=1}^{k_B} a_i \hspace{0.6mm} b_j \hspace{0.6mm} \Gamma^A_i \cup \Gamma^B_j.
\end{equation*}

\noindent In the following we present numerous examples of invariants constructed by superposing equal-weight tree summations and exact invariants for various $\pd$'s.  In fact, Theorem \ref{thm: superposition summary} of Appendix \ref{sec: superposition} states: 
\begin{itemize}
\item[]
{\it For fixed $\nv$, the superposition of an exact $\pd_E$-invariant with the superposition of $\nts$ minimal loopless 1-invariants results in a $\pd$-invariant, provided $\pd_E + 2 \nts \geq \pd$.}\footnote{Note that this theorem also applies when $\pd_E<0$, where we take an exact $\pd_E$-invariant for $\pd_E<0$ to mean any linear combination of plain-graphs.  In particular, the plain-graph consisting only of empty vertices is a $\pd_E$-invariant for any $\pd_E<0$, and superposing this graph on any other is equivalent to not superposing anything at all.}
\end{itemize}

\noindent We conjecture that the above result captures all $\pd$-invariants, up to total derivatives.  Since we have classified all exact invariants and all 1-invariants (Theorem \ref{thm: classification of 1-invariants}), it is straightforward to construct the $\pd$-invariants in the above statement for any specific case.  We now proceed to construct the minimal $\pd$-invariants for some important cases.

\subsubsection{Quadratic Shift (\textit{\pd}=2)}

\noindent \textbf{\textit{\nv}} = \textbf{3 Case}: A 2-invariant can be constructed by superposing an equal-weight tree summation with an exact 0-invariant.  In the labeled representation, all possible trees for $\nv =3$ are given by \eqref{eq: graph_iso_3},
\begin{align*} 
	\begin{minipage}{1.8cm}
    	\begin{tikzpicture}
        		\draw [thick] (0.4,0.69) -- (0,0) -- (\1,0);
        		\filldraw (0,0) circle [radius=0.08];
        		\filldraw (\1,0) circle [radius=0.08];
		\filldraw (0.4,0.69) circle [radius=0.08];
    	\end{tikzpicture}
    	\end{minipage}
	\quad
	\begin{minipage}{1.8cm}
    	\begin{tikzpicture}
        		\draw [thick] (0,0) -- (0.4,0.69) -- (\1,0);
        		\filldraw (0,0) circle [radius=0.08];
        		\filldraw (\1,0) circle [radius=0.08];
		\filldraw (0.4,0.69) circle [radius=0.08];
    	\end{tikzpicture}
    	\end{minipage}
	\quad
	\begin{minipage}{1.3cm}
    	\begin{tikzpicture}
        		\draw [thick] (0,0) -- (\1,0) -- (0.4,0.69);
        		\filldraw (0,0) circle [radius=0.08];
        		\filldraw (\1,0) circle [radius=0.08];
		\filldraw (0.4,0.69) circle [radius=0.08];
    	\end{tikzpicture}
    	\end{minipage}
\end{align*}

\noindent The sum of all three $\nv =3$ trees with unit coefficients gives a 1-invariant, $L_{3\text{-pt}}$.  On the other hand, up to total derivatives, an exact 0-invariant with $\Delta =2$ is isomorphic to 
\begin{equation*}
\Gamma_0 = 
\begin{minipage}{1.8cm}
\begin{tikzpicture}
\draw [thick] (0.4,0.69) -- (0,0) -- (\1,0);
\filldraw (0,0) circle [radius=0.08];
\filldraw (\1,0) circle [radius=0.08];
\filldraw (0.4,0.69) circle [radius=0.08];
\end{tikzpicture}
\end{minipage}
\end{equation*}

\noindent Then $\Gamma_0 \cup L_{3\text{-pt}}$ contains the three superposed graphs as follows:
\begin{center}
\begin{tabular}{l || cc}
&
\begin{minipage}{\mini cm}
\begin{tikzpicture}
\draw [white] (0,0) -- (0,-0.28);
\draw [thick] (0.4,0.69) -- (0,0) -- (\1,0);
\filldraw (0,0) circle [radius=0.08];
\filldraw (\1,0) circle [radius=0.08];
\filldraw (0.4,0.69) circle [radius=0.08];
\end{tikzpicture}
\end{minipage}
\begin{minipage}{\mini cm}
\begin{tikzpicture}
\draw [white] (0,0) -- (0,-0.28);
\draw [thick] (0,0) -- (0.4,0.69) -- (\1,0);
\filldraw (0,0) circle [radius=0.08];
\filldraw (\1,0) circle [radius=0.08];
\filldraw (0.4,0.69) circle [radius=0.08];
\end{tikzpicture}
\end{minipage}
\begin{minipage}{\mini cm}
\begin{tikzpicture}
\draw [white] (0,0) -- (0,--0.28);
\draw [thick] (0.4,0.69) -- (\1,0) -- (0,0);
\filldraw (0.4,0.69) circle [radius=0.08];
\filldraw (0,0) circle [radius=0.08];
\filldraw (\1,0) circle [radius=0.08];
\end{tikzpicture}
\end{minipage} \\[10pt]
\hline\hline 
\begin{minipage}{1.2cm}
\begin{tikzpicture}
\draw [white] (0.4,0.69) -- (0.5,0.96);
\draw [white] (0,0) -- (0,-0.2);
\draw [thick, \red] (0,0) to [out=-30,in=-150] (\1,0);
\draw [thick, \red] (0,0) to [out=90,in=-150] (0.4,0.69);
\filldraw (0,0) circle [radius=0.08];
\filldraw (\1,0) circle [radius=0.08];
\filldraw (0.4,0.69) circle [radius=0.08];
\end{tikzpicture}
\end{minipage} 
&
\begin{minipage}{\mini cm}
\begin{tikzpicture}
\draw [white] (0.4,0.69) -- (0.5,0.96);
\draw [white] (0,0) -- (0,-0.2);
\draw [thick, \red] (0,0) to [out=-30,in=-150] (\1,0);
\draw [thick, \red] (0,0) to [out=90,in=-150] (0.4,0.69);
\draw [thick] (0,0) -- (\1,0);
\draw [thick] (0,0) -- (0.4,0.69);
\filldraw (0,0) circle [radius=0.08];
\filldraw (\1,0) circle [radius=0.08];
\filldraw (0.4,0.69) circle [radius=0.08];
\end{tikzpicture}
\end{minipage}
\begin{minipage}{\mini cm}
\begin{tikzpicture}
\draw [white] (0.4,0.69) -- (0.5,0.96);
\draw [white] (0,0) -- (0,-0.2);
\draw [thick, \red] (0,0) to [out=-30,in=-150] (\1,0);
\draw [thick, \red] (0,0) to [out=90,in=-150] (0.4,0.69);
\draw [thick] (\1,0) -- (0.4,0.69);
\draw [thick] (0,0) -- (0.4,0.69);
\filldraw (0,0) circle [radius=0.08];
\filldraw (\1,0) circle [radius=0.08];
\filldraw (0.4,0.69) circle [radius=0.08];
\end{tikzpicture}
\end{minipage}
\begin{minipage}{\mini cm}
\begin{tikzpicture}
\draw [white] (0.4,0.69) -- (0.5,0.96);
\draw [white] (0,0) -- (0,-0.2);
\draw [thick, \red] (0,0) to [out=-30,in=-150] (\1,0);
\draw [thick, \red] (0,0) to [out=90,in=-150] (0.4,0.69);
\draw [thick] (0.4,0.69) -- (\1,0);
\draw [thick] (0,0) -- (\1,0);
\filldraw (0,0) circle [radius=0.08];
\filldraw (\1,0) circle [radius=0.08];
\filldraw (0.4,0.69) circle [radius=0.08];
\end{tikzpicture}
\end{minipage}
\end{tabular}
\end{center}

\noindent Summing over all superposed graphs with unit coefficients gives a 2-invariant for $\nv =3$ (after identifying isomorphic graphs),
\begin{align} \label{eq: p=2 n=3 Delta=4}
\delta_2 \left ( \hspace{1mm}
\begin{minipage}{\mini cm}
\begin{tikzpicture}
\draw [thick] (0,0) to [out=80,in=-140] (0.4,0.69);
\draw [thick] (0,0) to [out=35,in=-105] (0.4,0.69);
\filldraw (0,0) circle [radius=0.08];
\filldraw (\1,0) circle [radius=0.08];
\filldraw (0.4,0.69) circle [radius=0.08];
\draw [thick] (0,0) to [out=20,in=160] (\1,0);
\draw [thick] (0,0) to [out=-20,in=-160] (\1,0);
\end{tikzpicture}
\end{minipage}
\hspace{-0.2cm} + 2
\begin{minipage}{\mini cm}
\begin{tikzpicture}
\filldraw (0,0) circle [radius=0.08];
\filldraw (\1,0) circle [radius=0.08];
\filldraw (0.4,0.69) circle [radius=0.08];
\draw [thick] (0,0) -- (0.4,0.69) -- (\1,0);
\draw [thick] (0,0) to [out=20,in=160] (\1,0);
\draw [thick] (0,0) to [out=-20,in=-160] (\1,0);
\end{tikzpicture}
\end{minipage}
\hspace{-3mm}
\right )
= \rho^{(0)} 
\left ( 2
\begin{minipage}{\mini cm}
\begin{tikzpicture}
\filldraw (0,0) circle [radius=0.08];
\filldraw (\1,0) circle [radius=0.08];
\draw [thick] (0,0) to [out=20,in=160] (\1,0);
\draw [thick] (0,0) to [out=-20,in=-160] (\1,0);
\draw [thick] (\1,\1) -- (0,\1) -- (0,0);
\node at (\1,\1) {\scalebox{0.8}{$\bigstar$}};
\filldraw [white] (0,\1) circle [radius=0.115];
\draw [thick] (0,\1) circle [radius=0.115];
\node at (0,\1) {$\times$};
\end{tikzpicture}
\end{minipage}
\right ).
\end{align}

\noindent Note that for $\pd =2$ and $\nv =3$, we need only consider Medusas with at least four edges, since that the $\times$-vertex and $\bullet$-vertices have degree no less than 2.  The Medusa in \eqref{eq: p=2 n=3 Delta=4} is the only such Medusa with $\Delta=4$.  Therefore, this is the only independent minimal 2-invariant.  Note that the 2-invariant given in \eqref{eq: p=2 n=3 Delta=4} and the 3-invariant in \eqref{eq: n=3 graph} happen to be the same.  

\vspace{0.3cm}

	In fact, we can prove a general minimality statement for $\nv =3$.  Consider a Medusa with $\Delta = \pd +1$ for odd $\pd$, and $\Delta = \pd +2$ for even $\pd$.  The $\bullet$-vertices of this Medusa have degree at least $\frac{1}{2} \Delta$.  For odd $P$ this already saturates the lower bound for the degree of a $\bullet$-vertex; no edge joining the two $\bullet$-vertices can be removed and thus $\Delta$ cannot be lowered further.  For even $P$, one $\bullet$-vertex saturates the lower bound on vertex degree and the other $\bullet$-vertex has an excess of exactly one edge.  Nevertheless, the same conclusion holds.

\vspace{0.3cm}

\begin{figure}[t!]
\centering
\begin{subfigure}[t]{.48\textwidth} 
\begin{eqnarray*}
\begin{minipage}{1.5cm}
\begin{tikzpicture}
\filldraw (0,0) circle [radius=0.08];
\filldraw (\1,0) circle [radius=0.08];
\filldraw (\1,\1) circle [radius=0.08];
\filldraw (0,\1) circle [radius=0.08];
\draw [thick] (0,\1) -- (0,0) (0,\1) -- (\1,0) (0,\1) -- (\1,\1);
\end{tikzpicture}
\end{minipage}
\begin{minipage}{1.5cm}
\begin{tikzpicture}
\filldraw (0,0) circle [radius=0.08];
\filldraw (\1,0) circle [radius=0.08];
\filldraw (\1,\1) circle [radius=0.08];
\filldraw (0,\1) circle [radius=0.08];
\draw [thick] (0,\1) -- (0,0) (0,\1) -- (\1,\1) -- (\1,0);
\end{tikzpicture}
\end{minipage}
\begin{minipage}{1.5cm}
\begin{tikzpicture}
\filldraw (0,0) circle [radius=0.08];
\filldraw (\1,0) circle [radius=0.08];
\filldraw (\1,\1) circle [radius=0.08];
\filldraw (0,\1) circle [radius=0.08];
\draw [thick] (\1,\1) -- (0,\1) -- (0,0) -- (\1,0);
\end{tikzpicture}
\end{minipage}
\begin{minipage}{1cm}
\begin{tikzpicture}
\filldraw (0,0) circle [radius=0.08];
\filldraw (\1,0) circle [radius=0.08];
\filldraw (\1,\1) circle [radius=0.08];
\filldraw (0,\1) circle [radius=0.08];
\draw [thick] (\1,\1) -- (0,\1) -- (\1,0) -- (0,0);
\end{tikzpicture}
\end{minipage} \\[5pt]
\begin{minipage}{1.5cm}
\begin{tikzpicture}
\filldraw (0,0) circle [radius=0.08];
\filldraw (\1,0) circle [radius=0.08];
\filldraw (\1,\1) circle [radius=0.08];
\filldraw (0,\1) circle [radius=0.08];
\draw [thick] (0,0) -- (\1,\1) -- (0,\1) -- (\1,0);
\end{tikzpicture}
\end{minipage}
\begin{minipage}{1.5cm}
\begin{tikzpicture}
\filldraw (0,0) circle [radius=0.08];
\filldraw (\1,0) circle [radius=0.08];
\filldraw (\1,\1) circle [radius=0.08];
\filldraw (0,\1) circle [radius=0.08];
\draw [thick] (\1,\1) -- (0,0) (\1,\1) -- (0,\1) (\1,\1) -- (\1,0);
\end{tikzpicture}
\end{minipage}
\begin{minipage}{1.5cm}
\begin{tikzpicture}
\filldraw (0,0) circle [radius=0.08];
\filldraw (\1,0) circle [radius=0.08];
\filldraw (\1,\1) circle [radius=0.08];
\filldraw (0,\1) circle [radius=0.08];
\draw [thick] (0,\1) -- (\1,\1) -- (0,0) -- (\1,0);
\end{tikzpicture}
\end{minipage}
\begin{minipage}{1cm}
\begin{tikzpicture}
\filldraw (0,0) circle [radius=0.08];
\filldraw (\1,0) circle [radius=0.08];
\filldraw (\1,\1) circle [radius=0.08];
\filldraw (0,\1) circle [radius=0.08];
\draw [thick] (0,\1) -- (\1,\1) -- (\1,0) --(0,0);
\end{tikzpicture}
\end{minipage} \\[5pt]
\begin{minipage}{1.5cm}
\begin{tikzpicture}
\filldraw (0,0) circle [radius=0.08];
\filldraw (\1,0) circle [radius=0.08];
\filldraw (\1,\1) circle [radius=0.08];
\filldraw (0,\1) circle [radius=0.08];
\draw [thick] (\1,\1) -- (0,0) -- (0,\1) --(\1,0);
\end{tikzpicture}
\end{minipage}
\begin{minipage}{1.5cm}
\begin{tikzpicture}
\filldraw (0,0) circle [radius=0.08];
\filldraw (\1,0) circle [radius=0.08];
\filldraw (\1,\1) circle [radius=0.08];
\filldraw (0,\1) circle [radius=0.08];
\draw [thick] (0,\1) -- (0,0) -- (\1,\1) -- (\1,0);
\end{tikzpicture}
\end{minipage}
\begin{minipage}{1.5cm}
\begin{tikzpicture}
\filldraw (0,0) circle [radius=0.08];
\filldraw (\1,0) circle [radius=0.08];
\filldraw (\1,\1) circle [radius=0.08];
\filldraw (0,\1) circle [radius=0.08];
\draw [thick] (0,0) -- (0,\1) (0,0) -- (\1,0) (0,0) -- (\1,\1);
\end{tikzpicture}
\end{minipage}
\begin{minipage}{1cm}
\begin{tikzpicture}
\filldraw (0,0) circle [radius=0.08];
\filldraw (\1,0) circle [radius=0.08];
\filldraw (\1,\1) circle [radius=0.08];
\filldraw (0,\1) circle [radius=0.08];
\draw [thick] (0,\1) -- (0,0) -- (\1,0) -- (\1,\1);
\end{tikzpicture}
\end{minipage} \\[5pt]
\begin{minipage}{1.5cm}
\begin{tikzpicture}
\filldraw (0,0) circle [radius=0.08];
\filldraw (\1,0) circle [radius=0.08];
\filldraw (\1,\1) circle [radius=0.08];
\filldraw (0,\1) circle [radius=0.08];
\draw [thick] (0,0) -- (0,\1) -- (\1,0) -- (\1,\1);
\end{tikzpicture}
\end{minipage}
\begin{minipage}{1.5cm}
\begin{tikzpicture}
\filldraw (0,0) circle [radius=0.08];
\filldraw (\1,0) circle [radius=0.08];
\filldraw (\1,\1) circle [radius=0.08];
\filldraw (0,\1) circle [radius=0.08];
\draw [thick] (0,\1) -- (\1,0) -- (\1,\1) -- (0,0);
\end{tikzpicture}
\end{minipage}
\begin{minipage}{1.5cm}
\begin{tikzpicture}
\filldraw (0,0) circle [radius=0.08];
\filldraw (\1,0) circle [radius=0.08];
\filldraw (\1,\1) circle [radius=0.08];
\filldraw (0,\1) circle [radius=0.08];
\draw [thick] (0,\1) -- (\1,0) -- (0,0) -- (\1,\1);
\end{tikzpicture}
\end{minipage}
\begin{minipage}{1cm}
\begin{tikzpicture}
\filldraw (0,0) circle [radius=0.08];
\filldraw (\1,0) circle [radius=0.08];
\filldraw (\1,\1) circle [radius=0.08];
\filldraw (0,\1) circle [radius=0.08];
\draw [thick] (\1,0) -- (0,0) (\1,0) -- (0,\1) (\1,0) -- (\1,\1);
\end{tikzpicture}
\end{minipage}
\end{eqnarray*}
\caption[Caption for LOF]%
      {All 16 trees for $N = 4$.\protect\footnotemark}
\label{fig: spanning trees}
\end{subfigure}
\hfill
\begin{subfigure}[t]{.48\textwidth}
\begin{eqnarray*}
\begin{minipage}{1.5cm}
\begin{tikzpicture}
\draw [thick, \red] (0,0) to [out=135,in=-135] (0,\1);
\draw [thick, \red] (\1,0) to [out=45,in=-45] (\1,\1);
\filldraw (0,0) circle [radius=0.08];
\filldraw (\1,0) circle [radius=0.08];
\filldraw (\1,\1) circle [radius=0.08];
\filldraw (0,\1) circle [radius=0.08];
\draw [thick] (0,\1) -- (0,0) (0,\1) -- (\1,0) (0,\1) -- (\1,\1);
\end{tikzpicture}
\end{minipage}
\begin{minipage}{1.5cm}
\begin{tikzpicture}
\draw [thick, \red] (0,0) to [out=135,in=-135] (0,\1);
\draw [thick, \red] (\1,0) to [out=45,in=-45] (\1,\1);
\filldraw (0,0) circle [radius=0.08];
\filldraw (\1,0) circle [radius=0.08];
\filldraw (\1,\1) circle [radius=0.08];
\filldraw (0,\1) circle [radius=0.08];
\draw [thick] (0,\1) -- (0,0) (0,\1) -- (\1,\1) -- (\1,0);
\end{tikzpicture}
\end{minipage}
\begin{minipage}{1.5cm}
\begin{tikzpicture}
\draw [thick, \red] (0,0) to [out=135,in=-135] (0,\1);
\draw [thick, \red] (\1,0) to [out=45,in=-45] (\1,\1);
\filldraw (0,0) circle [radius=0.08];
\filldraw (\1,0) circle [radius=0.08];
\filldraw (\1,\1) circle [radius=0.08];
\filldraw (0,\1) circle [radius=0.08];
\draw [thick] (\1,\1) -- (0,\1) -- (0,0) -- (\1,0);
\end{tikzpicture}
\end{minipage}
\begin{minipage}{1.3cm}
\begin{tikzpicture}
\draw [thick,\red] (0,0) to [out=135,in=-135] (0,\1);
\draw [thick, \red] (\1,0) to [out=45,in=-45] (\1,\1);
\filldraw (0,0) circle [radius=0.08];
\filldraw (\1,0) circle [radius=0.08];
\filldraw (\1,\1) circle [radius=0.08];
\filldraw (0,\1) circle [radius=0.08];
\draw [thick] (\1,\1) -- (0,\1) -- (\1,0) -- (0,0);
\end{tikzpicture}
\end{minipage} \\[5pt]
\begin{minipage}{1.5cm}
\begin{tikzpicture}
\draw [thick, \red] (0,0) to [out=135,in=-135] (0,\1);
\draw [thick, \red] (\1,0) to [out=45,in=-45] (\1,\1);
\filldraw (0,0) circle [radius=0.08];
\filldraw (\1,0) circle [radius=0.08];
\filldraw (\1,\1) circle [radius=0.08];
\filldraw (0,\1) circle [radius=0.08];
\draw [thick] (0,0) -- (\1,\1) -- (0,\1) -- (\1,0);
\end{tikzpicture}
\end{minipage}
\begin{minipage}{1.5cm}
\begin{tikzpicture}
\draw [thick, \red] (0,0) to [out=135,in=-135] (0,\1);
\draw [thick, \red] (\1,0) to [out=45,in=-45] (\1,\1);
\filldraw (0,0) circle [radius=0.08];
\filldraw (\1,0) circle [radius=0.08];
\filldraw (\1,\1) circle [radius=0.08];
\filldraw (0,\1) circle [radius=0.08];
\draw [thick] (\1,\1) -- (0,0) (\1,\1) -- (0,\1) (\1,\1) -- (\1,0);
\end{tikzpicture}
\end{minipage}
\begin{minipage}{1.5cm}
\begin{tikzpicture}
\draw [thick, \red] (0,0) to [out=135,in=-135] (0,\1);
\draw [thick, \red] (\1,0) to [out=45,in=-45] (\1,\1);
\filldraw (0,0) circle [radius=0.08];
\filldraw (\1,0) circle [radius=0.08];
\filldraw (\1,\1) circle [radius=0.08];
\filldraw (0,\1) circle [radius=0.08];
\draw [thick] (0,\1) -- (\1,\1) -- (0,0) -- (\1,0);
\end{tikzpicture}
\end{minipage}
\begin{minipage}{1.3cm}
\begin{tikzpicture}
\draw [thick, \red] (0,0) to [out=135,in=-135] (0,\1);
\draw [thick, \red] (\1,0) to [out=45,in=-45] (\1,\1);
\filldraw (0,0) circle [radius=0.08];
\filldraw (\1,0) circle [radius=0.08];
\filldraw (\1,\1) circle [radius=0.08];
\filldraw (0,\1) circle [radius=0.08];
\draw [thick] (0,\1) -- (\1,\1) -- (\1,0) --(0,0);
\end{tikzpicture}
\end{minipage} \\[5pt]
\begin{minipage}{1.5cm}
\begin{tikzpicture}
\draw [thick, \red] (0,0) to [out=135,in=-135] (0,\1);
\draw [thick, \red] (\1,0) to [out=45,in=-45] (\1,\1);
\filldraw (0,0) circle [radius=0.08];
\filldraw (\1,0) circle [radius=0.08];
\filldraw (\1,\1) circle [radius=0.08];
\filldraw (0,\1) circle [radius=0.08];
\draw [thick] (\1,\1) -- (0,0) -- (0,\1) --(\1,0);
\end{tikzpicture}
\end{minipage}
\begin{minipage}{1.5cm}
\begin{tikzpicture}
\draw [thick, \red] (0,0) to [out=135,in=-135] (0,\1);
\draw [thick, \red] (\1,0) to [out=45,in=-45] (\1,\1);
\filldraw (0,0) circle [radius=0.08];
\filldraw (\1,0) circle [radius=0.08];
\filldraw (\1,\1) circle [radius=0.08];
\filldraw (0,\1) circle [radius=0.08];
\draw [thick] (0,\1) -- (0,0) -- (\1,\1) -- (\1,0);
\end{tikzpicture}
\end{minipage}
\begin{minipage}{1.5cm}
\begin{tikzpicture}
\draw [thick, \red] (0,0) to [out=135,in=-135] (0,\1);
\draw [thick, \red] (\1,0) to [out=45,in=-45] (\1,\1);
\filldraw (0,0) circle [radius=0.08];
\filldraw (\1,0) circle [radius=0.08];
\filldraw (\1,\1) circle [radius=0.08];
\filldraw (0,\1) circle [radius=0.08];
\draw [thick] (0,0) -- (0,\1) (0,0) -- (\1,0) (0,0) -- (\1,\1);
\end{tikzpicture}
\end{minipage}
\begin{minipage}{1.3cm}
\begin{tikzpicture}
\filldraw (0,0) circle [radius=0.08];
\filldraw (\1,0) circle [radius=0.08];
\filldraw (\1,\1) circle [radius=0.08];
\filldraw (0,\1) circle [radius=0.08];
\draw [thick] (0,\1) -- (0,0) -- (\1,0) -- (\1,\1);
\draw [thick, \red] (0,0) to [out=135,in=-135] (0,\1);
\draw [thick, \red] (\1,0) to [out=45,in=-45] (\1,\1);
\end{tikzpicture}
\end{minipage} \\[5pt]
\begin{minipage}{1.5cm}
\begin{tikzpicture}
\draw [thick, \red] (0,0) to [out=135,in=-135] (0,\1);
\draw [thick, \red] (\1,0) to [out=45,in=-45] (\1,\1);
\filldraw (0,0) circle [radius=0.08];
\filldraw (\1,0) circle [radius=0.08];
\filldraw (\1,\1) circle [radius=0.08];
\filldraw (0,\1) circle [radius=0.08];
\draw [thick] (0,0) -- (0,\1) -- (\1,0) -- (\1,\1);
\end{tikzpicture}
\end{minipage}
\begin{minipage}{1.5cm}
\begin{tikzpicture}
\draw [thick, \red] (0,0) to [out=135,in=-135] (0,\1);
\draw [thick, \red] (\1,0) to [out=45,in=-45] (\1,\1);
\filldraw (0,0) circle [radius=0.08];
\filldraw (\1,0) circle [radius=0.08];
\filldraw (\1,\1) circle [radius=0.08];
\filldraw (0,\1) circle [radius=0.08];
\draw [thick] (0,\1) -- (\1,0) -- (\1,\1) -- (0,0);
\end{tikzpicture}
\end{minipage}
\begin{minipage}{1.5cm}
\begin{tikzpicture}
\draw [thick, \red] (0,0) to [out=135,in=-135] (0,\1);
\draw [thick, \red] (\1,0) to [out=45,in=-45] (\1,\1);
\filldraw (0,0) circle [radius=0.08];
\filldraw (\1,0) circle [radius=0.08];
\filldraw (\1,\1) circle [radius=0.08];
\filldraw (0,\1) circle [radius=0.08];
\draw [thick] (0,\1) -- (\1,0) -- (0,0) -- (\1,\1);
\end{tikzpicture}
\end{minipage}
\begin{minipage}{1.3cm}
\begin{tikzpicture}
\draw [thick, \red] (0,0) to [out=135,in=-135] (0,\1);
\draw [thick, \red] (\1,0) to [out=45,in=-45] (\1,\1);
\filldraw (0,0) circle [radius=0.08];
\filldraw (\1,0) circle [radius=0.08];
\filldraw (\1,\1) circle [radius=0.08];
\filldraw (0,\1) circle [radius=0.08];
\draw [thick] (\1,0) -- (0,0) (\1,0) -- (0,\1) (\1,0) -- (\1,\1);
\end{tikzpicture}
\end{minipage}
\end{eqnarray*}
\caption{Superposition of \eqref{eq: 0_invariant} and the 16 trees.}
\label{fig: superposition_P=2}
\end{subfigure}
\caption{The most relevant 2-invariant for $N = 4$ from superposition of graphs.}
\end{figure}
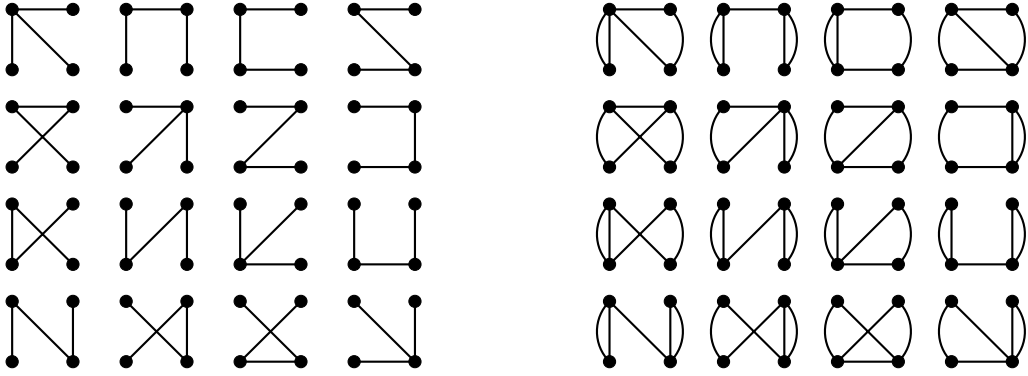

\noindent \textbf{\textit{\nv}} = \textbf{4 Case}: In \S\ref{sec: 245} we found that \eqref{eq: 245} gives the only independent minimal 2-invariant for $\nv =4$.  It has the structure of a superposition of the sum with unit coefficients of all $\nv =4$ trees (Figure \ref{fig: spanning trees}) and an exact 0-invariant:
\begin{equation} \label{eq: 0_invariant}
\begin{tikzpicture}
\draw [thick, \red] (0,0) to [out=135,in=-135] (0,\1);
\draw [thick, \red] (\1,0) to [out=45,in=-45] (\1,\1);
\filldraw (0,0) circle [radius=0.08];
\filldraw (\1,0) circle [radius=0.08];
\filldraw (\1,\1) circle [radius=0.08];
\filldraw (0,\1) circle [radius=0.08];
\end{tikzpicture}
\end{equation}

\noindent The superposition of this 0-invariant with the trees in Figure \ref{fig: spanning trees} is given in Figure \ref{fig: superposition_P=2}.  The sum of all graphs in Figure \ref{fig: superposition_P=2} with unit coefficients gives the 2-invariant in \eqref{eq: 245} (with an overall prefactor of 4).

\footnotetext{The trees are arranged in order of their Pr\"{u}fer sequences \cite{graph_theory}.}

\subsubsection{Cubic Shift (\textit{\pd}=3)}

\noindent \textbf{\textit{\nv}} = \textbf{3 Case}: For $\pd = 3$, the only independent minimal invariant for $\nv =3$ is given in \eqref{eq: n=3 graph}, which can be written as a superposition of two equal-weight tree summations:
\begin{center}

\end{minipage}
\end{eqnarray*}
\caption{Superposition of $\Tblue$ and Figure \ref{fig: spanning trees}.}
\label{fig: (4,6)b}
\end{subfigure}
\caption{Superposition of graphs in \eqref{eq: n=4 star and path} on trees in Figure \ref{fig: spanning trees}.}
\label{fig: (4,6)}
\end{figure}

\noindent Superposing these graphs on the $\nv =4$ trees produces the graphs in Figure \ref{fig: (4,6)}.  If $T$ and $T'$ are isomorphic trees, then superposing $T$ on the trees in Figure \ref{fig: spanning trees} produces 16 graphs which are isomorphic to the 16 graphs formed by superposing $T'$ on the same trees.  There are four trees in the isomorphism class of $\Tred$ and twelve for $\Tblue$.  Therefore, we just have to give the 16 graphs in Figure \ref{fig: (4,6)a} weight 4 and the 16 graphs in Figure \ref{fig: (4,6)b} weight 12 and then add them all up.  The result is \eqref{eq: 346} with an overall prefactor of 4.  Again, we have already shown that this is the unique minimal 3-invariant for $\nv =4$.

As in the $\nv =3$ case, we can produce non-minimal 3-invariants by superposing an exact 1-invariant on the equal-weight sum of $\nv =4$ trees.  For example, there are four independent exact 1-invariants for $\nv =4$ with the lowest number of edges:
\begin{equation*}
\begin{minipage}{\mini cm}
\begin{tikzpicture}
\draw [thick, white] (0,\1) to [out=30,in=150] (\1,\1);
\draw [thick, white] (0,0) to [out=-30,in=-150] (\1,0);
\draw [thick] (0,0) -- (\1,0) -- (\1,\1) -- (0,\1) -- (0,0);
\filldraw (0,0) circle [radius=0.08];
\filldraw (\1,0) circle [radius=0.08];
\filldraw (\1,\1) circle [radius=0.08];
\filldraw (0,\1) circle [radius=0.08];
\end{tikzpicture}
\end{minipage}
\qquad
\begin{minipage}{\mini cm}
\begin{tikzpicture}
\draw [thick, white] (0,\1) to [out=30,in=150] (\1,\1);
\draw [thick, white] (0,0) to [out=-30,in=-150] (\1,0);
\draw [thick] (0,0) -- (\1,0) -- (\1,\1) -- (0,0);
\draw [thick] (0,\1) .. controls (0,0) and (\1,\1) .. (0,\1);
\filldraw (0,0) circle [radius=0.08];
\filldraw (\1,0) circle [radius=0.08];
\filldraw (\1,\1) circle [radius=0.08];
\filldraw (0,\1) circle [radius=0.08];
\end{tikzpicture}
\end{minipage}
\qquad
\begin{minipage}{\mini cm}
\begin{tikzpicture}
\draw [thick, white] (0,\1) to [out=30,in=150] (\1,\1);
\draw [thick] (0,0) to [out=30,in=150] (\1,0);
\draw [thick] (0,0) to [out=-30,in=-150] (\1,0);
\draw [thick] (0,\1) .. controls (0,0) and (\1,\1) .. (0,\1);
\draw [thick] (\1,\1) .. controls (0,\1) and (\1,0) .. (\1,\1);
\filldraw (0,0) circle [radius=0.08];
\filldraw (\1,0) circle [radius=0.08];
\filldraw (\1,\1) circle [radius=0.08];
\filldraw (0,\1) circle [radius=0.08];
\end{tikzpicture}
\end{minipage}
\qquad
\begin{minipage}{\mini cm}
\begin{tikzpicture}
\draw [thick] (0,0) to [out=30,in=150] (\1,0);
\draw [thick] (0,0) to [out=-30,in=-150] (\1,0);
\draw [thick] (0,\1) to [out=30,in=150] (\1,\1);
\draw [thick] (0,\1) to [out=-30,in=-150] (\1,\1);
\filldraw (0,0) circle [radius=0.08];
\filldraw (\1,0) circle [radius=0.08];
\filldraw (\1,\1) circle [radius=0.08];
\filldraw (0,\1) circle [radius=0.08];
\end{tikzpicture}
\end{minipage}
\end{equation*}

\subsubsection{Quartic Shift (\textit{\pd}=4)}

\noindent \textbf{\textit{\nv} = 3 Case}: As argued earlier, $\Delta_{\text{min}} = 6$ in this case.  There are two Medusas with the fewest edges such that the $\times$-vertex and the $\bullet$-vertices have degree no less than 3 (note that $\frac{1}{2} ( \pd +1) = \frac{5}{2}$ in this case):
\vspace{-3mm}
\begin{align*}
\begin{minipage}{\mini cm}
\begin{tikzpicture}
\filldraw (0,0) circle [radius=0.08];
\filldraw (\1,0) circle [radius=0.08];
\draw [thick] (0,0) to [out=30,in=150] (\1,0);
\draw [thick] (0,0) to [out=-30,in=-150] (\1,0);
\draw [thick] (\1,\1) -- (0,\1) -- (0,0) -- (\1,0);
\draw [thick] (0,\1) -- (0.4,1.2);
\node at (0.4,1.2) {\scalebox{0.8}{$\bigstar$}};
\node at (\1,\1) {\scalebox{0.8}{$\bigstar$}};
\filldraw [white] (0,\1) circle [radius=0.115];
\draw [thick] (0,\1) circle [radius=0.115];
\node at (0,\1) {$\times$};
\end{tikzpicture}
\end{minipage}
\qquad\qquad
\begin{minipage}{\mini cm}
\begin{tikzpicture}
\draw [white, thick] (0,\1) -- (0.4,1.2);
\node [white] at (0.4,1.2) {\scalebox{0.8}{$\bigstar$}};
\draw [thick] (0,0) to [out=115,in=-115] (0,\1);
\draw [thick] (0,0) to [out=65,in=-65] (0,\1);
\filldraw (0,0) circle [radius=0.08];
\filldraw (\1,0) circle [radius=0.08];
\draw [thick] (0,0) to [out=30,in=150] (\1,0);
\draw [thick] (0,0) to [out=-30,in=-150] (\1,0);
\draw [thick] (\1,0) -- (0,\1) -- (\1,\1);
\node at (\1,\1) {\scalebox{0.8}{$\bigstar$}};
\filldraw [white] (0,\1) circle [radius=0.115];
\draw [thick] (0,\1) circle [radius=0.115];
\node at (0,\1) {$\times$};
\end{tikzpicture}
\end{minipage}
\end{align*}

\vspace{1mm}

\noindent There is exactly one minimal 4-invariant in this case, which is constructed by superposing two equal-weight sums of trees with an exact 0-invariant:
\begin{align*}
\hspace{0.5cm}
\begin{minipage}{\mini cm}
\begin{tikzpicture}
\draw [thick] (0,0) to [out=80,in=-140] (0.4,0.69);
\draw [thick] (0,0) to [out=35,in=-105] (0.4,0.69);
\filldraw (0,0) circle [radius=0.08];
\filldraw (\1,0) circle [radius=0.08];
\filldraw (0.4,0.69) circle [radius=0.08];
\draw [thick] (0,0) to [out=20,in=160] (\1,0);
\draw [thick] (0,0) to [out=-20,in=-160] (\1,0);
\draw [thick] (\1,0) -- (0,0) -- (0.4,0.69);
\end{tikzpicture}
\end{minipage}
\hspace{-0.4cm} + 6
\begin{minipage}{\mini cm}
\begin{tikzpicture}
\draw [thick] (0,0) to [out=80,in=-140] (0.4,0.69);
\draw [thick] (0,0) to [out=35,in=-105] (0.4,0.69);
\filldraw (0,0) circle [radius=0.08];
\filldraw (\1,0) circle [radius=0.08];
\filldraw (0.4,0.69) circle [radius=0.08];
\draw [thick] (0,0) to [out=20,in=160] (\1,0);
\draw [thick] (0,0) to [out=-20,in=-160] (\1,0);
\draw [thick] (0,0) -- (0.4,0.69) -- (\1,0);
\end{tikzpicture}
\end{minipage}
\hspace{-0.4cm} + 2
\begin{minipage}{\mini cm}
\begin{tikzpicture}
\draw [thick] (0,0) to [out=80,in=-140] (0.4,0.69);
\draw [thick] (0,0) to [out=35,in=-105] (0.4,0.69);
\filldraw (0,0) circle [radius=0.08];
\filldraw (\1,0) circle [radius=0.08];
\filldraw (0.4,0.69) circle [radius=0.08];
\draw [thick] (0,0) to [out=20,in=160] (\1,0);
\draw [thick] (0,0) to [out=-20,in=-160] (\1,0);
\draw [thick] (\1,0) to [out=100,in=-40] (0.4,0.69);
\draw [thick] (\1,0) to [out=145,in=-75] (0.4,0.69);
\end{tikzpicture}
\end{minipage}
\end{align*}

\noindent Note that the sum of the coefficients is 9, as it should be, since there are three $\nv =3$ trees, and thus there are nine superpositions of two $\nv =3$ trees.

Examples of non-minimal invariants can be constructed by superposing an equal-weight sum of trees with an exact 2-invariant, or two equal-weight sums of trees with an exact 1-invariant.  

\vspace{0.3cm}

	The proofs of uniqueness and minimality for the remaining $\nv =4$ examples are lengthy and involve many more Medusas than the previous examples, but the process is the same.  Therefore, we will simply write the invariants and state that they are unique and minimal.

\vspace{0.3cm}

\noindent \textbf{\textit{\nv}=4 Case}: We construct the minimal 4-invariant by superposing two copies of equal-weight sums of trees with an exact 0-invariant.  There is one independent exact 0-invariant:
\begin{equation*}
\begin{minipage}{\mini cm}

\end{minipage}
\end{align*}

\vspace{-0.7cm}

\section{Conclusions and Outlook} \label{sec: conclusion}

In this paper, we studied nonrelativistic scalar field theories with polynomial 
shift symmetries.  In the free-field limit, such field theories arise in the 
context of Goldstone's theorem, where they lead to the hierarchies of possible 
universality classes of Nambu-Goldstone modes, as reviewed in \S{\ref{sec: mng}}.  Our main 
focus in \S{\ref{sec: Galileons}} and \S{\ref{sec: beyond}} has been on {\it interacting\/} effective field 
theories which respect the polynomial shift symmetries of degree 
$\pd =1,2,\ldots$.  In order to 
find such theories, one needs to identify possible Lagrangian terms invariant 
under the polynomial shift up to total derivatives, and organize them by 
their scaling dimension, starting from the most relevant.  As we showed in 
\S{\ref{sec: Galileons}}, \S{\ref{sec: beyond}} and Appendix~\ref{AppB}, this essentially cohomological classification 
problem can be usefully translated into the language of graph theory.  This 
graphical technique is important for two reasons.  First, it is quite 
powerful:  The translation of 
the classification problem into a graph-theory problem allows us to generate 
sequences of invariants for various values of $\pd$, number $\nv$ of fields, 
the number $2 \Delta$ of spatial derivatives, and as a function of the spatial 
dimension $\sdim$, in a way that is much more efficient than any ``brute 
force'' 
technique.  Secondly, and perhaps more importantly, the graphical technique 
reveals some previously hidden structure even in those invariants 
already known in the literature.  For example, the known Galileon $\nv$-point 
invariants are given by the equal-weight sums of all labeled trees 
with $\nv$ vertices!  This hidden simplicity of the Galileon invariants is 
a feature previously unsuspected in the literature, and its mathematical 
explanation deserves further study.  In addition, we also discovered patterns 
that allow the construction of higher polynomials from the superposition of 
graphs representing a collection of invariants of a lower degree -- again a
surprising result, revealing glimpses of intriguing connections among the 
{\it a priori\/} unrelated spaces of invariants across the various values of 
$P$, $N$ and $\Delta$.  

Throughout this paper, we focused for simplicity on the unrestricted 
polynomial shift symmetries of degree $\pd$, whose coefficients 
$a_{i_1\ldots i_\ell}$ are general real symmetric tensors of rank 
$\ell=0,\ldots , \pd$.  As we pointed out in \S\ref{sec:pss}, at $\pd\geq 2$, this 
maximal polynomial shift symmetry algebra allows various subalgebras, obtained 
by imposing additional conditions on the structure of $a_{i_1\ldots i_\ell}$'s.  
While this refinement does not significantly impact the classification of 
Gaussian fixed points, reducing the symmetry to one of the subalgebras inside 
the maximal polynomial shift symmetry can lead to new $N$-point 
invariants, beyond the ones presented in this paper.  It is possible 
to extend our graphical technique to the various reduced polynomial shift 
symmetries, and to study the refinement of the structure of polynomial shift 
invariants associated with the reduced symmetries.

Our main motivation for the study of scalar field theories with polynomial 
shift symmetries has originated from our desire to map out phenomena in which 
technical naturalness plays a crucial role, in general classes of field 
theories with or without relativistic symmetries.  The refined classification 
of the universality classes of NG modes and the nonrelativistic refinement of 
Goldstone's theorem have provided an example of scenarios where our naive 
relativistic intuition about technical naturalness may be misleading, and new 
interesting phenomena can emerge.  We anticipate that other surprises of 
naturalness are still hidden not only in the landscape of quantum field 
theories, but also in the landscape of nonrelativistic theories of quantum 
gravity.   

\acknowledgments

We wish to thank Hitoshi Murayama and Haruki Watanabe for useful discussions.  This work has been supported by NSF Grant PHY-1214644 
and by Berkeley Center for Theoretical Physics.  

\appendix

\section{Glossary of Graph Theory} \label{graph theory} 

In this section, we list the standard terminologies in graph theory to which we will refer.  (These essentially coincide with the ones in \cite{graph_theory}.) 
\begin{description}

\item [Graph] 

	A \emph{graph} $\Gamma$ is an ordered pair $(V(\Gamma), E(\Gamma))$ consisting of a set  $V(\Gamma)$ of \emph{vertices} and a set $E(\Gamma)$, disjoint from $V(\Gamma)$, of \emph{edges}, together with an \emph{incident function} $\Psi_\Gamma$ that associates with each edge of $\Gamma$ an ordered pair of (not necessarily distinct) vertices of $\Gamma$.  If $e$ is an edge and $u$ and $v$ are vertices such that $\Psi_\Gamma ( e ) = \{ u, v \}$, then $e$ is said to \emph{join} $u$ and $v$.  

\item [Isomorphism]
	
	Two graphs $\Gamma_A$ and $\Gamma_B$ are \emph{isomorphic} if there exist a pair of bijections $f: V(\Gamma_A) \rightarrow V(\Gamma_B)$ and $\phi : E( \Gamma_A ) \rightarrow E ( \Gamma_B )$ such that $\Psi_{\Gamma_A} (e) = \{ u, v \}$ if and only if $\Psi_{\Gamma_B} ( \phi ( e ) ) = \{ f ( u ), f( v ) \}$.
		
\item [Identical Graphs] Two graphs are \emph{identical}, written $\Gamma_A = \Gamma_B$, if $V(G) = V(H)$, $E(G) = E(H)$ and $\Psi_G = \Psi_H$.

\item [Labeled Graph]\hspace{-0.2cm} A graph in which the vertices are labeled but the edges are not, is called a \emph{labeled graph}.  This will be the notion of graphs that we will refer to most frequently.
	
\item [Unlabeled Graph]

	An \emph{unlabeled graph} is a representative of an equivalence class of isomorphic graphs.

\item [Finite Graph]

	A graph is \emph{finite} if both of its vertex set and edge set are finite.

\item [Null Graph] The graph with no vertices (and hence no edges) is the \emph{null graph}.
	
\item [Incident]

	The ends of an edge are said to be \emph{incident} to the edge, and \emph{vice versa}.
	
\item [Adjacent]

	Two vertices which are incident to a common edge are \emph{adjacent}.

\item [Loop]

	A \emph{loop} is an edge that joins a vertex to itself.  

\item [Cycle]

	A \emph{cycle} on two or more vertices is a graph in which the vertices can be arranged in a cyclic sequence such that two vertices are joined by exactly one edge if they are consecutive in the sequence, and are nonadjacent otherwise.  A \emph{cycle} on one vertex is a graph consisting of a single vertex with a loop.  
	
\item [Loopless Graph]

	A \emph{loopless} graph contains no loops.  Note that a loopless graph may still contain cycles on two or more vertices.
	
\item [Vertex Degree]

	The \emph{degree} of a vertex $v$, denoted by deg$(v)$, in a graph $\Gamma$ is the number of edges of $\Gamma$ incident to $v$, with each loop counting as two edges.  
	
\item [Empty Vertex]

	A vertex of degree 0 is called an \emph{empty vertex}.  
	
\item [Leaf] 

	A vertex of degree 1 is called a \emph{leaf}.

\item [Edge Deletion]

	The \emph{edge deletion} of an edge $e$ in a graph $\Gamma$ is defined by deleting from $\Gamma$ the edge $e$ but leaving the vertices and the remaining edges intact.
	
\item [Vertex Deletion] 

	The \emph{vertex deletion} of a vertex $v$ in a graph $\Gamma$ is defined by deleting from $\Gamma$ the vertex $v$ together with all the edges incident to $v$. The resulting graph is denoted by $\Gamma - v$.
	
\item [Connected Graph] 

	A graph is \emph{connected} if, for every partition of its vertex set into two nonempty sets $X$ and $Y$, there is an edge with one end in $X$ and one end in $Y$.

\item [Connected Component] 

	A \emph{connected component} of a graph $\Gamma$ is a connected subgraph $\Gamma^\prime$ of $\Gamma$ such that any vertex $v$ in $\Gamma^\prime$ satisfies the following condition: all edges incident to $v$ in $\Gamma$ are also contained in $\Gamma^\prime$.

\item [Tree]
	
	A \emph{tree} is a connected graph that contains no cycles.  In particular, note that a tree has no empty vertices if it contains more than one vertex.

\item [Cayley's Formula]
 	
	The number of labeled trees on $\nv$ vertices is $\nv^{\nv -2}$.

\end{description}


\section{Theorems and Proofs} \label{AppB} 

\subsection{The Graphical Representation}

Consider the polynomial shift symmetry applied to a real scalar field $\phi$,
\begin{align} \label{eq: polyshift}
	& \phi ( t , x^i ) \rightarrow \phi (t, x^i ) + \delta_\pd \phi, 
	& \delta_\pd \phi = a_{i_1 \cdots i_\pd} x^{i_1} \cdots x^{i_\pd} + \cdots + a_i x^i + a.  
\end{align}

\noindent The polynomial ends at $\pd^{\mathrm{th}}$ order in the spatial coordinate $x^i$ with $\pd = 0, 1, 2, \ldots$, respectively corresponding to constant shift, linear shift, quadratic shift, and so on.  The $a$'s are arbitrary real coefficients that parametrize the symmetry transformation, and are symmetric in any pair of indices.  In the algebraic language, for a specific $\pd$, we are searching for a Lagrangian that is invariant under the polynomial shift up to a total derivative.  Let $L$ be a term in the Lagrangian with $\nv$ $\phi$'s and $2 \Delta$ spatial derivatives.  Then,
\begin{equation} \label{eq: alg. consistency}
	\delta_P ( L ) = \partial_i (L_i),
\end{equation}

\noindent where $L_i$ is an expression containing $\nv -1$ $\phi$'s and an index $i$, which is not contracted.  Such $L$'s are called \emph{\pd -invariants}.  We will mainly focus on interaction terms, i.e., $\nv \geq 3$.
	
	We want to express these $\pd$-invariants using a graphical representation.  The ingredients of the graphical representation are:
\begin{enumerate}
	
\item
	
	 $\bullet$-vertices, denoted in a graph by \scalebox{1.2}{\textbullet}.
	 
\item
	 
	 $\times$-vertices, denoted in a graph by $\otimes$.
	 
\item

	$\star$-vertices, denoted in a graph by \scalebox{0.8}{$\bigstar$}.
	 
\item

	Edges, denoted in a graph by a line, that join the above vertices.
	
\end{enumerate}

\noindent In this context, a graph contains up to three types of vertices.  This means that these graphs carry an additional structure regarding vertex type, compared to the conventional definition of a graph in Appendix \ref{graph theory}.  

	We construct graphs using the following rules:
\begin{enumerate}

\item 

	The maximal degree of a $\times$-vertex is $\pd$.  Any graph containing a $\times$-vertex of degree greater than $\pd$ is identified with the null graph.
	
\item 

	There is at most one $\times$-vertex in a graph.

\item

	A $\star$-vertex is always a leaf (i.e., it has degree one).
	
\item 

	Two $\star$-vertices are not allowed to be adjacent to each other.		
		
\end{enumerate}

    We now describe what these graph ingredients represent.  A $\bullet$-vertex represents a $\phi$ and a $\times$-vertex represents $\delta_\pd \phi$.  A pair of derivatives with contracted indices, each one acting on a certain $\phi$ or $\delta_\pd \phi$, is represented by an edge joining the relevant $\bullet$- and $\times$-vertices.  Note that Rule 1, which requires that there be at most $\pd$ edges incident to the $\times$-vertex, is justified since $\pd +1$ derivatives acting on $\delta_\pd \phi$ gives zero.

	A graph with $\star$-vertices will represent terms which are total derivatives.  By Rules 3 and 4, a $\star$-vertex must always have exactly one edge incident to it, and this edge is incident to a $\bullet$-vertex or $\times$-vertex.  This edge represents a derivative acting on the entire term as a whole, and the index of that derivative is contracted with the index of another derivative acting on the $\phi$ or $\delta_\pd \phi$ of the $\bullet$- or $\times$-vertex, respectively, to which the $\star$-vertex is adjacent.  Therefore, any graph with a $\star$-vertex represents a total derivative term.  
	
	Since the Lagrangian terms that these graphs represent have a finite number of $\phi$'s and $\partial$'s, we will consider only finite graphs.  In addition, by the definition of graphs, all vertices and edges are automatically labeled, due to the fact that all elements in a set are distinct from each other.  Therefore, a graph represents an algebraic expression in which each $\phi$ and $\partial$ carries a label.  It will be convenient to keep the labels on $\phi$, but it is unnecessary to label the derivatives.  This motivates the definition given in Appendix \ref{graph theory} for ``labeled" graphs.  In the rest of Appendix \ref{AppB}, unless otherwise stated, a graph is understood to be a labeled graph.
	
	The desired algebraic expressions in which all $\phi$'s are identical can be recovered by identifying all isomorphic graphs (for examples, refer to \S\ref{sec: Galileons}).  In fact, the labeled $\pd$-invariants already capture all of the unlabeled ones (Appendix \ref{sec: unlabeled}).
	
	Note that not all algebraic expressions are captured in the graphical representation described above.  For example, $\partial^2 (\partial_j \phi \, \partial_j \phi)$ cannot be represented by a graph, since two $\star$-vertices are forbidden to be adjacent to each other by Rule 4.  However, this algebraic expression can be written as $2 \partial_i ( \partial_i \partial_j \phi \, \partial_j \phi)$, which is graphically represented in Figure \ref{fig: ex_a}, disregarding the coefficient 2.  Another peculiar example is $\partial_i (\partial_j \phi \, \partial_j \phi) \, \partial_i \partial^2 \phi$, which is equal to $4 (\partial_i \partial_j \phi) (\partial_j \phi) (\partial_i \partial^2 \phi)$.  Although the graphical representation for the former expression is beyond the current framework, the graphical representation for the latter one is given in Figure \ref{fig: ex_b}.  One could generalize the graphical representation to include all possible algebraic expressions.  However, for our purposes, the present framework will suffice.  

\begin{figure}[t!]
    \centering
    \begin{subfigure}[t]{4cm} 
    \hspace{0.5cm}
    \begin{minipage}{3cm}
    \begin{tikzpicture}
        \draw [white] (0,0) -- (-1,0);
        \draw [thick] (0,0) -- (\1,0) -- (0.4,0.69);
        \node at (0.4,0.69) {\scalebox{0.8}{$\bigstar$}};
        \filldraw (0,0) circle [radius=0.08];
        \filldraw (\1,0) circle [radius=0.08];
    \end{tikzpicture}
    \end{minipage}
    \caption{$\partial_i (\partial_j \phi \partial_i \partial_j \phi)$}
    \label{fig: ex_a}
    \end{subfigure}
    \qquad\qquad
    \begin{subfigure}[t]{4cm}
    \hspace{0.5cm}
    \begin{minipage}{3cm}
    \begin{tikzpicture}
        \draw [thick] (0,0) .. controls (0,\1) and (-\1,0) .. (0,0);
        \draw [thick] (0,0) -- (\1,0) -- (0.4,0.69);
        \filldraw (0.4,0.69) circle [radius=0.08];
        \filldraw (0,0) circle [radius=0.08];
        \filldraw (\1,0) circle [radius=0.08];
    \end{tikzpicture}
    \end{minipage}
    \caption{$\partial_j \phi \partial_i \partial_j \phi \partial_i \partial^2 \phi$}
    \label{fig: ex_b}
    \end{subfigure}
\caption{Examples for the graphical representation of algebraic expressions.}
\label{fig: ex}
\end{figure}
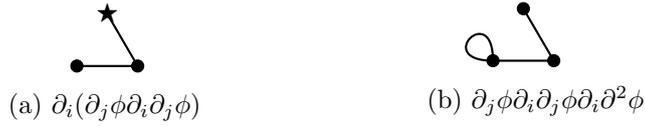

\subsubsection{Types of Graphs and Vector Spaces} 

	We classify graphs by different combinations of vertices:

\begin{Def} \quad
\begin{enumerate}

\item

A \emph{plain-graph} is a graph in which all vertices are $\bullet$'s.  

\item

A \emph{$\star$-ed plain-graph} is a graph with vertex set consisting of only $\bullet$-vertices and at least one $\star$-vertex.  

\item

A \emph{$\times$-graph} is a plain-graph with one $\bullet$-vertex replaced with a $\times$-vertex.  

\item

A \emph{$\star$-graph} is a graph with one $\times$-vertex and at least one $\star$-vertex.

\end{enumerate}
\end{Def}
   
\noindent We define sets of graphs and the real vector spaces that they generate:
\begin{Def} \label{def: vector space} \quad
\begin{enumerate}

\item

$\mathcal{G}_{\nv , \Delta}^{}$ is the set of plain-graphs with $\nv$ $\bullet$-vertices and $\Delta$ edges.  

\item

$\mathcal{G}^{ \times }_{\nv , \Delta}$ is the set of $\times$-graphs with $\nv - 1$ $\bullet$-vertices, one $\times$-vertex and $\Delta$ edges.  

\item

$\mathcal{G}^{ \star }_{\nv , \Delta}$ is the set of $\star$-graphs with $\nv - 1$ $\bullet$-vertices, one $\times$-vertex, at least one $\star$-vertex and $\Delta$ edges.  

\end{enumerate}

\noindent In the above graphs, we choose the labels of the $\bullet$- and $\times$-vertices to go from $v_1$ to $v_\nv$ and the labels of the $\star$-vertices to go from $v_{\raisebox{-0.2ex}{$\scriptstyle{1}$}}^\star$ to $v_{\nsv}^\star$, where $\nsv$ is the number of $\star$-vertices.  Let $\mathcal{L}_{\nv , \Delta}^{}$, $\mathcal{L}^{ \times }_{\nv , \Delta}$ and $\mathcal{L}^{ \star }_{\nv , \Delta}$ be the real vector spaces of formal linear combinations generated by $\mathcal{G}_{\nv , \Delta}^{}$, $\mathcal{G}^{ \times }_{\nv , \Delta}$ and $\mathcal{G}^{ \star }_{\nv , \Delta}$, respectively.  The zero vector in any of these vector spaces is the null graph.
\end{Def}

\noindent Note that $\nv = \nv (\bullet) + \nv (\times)$, where $\nv (\bullet)$ is the number of $\bullet$-vertices and $\nv (\times)$ is the number of $\times$-vertices.  $\nv$ does not include $\nv (\star)$ since $\star$-vertices represent neither $\phi$ nor $\delta_\pd \phi$.  These sets of graphs are finite and therefore the vector spaces of formal linear combinations are finite-dimensional.  

	By Definition \ref{def: vector space}, graphs in a linear combination $L \in \mathcal{L}_{\nv , \Delta}^{}$ ($\mathcal{L}^{ \times }_{\nv , \Delta}$ or $\mathcal{L}^{ \star }_{\nv , \Delta})$ share the same number of $\nv$ and $\Delta$.  In most of the following discussion, $\nv$ and $\Delta$ are fixed.  We will therefore omit theses subscripts as long as no confusion arises.  However, the number of $\star$-vertices $\nsv$ is not fixed in a generic linear combination of $\star$-graphs.
		
\subsubsection{Maps}

	We now define some maps between the sets and vector spaces in Definition \ref{def: vector space}.  This will model the operations that act on the algebraic expressions represented by the graphs.
	
	Firstly, the variation under the polynomial shift $\delta_\pd$ of an algebraic term, expressed by a graph $\Gamma$, is represented graphically by summing over all graphs that have one $\bullet$-vertex in $\Gamma$ replaced with a $\times$-vertex.

\begin{Def} [Variation Map]
	Given a plain-graph $\Gamma \in \mathcal{G}$, with $V(\Gamma) = (  v_1, \ldots, v_{\nv})$, the map $\delta_\pd: \mathcal{G} \rightarrow \mathcal{L}^{ \times }$ is defined by $\delta_\pd ( \Gamma ) = \sum_{i = 1}^{\nv} \Gamma_i^{ \times }$, where $\Gamma_i^{ \times }$ is a graph given by replacing $v_i$ with a $\times$-vertex.  This map extends to $\mathcal{L} \rightarrow \mathcal{L}^{ \times }$ by distributing $\delta_\pd$ over the formal sum.
\end{Def}

\noindent Note that $\Gamma^{ \times }_{i}$ is the null graph if $v_i$ has degree greater than $\pd$.  We will omit the subscript $\pd$ in $\delta_\pd$ as long as no confusion arises.  It is also necessary to define a map that operates in the reverse direction:

\begin{Def} 
	The map $v : \mathcal{G}^{ \times } \rightarrow \mathcal{G}$ is defined by replacing the $\times$-vertex with a $\bullet$-vertex.
\end{Def}

	In the algebraic expressions, a total derivative term looks like $\partial_i L_i$, and the $\partial_i$ can be distributed over $L_i$ as usual, by applying the Leibniz rule.  This feature will be captured by the graphical representation in the following definition.  

\begin{Def}[Derivative Map] \label{def: der map}
For a given $\star$-graph $\Gamma^{ \star } \in \mathcal{G}^{ \star }$, the \emph{derivative map} $\rho : \mathcal{G}^{ \star } \rightarrow \mathcal{L}^{ \times }$ is defined using the following construction: 
\begin{enumerate}

\item \label{Step 1 Def der map}
	
	For the $\star$-graph $\Gamma^{ \star }$, denote the $\bullet$-vertices by $v_1,\ldots,v_{\nv -1}$, the $\times$-vertex by $v_\nv$ and the $\star$-vertices by $v_1^{ \star },\ldots,v_k^{ \star }$, $k = N (\star)$.  Take any $\star$-vertex $v^{ \star }_i$ in $\Gamma^{ \star }$.  For each ${j_1 \in \{ 1, \ldots , \nv \}}$, form a graph $\Gamma_{j_1}$ by deleting $v^{ \star }_1$ in $\Gamma^{ \star }$ and then adding an edge joining $v_{j_1}$ and the vertex that was adjacent to $v^{ \star }_1$ in $\Gamma^{ \star }$.  
	
\item \label{Step 2 Def der map}

    Apply the above procedure to each of the $\Gamma_{j_1}$ to form $\Gamma_{j_1j_2}$ by removing the next $v_2^\star$.  Iterate this procedure until all $\star$-vertices have been removed, forming the $\times$-graph $\Gamma_{j_1\ldots j_k}$.  
    
\item
      Define $\rho (\Gamma^{ \star }) \equiv \sum_{j_1,\ldots,j_k=1}^{\nv}\Gamma^{ \times }_{j_1\ldots j_k}$.

\end{enumerate}

\noindent The domain of this map can be extended to $\mathcal{L}^{ \star }$ by distributing $\rho$ over the formal sum.  The derivative map $\rho$ can be similarly defined on $\star$-ed plain-graphs.  Furthermore, we take $\rho$ to be the identity map when it acts on $\times$-graphs.
\end{Def}

\noindent Note that the above definition is well-defined since $\rho$ is independent of the order in which the $\star$-vertices are deleted.  

\subsubsection{Relations}

	There are many linear combinations of plain- and $\times$-graphs representing terms that can be written as a total derivative.  To take this feature into account, we define two notions of relations for plain- and $\times$-graphs, respectively.

\begin{Def}[Relations] \label{def: relations} 
        If a linear combination of plain-graphs $L\in\mathcal{L}$ can be written as $\rho ( L' )$, where $L'$ is a sum of $\star$-ed plain-graphs, then $L$ is called a \emph{plain-relation}.  If a linear combination of $\times$-graphs $L^{ \times }\in\mathcal{L}^{ \times }$ can be written as $\rho ( L^{ \star } )$, with $L^{ \star }$ a sum of $\star$-graphs, then $L^{ \times }$ is called a $\times$-\emph{relation}.  
\end{Def}

\noindent We shall denote the set of all plain-relations by $\mathcal{R}$ and the set of all $\times$-relations by $\mathcal{R}^{ \times }$.  $\mathcal{R}$ and $\mathcal{R}^{ \times }$ have a natural vector space structure and are subspaces of $\mathcal{L}$ and $\mathcal{L}^{ \times }$, respectively.

\subsubsection{The Consistency Equation and Associations}\label{consandassoc}

	Recall that $P$-invariants are defined algebraically by equation (\ref{eq: alg. consistency}), $\delta_\pd ( L ) = \partial_i( L_i )$.  This equation is written in the graphical representation as
\begin{equation} \label{eq: consistency}
	\delta_\pd ( L ) = \rho ( L^{ \star } ),
\end{equation}
\noindent for $L \in \mathcal{L}$ and $L^{ \star }\in\mathcal{L}^{ \star }$.  We call (\ref{eq: consistency}) the \emph{consistency equation}.  Searching for $\pd$-invariants is equivalent to constructing all consistency equations.  Note that, by Definition \ref{def: relations}, the consistency equation implies that $\delta_\pd ( L )$ is a $\times$-relation and so we make the following definition:
	
\begin{Def}[$\pd$-Invariant]
	$L \in \mathcal{L}$ is a \emph{$\pd$-invariant} if $\delta_\pd ( L ) \in \mathcal{R}^{ \times }$.
\end{Def}

\noindent Furthermore, there is a simple class of $\pd$-invariants, which we call \emph{exact $\pd$-invariants}.  These represent terms which are \emph{exactly} invariant under the polynomial shift symmetry \eqref{eq: polyshift}, not just up to a total derivative.

\begin{Def} [Exact $\pd$-Invariant]
	$L \in \mathcal{L}$ is an \emph{exact $\pd$-invariant} if $\delta_\pd ( L ) = 0$.
\end{Def}

   The following notion, called ``association between graphs", will turn out to be indispensable in constructing consistency equations.

\begin{Def}[Associations] The \emph{associations} between pairs of plain- and $\times$-graphs, plain- and plain-graphs, $\star$- and $\times$-graphs, $\star$- and $\star$-graphs and plain- and $\star$-graphs are defined as follows:
\begin{enumerate}

\item	

$\Gamma \in \mathcal{G}$ and $\Gamma^{ \times } \in \mathcal{G}^{ \times }$ are associated with each other if $\Gamma^{ \times }$ is contained in $\delta_\pd ( \Gamma )$, or, equivalently, $v (\Gamma^{ \times }) = \Gamma$. 

\item 

Any two graphs $\Gamma_1, \Gamma_2 \in \mathcal{G}$ are associated with each other if either they are associated with the same $\Gamma^{ \times } \in \mathcal{G}^{ \times }$, or $\Gamma_1$ is identical to $\Gamma_2$. 

\item

$\Gamma^{ \star }\in\mathcal{G}^{ \star }$ and $\Gamma^{ \times }\in\mathcal{G}^{ \times }$ are associated with each other if $\Gamma^{ \times }$ is contained in $\rho (\Gamma^{ \star })$. 

\item

Any two graphs $\Gamma_1^{ \star }, \Gamma_2^{ \star } \in \mathcal{G}^{ \star }$ are associated with each other if either they are associated with the same $\Gamma^{ \times }\in\mathcal{G}^{ \times }$, or $\Gamma_1^{ \star }$ and $\Gamma_2^{ \star }$ are identical to each other. 

\item

Any two graphs $\Gamma\in\mathcal{G}$ and $\Gamma^{ \star } \in \mathcal{G}^{ \star }$ are associated with each other if they are associated with the same $\Gamma^{ \times }\in\mathcal{G}^{ \times }$.

\end{enumerate}
\end{Def}

\noindent It turns out that the associations between only plain-graphs and $\times$-graphs have a simple structure.  Note that for any $\times$-graph $\Gamma^{ \times }\in \mathcal{G}^{ \times }$, $v ( \Gamma^{ \times } )$ uniquely defines the associated plain-graph.  Hence, 

\begin{Prop} \label{lem: 1}
	A $\times$-graph is associated with exactly one plain-graph.  
\end{Prop}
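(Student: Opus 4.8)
The plan is to prove Proposition~\ref{lem: 1} directly from the definitions, using the observation that the map $v : \mathcal{G}^{\times} \to \mathcal{G}$ already packages exactly the information we need. First I would recall that a $\times$-graph $\Gamma^{\times} \in \mathcal{G}^{\times}$ is, by Definition~\ref{def: vector space}, a labeled graph on $\nv$ vertices with exactly one $\times$-vertex, the rest being $\bullet$-vertices, and $\Delta$ edges. Applying $v$ to $\Gamma^{\times}$ recolors the unique $\times$-vertex to a $\bullet$-vertex and leaves the edge set and incidence function untouched, producing a plain-graph $v(\Gamma^{\times}) \in \mathcal{G}$. By the association rule (1), a plain-graph $\Gamma$ and a $\times$-graph $\Gamma^{\times}$ are associated if and only if $v(\Gamma^{\times}) = \Gamma$. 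So the set of plain-graphs associated with a fixed $\Gamma^{\times}$ is exactly $\{\,v(\Gamma^{\times})\,\}$, a singleton.

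The only thing to check carefully is that $v(\Gamma^{\times})$ is genuinely a well-defined element of $\mathcal{G}$ and that the recoloring is unambiguous --- i.e.\ that there is indeed \emph{exactly} one plain-graph, not zero and not more than one. Existence is immediate: recoloring the unique $\times$-vertex yields a bona fide plain-graph with $\nv$ $\bullet$-vertices and $\Delta$ edges, so $v(\Gamma^{\times}) \in \mathcal{G}_{\nv,\Delta}$. Uniqueness follows because the association condition $v(\Gamma^{\times}) = \Gamma$ pins $\Gamma$ down completely: two plain-graphs associated with the same $\Gamma^{\times}$ must both equal $v(\Gamma^{\times})$, hence are identical. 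A subtle point worth a sentence is that Rule~2 (at most one $\times$-vertex per graph) is what makes the $\times$-vertex of $\Gamma^{\times}$ unique, so that ``recolor the $\times$-vertex'' is a deterministic operation; this is where the structural hypothesis on $\mathcal{G}^{\times}$ enters.

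I do not anticipate a genuine obstacle here --- this is essentially an unwinding of definitions --- but if I had to name the one place requiring care, it is making sure that the notion of ``associated'' being used is rule~(1) of the Definition of associations (plain- with $\times$-graphs), and not one of the other four clauses, which have more elaborate structure; the proposition is a statement purely about plain/$\times$ associations. So the proof is a two-line argument: given $\Gamma^{\times} \in \mathcal{G}^{\times}$, the plain-graph $v(\Gamma^{\times})$ is associated with it by definition, and any plain-graph $\Gamma$ associated with $\Gamma^{\times}$ satisfies $\Gamma = v(\Gamma^{\times})$, so it is the unique one.
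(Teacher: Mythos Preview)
Your proposal is correct and matches the paper's approach exactly: the paper simply notes that ``for any $\times$-graph $\Gamma^{\times}\in\mathcal{G}^{\times}$, $v(\Gamma^{\times})$ uniquely defines the associated plain-graph'' and states the proposition as an immediate consequence (``Hence''), without a separate formal proof. Your more detailed unwinding of the definitions is faithful to this reasoning.
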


\noindent The corollaries below directly follow:

\begin{Cor} \label{cor: d}
	For $L \in \mathcal{L}$ and $\Gamma^{ \times }$ a $\times$-graph in $\delta ( L )$, $L$ contains the plain-graph $v (\Gamma^{ \times })$.
\end{Cor}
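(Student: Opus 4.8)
\textbf{Proof proposal for Corollary \ref{cor: d}.}

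The plan is to deduce this as a direct consequence of Proposition \ref{lem: 1} together with the definition of the variation map $\delta_\pd$. First I would unwind the definitions: writing $L = \sum_\alpha c_\alpha\, \Gamma^{(\alpha)}$ as a formal linear combination of plain-graphs $\Gamma^{(\alpha)} \in \mathcal{G}$, the variation map acts by $\delta(L) = \sum_\alpha c_\alpha\, \delta(\Gamma^{(\alpha)})$, and each $\delta(\Gamma^{(\alpha)}) = \sum_{i=1}^{\nv} \Gamma^{(\alpha)\times}_i$ is a sum of $\times$-graphs obtained by replacing one $\bullet$-vertex with a $\times$-vertex (with the convention that such a term is the null graph when the replaced vertex had degree exceeding $\pd$). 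Hence every $\times$-graph $\Gamma^\times$ that appears with nonzero coefficient in $\delta(L)$ must appear in $\delta(\Gamma^{(\alpha)})$ for at least one $\alpha$ with $c_\alpha \neq 0$ — one must be slightly careful here, since cancellations among the $\delta(\Gamma^{(\alpha)})$ could in principle remove a $\times$-graph, but they cannot \emph{create} one, so any $\Gamma^\times$ actually present in $\delta(L)$ traces back to some contributing plain-graph.

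Next I would invoke the fact that $\Gamma^\times$ being contained in $\delta(\Gamma^{(\alpha)})$ means, by the first clause of the definition of associations, precisely that $\Gamma^{(\alpha)}$ and $\Gamma^\times$ are associated, equivalently $v(\Gamma^\times) = \Gamma^{(\alpha)}$. By Proposition \ref{lem: 1}, the plain-graph associated to $\Gamma^\times$ is unique, so $\Gamma^{(\alpha)} = v(\Gamma^\times)$ for \emph{every} contributing $\alpha$ — in particular there is a unique such plain-graph, and it equals $v(\Gamma^\times)$. Since the coefficient of $\Gamma^\times$ in $\delta(L)$ is (up to the sign/multiplicity bookkeeping of which vertices of $v(\Gamma^\times)$ yield $\Gamma^\times$ under replacement) proportional to the coefficient of $v(\Gamma^\times)$ in $L$, the assumption that $\Gamma^\times$ genuinely appears in $\delta(L)$ forces $v(\Gamma^\times)$ to appear in $L$ with nonzero coefficient. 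That is the claim.

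The argument is essentially a one-line corollary, so there is no serious obstacle; the only point requiring a modicum of care is the bookkeeping of coefficients and the possibility of cancellation, i.e.\ making sure that "$\Gamma^\times$ appears in $\delta(L)$" is being read as "with nonzero coefficient after collecting terms", and then checking that because each plain-graph $v(\Gamma^\times)$ is the \emph{only} source of the $\times$-graph $\Gamma^\times$, the coefficient of $\Gamma^\times$ in $\delta(L)$ is exactly (multiplicity)$\times$(coefficient of $v(\Gamma^\times)$ in $L$), with a single positive multiplicity counting how many $\bullet$-vertices of $v(\Gamma^\times)$, when turned into a $\times$-vertex, reproduce $\Gamma^\times$. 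Once that is noted, nonvanishing of the left side immediately gives nonvanishing of the coefficient of $v(\Gamma^\times)$ in $L$, completing the proof.
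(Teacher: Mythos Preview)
Your proposal is correct and follows exactly the approach the paper intends: the paper states this corollary immediately after Proposition~\ref{lem: 1} with the remark ``the corollaries below directly follow,'' giving no explicit argument at all. Your write-up simply spells out the one-line reasoning (uniqueness of the associated plain-graph forces $v(\Gamma^\times)$ to be the sole source of $\Gamma^\times$ in $\delta(L)$); the only minor over-elaboration is the multiplicity discussion, since for labeled graphs the multiplicity is always exactly~$1$ (the $\times$-vertex sits at a fixed labeled position), but this does not affect correctness.
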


\begin{Cor} \label{prop: 1}
	Any two associated plain-graphs are identical to each other.
\end{Cor}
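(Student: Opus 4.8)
The statement to prove is Corollary \ref{prop: 1}: any two associated plain-graphs are identical to each other. The plan is to unwind the definition of association between plain-graphs and reduce it to Proposition \ref{lem: 1}, which has just been established.

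\textbf{Approach.} By the definition of associations (item 2), two plain-graphs $\Gamma_1,\Gamma_2\in\mathcal{G}$ are associated precisely when one of two things holds: either they are both associated with a common $\times$-graph $\Gamma^{\times}\in\mathcal{G}^{\times}$, or $\Gamma_1$ is identical to $\Gamma_2$. In the second case there is nothing to prove. So the only work is in the first case, and the plan is to show that if $\Gamma_1$ and $\Gamma_2$ are each associated with the same $\Gamma^{\times}$, then $\Gamma_1=\Gamma_2$.

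\textbf{Key step.} First I would recall, from item 1 of the definition of associations, that a plain-graph $\Gamma$ and a $\times$-graph $\Gamma^{\times}$ are associated exactly when $v(\Gamma^{\times})=\Gamma$, where $v:\mathcal{G}^{\times}\to\mathcal{G}$ is the map replacing the $\times$-vertex by a $\bullet$-vertex. Hence, if both $\Gamma_1$ and $\Gamma_2$ are associated with $\Gamma^{\times}$, then $\Gamma_1=v(\Gamma^{\times})$ and $\Gamma_2=v(\Gamma^{\times})$. Since $v$ is a well-defined function, its value on $\Gamma^{\times}$ is unique, so $\Gamma_1=\Gamma_2$. This is really just the content of Proposition \ref{lem: 1} (a $\times$-graph is associated with exactly one plain-graph) read contrapositively: the unique plain-graph associated with $\Gamma^{\times}$ is $v(\Gamma^{\times})$, so any two plain-graphs associated with the same $\times$-graph must coincide with it, and therefore with each other. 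I would then note that the identity here is identity of labeled graphs in the sense of Appendix \ref{graph theory} (same vertex set, same edge set, same incidence function), which is exactly how $v$ is defined to act, so no subtlety about isomorphism versus identity arises.

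\textbf{Main obstacle.} Honestly, there is no substantial obstacle: the corollary is an immediate consequence of Proposition \ref{lem: 1} together with the observation that $v$ is a function and not merely a relation. The only thing to be careful about is bookkeeping — making sure the two clauses in the definition of ``associated plain-graphs'' are both handled, and being explicit that ``associated with the same $\Gamma^{\times}$'' forces equality through $v$ rather than only through isomorphism. The proof will therefore be a short paragraph citing Proposition \ref{lem: 1}.
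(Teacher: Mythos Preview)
Your proposal is correct and follows essentially the same approach as the paper: both reduce the claim to Proposition~\ref{lem: 1} by noting that if two plain-graphs share a common associated $\times$-graph $\Gamma^{\times}$, they must both equal $v(\Gamma^{\times})$. The paper phrases this as a one-line contradiction, while you spell out the two clauses of the definition and the functionality of $v$, but the content is the same.
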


\begin{proof}
	If two distinct plain-graphs are associated with each other, then they are associated with a common $\times$-graph, which violates Proposition \ref{lem: 1}.  Therefore, only identical plain-graphs are associated with each other.  
\end{proof}

\begin{Cor} \label{cor: a}
	For $L \in \mathcal{L}$ and a plain-graph $\Gamma$ in $L$, $\delta (L)$ contains all $\times$-graphs in $\delta (\Gamma)$.
\end{Cor}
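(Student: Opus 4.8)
The statement to prove is Corollary~\ref{cor: a}: for $L \in \mathcal{L}$ and a plain-graph $\Gamma$ appearing in $L$, the variation $\delta(L)$ contains all $\times$-graphs in $\delta(\Gamma)$. The plan is to combine Proposition~\ref{lem: 1} (a $\times$-graph is associated with exactly one plain-graph) with the linearity of the variation map. First I would write $L = \sum_j c_j \Gamma_j$ as a formal linear combination over distinct plain-graphs $\Gamma_j \in \mathcal{G}$, with $\Gamma = \Gamma_{j_0}$ for some index $j_0$ and $c_{j_0} \neq 0$. By definition of $\delta$ on $\mathcal{L}$ (distribution over the formal sum), $\delta(L) = \sum_j c_j\, \delta(\Gamma_j)$.

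The key step is to show there is no cancellation of the $\times$-graphs coming from $\delta(\Gamma)$. Fix a $\times$-graph $\Gamma^{\times}$ appearing in $\delta(\Gamma)$; then $\Gamma^{\times}$ is associated with $\Gamma$. Suppose $\Gamma^{\times}$ also appeared in $\delta(\Gamma_j)$ for some $j \neq j_0$. Then $\Gamma^{\times}$ would be associated with $\Gamma_j$ as well, and hence with two plain-graphs $\Gamma$ and $\Gamma_j$; since $\Gamma \neq \Gamma_j$ (the $\Gamma_j$ are distinct), Corollary~\ref{prop: 1} (any two associated plain-graphs are identical) gives a contradiction. Therefore each $\times$-graph $\Gamma^{\times} \in \delta(\Gamma)$ occurs \emph{only} in the summand $c_{j_0}\,\delta(\Gamma)$ of $\delta(L)$, so its coefficient in $\delta(L)$ is exactly $c_{j_0}$ times its (nonzero, positive-integer) multiplicity in $\delta(\Gamma)$. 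Since $c_{j_0} \neq 0$, this coefficient is nonzero, i.e. $\Gamma^{\times}$ genuinely appears in $\delta(L)$.

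One small point deserving care: the multiplicity of a given $\times$-graph within the single term $\delta(\Gamma) = \sum_{i=1}^{\nv} \Gamma_i^{\times}$ could in principle differ from $1$ if several of the vertex-replacements $v_i \mapsto \times$ yield isomorphic (indeed identical, in the labeled convention) $\times$-graphs; but this multiplicity is a sum of positive integers and cannot be zero, so it does not affect the argument. I would also note the degenerate cases: if $v_i$ has degree greater than $\pd$, then $\Gamma_i^{\times}$ is the null graph and simply does not contribute a $\times$-graph to $\delta(\Gamma)$ — such terms are not ``$\times$-graphs in $\delta(\Gamma)$'' and the claim is vacuous for them. The main (and only) obstacle is making the no-cancellation argument airtight, which is precisely what Proposition~\ref{lem: 1} and Corollary~\ref{prop: 1} are designed to supply; given those, the corollary follows in a few lines.
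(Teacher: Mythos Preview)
Your proposal is correct and follows essentially the same approach as the paper: both arguments use Proposition~\ref{lem: 1} (equivalently Corollary~\ref{prop: 1}) to rule out cancellation of any $\Gamma^{\times}\in\delta(\Gamma)$ against contributions from $\delta(\Gamma_j)$ with $\Gamma_j\neq\Gamma$. The paper phrases this as ``$\Gamma^{\times}$ cannot drop out of $\delta(\Gamma+L')$ for $L'$ not containing $\Gamma$,'' applied to $L'=L-\Gamma$; your version with the explicit expansion $L=\sum_j c_j\Gamma_j$ is the same argument, and your extra remark about multiplicities within $\delta(\Gamma)$ is harmless (in the labeled convention the $\Gamma_i^{\times}$ for distinct $i$ are distinct, so the multiplicity is in fact $1$).
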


\begin{proof} 
	Without loss of generality, suppose $\Gamma$ appears in $L$ with unit coefficient (otherwise, simply divide $L$ by the coefficient of $\Gamma$).  Let $\Gamma^{ \times }$ be a $\times$-graph in $\delta ( \Gamma )$.  By Proposition \ref{lem: 1}, $\Gamma$ is the only plain-graph associated with $\Gamma^{ \times }$.  Therefore, $\Gamma^{ \times }$ cannot drop out of $\delta ( \Gamma + L' )$ for any $L' \in \mathcal{L}$ that does not contain $\Gamma$.  Applying this statement to $L' = L- \Gamma$ proves that $\Gamma^{ \times }$ must appear in $\delta( L )$.  
\end{proof}

\noindent This now allows us to find all \emph{exact} $\pd$-invariants in a simple manner:

\begin{Cor} \label{lem: exact}
	$L  \in \mathcal{L}$  is an exact $\pd$-invariant if and only if all vertices in all graphs contained in $L$ have degree at least $\pd + 1$.    
\end{Cor}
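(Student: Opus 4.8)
The plan is to prove both directions using the structural results already established about associations between plain-graphs and $\times$-graphs, in particular Corollary~\ref{cor: a}. Recall that an exact $\pd$-invariant is one satisfying $\delta_\pd(L) = 0$ exactly (not merely modulo a $\times$-relation), and that $\delta_\pd$ of a plain-graph $\Gamma$ is the formal sum over all $\times$-graphs obtained by promoting a single $\bullet$-vertex of $\Gamma$ to a $\times$-vertex, with the crucial convention that such a $\times$-graph is the null graph precisely when the promoted vertex had degree greater than $\pd$. So a single plain-graph $\Gamma$ is an exact $\pd$-invariant if and only if \emph{every} $\bullet$-vertex of $\Gamma$ has degree at least $\pd + 1$, which kills every term in $\delta_\pd(\Gamma)$. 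This handles the single-graph case immediately and is essentially a restatement of Rule~1 for $\times$-vertices.

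For the general case, write $L = \sum_\alpha c_\alpha \Gamma_\alpha$ as a linear combination of distinct plain-graphs with nonzero coefficients $c_\alpha$. First I would prove the ``if'' direction: if every vertex in every $\Gamma_\alpha$ has degree at least $\pd+1$, then by the single-graph observation each $\delta_\pd(\Gamma_\alpha) = 0$, hence $\delta_\pd(L) = \sum_\alpha c_\alpha \delta_\pd(\Gamma_\alpha) = 0$ by linearity of $\delta_\pd$. For the ``only if'' direction, suppose for contradiction that some $\Gamma_\beta$ (with $c_\beta \neq 0$) contains a vertex $v_i$ of degree at most $\pd$. Then the $\times$-graph $\Gamma^{\times}$ obtained by promoting $v_i$ in $\Gamma_\beta$ is a genuine (non-null) $\times$-graph appearing in $\delta_\pd(\Gamma_\beta)$. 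By Proposition~\ref{lem: 1}, $\Gamma^{\times}$ is associated to exactly one plain-graph, namely $\Gamma_\beta = v(\Gamma^{\times})$; equivalently, no other $\Gamma_\alpha$ with $\alpha \neq \beta$ can have $\Gamma^{\times}$ in its variation. Therefore in the expansion $\delta_\pd(L) = \sum_\alpha c_\alpha \delta_\pd(\Gamma_\alpha)$ the coefficient of $\Gamma^{\times}$ is exactly $c_\beta \neq 0$ (possibly times a positive multiplicity count for how many distinct $\bullet$-vertices of $\Gamma_\beta$ get promoted to the \emph{same} labeled $\times$-graph — but with labeled graphs this multiplicity is $1$, and even unlabeled it is a positive integer). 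Since no cancellation is possible, $\delta_\pd(L) \neq 0$, contradicting exactness.

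The cleanest way to phrase the non-cancellation step is to invoke Corollary~\ref{cor: a} directly: it states that for $L \in \mathcal{L}$ and a plain-graph $\Gamma$ appearing in $L$, $\delta_\pd(L)$ contains all $\times$-graphs in $\delta_\pd(\Gamma)$. Applying this with $\Gamma = \Gamma_\beta$ and noting that $\delta_\pd(\Gamma_\beta)$ contains the non-null $\Gamma^{\times}$ (precisely because $v_i$ has degree $\le \pd$), we conclude $\Gamma^{\times}$ appears in $\delta_\pd(L)$ with nonzero coefficient, so $\delta_\pd(L) \neq 0$. This is the core argument; everything else is bookkeeping about when a promoted vertex yields the null graph.

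The main obstacle — really the only subtle point — is making sure the degree condition is correctly calibrated against the null-graph convention: a vertex of degree \emph{exactly} $\pd$ still produces a \emph{non-null} $\times$-graph (since the $\times$-vertex inherits $\pd$ incident edges, which is allowed), so the threshold for exactness must be ``degree at least $\pd+1$,'' not ``degree at least $\pd$.'' I would state this carefully and double-check it against Rule~1, which identifies a graph with the null graph only when a $\times$-vertex has degree \emph{greater than} $\pd$. Everything else follows from linearity of $\delta_\pd$ and the rigidity of the plain-graph/$\times$-graph association (Proposition~\ref{lem: 1} and its corollaries), which the excerpt has already established.
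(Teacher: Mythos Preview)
Your proposal is correct and takes essentially the same approach as the paper: the ``only if'' direction is proved by contradiction via Corollary~\ref{cor: a}, exactly as the paper does, and your ``if'' direction (which the paper leaves implicit) is the obvious linearity argument. Your extra care about the threshold calibration and the explicit treatment of both directions make the write-up more thorough than the paper's terse version, but the mathematical content is the same.
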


\begin{proof} 
	If there is a vertex $v$ in some plain-graph $\Gamma$ in $L$ of degree lower than $\pd + 1$, then $\delta_\pd ( \Gamma )$ contains the $\times$-graph where $v$ is replaced with a $\times$-vertex.  But by Corollary \ref{cor: a}, this means that $\delta_\pd ( L ) $ also contains this $\times$-graph, which contradicts $\delta_\pd ( L ) = 0$.  
\end{proof}
\noindent All of the above definitions and conclusions make sense when extended to $\pd < 0$.  Even though $\pd < 0$ no longer corresponds to any polynomial shift symmetry, it will occasionally be useful to consider graphs with $\pd < 0$.  Since any vertex has a non-negative degree, Corollary \ref{lem: exact} implies:
\begin{Cor} \label{lem: Pnegative}
	If $\pd < 0$, any $L  \in \mathcal{L}$  is an exact $\pd$-invariant.    
\end{Cor}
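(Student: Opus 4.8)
The statement to be proven is Corollary \ref{lem: Pnegative}: if $\pd < 0$, then any $L \in \mathcal{L}$ is an exact $\pd$-invariant. The plan is to derive this as an immediate consequence of Corollary \ref{lem: exact}, which has just been established, together with the elementary observation that vertex degrees are never negative.

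First I would recall the characterization provided by Corollary \ref{lem: exact}: a linear combination $L \in \mathcal{L}$ is an exact $\pd$-invariant if and only if every vertex in every plain-graph appearing in $L$ has degree at least $\pd + 1$. Second I would observe that the degree of any vertex is a non-negative integer, by the definition of vertex degree in Appendix \ref{graph theory} (the number of incident edges, with loops counted twice). Third, since $\pd < 0$ implies $\pd + 1 \leq 0$, the inequality $\deg(v) \geq 0 > \pd \geq \pd + 1 - 1$, and more directly $\deg(v) \geq 0 \geq \pd + 1$, holds automatically for every vertex $v$ of every plain-graph in $L$. Hence the hypothesis of Corollary \ref{lem: exact} is satisfied vacuously—there is no vertex of degree lower than $\pd + 1$—so $L$ is an exact $\pd$-invariant.

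There is essentially no obstacle here; the only point worth being slightly careful about is that Corollary \ref{lem: exact} and the surrounding framework (the variation map $\delta_\pd$, the notion of association, Corollary \ref{cor: a}) were stated for $\pd \geq 0$ but, as the text preceding the statement explicitly notes, all of those definitions and conclusions extend to $\pd < 0$ with the same proofs. I would simply invoke that remark so that applying Corollary \ref{lem: exact} in the regime $\pd < 0$ is legitimate. With that in place the proof is a one-line deduction, which matches the brevity appropriate for a corollary of this kind.
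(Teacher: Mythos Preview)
Your proposal is correct and follows exactly the paper's own reasoning: the sentence immediately preceding the corollary states that since any vertex has non-negative degree, Corollary~\ref{lem: exact} implies the result, which is precisely the argument you give.
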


\noindent Associations between $\star$-graphs and $\times$-graphs also have a simple and useful property:

\begin{Lem} \label{lem: star star association}
	Suppose that a $\times$-graph $\Gamma^{ \times } \in \mathcal{G}^{ \times }$ is associated with a $\star$-graph $\Gamma^{ \star } \in \mathcal{G}^{ \star }$ that contains a single $\star$-vertex.  Then $\Gamma^{ \times }$ appears in $\rho ( \Gamma^{ \star } )$ with coefficient 1.
\end{Lem}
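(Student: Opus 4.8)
The plan is to unwind the definitions of the derivative map $\rho$ and of the association relation, and show that when $\Gamma^{\star}$ has exactly one $\star$-vertex, the only way $\Gamma^{\times}$ can appear in $\rho(\Gamma^{\star})$ is via a single choice of reattachment, so its coefficient is exactly $1$. Concretely, let $\Gamma^{\star}\in\mathcal{G}^{\star}$ have a single $\star$-vertex $v_1^{\star}$, incident to one edge $e$ whose other endpoint is some vertex $w$ (a $\bullet$- or the $\times$-vertex). By Definition \ref{def: der map} with $k=1$, $\rho(\Gamma^{\star})=\sum_{j=1}^{\nv}\Gamma_j^{\times}$, where $\Gamma_j^{\times}$ is obtained by deleting $v_1^{\star}$ and adding an edge joining $w$ to $v_j$. (When $v_j=w$ this produces a loop; when the resulting $\times$-vertex has degree $>\pd$ the graph is the null graph — these terms simply drop out.) So the task is to count, for a fixed target $\times$-graph $\Gamma^{\times}$, how many indices $j\in\{1,\dots,\nv\}$ yield $\Gamma_j^{\times}=\Gamma^{\times}$ (as labeled graphs, since in Appendix \ref{AppB} graphs are labeled by default).

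\textbf{Key steps.} First I would fix the labeling: in $\Gamma^{\star}$ the $\bullet$- and $\times$-vertices are $v_1,\dots,v_{\nv}$ and these labels are inherited unchanged by each $\Gamma_j^{\times}$; the only thing that varies with $j$ is which vertex receives the new edge. Second, observe that $\Gamma^{\star}$ minus its $\star$-vertex and its incident edge $e$ is a fixed labeled graph $H$ on $v_1,\dots,v_{\nv}$, and $\Gamma_j^{\times} = H$ together with exactly one extra edge, joining $w$ to $v_j$. Third, and this is the crux, I would argue that the target $\Gamma^{\times}$ (assumed to be associated with $\Gamma^{\star}$, hence of this form for \emph{some} $j$) determines $j$ uniquely: since the underlying labeled vertex sets and the subgraph $H$ are common to all $\Gamma_j^{\times}$, and each $\Gamma_j^{\times}$ differs from $H$ by a single added edge incident to the distinguished vertex $w$, the multiset of edges of $\Gamma^{\times}$ minus the multiset of edges of $H$ is a single edge $\{w,v_j\}$, which pins down $v_j$ (and hence $j$) with no ambiguity. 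Therefore exactly one term in the sum $\sum_{j=1}^{\nv}\Gamma_j^{\times}$ equals $\Gamma^{\times}$, giving coefficient $1$.

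\textbf{Anticipated obstacle.} The one place that needs care is the bookkeeping of \emph{multiple} edges between the same pair of vertices: if $w$ and $v_j$ were already joined by an edge in $H$, then $\Gamma_j^{\times}$ has a double edge there, and one must be sure the ``subtract $H$ from $\Gamma^{\times}$'' argument is carried out with multisets of edges (the graphical framework of Appendix \ref{AppB} does allow multi-edges, as the $(3,3,4)$ examples show). I would phrase the uniqueness as: the function $j\mapsto \Gamma_j^{\times}$ from $\{1,\dots,\nv\}$ to labeled multigraphs is injective onto its image, because $\Gamma_j^{\times}$ recorded as a labeled multigraph determines the edge multiset, and since all $\Gamma_j^{\times}$ share the submultiset $E(H)$, the residual edge $\{w,v_j\}$ and thus $j$ is recovered. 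A secondary subtlety is to confirm that ``$\Gamma^{\times}$ is associated with $\Gamma^{\star}$'' is exactly the hypothesis that $\Gamma^{\times}$ appears in $\rho(\Gamma^{\star})$ with \emph{nonzero} coefficient (Definition of Associations, item 3), so that the statement is about the value of that coefficient being $1$ rather than about whether it is nonzero; this is immediate from the definition and needs only to be cited. No step should require a long computation.
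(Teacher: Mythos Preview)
Your proposal is correct. The paper states this lemma without proof, presumably treating it as immediate from the definitions; your unpacking of Definition~\ref{def: der map} in the single-$\star$ case and the multiset argument showing that $j\mapsto\Gamma_j^{\times}$ is injective (because vertices are labeled while edges are not, so the added edge $\{w,v_j\}$ determines $j$) is exactly the natural verification.
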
 

\noindent  In general, a $\times$-graph can be associated with more than one $\star$-graph.  Figure $\ref{fig: times relation}$ presents a simple example with $(\pd , \nv , \Delta) = (2, 3, 2)$.  Consequently, there can exist multiple consistency equations for the same $\pd$-invariant.  In the next section we will develop techniques to deal with this difficulty.  

\begin{figure}[t!]
\centering
\begin{align*}
\rho \scalebox{1.3}{\Bigg{(}}\hspace{-0.1cm}
\begin{minipage}{\mini cm}
\begin{tikzpicture}
\draw [thick] (\1,\1) -- (0,\1);
\draw [thick] (0,\1) -- (0.4,0.4);
\filldraw (0,0) circle [radius=0.08];
\filldraw (\1,0) circle [radius=0.08];
\filldraw [white] (0,\1) circle [radius=0.115];
\draw [thick] (0,\1) circle [radius=0.115];
\node at (0,\1) {$\times$};
\node at (\1,\1) {\scalebox{0.8}{$\bigstar$}};
\node at (0.4,0.4) {\scalebox{0.8}{$\bigstar$}};
\end{tikzpicture}
\end{minipage}\scalebox{1.3}{\Bigg{)}}
= 2
\begin{minipage}{\mini cm}
\begin{tikzpicture}
\draw [thick] (0,\1) -- (0,0);
\draw [thick] (0,\1) -- (\1,0);
\filldraw (0,0) circle [radius=0.08];
\filldraw (\1,0) circle [radius=0.08];
\filldraw [white] (0,\1) circle [radius=0.115];
\draw [thick] (0,\1) circle [radius=0.115];
\node at (0,\1) {$\times$};
\end{tikzpicture}
\end{minipage}
+
\begin{minipage}{\mini cm}
\begin{tikzpicture}
\filldraw (0,0) circle [radius=0.08];
\filldraw (\1,0) circle [radius=0.08];
\draw [thick] (0,0) to [out=110,in=-110] (0,\1);
\draw [thick] (0,0) to [out=70,in=-70] (0,\1);
\filldraw [white] (0,\1) circle [radius=0.115];
\draw [thick] (0,\1) circle [radius=0.115];
\node at (0,\1) {$\times$};
\end{tikzpicture}
\end{minipage}
+
\begin{minipage}{\mini cm}
\begin{tikzpicture}
\filldraw (0,0) circle [radius=0.08];
\filldraw (\1,0) circle [radius=0.08];
\draw [thick] (0,\1) to [out=-25,in=115] (\1,0);
\draw [thick] (0,\1) to [out=-65,in=155] (\1,0);
\filldraw [white] (0,\1) circle [radius=0.115];
\draw [thick] (0,\1) circle [radius=0.115];
\node at (0,\1) {$\times$};
\end{tikzpicture}
\end{minipage}
= \rho \scalebox{1.3}{\Bigg{(}} \hspace{-1mm}
\begin{minipage}{\mini cm}
\begin{tikzpicture}
\filldraw (0,0) circle [radius=0.08];
\filldraw (\1,0) circle [radius=0.08];
\draw [thick] (0, 0) -- (0,\1);
\draw [thick] (\1,\1) -- (0,\1);
\filldraw [white] (0,\1) circle [radius=0.115];
\draw [thick] (0,\1) circle [radius=0.115];
\node at (0,\1) {$\times$};
\node at (\1,\1) {\scalebox{0.8}{$\bigstar$}};
\end{tikzpicture}
\end{minipage}+
\hspace{0.03cm} \hspace{0.03cm}
\begin{minipage}{\mini cm}
\begin{tikzpicture}
\filldraw (0,0) circle [radius=0.08];
\filldraw (\1,0) circle [radius=0.08];
\draw [thick] (0,\1) -- (\1,\1);
\draw [thick] (0,\1) -- (\1,0);
\filldraw [white] (0,\1) circle [radius=0.115];
\draw [thick] (0,\1) circle [radius=0.115];
\node at (0,\1) {$\times$};
\node at (\1,\1) {\scalebox{0.8}{$\bigstar$}};
\end{tikzpicture}
\end{minipage}\scalebox{1.3}{\Bigg{)}}
\end{align*}
\caption{Two different linear combinations of $\star$-graphs result in an identical $\times$-relation for $\pd = 2$.  In particular, the $\times$-graph with a coefficient 2 is associated with all three $\star$-graphs in the figure.}
\label{fig: times relation}
\end{figure}

\subsection{Building Blocks for Consistency Equations} \label{sec: graphical basis} 

In this section, we introduce the building blocks with which we will construct the consistency equation \eqref{eq: consistency}, $\delta_\pd ( L ) = \rho ( L^{ \star } )$.  Any polynomial shift-invariant can be generated using these building blocks.  We show that we can constrain $L$ to contain only loopless plain-graphs, and all other invariants are equal to these ones up to total derivatives.  Consequently, $\delta_\pd (L)$, and thus $\rho(L^*)$, contains only loopless $\times$-graphs.  Therefore, $\rho (L^\star) = \rho^{(0)} (L^\star)$, where $\rho^{(0)}$ acts in the same way as $\rho$ but omits any looped graphs (Definition \ref{def: loopless r}).  In fact, we can restrict $L^{\star}$ to be a linear combination $L_M$ of a particular type of $\star$-graph, such that $\rho^{(0)} (L^\star) = \rho^{(0)} (L_M)$.  These $\star$-graphs will be called Medusas (Definition \ref{def: Medusa}).  In Appendix \ref{sec: lower bound} we determine a lower bound on the degree of a vertex in any graph that appears in the consistency equation.

\subsubsection{The Loopless Realization of \texorpdfstring{$\mathcal{L}/\mathcal{R}$}{L/R}} \label{sec: loopless basis}
	
	There are usually many alternative expressions for a single $\pd$-invariant algebraic term, which are equal to each other up to total derivatives.  In the graphical language, the graphs representing these equivalent expressions are related by plain-relations.  Therefore, we are interested in the space of linear combinations of plain-graphs modding out plain-relations, i.e., the quotient space $\mathcal{L}/\mathcal{R}$.  We need to find subset of graphs, $\mathcal{B} \subset \mathcal{G}$, whose span is isomorphic to $\mathcal{L}/\mathcal{R}$.  In other words, every element of $\mathcal{L}$ can be written as a linear combination of graphs in $\mathcal{B}$ and plain-relations.  Furthermore, this means that there are no plain-relations between elements in the set $\mathcal{B}$.  The following proposition shows that the set of loopless plain-graphs realizes the set $\mathcal{B}$.
\begin{Prop}[Loopless Basis] \label{prop: loopless_basis}
    The span of loopless plain-graphs is isomorphic to $\mathcal{L}/\mathcal{R}$.
\end{Prop}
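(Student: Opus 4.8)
The plan is to establish an isomorphism between the span of loopless plain-graphs and the quotient $\mathcal{L}/\mathcal{R}$ by showing two complementary facts: (i) every plain-graph is equal, modulo a plain-relation, to a linear combination of loopless plain-graphs, so that loopless plain-graphs \emph{span} $\mathcal{L}/\mathcal{R}$; and (ii) there are no nontrivial plain-relations among loopless plain-graphs, so that they are \emph{linearly independent} in $\mathcal{L}/\mathcal{R}$. The desired isomorphism then follows: the natural surjection $\mathrm{span}(\text{loopless plain-graphs}) \hookrightarrow \mathcal{L} \twoheadrightarrow \mathcal{L}/\mathcal{R}$ is onto by (i) and injective by (ii).

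For part (i), I would argue by induction on the total number of loops in a plain-graph $\Gamma$. The key algebraic observation is that a loop at a vertex $v$ represents $\partial_i\partial_i$ acting on the single $\phi$ attached to $v$; integrating by parts converts $(\partial^2\phi_v)\,(\text{rest})$ into a total derivative minus $(\partial_i\phi_v)\,\partial_i(\text{rest})$, and the latter is, in graphical terms, a sum of graphs in which the loop at $v$ has been replaced by an edge from $v$ to each of the other vertices. Concretely, if $\Gamma$ has a loop at $v$, then $\Gamma = \rho(\Gamma^\star) - \sum_{w\neq v}\Gamma_w$, where $\Gamma^\star$ is the $\star$-ed plain-graph obtained by replacing the loop at $v$ by an edge to a new $\star$-vertex, and each $\Gamma_w$ is $\Gamma$ with that loop replaced by an edge joining $v$ and $w$. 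Here $\rho(\Gamma^\star)\in\mathcal{R}$ by Definition~\ref{def: relations}. Each $\Gamma_w$ has strictly fewer loops than $\Gamma$ (we removed one loop, added one ordinary edge), so by the induction hypothesis each $\Gamma_w$ lies in $\mathrm{span}(\text{loopless})+\mathcal{R}$; hence so does $\Gamma$. (One must also handle the edge case where removing the loop and reconnecting to $w$ could create a loop at $w$ only if $v=w$, which is excluded, so no new loops appear; and the case $\nv=1$, where there is no $w$ to reconnect to — but then $\Delta$-counting forces the single-vertex graph with loops, which represents $(\partial^2)^k\phi$, a total derivative for $k\geq 1$, so it is $\rho$ of a $\star$-ed graph and vanishes in the quotient.)

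For part (ii), the linear independence of loopless plain-graphs in $\mathcal{L}/\mathcal{R}$, I expect this to be the main obstacle, since it requires ruling out \emph{all} plain-relations, not merely the ones we used to reduce loops. The natural strategy is to exhibit a well-defined linear functional (or a full "normal form" / basis-dual pairing) on $\mathcal{L}$ that vanishes on $\mathcal{R}$ but is nonzero on each loopless plain-graph and detects distinct loopless graphs independently — equivalently, to show that the composite $\mathrm{span}(\text{loopless})\to\mathcal{L}/\mathcal{R}$ has trivial kernel. One clean route: define a section $\mathcal{L}\to\mathrm{span}(\text{loopless})$ by iterating the loop-removal rewriting above until no loops remain, and show this rewriting is confluent (independent of the order in which loops are processed and of choices made), so it descends to a well-defined map $\mathcal{L}/\mathcal{R}\to\mathrm{span}(\text{loopless})$ inverse to the inclusion. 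Confluence would follow from the fact that $\rho$ is order-independent (stated after Definition~\ref{def: der map}) together with a check that the two elementary rewrites commute when applied to graphs with two or more loops. Alternatively, and perhaps more robustly, one can count dimensions: $\mathcal{L}/\mathcal{R}$ is the space of Lagrangian terms with $\nv$ fields and $2\Delta$ derivatives modulo integration by parts, and a direct combinatorial argument (tracking how derivatives distribute among fields after integrating by parts to remove all Laplacians acting on a single field) shows this space is spanned \emph{freely} by loopless contraction patterns. I would pursue the confluence argument first, falling back on the explicit dimension count if order-independence of the rewriting proves delicate; the technical heart in either case is verifying that no further "accidental" plain-relations relate distinct loopless configurations, which amounts to the statement that integration by parts, restricted to the loopless sector, has already been fully used up.
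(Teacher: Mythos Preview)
Your spanning argument (i) matches the paper's. For independence (ii), however, the paper takes a much more direct route than either of your proposals. Rather than building a confluent normal-form map or counting dimensions, it argues by contradiction: suppose a nonzero loopless $L$ equals $\rho(L')$ for some linear combination $L'$ of $\star$-ed plain-graphs. Among the graphs in $L'$, pick those with the \emph{maximal} number of $\star$-vertices. Each such $\Gamma'$ contributes to $\rho(\Gamma')$ exactly one ``fully-looped'' plain-graph $\Gamma_{\mathrm{f.l.}}$, obtained by turning every $\star$-edge into a loop at its non-$\star$ endpoint; moreover $\Gamma_{\mathrm{f.l.}}$ uniquely determines $\Gamma'$ (undo each loop back to a $\star$-edge). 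Hence the fully-looped contributions from distinct maximal $\Gamma'$ cannot cancel one another, and they cannot be cancelled by anything coming from $\star$-ed graphs with fewer $\star$-vertices, since those produce plain-graphs with strictly fewer loops. Thus $\rho(L')$ contains a looped graph, contradicting looplessness of $L$. This bypasses confluence entirely: the key is that the map $\Gamma'\mapsto\Gamma_{\mathrm{f.l.}}$ is injective on the top-$\nsv$ stratum. Your confluence route could in principle be made to work, but note a gap in your sketch: confluence of the rewriting yields a well-defined map on $\mathcal{L}$, not automatically on $\mathcal{L}/\mathcal{R}$; for descent you must separately verify that the subspace generated by your loop-removal relations exhausts all of $\mathcal{R}$ (equivalently, that $\rho$ of an arbitrary $\star$-ed plain-graph lies in that subspace), which is an extra step the paper's maximality argument avoids.
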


\begin{proof}
	Denote the span of all loopless plain-graphs by $\mathcal{L}_\text{loopless}$.  If $\partial_i \partial_i$ acts on a single $\phi$, then one can always integrate by parts to move one of the $\partial_i$'s to act on the remaining $\phi$'s.  In the graphical language, this means that any graph with loops can always be written as a linear combination of loopless graphs up to a plain-relation.  This proves $\mathcal{L}/\mathcal{R} \subset \mathcal{L}_\text{loopless}$.    
    
	Now, we show that there are no plain-relations between the loopless plain-graphs.  Suppose there exists a linear combination of loopless plain-graphs $L$ that is a plain-relation.  That is, there exists a linear combination $L'$ of $\star$-ed plain-graphs such that $L = \rho ( L' )$.  Let $\Gamma'$ be a $\star$-ed plain-graph in $L'$.  Then, $\rho ( \Gamma' )$ is a linear combination of plain-graphs each of which has a number of loops no greater than the number of $\star$-vertices in $\Gamma'$.  There will be exactly one graph, $\Gamma_{\text{f.l.}}$, which is fully-looped (with the number of loops equal to the number of $\star$-vertices in $\Gamma'$), produced when all the original $\star$-vertices (and the edges incident to them) are replaced with loops.  Furthermore, $\Gamma_{\text{f.l.}}$ uniquely determines $\Gamma'$ by replacing each loop in $\Gamma_{\text{f.l.}}$ with an edge incident to an extra $\star$-vertex.  Choose the $\star$-ed plain-graph appearing in $L'$ with the largest number of $\star$-vertices (the maximally $\star$-ed plain-graphs).  This maximum exists since the number of the edges incident to the $\star$-vertices is bounded above by the number of edges $\Delta$.  The fully-looped graphs formed from these maximally $\star$-ed plain-graphs cannot cancel each other (by uniqueness) and cannot be canceled by any other graphs that are not fully-looped (by maximality).  Therefore $\rho ( L' )$ is a linear combination containing looped graphs, which contradicts the initial assumption that $L$ only consists of loopless graphs.  This proves $\mathcal{L}_\text{loopless} \subset \mathcal{L}/\mathcal{R}$.
    
    Therefore, $\mathcal{L}_\text{loopless} \cong \mathcal{L}/\mathcal{R}$.
\end{proof}

\noindent Henceforth, we can restrict our search for $\pd$-invariants to $\mathcal{L}_\text{loopless}$.  Note that if $L\in\mathcal{L}_\text{loopless}$, then all the graphs in $\delta ( L )$ are also loopless, so it is sufficient to consider only loopless $\times$-graphs.  We can also restrict to loopless $\times$-relations, $\mathcal{R}^{ \times }_\text{loopless} \subset \mathcal{R}^{ \times }$, which is the vector space consisting of $\times$-relations that are linear combinations of loopless graphs.  We summarize this discussion in the following corollary:

\begin{Cor}\label{Cor: kernel}
The $\pd$-invariants that are independent up to total derivatives are represented by $L \in \mathcal{L}_\text{loopless}$ with $\delta ( L ) \in \mathcal{R}^{ \times }_\text{loopless}$.  Equivalently, they span the kernel of the map:
\begin{equation*}
q\circ\delta: \mathcal{L}_\text{loopless} \rightarrow \mathcal{L}^{ \times }_\text{loopless}/\mathcal{R}^{ \times }_\text{loopless},
\end{equation*}
\noindent where $q$ is the quotient map $q: \mathcal{L}^{ \times }_\text{loopless} \rightarrow \mathcal{L}^{ \times }_\text{loopless}/\mathcal{R}^{ \times }_\text{loopless}$.  
\end{Cor}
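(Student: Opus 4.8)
\textbf{Proof proposal for Corollary \ref{Cor: kernel}.}

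The plan is to assemble the statement from the pieces already in place. The key input is Proposition \ref{prop: loopless_basis}, which identifies $\mathcal{L}_\text{loopless}$ with the quotient $\mathcal{L}/\mathcal{R}$. First I would recall that a $\pd$-invariant was defined as $L\in\mathcal{L}$ with $\delta_\pd(L)\in\mathcal{R}^{ \times }$, and that two such invariants represent the same interaction term exactly when they differ by a plain-relation, i.e.\ by an element of $\mathcal{R}$. Hence the invariants independent up to total derivatives are, by definition, the classes $[L]\in\mathcal{L}/\mathcal{R}$ satisfying $\delta_\pd(L)\in\mathcal{R}^{ \times }$; by Proposition \ref{prop: loopless_basis} each such class has a unique representative in $\mathcal{L}_\text{loopless}$, so it suffices to classify $L\in\mathcal{L}_\text{loopless}$ with $\delta_\pd(L)\in\mathcal{R}^{ \times }$.

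The next step is to reduce $\mathcal{R}^{ \times }$ to $\mathcal{R}^{ \times }_\text{loopless}$ for these $L$. The observation is that $\delta_\pd$ sends a loopless plain-graph to a sum of loopless $\times$-graphs (replacing a $\bullet$-vertex by a $\times$-vertex does not create a loop), so $\delta_\pd(L)\in\mathcal{L}^{ \times }_\text{loopless}$ whenever $L\in\mathcal{L}_\text{loopless}$. I would then argue that if $\delta_\pd(L)\in\mathcal{R}^{ \times }$ and $\delta_\pd(L)$ is itself loopless, then in fact $\delta_\pd(L)\in\mathcal{R}^{ \times }_\text{loopless}$: writing $\delta_\pd(L)=\rho(L^{ \star })$, one applies the same uniqueness-of-fully-looped-graphs argument used in the proof of Proposition \ref{prop: loopless_basis} to the $\star$-graphs of maximal $\star$-degree in $L^{ \star }$ — any $\star$-vertices in such a maximally-$\star$-ed $\star$-graph would produce looped $\times$-graphs that cannot be cancelled, forcing $L^{ \star }$ to contain only $\times$-graphs (zero $\star$-vertices), i.e.\ $\delta_\pd(L)$ lies in the loopless $\times$-relations. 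This gives the equivalence ``$\delta_\pd(L)\in\mathcal{R}^{ \times }$'' $\Leftrightarrow$ ``$\delta_\pd(L)\in\mathcal{R}^{ \times }_\text{loopless}$'' for $L\in\mathcal{L}_\text{loopless}$, which is the first assertion of the corollary.

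Finally, the ``kernel'' reformulation is immediate: the composite $q\circ\delta_\pd:\mathcal{L}_\text{loopless}\to\mathcal{L}^{ \times }_\text{loopless}/\mathcal{R}^{ \times }_\text{loopless}$ sends $L$ to $0$ precisely when $\delta_\pd(L)\in\mathcal{R}^{ \times }_\text{loopless}$, which by the previous step is the condition defining an independent $\pd$-invariant; so the independent $\pd$-invariants span $\ker(q\circ\delta_\pd)$. I expect the only real content to be the second step — the passage from $\mathcal{R}^{ \times }$ to $\mathcal{R}^{ \times }_\text{loopless}$ — and even there the heavy lifting is done by reusing the maximal-$\star$-degree argument from Proposition \ref{prop: loopless_basis}; the rest is unwinding definitions. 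One subtlety to be careful about is that a priori $L^{ \star }$ need not be loopless either, but since $\star$-ed plain-graphs and $\star$-graphs in this construction are built over loopless plain-/$\times$-graphs by the conventions of Appendix \ref{sec: loopless basis}, the only loops that can appear in $\rho(L^{ \star })$ come from $\star$-vertices collapsing onto their neighbour, which is exactly what the argument controls.
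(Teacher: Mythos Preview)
Your overall structure is the paper's: use Proposition~\ref{prop: loopless_basis} to pass to loopless representatives, observe that $\delta$ preserves looplessness, then note the kernel reformulation is a tautology. The paper treats the corollary as an immediate summary of the paragraph preceding it, with no separate proof.

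Where you go astray is the second step. You have misread the definition of $\mathcal{R}^{\times}_{\text{loopless}}$: the paper defines it simply as the subspace of $\mathcal{R}^{\times}$ consisting of those $\times$-relations that are linear combinations of loopless $\times$-graphs, i.e.\ $\mathcal{R}^{\times}_{\text{loopless}} = \mathcal{R}^{\times}\cap\mathcal{L}^{\times}_{\text{loopless}}$. So once you have observed that $\delta(L)\in\mathcal{L}^{\times}_{\text{loopless}}$ for loopless $L$, the implication $\delta(L)\in\mathcal{R}^{\times}\Rightarrow\delta(L)\in\mathcal{R}^{\times}_{\text{loopless}}$ is immediate by definition. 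There is nothing to prove about $L^{\star}$ at this stage; in particular, $L^{\star}$ need not be loopless, and there is no requirement that it be. Your attempt to transplant the maximal-$\star$-degree argument from Proposition~\ref{prop: loopless_basis} is both unnecessary and, as written, wrong: it would conclude that $L^{\star}$ ``contains only $\times$-graphs (zero $\star$-vertices),'' which contradicts $L^{\star}\in\mathcal{L}^{\star}$; and the argument does not transfer cleanly to $\star$-graphs anyway, because the $\times$-vertex degree cap at $P$ can kill the would-be fully-looped graph. Drop that paragraph and the ``subtlety'' remark entirely; what remains is exactly the paper's argument.
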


\subsubsection{Medusas and Spiders} \label{sec: Medusa spider}

	Corollary \ref{Cor: kernel} motivates us to look for a basis of $\mathcal{R}^{ \times }_\text{loopless}$, the $\times$-relations that are linear combinations of loopless graphs.  To classify all such loopless $\times$-relations, we are led to study the linear combinations of $\star$-graphs that give rise to these relations under the derivative map $\rho$.  
	We will realize a convenient choice for the basis of $\mathcal{R}^{ \times }_\text{loopless}$, which will tremendously simplify our calculations: It turns out that the basis of $\mathcal{R}^{ \times }_\text{loopless}$ is in one-to-one correspondence with a particular subset of loopless $\star$-graphs, which we now define.
\begin{Def}[Medusa] \label{def: Medusa}
	A \emph{Medusa} is a loopless $\star$-graph with all $\star$-vertices adjacent to the $\times$-vertex, such that the degree of the $\times$-vertex $\text{\emph{deg}}(\times)$ and the number of $\star$-vertices $\nsv$ satisfy $\text{\emph{deg}}(\times) = \pd + 1 - \nsv$.  We denote the set of Medusas by $\mathcal{M}_{\nv , \Delta}$.
\end{Def}
\noindent We should point out that applying $\rho$ to a Medusa does not necessarily generate a $\times$-relation in $\mathcal{R}^{ \times }_{\text{loopless}}$; it will sometimes produce $\times$-graphs with loops.  In order to form a loopless $\times$-relation, these looped $\times$-graphs must be canceled by contributions from other $\star$-graphs.  

	In the proof of the one-to-one correspondence between the basis of $\mathcal{R}^\times_\text{loopless}$ and the subset of Medusas, we will frequently refer to the following definitions.

\begin{Def} [Primary $\star$-Graphs] 
	A \emph{primary $\star$-graph} is a $\star$-graph that contains exactly one $\star$-vertex.  
\end{Def}

\begin{Def}[Spider]
    A \emph{spider} is a primary $\star$-graph with the $\star$-vertex adjacent to the $\times$-vertex and $\text{\emph{deg}}(\times) = \pd$.
\end{Def}

\noindent Since $\text{deg}(\times) \geq \nsv$ for a Medusa, we have $\frac{1}{2}( \pd +1) \leq \text{deg}(\times) \leq \pd$.  In particular, if $\pd = 1$ or $2$, then $\text{deg}(\times) = \pd$ and $\nsv =1$ (i.e., a Medusa is a loopless spider for $\pd =1$ or $2$).  Spiders will play an important role in sorting out all independent loopless $\times$-relations in Appendix \ref{sec: Basis of Relations}, and loopless spiders will lead us to the classification of 1-invariants in Appendix \ref{sec: linear}.

	In \ref{sec: pyramid} and \ref{sec: Basis of Relations} we will construct $\mathcal{R}^{ \times }_\text{loopless}$ by spiders and Medusas.  We will need to keep track of graphs containing specific numbers of loops, and the following refinement of the derivative map $\rho$ (Definition \ref{def: der map}) will allow us to formulate the graphical operations algebraically.

\begin{Def}  \label{def: loopless r} 
	The map $\rho^{(\ell)}: \mathcal{L}^{ \star } \rightarrow \mathcal{L}^{ \times }$ is defined such that, for $L^{ \star } \in \mathcal{L}^{ \star }$, $\rho^{(\ell)}(L^{ \star })$ is equal to $\rho (L^{ \star })$, with the coefficient of any graph that does not contain $\ell$ loops set to zero.  
\end{Def}

\noindent Note that $\rho = \sum_{i=0}^\infty \rho^{(i)}$.  If $\Gamma^{ \star } \in \mathcal{G}^{ \star }$ contains a loop, then $\rho^{(0)} (\Gamma^{ \star })$ is identically zero.  In Theorem \ref{thm: loopless times basis} we will show that the map $\rho^{(0)}$ defines the one-to-one correspondence between $\mathcal{M}$ and the preferred basis of $\mathcal{R}^{ \times }_\text{loopless}$.  
	
	It is also useful to introduce the operation of ``undoing" loops.
	
\begin{Def} \label{def: undo loop}
	Given a $\Gamma \in \mathcal{G}^{ \times } \cup \mathcal{G}^{ \star }$ which contains $\ell$ loops, labeled from 1 to $\ell$, the map $\theta_i: \mathcal{G}^{ \times } \cup \mathcal{G}^{ \star } \rightarrow \mathcal{G}^{ \star }$ is defined such that $\theta_i ( \Gamma )$ is a $\star$-graph constructed by deleting the $i^{\mathrm{th}}$ loop from $\Gamma$, adding an extra $\star$-vertex $v^{ \star }$ and adding an edge joining $v^{ \star }$ to the vertex at which the $i^{\mathrm{th}}$ deleted loop ended.  Define $\theta: \mathcal{G}^{ \times } \cup \mathcal{G}^{ \star } \rightarrow \mathcal{G}^{ \star }$ by $\theta(\Gamma) \equiv \theta_1 \circ \cdots \circ \theta_\ell ( \Gamma )$.  This map extends to $\mathcal{L}^{ \times } \cup \mathcal{L}^{ \star } \rightarrow \mathcal{L}^{ \star }$ by distributing $\theta$ over the formal sum.  The map $\theta$ can be similarly defined on $\star$-ed plain-graphs.
\end{Def} 

\noindent We will need to distinguish different types of loops:

\begin{Def}
	A loop at the $\times$-vertex is called a \emph{$\times$-loop}, and a loop at a $\bullet$-vertex is a \emph{$\bullet$-loop}.  A graph that contains $\ell$ loops is called \emph{$\ell$-looped}.
\end{Def}

	We can now prove a key formula:

\begin{Prop} \label{prop: order}
	If an $\ell$-looped $\star$-graph $\Gamma^{ \star }$ contains a loop at vertex $v_A$, then
\begin{equation}\label{ordereqn}
	\rho^{(\ell - 1)} \circ \theta_A (\Gamma^{ \star }) = \rho^{(\ell-1)}  \circ \theta_A \circ \rho^{(\ell)} (\Gamma^{ \star }) + \rho^{(\ell-1)} \left ( \text{\emph{L}}^{(\ell-1)}_{\text{\emph{spider}}} \right ),
\end{equation} 
\noindent where $\theta_A$ undoes a loop at $v_A$, and $\text{\emph{L}}^{(\ell-1)}_{\emph{Spider}}$ is a linear combination of $( \ell - 1 )$-looped spiders.  Every graph in a nonzero $\text{\emph{L}}^{(\ell-1)}_{\emph{Spider}}$ has one fewer $\times$-loop than $\Gamma^{ \star }$.  $\text{\emph{L}}^{(\ell-1)}_{\emph{Spider}}$ is nonzero if and only if the following three conditions are satisfied: 
\begin{enumerate}[$\emph{(}$a$\emph{)}$]

\item \label{Condition 1}
    
    $v_A$ is the $\times$-vertex;

\item \label{Condition 2}

    There is a $\star$-vertex that is not adjacent to the $\times$-vertex;

\item \label{Condition 3}

    $\text{\emph{deg}}(\times) \geq \pd + 1 - \nbsv$, where $\nbsv$ is the number of $\star$-vertices in $\Gamma^{ \star }$ that are adjacent to $\bullet$-vertices.  

\end{enumerate} 
\end{Prop}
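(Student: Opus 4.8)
\textbf{Proof proposal for Proposition \ref{prop: order}.}

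The plan is to track explicitly what happens when one applies $\rho$ to $\Gamma^{ \star }$ and then undoes a loop at $v_A$, versus undoing the loop first. Write $k = \nsv$ for the number of $\star$-vertices of $\Gamma^{ \star }$. Both sides of \eqref{ordereqn} are linear combinations of $(\ell-1)$-looped $\times$-graphs, so it suffices to compare the coefficient of each $\ell$-looped $\star$-graph produced at the intermediate stage. First I would expand $\rho^{(\ell)}(\Gamma^{ \star })$: applying $\rho$ distributes the edge formerly incident to each $\star$-vertex over all $\nv$ vertices of the graph; keeping only the $\ell$-looped outcomes means we keep exactly those distributions in which none of the $k$ $\star$-edges creates a new loop (any new loop would push the loop count above $\ell$, since $\Gamma^{ \star }$ already has $\ell$ of them). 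Equivalently, $\rho^{(\ell)}(\Gamma^{ \star })$ sums over the assignments of each $\star$-edge to a vertex \emph{other} than the one it was already attached to. Then $\theta_A$ reinserts the loop at $v_A$ as a fresh $\star$-vertex, producing $(\ell-1)$-looped spiders/$\star$-graphs; applying $\rho^{(\ell-1)}$ again distributes that reinstated $\star$-edge. On the other side, $\theta_A(\Gamma^{ \star })$ directly converts the $v_A$-loop into a $\star$-vertex and $\rho^{(\ell-1)}$ distributes \emph{all} the $\star$-edges of the resulting graph at once. The identity \eqref{ordereqn} is then the assertion that these two procedures differ only by the correction term $\rho^{(\ell-1)}(\text{L}^{(\ell-1)}_{\text{spider}})$, which collects precisely the ``diagonal'' terms lost when we forbade a $\star$-edge from returning to its original vertex.

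The core combinatorial step is a careful bookkeeping of these diagonal terms. I would argue as follows: in $\rho^{(\ell-1)}\circ\theta_A(\Gamma^{ \star })$ the $\star$-edge born from the $v_A$-loop is free to be redistributed back onto $v_A$, which regenerates the loop; isolating that sub-sum gives a term of the shape $\rho^{(\ell-1)}(L')$ where $L'$ is built by keeping the $v_A$-loop intact and redistributing only the \emph{original} $k$ $\star$-edges. Comparing with $\rho^{(\ell-1)}\circ\theta_A\circ\rho^{(\ell)}(\Gamma^{ \star })$, the discrepancy $L^{(\ell-1)}_{\text{spider}}$ consists of those graphs in which, of the original $k$ $\star$-edges, the ones that \emph{could} have returned to their home vertex but were excluded, are now accounted for with a loop present at $v_A$. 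Since a $\star$-edge can return to its home vertex only when that home vertex is $v_A$ itself and the loop-at-$v_A$ slot is available, and since once we have reinserted the loop at $v_A$ the remaining redistribution leaves exactly one $\star$-vertex adjacent to the $\times$-vertex, each such graph is a spider, and it has one fewer $\times$-loop than $\Gamma^{ \star }$ (the $v_A$-loop that was undone and not yet restored). This is exactly the claimed description.

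For the ``nonzero iff'' clause I would check the three conditions as the necessary and sufficient support conditions for $L^{(\ell-1)}_{\text{spider}}$ to have a surviving graph. Condition \eqref{Condition 1}, $v_A = \times$: if $v_A$ is a $\bullet$-vertex then the loop undone by $\theta_A$ is a $\bullet$-loop, a $\star$-edge can never be redistributed onto a $\bullet$-vertex to form a loop there in a way that differs between the two orderings (the Leibniz expansions simply agree), so the correction vanishes. Condition \eqref{Condition 2}, some $\star$-vertex not adjacent to $\times$: if every $\star$-vertex of $\Gamma^{ \star }$ is already adjacent to $\times$, then $\Gamma^{ \star }$ is (after reinstating the $v_A=\times$ loop) essentially a Medusa-type configuration and the two orderings of $\rho$ and $\theta_A$ commute outright, again giving no correction — this is where I expect to need the most care, because it is really a statement that the ``home vertex'' obstruction only bites when a $\star$-edge's home is a $\bullet$-vertex. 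Condition \eqref{Condition 3}, $\text{deg}(\times) \geq \pd + 1 - \nbsv$: this is the degree-budget condition ensuring that after redistributing the $\nbsv$ $\star$-edges whose home is a $\bullet$-vertex, the $\times$-vertex can still legitimately carry $\pd$ edges (the spider constraint) without exceeding its maximal degree $\pd$; if it fails, every candidate spider is identified with the null graph. I would assemble these by a short case analysis on whether $v_A$ is the $\times$-vertex and on the adjacency pattern of the $\star$-vertices, using Rule~1 (maximal $\times$-degree $\pd$) and Lemma~\ref{lem: star star association} to fix the coefficients. The main obstacle is precisely making the comparison of the two Leibniz expansions rigorous without drowning in indices — I would manage this by phrasing it as a bijection between surviving terms on the two sides, with the unmatched terms on the $\theta_A(\Gamma^{ \star })$ side being exactly the spiders of $L^{(\ell-1)}_{\text{spider}}$.
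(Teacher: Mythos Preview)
Your proposal misidentifies the source of the discrepancy between the two sides of \eqref{ordereqn}. You locate it in ``diagonal terms'' --- $\star$-edges that could return to their home vertex on one side but are excluded on the other. But both sides actually range over the \emph{same} index set. In $\rho^{(\ell-1)}\circ\theta_A(\Gamma^{\star})$, the graph $\theta_A(\Gamma^{\star})$ has $\ell-1$ loops and $k+1$ $\star$-vertices; keeping only $(\ell-1)$-looped outputs forces \emph{every} one of the $k+1$ redistributed edges to avoid its home vertex. On the other side the same constraint holds for the $k$ original $\star$-edges (from $\rho^{(\ell)}$) and for the new $\star$-edge born from $\theta_A$ (from the final $\rho^{(\ell-1)}$). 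So the bijection you describe would, if carried out, show the two sides are termwise identical with no spider correction at all.

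The actual mechanism, which the paper's proof isolates, is Rule~1: a graph with $\deg(\times) > \pd$ is the null graph. On the right-hand side, when you apply $\rho^{(\ell)}$ to $\Gamma^{\star}$, the loop at $v_A$ is \emph{still present} during the redistribution. If $v_A$ is the $\times$-vertex, this loop contributes $2$ to $\deg(\times)$ at that intermediate stage; so a redistribution that sends enough $\star$-edges to $\times$ can push $\deg(\times)$ to $\pd+1$ and annihilate the term --- even though the \emph{final} graph (after $\theta_A$ removes the loop) would have $\deg(\times)=\pd$ and be perfectly legal. On the left-hand side that loop is already gone before any redistribution, so the same term survives. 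These surviving-on-one-side-only terms are precisely $\rho^{(\ell-1)}(\text{L}^{(\ell-1)}_{\text{spider}})$. This immediately explains condition~(\ref{Condition 1}) (the order matters only if the loop is at $\times$), condition~(\ref{Condition 2}) (the only way to raise $\deg(\times)$ during $\rho^{(\ell)}$ is for a $\star$-edge whose home is a $\bullet$-vertex to land on $\times$), and condition~(\ref{Condition 3}) (the arithmetic: $\deg(\times)$ in $\Gamma^{\star}$ must be at least $\pd+1-\nbsv$ for some choice of $b\le\nbsv$ such edges to push the intermediate $\deg(\times)$ to exactly $\pd+1$). Your reading of condition~(\ref{Condition 3}) as a non-nullness constraint on the final spider, rather than as the condition for the intermediate annihilation to occur, is a symptom of having missed this mechanism.
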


\begin{proof}
	Denote the vertices in $\Gamma^{ \star }$ adjacent to $\star$-vertices by $v_i$, $i = 1, \ldots, k$, and the $\times$-vertex by $v^{ \times }$.  Note that $v^\times$ and $v_A$ may coincide with each other and with some of the $v_i$'s.  Form an $(\ell-1)$-looped graph $\Gamma$ from $\Gamma^{ \star }$ by deleting all $\star$-vertices and deleting a loop at $v_A$.  We now show that all graphs in $\rho^{(\ell - 1)} \circ \theta_A (\Gamma^{ \star })$ and $\rho^{(\ell-1)}  \circ \theta_A \circ \rho^{(\ell)} (\Gamma^{ \star })$ in \eqref{ordereqn} contain $\Gamma$ as a subgraph: 
	
 \begin{itemize}

\item 

	$\rho^{(\ell - 1)} \circ \theta_A (\Gamma^{ \star })$: Define $\Gamma_{v_{\beta_1} \ldots v_{\beta_k}}$ to be a graph formed from $\Gamma$ by adding $k$ edges joining $v_i$ and $v_{\beta_i}$, respectively.  Then 
\begin{equation} \label{eq: no tilde}
    \rho^{(\ell - 1)} \circ \theta_A (\Gamma^{ \star }) = \sum_{v_\alpha \neq v_A} \sum_{v_{\beta_1} \neq v_1} \ldots \sum_{v_{\beta_k} \neq v_k} \Gamma^{v_\alpha}_{v_{\beta_1} \ldots v_{\beta_k}},
\end{equation}
\noindent where $\Gamma^{v_\alpha}_{v_{\beta_1} \ldots v_{\beta_k}}$ is the graph $\Gamma_{v_{\beta_1} \ldots v_{\beta_k}}$ with an extra edge joining $v_A$ and $v_\alpha$.  Note that the graphs formed from $\Gamma$ by adding edges joining $v_A$ (or $v_i$) to itself are not included in this sum, because these graphs are not $(\ell-1)$-looped.

\item

	$\rho^{(\ell-1)}  \circ \theta_A \circ \rho^{(\ell)} (\Gamma^{ \star })$: Define $\Gamma_A$ to be the graph $\Gamma$ with a loop added at $v_A$.  Then $\rho^{(\ell)} ( \Gamma^{ \star } ) = \sum_{v_{\beta_i} \neq v_i} \tilde{\Gamma}_{v_{\beta_1} \ldots v_{\beta_k}}$, where $\tilde{\Gamma}_{v_{\beta_1} \ldots v_{\beta_k}}$ is the graph formed from $\Gamma_A$ by adding $k$ edges joining $v_i$ and $v_{\beta_i}$, respectively.  $v_{\beta_i} \neq v_i$ in the sum because only $\ell$-looped graphs are included.  Applying $\rho^{(\ell-1)} \circ \theta_A$ gives:
\begin{equation} \label{eq: tilde}
    \rho^{(\ell - 1)}\circ \theta_A \circ \rho^{(\ell)} ( \Gamma^{ \star } ) = \sum_{v_\alpha \neq v_A} \sum_{v_{\beta_1} \neq v_1} \ldots \sum_{v_{\beta_k} \neq v_k} \tilde{\Gamma}^{v_\alpha}_{v_{\beta_1} \ldots v_{\beta_k}},
\end{equation}
\noindent where $\tilde{\Gamma}^{v_\alpha}_{v_{\beta_1} \ldots v_{\beta_k}}$ is formed from $\tilde{\Gamma}_{v_{\beta_1} \ldots v_{\beta_k}}$ by deleting a loop at $v_A$ and adding an edge joining $v_A$ and $v_\alpha$.  $v_\alpha \neq v_A$ in the sum because only $(\ell-1)$-looped graphs are included.

\end{itemize}

    We now want to compare $\Gamma^{v_\alpha}_{v_{\beta_1} \ldots v_{\beta_k}}$ and $\tilde{\Gamma}^{v_\alpha}_{v_{\beta_1} \ldots v_{\beta_k}}$ for $v_\alpha \neq v_A$ and $v_{\beta_i} \neq v_i$.  At first it might seem that $\Gamma^{v_\alpha}_{v_{\beta_1} \ldots v_{\beta_k}} = \tilde{\Gamma}^{v_\alpha}_{v_{\beta_1} \ldots v_{\beta_k}}$, since both ultimately involve taking $\Gamma$ and adding an edge joining $v_A$ and $v_\alpha$, and edges joining $v_i$ and $v_{\beta_i}$.  However, there is a subtlety involved: Recall that graphs containing a $\times$-vertex of degree larger than $P$ are identified with the null graph.  Therefore, $\Gamma^{v_\alpha}_{v_{\beta_1} \ldots v_{\beta_k}} = \tilde{\Gamma}^{v_\alpha}_{v_{\beta_1} \ldots v_{\beta_k}}$, provided neither side is null or both sides are null; when one side of this equation represents the null graph and the other does not, then this equation will not hold.  This violation happens only if there is a difference in deg$(\times)$ of the graphs formed during the construction of $\Gamma^{v_\alpha}_{v_{\beta_1} \ldots v_{\beta_k}}$ and $\tilde{\Gamma}^{v_\alpha}_{v_{\beta_1} \ldots v_{\beta_k}}$.  Note that we add the same $k$ edges (joining $v_i$ and $v_{\beta_i}$) to both $\Gamma$ and $\Gamma_A$ to form the intermediate graphs, $\Gamma_{v_{\beta_1} \ldots v_{\beta_k}}$ and $\tilde{\Gamma}_{v_{\beta_1} \ldots v_{\beta_k}}$, respectively.  Thus the difference between the latter two graphs is the same as the difference between $\Gamma$ and $\Gamma_A$: There is an extra loop at $v_A$ in $\Gamma_A$ compared to $\Gamma$, which will only be deleted after the edges have been added.  Hence, the violation of the equality, $\Gamma^{v_\alpha}_{v_{\beta_1} \ldots v_{\beta_k}} = \tilde{\Gamma}^{v_\alpha}_{v_{\beta_1} \ldots v_{\beta_k}}$, happens only if $v_A = v^{ \times }$.  This is condition (\ref{Condition 1}).  
    
    From now on, we assume $v_A = v^{ \times }$.  In addition, there must exist at least one $v_{\beta_i} = v^{ \times }$ in order for the violation to occur, since otherwise deg$(\times)$ in any graph we are considering never exceeds the one in $\Gamma^{ \star }$, and thus no graph in \eqref{eq: no tilde} and \eqref{eq: tilde} is null.  Hence for at least one $v_{\beta_i}$, $v^{ \times } = v_{\beta_i} \neq v_i$, which implies condition (\ref{Condition 2}).  From now on we assume $v^{ \times } = v_{\beta_i} \neq v_i$, for at least one $v_{\beta_i}$, and denote the number of $v_{\beta_i}$ equal to $v^{ \times }$ by $b$, where $1\leq b\leq \nbsv$.
	
	We know that deg$(\times)$ in $\tilde{\Gamma}_{v_{\beta_1} \ldots v_{\beta_k}}$ is 2 higher than deg$(\times)$ in $\Gamma_{v_{\beta_1} \ldots v_{\beta_k}}$, due to the one extra loop at $v_A$ in $\Gamma_A$.   
	
	Therefore, if $\Gamma^{v_\alpha}_{v_{\beta_1} \ldots v_{\beta_k}}$ vanishes, then $\tilde{\Gamma}^{v_\alpha}_{v_{\beta_1} \ldots v_{\beta_k}}$ should also vanish, since $\tilde{\Gamma}_{v_{\beta_1} \ldots v_{\beta_k}}$ already has a higher deg$(\times)$.  Furthermore, deg$(\times)$ in $\tilde{\Gamma}_{v_{\beta_1} \ldots v_{\beta_k}}$ is 1 higher than deg$(\times)$ in $\Gamma^{v_{\alpha}}_{v_{\beta_1} \ldots v_{\beta_k}}$, and thus $\Gamma^{v_\alpha}_{v_{\beta_1} \ldots v_{\beta_k}} \neq \tilde{\Gamma}^{v_\alpha}_{v_{\beta_1} \ldots v_{\beta_k}}$ if and only if deg$(\times) = \pd$ in $\Gamma^{v_\alpha}_{v_{\beta_a} \ldots v_{\beta_k}}$.  In this case, $\Gamma^{v_\alpha}_{v_{\beta_1} \ldots v_{\beta_k}}$ does not vanish, but $\tilde{\Gamma}_{v_{\beta_1} \ldots v_{\beta_k}}$ contains a $\times$-vertex of degree $\pd + 1$ and is identified with the null graph, which means $\tilde{\Gamma}^{v_\alpha}_{v_{\beta_1} \ldots v_{\beta_k}}$ is also null.  So the equality is violated if and only if deg$(\times)= \pd +1$ in $\tilde{\Gamma}_{v_{\beta_1} \ldots v_{\beta_k}}$.  We want to write this condition in terms of deg$(\times)$ in $\Gamma^{ \star }$.  Note that deg$(\times)= \pd +1$ in $\tilde{\Gamma}_{v_{\beta_1} \ldots v_{\beta_k}}$ if and only if deg$(\times)= \pd +1-b-(k- \nbsv )$ in $\Gamma_A$, and deg$(\times)= \pd -1-b-(k- \nbsv )$ in $\Gamma$.  Finally this implies deg$(\times)= \pd +1-b$ in $\Gamma^{ \star }$.  Since $1\leq b\leq \nbsv$, we have that $\pd \geq \text{deg} (\times) \geq \pd +1 - \nbsv$, which is condition (\ref{Condition 3}).  Moreover,\begin{equation*}
	\sum_{v_\alpha \neq v_A} \Gamma^{v_\alpha}_{v_{\beta_1} \ldots v_{\beta_k}} = \rho \left ( \Gamma_\text{spider} \right ),
\end{equation*}
\noindent where $\Gamma_\text{spider}$ is an $(\ell - 1)$-looped spider formed from $\Gamma_{v_{\beta_1} \ldots v_{\beta_k}}$ by adding a $\star$-vertex and then adding an edge that joins this $\star$-vertex and the $\times$-vertex.  By construction, $\Gamma_\text{spider}$ has one fewer $\times$-loop than $\Gamma^{ \star }$.  Such spiders form the desired $\text{L}^{(\ell-1)}_{\text{spider}}$ in \eqref{ordereqn}.  
\end{proof}

\subsubsection{Constructing Loopless \texorpdfstring{$\times$}{X}-Relations} \label{sec: pyramid}

	Constructing a basis for $\mathcal{R}^{ \times }_\text{loopless}$ requires a thorough examination of $\star$-graphs.  The next lemma shows that any $\times$-relation can be written as a derivative map acting on a linear combination of primary $\star$-graphs, which allows us to restrict to primary $\star$-graphs in classifying all loopless $\times \text{-relations}$.
	
\begin{Lem} \label{lem: primary generators}
	For any $L^{ \times } \in \mathcal{R}^{ \times }$, there exists $L^{ \star } \in \mathcal{L}^{ \star }$ that contains only primary $\star$-graphs, satisfying $L^{ \times } = \rho ( L^{ \star } )$.
\end{Lem}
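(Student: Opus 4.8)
\textbf{Proof proposal for Lemma \ref{lem: primary generators}.}

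The plan is to reduce an arbitrary $\times$-relation $L^{\times}=\rho(L^{\star})$, where $L^{\star}\in\mathcal{L}^{\star}$ may contain $\star$-graphs with several $\star$-vertices, to a relation generated by a linear combination of \emph{primary} $\star$-graphs (exactly one $\star$-vertex each). The key observation is that the derivative map $\rho$ factorizes through the removal of $\star$-vertices one at a time: if $\Gamma^{\star}$ has $k=\nsv\geq 2$ $\star$-vertices, one can peel off all but one of them by applying the Leibniz rule $k-1$ times, obtaining a linear combination of \emph{primary} $\star$-graphs $\widehat{L}^{\star}$ such that $\rho(\Gamma^{\star})=\rho(\widehat{L}^{\star})$. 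Concretely, in the language of Definition \ref{def: der map}, one can think of $\rho$ as the composite of $k$ single-$\star$-vertex expansions; stopping after $k-1$ of them leaves a formal sum over primary $\star$-graphs whose further image under $\rho$ is precisely $\rho(\Gamma^{\star})$. Since $\rho$ is well-defined and independent of the order in which $\star$-vertices are deleted (the remark after Definition \ref{def: der map}), this partial expansion is unambiguous.

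The key steps, in order, would be: (1) Set up the single-vertex peeling operation: given $\Gamma^{\star}$ with $\star$-vertices $v_1^{\star},\dots,v_k^{\star}$, define $\sigma_1(\Gamma^{\star})=\sum_{j=1}^{\nv}\Gamma_j$, the sum over graphs obtained by deleting $v_1^{\star}$ and re-attaching its incident edge to each of the $\nv$ $\bullet$- and $\times$-vertices; observe that each $\Gamma_j$ is a $\star$-graph with $k-1$ $\star$-vertices (or a looped $\times$-graph if $k=1$, but here $k\geq 2$, and in any case looped $\times$-graphs are legitimate elements of the domain when extended, or else are killed — one should check that the graphs produced in the intermediate steps still satisfy Rules 3--4, which they do because we never create adjacent $\star$-vertices or $\star$-vertices of degree $>1$). (2) Iterate: apply $\sigma_1$ a total of $k-1$ times to peel down to primary $\star$-graphs, producing $\widehat{L}^{\star}_{\Gamma^{\star}}\in\mathcal{L}^{\star}$, a linear combination of primary $\star$-graphs. (3) Verify $\rho(\widehat{L}^{\star}_{\Gamma^{\star}})=\rho(\Gamma^{\star})$: this is exactly the statement that $\rho$ can be computed by removing $\star$-vertices in any order and in any grouping, which follows from the order-independence noted after Definition \ref{def: der map} together with the fact that $\rho$ acts as the identity on $\times$-graphs (so the last peel composed with the earlier peels reproduces the full $\rho$). (4) Extend linearly: for a general $L^{\star}=\sum_\alpha c_\alpha\Gamma_\alpha^{\star}$, set $L^{\star}_{\mathrm{prim}}=\sum_\alpha c_\alpha\,\widehat{L}^{\star}_{\Gamma_\alpha^{\star}}$, which contains only primary $\star$-graphs and satisfies $\rho(L^{\star}_{\mathrm{prim}})=\sum_\alpha c_\alpha\,\rho(\Gamma_\alpha^{\star})=\rho(L^{\star})=L^{\times}$.

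The main obstacle I anticipate is bookkeeping around the graph-identification rules rather than anything conceptual: one must be careful that the intermediate graphs produced by $\sigma_1$ genuinely lie in $\mathcal{G}^{\star}$ (no two $\star$-vertices become adjacent — true since re-attaching the freed edge of $v_1^{\star}$ to a vertex that already carries a $\star$-neighbor would violate Rule 4, so such terms must be dropped, and one needs to confirm that dropping them is consistent with how $\rho$ on the original multi-$\star$ graph also avoids creating adjacent $\star$-vertices), and that the $\times$-vertex degree cutoff (Rule 1) is applied consistently at each stage so that null graphs appearing midway match null graphs in the one-shot computation of $\rho(\Gamma^{\star})$ — this is the same subtlety exploited in the proof of Proposition \ref{prop: order}. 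Once the peeling operation is shown to be well-defined and compatible with the identifications, the equality $\rho(\widehat{L}^{\star}_{\Gamma^{\star}})=\rho(\Gamma^{\star})$ is immediate from order-independence, and the lemma follows by linearity.
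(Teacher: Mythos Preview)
Your approach is essentially identical to the paper's: both peel off $\star$-vertices one at a time via the iterative construction in Definition~\ref{def: der map}, stopping after $k-1$ steps to obtain a linear combination of primary $\star$-graphs whose image under $\rho$ agrees with $\rho(\Gamma^{\star})$ by order-independence, and then extend linearly. Your bookkeeping worries are unfounded here: Rule~4 is never at risk since the freed edge is always reattached to a $\bullet$- or $\times$-vertex (never to a $\star$-vertex), and the $\times$-degree cutoff causes no discrepancy because $\deg(\times)$ is monotone nondecreasing under the peeling steps---so an intermediate null forces the final graph to be null as well, matching the one-shot computation.
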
 
		
\begin{proof}
	Since $L^{ \times }$ is a $\times$-relation, there exists $\tilde{L}^{ \star } = \sum_i b_i \hspace{0.6mm} \Gamma^{ \star }_i \in \mathcal{L}^{ \star }$ such that $\rho ( \tilde{L}^{ \star } ) = L^{ \times }$.  Starting with $\Gamma^{ \star }_i$, one can follow steps \ref{Step 1 Def der map} and \ref{Step 2 Def der map} in Definition \ref{def: der map} to construct a series of $\star$-graphs, $(\Gamma^{ \star }_i)_{j_1}, (\Gamma^{ \star }_i)_{j_1 j_2}, \ldots (\Gamma^{ \star }_i)_{j_1 \ldots j_{k-1}}$, 	with $j_\alpha = 0, \ldots, n - 1$ and $\alpha = 1, \ldots, k$.  By construction, $(\Gamma^{ \star }_i)_{j_1 \ldots j_{k-1}}$ contains exactly one $\star$-vertex (which makes it a primary $\star$-graph), and $\rho ( \Gamma^{ \star }_i ) = \rho \left ( \sum_{j_1, \ldots, j_{k-1}=0}^{n-1} (\Gamma^{ \star }_i)_{j_1 \ldots j_{k-1}} \right )$.  Therefore, 
\begin{equation*}
	L^{ \times } = \sum_i b_i \cdot \rho ( \Gamma^{ \star }_i ) = \rho \left ( \sum_i \sum_{j_1, \ldots, j_{k-1}=0}^{\nv -1} b_i \left ( \Gamma^{ \star }_i \right )_{j_1 \ldots j_{k-1}} \right ).  
\end{equation*}
\noindent This linear combination of primary $\star$-graphs $(\Gamma^{ \star }_i)_{j_1 \ldots j_{k-1}}$ defines the desired $L^{ \star }$.  Operationally, such $L^{ \star }$ is constructed from $\tilde{L}^{ \star }$ by removing the $\star$-vertices one by one, as per the steps in the definition of $\rho$, until only one $\star$-vertex remains.
\end{proof}
	
\noindent The following proposition presents a general construction for loopless $\times$-relations.  In the next section we will prove that this procedure generates all elements in $\mathcal{R}^{ \times }_\text{loopless}$.
	
\begin{Prop} \label{prop: pyramid}
	Let $L^{ \star } \in \mathcal{L}^{ \star }$ be a linear combination of $\ell$-looped primary $\star$-graphs such that all $\times$-graphs in $\rho ( L^{ \star } )$ are $\ell$-looped.  There exists an $L^{ \star }_\ell \in \mathcal{L}^{ \star }$, such that
\begin{enumerate}[\emph{(}a\emph{)}]
	
\item \label{Statement 1}

	$L^{ \star }_\ell - L^{ \star }$ contains no graph with more than $\ell - 1$ loops;
	
\item \label{Statement 2}

	$\rho( L^{ \star }_\ell ) = (-1)^\ell \rho^{(0)} \circ \theta ( L^{ \star } + \text{\emph{L}}_\text{\emph{spider}} ) \in\mathcal{R}^{ \times }_\text{loopless}$.  $\text{\emph{L}}_\text{\emph{spider}}$ is a linear combination of spiders.
	
\end{enumerate}
\end{Prop}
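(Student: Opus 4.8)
\textbf{Proof proposal for Proposition \ref{prop: pyramid}.}

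The plan is to build $L^{\star}_\ell$ by an induction on the number of loops $\ell$, peeling off one loop at a time using the map $\theta_i$ and the key formula \eqref{ordereqn} from Proposition \ref{prop: order}. The base case $\ell = 0$ is trivial: if $L^{\star}$ consists of loopless primary $\star$-graphs, then $\rho(L^{\star}) = \rho^{(0)}(L^{\star})$ is already a loopless $\times$-relation, we set $L^{\star}_0 = L^{\star}$, $\text{L}_\text{spider} = 0$, and both claims hold vacuously. For the inductive step, suppose the statement holds for all linear combinations of primary $\star$-graphs with fewer than $\ell$ loops. Given $L^{\star}$ with $\ell$-looped graphs whose $\rho$-image is $\ell$-looped, the first move is to pick, for each graph $\Gamma^{\star}$ appearing in $L^{\star}$, a loop (say at a vertex $v_A$) and form $\theta_A(\Gamma^{\star})$, which has $\ell - 1$ loops and one extra $\star$-vertex; after iterating over all graphs this produces $\theta(L^{\star})$, a linear combination of $(\ell-1)$-looped $\star$-graphs (not necessarily primary). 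Apply Lemma \ref{lem: primary generators} to rewrite it, up to $\rho$, as a combination of $(\ell-1)$-looped \emph{primary} $\star$-graphs, to which the inductive hypothesis applies.

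The heart of the argument is to control the relationship between $\rho^{(\ell)}(L^{\star})$ and $\rho^{(\ell-1)}\circ\theta_A(L^{\star})$, which is exactly what \eqref{ordereqn} provides. Summing that formula over the graphs in $L^{\star}$ and over their chosen loops gives
\begin{equation*}
\rho^{(\ell-1)}\circ\theta(L^{\star}) = \rho^{(\ell-1)}\circ\theta\circ\rho^{(\ell)}(L^{\star}) + \rho^{(\ell-1)}\bigl(\text{L}^{(\ell-1)}_{\text{spider}}\bigr),
\end{equation*}
where $\text{L}^{(\ell-1)}_{\text{spider}}$ is a linear combination of $(\ell-1)$-looped spiders. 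The hypothesis that $\rho(L^{\star})$ contains only $\ell$-looped $\times$-graphs means $\rho^{(\ell)}(L^{\star}) = \rho(L^{\star})$ is already loopless-modulo-the-$\ell$-loops-at-fixed-positions in a controlled sense, and the point is that $\theta$ undoes those $\ell$ loops uniformly, so that $\rho^{(\ell-1)}\circ\theta\circ\rho^{(\ell)}(L^{\star})$ recovers $\theta$ acting on the $\times$-relation $\rho(L^{\star})$ with its loops removed. Then I would apply the inductive hypothesis to $L^{\star}_{\text{peeled}} := \theta(L^{\star}) + \text{L}^{(\ell-1)}_{\text{spider}}$ (after passing to primary $\star$-graphs), obtaining an $L^{\star}_{\ell-1}$ with $\rho(L^{\star}_{\ell-1}) = (-1)^{\ell-1}\rho^{(0)}\circ\theta(L^{\star}_{\text{peeled}} + \text{L}'_{\text{spider}}) \in \mathcal{R}^{\times}_{\text{loopless}}$. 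Finally I would define $L^{\star}_\ell$ to be the appropriate combination of $L^{\star}$ and $L^{\star}_{\ell-1}$ (with the sign $(-1)^\ell$ accumulating from the peel-off), and accumulate all the spider contributions $\text{L}^{(\ell-1)}_{\text{spider}}$ together with the lower-order ones into a single $\text{L}_\text{spider}$, which by Proposition \ref{prop: order} is a combination of spiders. Claim \eqref{Statement 1} — that $L^{\star}_\ell - L^{\star}$ has no graphs with more than $\ell-1$ loops — follows because $L^{\star}_{\ell-1}$ is supplied by the inductive hypothesis on an object with at most $\ell-1$ loops, and \eqref{Statement 2} follows by composing the displayed identities and tracking the signs.

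I expect the main obstacle to be the bookkeeping in the inductive step: carefully verifying that $\theta\circ\rho^{(\ell)}$ and $\rho^{(\ell-1)}\circ\theta$ interact as claimed, in particular that the ``undoing'' of a loop commutes appropriately with the derivative map when restricted to the correct loop-number sector, and that the spiders generated at each stage really do only \emph{add} loopless content under $\rho^{(0)}$ (so they can be harmlessly absorbed into $\text{L}_\text{spider}$ without spoiling the loopless conclusion). The subtlety flagged in the proof of Proposition \ref{prop: order} — that $\Gamma^{v_\alpha}_{v_{\beta_1}\ldots v_{\beta_k}}$ and $\tilde{\Gamma}^{v_\alpha}_{v_{\beta_1}\ldots v_{\beta_k}}$ can differ only through the null-graph identification when the $\times$-vertex degree reaches $\pd+1$ — is precisely the mechanism that makes the spider correction term appear, and the whole induction succeeds only because those correction terms are themselves spiders with strictly fewer $\times$-loops, so the recursion terminates. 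A secondary technical point is ensuring the sign $(-1)^\ell$ is consistent with the $(-1)^\alpha$ alternation already appearing in \eqref{eq: loopless rel Medusa}; I would check this against the worked examples in \S\ref{sec: intro to Medusa} (the $(5,3,6)$ Medusa) before committing to the final form.
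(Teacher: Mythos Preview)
Your induction scheme has a structural gap. You propose to peel one loop from each primary $\star$-graph in $L^{\star}$ via $\theta_A$, then invoke the inductive hypothesis on $L^{\star}_{\text{peeled}} := \theta_A(L^{\star}) + \text{L}^{(\ell-1)}_{\text{spider}}$. But $\theta_A(\Gamma^{\star})$ now carries \emph{two} $\star$-vertices (the original one plus the one produced by undoing the loop), so it is not primary. When you ``pass to primary'' by the mechanism of Lemma~\ref{lem: primary generators}, you must distribute one of the two $\star$-edges over all vertices; that distribution can land the edge back on its source vertex and \emph{create} a new loop, so the resulting primary combination mixes $(\ell-1)$- and $\ell$-looped graphs. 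Even if you could arrange a uniformly $(\ell-1)$-looped primary combination, the second hypothesis of the proposition---that $\rho$ of it be uniformly $(\ell-1)$-looped---is not automatic: $\rho(\theta_A(L^{\star}))$ generically contains $\times$-graphs with $\ell-1$, $\ell$, or $\ell+1$ loops. So the inductive step does not close as stated, and your displayed identity only controls the $(\ell-1)$-graded piece, not the full $\rho$-image.

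The paper does not induct on $\ell$. It builds $L^{\star}_\ell$ by an explicit iteration $L^{\star}_k := L^{\star}_{k-1} - \theta_k \circ \rho^{(\ell-k+1)}(L^{\star}_{k-1})$, and the decisive difference from your plan is that $\theta_k$ is applied to the \emph{output} $\times$-graphs of $\rho^{(\ell-k+1)}$, not to the input $\star$-graphs. Since $\times$-graphs carry no $\star$-vertex, $\theta_k$ applied to them yields primary $\star$-graphs directly, with no conversion step; and by Lemma~\ref{lem: star star association} each $(\ell-k+1)$-looped $\times$-graph is cancelled with coefficient one, so $\rho(L^{\star}_k)=\rho^{(\ell-k)}(L^{\star}_k)$ holds exactly at every stage. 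After $\ell$ steps this lands in $\mathcal{R}^{\times}_{\text{loopless}}$, which gives (a). Part (b) is then obtained by rewriting the iterate $\rho^{(0)}(X^{\star}_\ell)$ through the auxiliary objects $Z^{\star}_{\ell-k}$, using Proposition~\ref{prop: order} to commute $\rho^{(\ell-k-1)}\circ\theta_{k+1}$ past $\rho^{(\ell-k)}$; the spider corrections appear precisely there and accumulate into $\text{L}_{\text{spider}}$. Your instinct that Proposition~\ref{prop: order} is the engine is correct---but it is used to simplify the explicit iterate, not to set up an induction.
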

	
\begin{proof}
	
	Since each graph in $\rho (L^{ \star })$ is $\ell$-looped, $\rho ( L^{ \star } ) = \rho^{(\ell)} ( L^{ \star } )$.  Let $\rho^{(\ell)} ( L^{ \star } ) \equiv \sum_{i=1}^{k} b_i \hspace{0.6mm} \Gamma^{ \times }_{i}$, where each $\Gamma^{ \times }_{i}$ contains $\ell$ loops.  For each $\Gamma^{ \times }_{i}$, label the loops from $1$ to $\ell$.   By Definition \ref{def: undo loop}, $\theta_1$ undoes the $1^{\mathrm{st}}$ loop, and $\theta_1 (\Gamma^{ \times }_{i})$ defines a primary $\star$-graph with $\ell - 1$ loops that is associated with $\Gamma^{ \times }_i$.  By Lemma \ref{lem: star star association}, $\Gamma^{ \times }_{i}$ drops out of $\rho ( L^{ \star } - b_i \cdot \theta_1 (\Gamma^{ \times }_{i} ) )$.  Moreover, all $\times$-graphs in $\rho \circ \theta_1 (\Gamma^{ \times }_{i})$, except for $\Gamma^{ \times }_{i}$, are $(\ell - 1)$-looped.  Then
\begin{equation*}
	L^{ \star }_1 \equiv L^{ \star } - \sum_{i=1}^{k} b_i \cdot \theta_1 ( \Gamma^{ \times }_{i} ) = L^{ \star } - \theta_1 \circ \rho^{(\ell)} (L^{ \star })
\end{equation*} 
\noindent defines an $L^{ \star }_1$ that satisfies $\rho ( L^{ \star }_1 ) = \rho^{(\ell-1)} (L^{ \star }_1)$.  Repeat this procedure for $L^{ \star }_1$ and the $2^{\mathrm{nd}}$ loop, in place of $L^{ \star }$ and the $1^{\mathrm{st}}$ loop, obtaining
\begin{equation*}
	L^{ \star }_2 \equiv L^{ \star }_1 - \theta_2 \circ \rho^{(\ell - 1)} (L^{ \star }_1) = L^{ \star } - \theta_1 \circ \rho^{(\ell)} (L^{ \star }) + \theta_2 \circ \rho^{(\ell - 1)} \circ \theta_1 \circ \rho^{(\ell)} (L^{ \star }).
\end{equation*}
\noindent The second equality holds because $\rho^{(\ell - 1)} ( L^{ \star } ) = 0$.  Furthermore, $\rho ( L^{ \star }_2 )=\rho^{(\ell-2)}( L^{ \star }_2 )$.  Iterating this $\ell$ times, we will reach a linear combination of primary $\star$-graphs
\begin{align} \label{eq: L ell star}
	& L^{ \star }_{\ell} \equiv L^{ \star } + \sum_{i=1}^\ell (-1)^i X^{ \star }_{i}, & X^{ \star }_i \equiv \theta_i \circ \rho^{(\ell - i + 1)} \circ \cdots \circ \theta_2  \circ \rho^{(\ell-1)} \circ \theta_1 \circ \rho^{(\ell)} (L^{ \star }).  
\end{align}
\noindent Here $\rho(L^{ \star }_\ell)=\rho^{(0)}(L^{ \star }_\ell) \in \mathcal{R}^{ \times }_\text{loopless}$.  Moreover, $X^{ \star }_i$ only contains $( \ell - i )$-looped graphs.  This means that graphs in $L^{ \star }_{\ell} - L^{ \star } = \sum_{i=1}^\ell (-1)^i X^{ \star }_i$ contain at most $\ell - 1$ loops.  Therefore, $L^{ \star }_\ell$ satisfies condition (\ref{Statement 1}) of the proposition.  

	To prove that $L^{ \star }_\ell$ also satisfies condition (\ref{Statement 2}), take $\text{L}^{(i)}_{\text{spider}}$ to stand for ``any linear combination of $i$-looped spiders" and, for $0< k\leq \ell$, define
\begin{align*} 
	Z^{ \star }_{\ell - k} \equiv \theta_{k} \circ \ldots \circ \theta_{1} \left ( L^{ \star } \right ) - \sum_{\alpha = 2}^{k} \theta_{k} \circ \ldots \circ \theta_{\alpha} \left ( L_\text{spider}^{(\ell - \alpha + 1)} \right ) - \text{L}^{(\ell - k)}_\text{spider}
\end{align*} 

\noindent which contains only $(\ell - k )$-looped graphs.  Define $Z^{ \star }_{\ell} \equiv L^{ \star }$.  Therefore, for $0\leq k\leq \ell$, applying Proposition \ref{prop: order},
\begin{align*}
	& \rho^{(\ell - k - 1)} \circ \theta_{k+1} \circ \rho^{(\ell - k)} \left ( Z^{ \star }_{\ell - k} \right )
	= \rho^{(\ell - k - 1)} \left [ \theta_{k+1} \left ( Z^{ \star }_{\ell - k} \right ) - \text{L}^{(\ell - k - 1)}_\text{spider} \right ] \notag \\
	= & \rho^{(\ell - k - 1)} \left [ \theta_{k+1} \circ \cdots \circ \theta_{1} \left ( L^{ \star } \right ) - \sum_{\alpha = 2}^{k+1} \theta_{k+1} \circ \ldots \circ \theta_{\alpha} \left ( L_\text{spider}^{(\ell - \alpha + 1)} \right ) - \text{L}^{(\ell - k - 1)}_\text{spider} \right ],
\end{align*}

\noindent i.e.,
\begin{align} \label{eq: induction k}
	\rho^{(\ell - k - 1)} \circ \theta_{k+1} \circ \rho^{(\ell - k)} \left ( Z^{ \star }_{\ell - k} \right ) = \rho^{(\ell - k - 1)} \left ( Z^{ \star }_{\ell - k - 1} \right ).
\end{align}

\noindent Note that $\rho^{(0)} ( L^{ \star } ) = 0$ and $\rho^{(0)} ( X^{ \star }_i ) = 0$ for $i = 1, \ldots, \ell - 1$.  Then, by \eqref{eq: L ell star} and  \eqref{eq: induction k},
\begin{eqnarray*} 
	\rho^{(0)} ( L^{ \star }_\ell ) & = & (-1)^\ell \rho^{(0)} (X^{ \star }_\ell) \notag \\
	& = & (-1)^\ell \rho^{(0)} \circ \theta_\ell \circ \rho^{(1)} \circ \cdots \circ \theta_2 \circ \rho^{(\ell - 1)} \circ \theta_1 \circ \rho^{(\ell)} (Z^{ \star }_\ell) \notag \\
	& = & (-1)^\ell \rho^{(0)} \circ \theta_\ell \circ \rho^{(1)} \circ \cdots \circ \theta_2 \circ \rho^{(\ell - 1)} (Z^{ \star }_{\ell - 1}) = \ldots = (-1)^\ell \rho^{(0)} \left ( Z^{ \star }_0 \right ) \notag \\
	& = & \notag (-1)^\ell \rho^{(0)} \left ( \theta_{\ell} \circ \ldots \circ \theta_{1} \left ( L^{ \star } \right ) - \sum_{\alpha = 2}^{\ell} \theta_{\ell} \circ \ldots \circ \theta_{\alpha} \left ( L_\text{spider}^{(\ell - \alpha + 1)} \right ) - \text{L}^{(0)}_\text{spider} \right ) \notag \\
	& = & (-1)^\ell \rho^{(0)} \circ \theta \left ( L^{ \star } - \sum_{\alpha=2}^{\ell + 1} \text{L}^{(\ell - \alpha + 1)}_{\text{spider}} \right ).
\end{eqnarray*}

\noindent Hence,
\begin{equation*} \label{eq: final}
	\rho ( L^{ \star }_\ell ) = \rho^{(0)} ( L^{ \star }_\ell ) = (-1)^\ell \rho^{(0)} \circ \theta ( L^{ \star } + \text{L}_\text{spider} ) \in \mathcal{R}^{ \times }_\text{loopless}.
\end{equation*}

\noindent This gives condition (\ref{Statement 2}), with $\text{L}_\text{spider}=-\sum_{\alpha = 2}^{\ell+1} \text{L}^{(\ell - \alpha + 1)}_{\text{spider}}$.
\end{proof}

\begin{Cor} \label{prop: spider pyramid}
	Given $L_s$, a linear combination of spiders, ${\rho^{(0)} \circ \theta ( L_s ) \in \mathcal{R}^{ \times }_\text{loopless}}$.  
\end{Cor}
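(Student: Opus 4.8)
\textbf{Proof proposal for Corollary~\ref{prop: spider pyramid}.}

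The plan is to deduce this corollary as the special case of Proposition~\ref{prop: pyramid} in which the input linear combination of primary $\star$-graphs already has the property that $\rho$ applied to it produces only loopless $\times$-graphs, so that no iteration is needed. Concretely, I would first observe that a spider has its single $\star$-vertex adjacent to the $\times$-vertex and $\text{deg}(\times) = \pd$. Therefore, in $\rho(\Gamma_{\text{spider}})$ for a loopless spider $\Gamma_{\text{spider}}$, the edge formerly incident to the $\star$-vertex gets reattached to each of the $\nv$ vertices in turn: attaching it to the $\times$-vertex itself would raise $\text{deg}(\times)$ to $\pd+1$, which is identified with the null graph, while attaching it to any $\bullet$-vertex produces a loopless $\times$-graph (no loop is created since the spider was loopless and the new edge joins two distinct vertices). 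Hence $\rho(\Gamma_{\text{spider}}) = \rho^{(0)}(\Gamma_{\text{spider}})$ automatically, and by linearity $\rho(L_s) = \rho^{(0)}(L_s)$ for any linear combination $L_s$ of loopless spiders. In particular all $\times$-graphs in $\rho(L_s)$ are $0$-looped.

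Next I would apply Proposition~\ref{prop: pyramid} with $\ell = 0$ and $L^{\star} = L_s$: the hypothesis of the proposition is that $L^{\star}$ is a linear combination of $\ell$-looped primary $\star$-graphs such that all $\times$-graphs in $\rho(L^{\star})$ are $\ell$-looped, which is exactly what the previous paragraph establishes for $\ell = 0$ (spiders are primary $\star$-graphs, and $0$-looped just means loopless). The proposition then yields an $L^{\star}_0 \in \mathcal{L}^{\star}$ with $\rho(L^{\star}_0) = (-1)^0 \rho^{(0)} \circ \theta(L_s + \text{L}_{\text{spider}}) \in \mathcal{R}^{\times}_{\text{loopless}}$, where $\text{L}_{\text{spider}}$ is a linear combination of spiders. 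But for $\ell = 0$ the map $\theta$ is a composition of zero copies of the $\theta_i$'s, hence the identity, and moreover the sum $\sum_{\alpha=2}^{\ell+1}(\cdots)$ defining $\text{L}_{\text{spider}}$ in the proof of Proposition~\ref{prop: pyramid} is empty, so $\text{L}_{\text{spider}} = 0$. Likewise, condition (a) of Proposition~\ref{prop: pyramid} says $L^{\star}_0 - L_s$ contains no graph with more than $-1$ loops, which forces $L^{\star}_0 = L_s$. Therefore the conclusion collapses to $\rho^{(0)}\circ\theta(L_s) = \rho^{(0)}(L_s) \in \mathcal{R}^{\times}_{\text{loopless}}$, which is precisely the statement of the corollary.

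The only subtlety — and the step I expect to require the most care — is checking that Proposition~\ref{prop: pyramid} is genuinely applicable and that its $\ell = 0$ specialization degenerates as claimed: one must verify that ``$\ell$-looped'' with $\ell = 0$ is consistent with ``loopless'' throughout the statement (it is, by the definition of $\rho^{(0)}$ and of $\theta$), and that the proof's constructions (the $X^{\star}_i$, the $Z^{\star}_{\ell-k}$, the telescoping) all trivialize correctly when there are no loops to undo. An alternative, more self-contained route that avoids invoking the full machinery of Proposition~\ref{prop: pyramid} is to argue directly: $\theta$ acting on a loopless graph is the identity, $\rho(L_s) = \rho^{(0)}(L_s)$ by the degree count above, and $\rho(L_s)$ is by Definition~\ref{def: relations} a $\times$-relation because it equals $\rho$ applied to a linear combination of $\star$-graphs; since it consists of loopless graphs it lies in $\mathcal{R}^{\times}_{\text{loopless}}$. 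I would present this direct argument as the proof, citing Proposition~\ref{prop: pyramid} only as the conceptual source, since the direct version is shorter and transparently correct.
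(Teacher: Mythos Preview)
Your argument is correct only for \emph{loopless} spiders, but the corollary is stated for arbitrary spiders, and the definition of a spider (Definition following Def.~\ref{def: Medusa}) imposes no looplessness condition: a spider is any primary $\star$-graph with its $\star$-vertex adjacent to the $\times$-vertex and $\text{deg}(\times)=\pd$. Spiders with $\times$-loops and $\bullet$-loops are precisely what arise in the subsequent arguments (for instance, the $\text{L}^{(\ell-\alpha+1)}_{\text{spider}}$ terms in Lemma~\ref{lem: lem} and the $\ell$-looped spider case in the proof of Proposition~\ref{prop: spider basis}), so the looped case is the substantive content of the corollary. For a looped spider, $\theta$ is not the identity and your direct route collapses entirely.

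The paper's proof applies Proposition~\ref{prop: pyramid} to a single spider $S$ with $\ell$ loops (the hypothesis is satisfied because reattaching the lone $\star$-edge to the $\times$-vertex would push $\text{deg}(\times)$ to $\pd+1$, hence null, so no new loop is ever created and $\rho(S)=\rho^{(\ell)}(S)$). The real work is then to show that the correction term $\text{L}_{\text{spider}}$ in Proposition~\ref{prop: pyramid}(b) vanishes, so that the conclusion reads $\rho(L^{\star}_\ell)=(-1)^\ell\rho^{(0)}\circ\theta(S)$. This requires a specific ordering of the loops --- all $\times$-loops first, then all $\bullet$-loops --- and a step-by-step check, via Proposition~\ref{prop: order}, that one of conditions (a), (b), (c) fails at every stage of the iteration: while $\times$-loops are being undone, condition~(b) fails (there is no $\star$-vertex adjacent to a $\bullet$-vertex yet); once only $\bullet$-loops remain, condition~(a) fails. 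Your proposal skips this entirely.
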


\begin{proof}
	It is enough to show that this corollary is true for one spider $S$.  We claim that Proposition $\ref{prop: pyramid}$ holds for $L^{ \star } = S$ and where $\text{L}_{\text{spider}}$ is null if we order the loops such that loops 1 to $\ell^{ \times }$ are $\times$-loops and $\ell^{ \times }+1$ to $\ell$ are $\bullet$-loops and the loops are removed in this order.  Define $Z^{ \star }_{\ell-k} \equiv \theta_{k} \circ \ldots \circ \theta_{1} \left ( S \right )$ and $Z^{ \star }_{\ell} \equiv  S $.  As in the proof of Proposition $\ref{prop: pyramid}$, we are done if we can prove \eqref{eq: induction k}, but with this new definition of $Z^{ \star }_{\ell - k}$ (i.e., when $\text{L}_{\text{spider}}$ is always taken to be zero).
	
	For $k=0$, $Z^{ \star }_{\ell} = S$, a spider, which has no $\star$-vertex adjacent to a $\bullet$-vertex in violation of Proposition $\ref{prop: order}$($\ref{Condition 2}$).  Thus, $\rho^{(\ell - 1 )} \circ \theta_1 \circ \rho^{( \ell )} ( Z^{ \star }_{\ell} ) = \rho^{( \ell - 1 )} \circ \theta_1 ( Z^{ \star } _{\ell}) = \rho^{( \ell - 1 )} ( Z^{ \star }_{\ell - 1} )$.  This holds regardless of which loop is chosen to be undone first.  However, if the first loop is a $\times$-loop, then $Z^{ \star }_{\ell -1} = \theta_1 (S )$ will continue to violate Proposition $\ref{prop: order}$($\ref{Condition 2}$).  Therefore, if all of the $\times$-loops are undone first, then $\rho^{( \ell - k - 1)} \circ \theta_{k+1} \circ \rho^{\ell - k} ( Z^{ \star }_{\ell - k} ) = \rho^{( \ell - k -1 )} ( Z^{ \star }_{\ell - k - 1} )$ holds for $0 \leq k \leq \ell^{ \times }$.  
	
	Now, there are no longer any $\times$-loops.  Whenever there are $\bullet$-loops, $Z^{ \star }_{\ell - \ell^{ \times }}$ will violate Proposition $\ref{prop: order}$($\ref{Condition 1}$).  Thus, $\rho^{( \ell - k - 1)} \circ \theta_{k+1} \circ \rho^{\ell - k} ( Z^{ \star }_{\ell - k} ) = \rho^{( \ell - k -1 )} ( Z^{ \star }_{\ell - k - 1} )$ continues to hold all the way until $k = \ell$.
\end{proof}

\subsubsection{A Basis for \texorpdfstring{$\mathcal{R}^{ \times }_\text{loopless}$}{RX loopless}} \label{sec: Basis of Relations} 

\noindent To find a basis for $\mathcal{R}^{ \times }_\text{loopless}$, we first show that any loopless $\times$-relation can be written as $\rho^{(0)} \circ \theta$ acting on a linear combination of spiders.  We start with the following lemmas:

\begin{Lem} \label{lem: primary association}
	Let $L^{ \times }\in\mathcal{R}^{ \times }_\text{loopless}$ satisfy $L^{ \times } = \rho ( L^{ \star } )$, with $L^{ \star }$ a linear combination of primary $\star$-graphs.  For any $\Gamma^{ \star }_A$ in $L^{ \star }$ that is associated with a looped $\times$-graph $\Gamma^{ \times }$, there exists another $\star$-graph $\Gamma^{ \star }_B\neq\Gamma^{ \star }_A$ in $L^{ \star }$, such that $\Gamma^{ \times }$ is not contained in $\rho ( \Gamma^{ \star }_A - \Gamma^{ \star }_B )$.
\end{Lem}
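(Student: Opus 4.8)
\textbf{Proof proposal for Lemma \ref{lem: primary association}.}

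The plan is to argue by contradiction using the ``fully-looped graph'' technique already employed in the proof of Proposition \ref{prop: loopless_basis}. Suppose, for contradiction, that $\Gamma^{ \star }_A$ is the \emph{only} primary $\star$-graph in $L^{ \star }$ that is associated with the looped $\times$-graph $\Gamma^{ \times }$; more precisely, suppose that for every other primary $\star$-graph $\Gamma^{ \star }_B$ appearing in $L^{ \star }$, the graph $\Gamma^{ \times }$ is contained in $\rho ( \Gamma^{ \star }_A - \Gamma^{ \star }_B )$ (equivalently, $\Gamma^{ \times }$ survives in every such difference). The strategy is to isolate, within each $\rho ( \Gamma^{ \star }_B )$, the unique ``fully-looped'' $\times$-graph obtained by collapsing the single $\star$-edge of $\Gamma^{ \star }_B$ back into a loop, and to track how many loops $\Gamma^{ \times }$ has. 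Since $L^{ \times } = \rho ( L^{ \star } ) \in \mathcal{R}^{ \times }_\text{loopless}$ is loopless by hypothesis, the looped $\times$-graph $\Gamma^{ \times }$ must cancel in the sum $\rho ( L^{ \star } ) = \sum_B c_B \, \rho ( \Gamma^{ \star }_B )$, so there must be \emph{some} other $\Gamma^{ \star }_B$ contributing $\Gamma^{ \times }$ with a compensating coefficient.

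The key steps, in order: First, I would recall that since $\Gamma^{ \star }_A$ is primary, Lemma \ref{lem: star star association} gives that $\Gamma^{ \times }$ appears in $\rho ( \Gamma^{ \star }_A )$ with coefficient exactly $1$ (times the coefficient $c_A$ of $\Gamma^{ \star }_A$ in $L^{ \star }$). Second, I would use the fact that $\Gamma^{ \times }$ is \emph{looped}: a primary $\star$-graph associated with $\Gamma^{ \times }$ is obtained by picking one of the loops of $\Gamma^{ \times }$, undoing it via $\theta_i$, and this gives a $\star$-graph whose single $\star$-vertex is adjacent to the vertex where that loop sat. Any primary $\star$-graph whose $\rho$-expansion contains $\Gamma^{ \times }$ must be one of these $\theta_i ( \Gamma^{ \times } )$ (up to moving the $\star$-edge, but for primary $\star$-graphs the association is rigid in the sense of Proposition \ref{lem: 1} applied after collapsing). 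Third, since $\Gamma^{ \times }$ is not loopless, there is at least one loop to undo, hence at least one candidate $\star$-graph of the required form; the contradiction hypothesis (that \emph{no} suitable $\Gamma^{ \star }_B$ exists) then forces $\Gamma^{ \star }_A$ to be the \emph{unique} primary $\star$-graph in $L^{ \star }$ associated with $\Gamma^{ \times }$, so the coefficient of $\Gamma^{ \times }$ in $\rho ( L^{ \star } )$ is $c_A \neq 0$. This contradicts $\rho ( L^{ \star } ) = L^{ \times } \in \mathcal{R}^{ \times }_\text{loopless}$, which contains no looped $\times$-graphs. Therefore such a $\Gamma^{ \star }_B$ must exist, and by construction $\Gamma^{ \times }$ cancels between the $\Gamma^{ \star }_A$ and $\Gamma^{ \star }_B$ contributions, i.e., $\Gamma^{ \times }$ is not contained in $\rho ( \Gamma^{ \star }_A - \Gamma^{ \star }_B )$ after suitable rescaling — which is exactly the statement.

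The main obstacle I anticipate is bookkeeping the coefficients carefully when $\Gamma^{ \times }$ can arise from a single $\Gamma^{ \star }_B$ in more than one way, or when several distinct primary $\star$-graphs $\Gamma^{ \star }_B$ each contribute $\Gamma^{ \times }$. In that case ``$\Gamma^{ \times }$ is not contained in $\rho ( \Gamma^{ \star }_A - \Gamma^{ \star }_B )$'' for a \emph{single} $B$ need not follow immediately from the mere fact that $\Gamma^{ \times }$ cancels in the full sum; I would need to observe that if $\Gamma^{ \times }$ cancels in $\sum_B c_B \rho ( \Gamma^{ \star }_B )$ but does not drop out of any pairwise difference $\rho ( \Gamma^{ \star }_A - \Gamma^{ \star }_B )$, then in particular the contribution of $\Gamma^{ \times }$ is the same (up to the fixed coefficient $1$ from Lemma \ref{lem: star star association}) in each relevant $\rho ( \Gamma^{ \star }_B )$, and a short linear-algebra argument over the coefficients $c_B$ — using $c_A \neq 0$ and that the total coefficient of $\Gamma^{ \times }$ vanishes — produces the desired $B$ with the right normalization. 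I expect this to be a routine but slightly fiddly argument; the conceptual content is entirely carried by Lemma \ref{lem: star star association} and the loopless-ness of $L^{ \times }$.
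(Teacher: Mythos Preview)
Your approach is correct and essentially identical to the paper's: argue by contradiction that if no other $\Gamma^{\star}_B$ in $L^{\star}$ is associated with $\Gamma^{\times}$, then $\Gamma^{\times}$ survives in $\rho(L^{\star})$ with nonzero coefficient, contradicting looplessness; then invoke Lemma~\ref{lem: star star association} to finish. The ``obstacle'' you anticipate at the end is a phantom: Lemma~\ref{lem: star star association} says that for \emph{any} primary $\star$-graph associated with $\Gamma^{\times}$, the coefficient of $\Gamma^{\times}$ in its $\rho$-image is exactly $1$, so once you have established the existence of \emph{some} $\Gamma^{\star}_B \neq \Gamma^{\star}_A$ associated with $\Gamma^{\times}$, the difference $\rho(\Gamma^{\star}_A - \Gamma^{\star}_B)$ automatically omits $\Gamma^{\times}$ --- no rescaling, no linear algebra over multiple $B$'s, and no worry about $\Gamma^{\times}$ arising ``in more than one way'' from a single primary $\star$-graph is needed.
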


\begin{proof}
    Suppose $\Gamma^{ \star }_A$ is the only $\star$-graph in $L^{ \star }$ that is associated with $\Gamma^{ \star }_A$.  Assume that the coefficient of $\Gamma^{ \star }_A$ in $L^{ \star }$ is $b_A \neq 0$.  Therefore, none of the $\star$-graphs in $L^{ \star } - b_A \Gamma^{ \star }_A$ is associated with $\Gamma^{ \times }$, and thus $\Gamma^{ \times }$ is not contained in $\rho ( L^{ \star } - b_A \Gamma^{ \star }_A)$.  Hence, the looped $\times$-graph $\Gamma^{ \times }$ appears in $\rho (L^{ \star }) = \rho (L^{ \star } - b_A \Gamma^{ \star }_A) + b_A \cdot \rho (\Gamma^{ \star }_A)$ with coefficient $b_A \neq 0$, which contradicts the fact that $\rho (L^{ \star }) \in \mathcal{R}^{ \times }_\text{loopless}$.
    
	The above argument shows that there exists a $\Gamma_{B}^{ \star } \neq \Gamma^{ \star }_A$ in $L^{ \star }$ that is associated with $\Gamma^{ \times }$.  By Lemma \ref{lem: star star association}, $\Gamma^{ \times }$ appears in both $\rho (\Gamma^{ \star }_A)$ and $\rho (\Gamma_{B}^{ \star })$ with coefficient 1.  Then, $\rho ( \Gamma^{ \star }_A - \Gamma^{ \star }_B )$ does not contain $\Gamma^{ \times }$.  
\end{proof}

\noindent Specifically, if $\Gamma^{ \star }_A$ is $\ell$-looped and $\Gamma^{\times}$ is $(\ell + 1)$-looped, then $\Gamma^{ \star }_B$ is also $\ell$-looped.

\begin{Lem} \label{cor: nonspider}
	If $\Gamma^{ \star }_A, \Gamma^{ \star }_B \in \mathcal{G}^{ \star }$ are $\ell$-looped primary $\star$-graphs that are associated with the same $(\ell+1)$-looped $\times$-graph, then $\theta ( \Gamma^{ \star }_A - \Gamma^{ \star }_B ) = 0$.
\end{Lem}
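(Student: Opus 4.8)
The plan is to show that both $\theta(\Gamma^{\star}_A)$ and $\theta(\Gamma^{\star}_B)$ are equal to $\theta(\Gamma^{\times})$, where $\Gamma^{\times}$ denotes the common associated $(\ell+1)$-looped $\times$-graph; the statement then follows immediately by linearity of $\theta$.

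First I would pin down how $\Gamma^{\star}_A$ sits inside $\Gamma^{\times}$. Since $\Gamma^{\star}_A$ is a primary $\star$-graph, write its unique $\star$-vertex as $v^{\star}_A$ and let $w_A$ be the $\bullet$- or $\times$-vertex adjacent to it. In the expansion $\rho(\Gamma^{\star}_A)=\sum_{j=1}^{\nv}\Gamma_j$, the graph $\Gamma_j$ is formed by deleting $v^{\star}_A$ and joining $w_A$ to $v_j$; a new loop is created precisely when $v_j=w_A$, so among the $\Gamma_j$ there is exactly one $(\ell+1)$-looped graph, namely the graph $\widehat{\Gamma}_A$ obtained from $\Gamma^{\star}_A$ by deleting $v^{\star}_A$ together with its incident edge and placing a loop at $w_A$. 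By Lemma \ref{lem: star star association} this graph occurs in $\rho(\Gamma^{\star}_A)$ with coefficient $1$, and since $\Gamma^{\star}_A$ is associated with the $(\ell+1)$-looped $\times$-graph $\Gamma^{\times}$, we must have $\Gamma^{\times}=\widehat{\Gamma}_A$. The same reasoning applied to $\Gamma^{\star}_B$ gives $\Gamma^{\times}=\widehat{\Gamma}_B$, with corresponding vertex $w_B$. Thus $\Gamma^{\star}_A$ (resp.\ $\Gamma^{\star}_B$) is recovered from $\Gamma^{\times}$ by undoing the single loop of $\Gamma^{\times}$ located at $w_A$ (resp.\ $w_B$); in particular the multiset of loop-locations of $\Gamma^{\times}$ equals the loop-locations of $\Gamma^{\star}_A$ together with one extra loop at $w_A$, and equally those of $\Gamma^{\star}_B$ together with one extra loop at $w_B$.

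Next I would evaluate $\theta$ directly. By definition $\theta(\Gamma^{\star}_A)$ undoes each of the $\ell$ loops of $\Gamma^{\star}_A$ while retaining $v^{\star}_A$, so it is the loopless $\star$-graph whose $\star$-vertices are adjacent to the $\ell$ loop-locations of $\Gamma^{\star}_A$ and to $w_A$. On the other hand $\theta(\Gamma^{\times})$ undoes all $\ell+1$ loops of $\Gamma^{\times}$, which by the previous step are located at the $\ell$ loop-locations of $\Gamma^{\star}_A$ together with $w_A$. Identifying the $\star$-vertex produced when $\theta$ undoes the $w_A$-loop of $\Gamma^{\times}$ with the retained vertex $v^{\star}_A$, the two loopless $\star$-graphs $\theta(\Gamma^{\star}_A)$ and $\theta(\Gamma^{\times})$ coincide. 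The identical argument for $B$ gives $\theta(\Gamma^{\star}_B)=\theta(\Gamma^{\times})$, and extending $\theta$ linearly yields $\theta(\Gamma^{\star}_A-\Gamma^{\star}_B)=\theta(\Gamma^{\times})-\theta(\Gamma^{\times})=0$.

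The only real obstacle is the bookkeeping of $\star$-vertex labels: $\theta$ is defined only after fixing an order in which the loops are undone, so the equalities $\theta(\Gamma^{\star}_A)=\theta(\Gamma^{\times})=\theta(\Gamma^{\star}_B)$ should be understood up to relabeling of the freshly created $\star$-vertices. I would handle this by making the ordering explicit (undo the loop at $w_A$ first when forming $\theta(\Gamma^{\times})$, and match the resulting $\star$-vertex with $v^{\star}_A$) and remarking, as elsewhere in this Appendix, that this ambiguity is immaterial because $\theta$ is only ever applied in the combination $\rho^{(0)}\circ\theta$, which symmetrizes over all such labelings; equivalently one may regard $\theta$ as landing in $\star$-graphs modulo $\star$-vertex relabeling. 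With that convention in place the argument above is otherwise entirely mechanical.
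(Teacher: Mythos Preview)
Your proof is correct and follows exactly the same approach as the paper: both argue that $\theta(\Gamma^{\star}_A)=\theta(\Gamma^{\times})=\theta(\Gamma^{\star}_B)$ and conclude by linearity. The paper's proof is essentially a one-line assertion of this equality, whereas you have spelled out the supporting details (why the $(\ell+1)$-looped $\times$-graph associated to an $\ell$-looped primary $\star$-graph arises from looping the $\star$-adjacent vertex, and the $\star$-label bookkeeping), but the underlying idea is identical.
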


\begin{proof}
	Since $\Gamma^{ \star }_A$ and $\Gamma^{ \star }_B$ are both associated with the same $(\ell+1)$-looped $\times$-graph, $\Gamma^{ \times }$, $\theta ( \Gamma^{ \star }_A ) = \theta ( \Gamma^{ \star }_B ) = \theta ( \Gamma^{ \times } )$.  Therefore, $\theta ( \Gamma^{ \star }_A - \Gamma^{ \star }_B ) = 0$.
\end{proof}

\begin{Prop} \label{prop: spider basis}
	For any loopless $\times$-relation $L^\times \in \mathcal{R}^\times_\text{loopless}$, there exists a linear combination of spiders $L_s$, such that $L^{ \times } = \rho^{(0)} \circ \theta ( L_s )$.
\end{Prop}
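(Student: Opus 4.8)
The plan is to induct on the number of loops appearing in the $\star$-graphs that represent the given $\times$-relation, using the machinery of Proposition \ref{prop: pyramid} and the cancellation lemmas \ref{lem: primary association} and \ref{cor: nonspider} to systematically reduce a general primary-$\star$-graph expression to a spider expression. First I would invoke Lemma \ref{lem: primary generators} to write $L^\times = \rho(L^\star)$ where $L^\star \in \mathcal{L}^\star$ is a linear combination of \emph{primary} $\star$-graphs, each containing exactly one $\star$-vertex. Let $\ell$ be the largest number of loops occurring among the graphs in $L^\star$ (note that the $\star$-graphs themselves may be looped even though $L^\times$ is loopless, since the looped $\times$-graphs produced by $\rho$ must cancel among themselves). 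The base case $\ell = 0$: if $L^\star$ consists of loopless primary $\star$-graphs, I would argue that after modding out by the looped graphs that must cancel, $L^\star$ can be taken to be a linear combination of spiders up to adjustments, and since $\theta$ acts as the identity (up to bookkeeping) on loopless graphs, $\rho^{(0)}\circ\theta$ reproduces $\rho$ on the loopless part.

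For the inductive step, suppose $\ell \geq 1$. Consider the subcollection of graphs in $L^\star$ with exactly $\ell$ loops. Since $L^\times = \rho(L^\star)$ is loopless, every $\ell$-looped (and $(\ell\!-\!1)$-looped, etc.) $\times$-graph produced in $\rho(L^\star)$ must cancel. For each $\ell$-looped primary $\star$-graph $\Gamma^\star_A$ in $L^\star$ associated with an $(\ell\!+\!1)$-looped $\times$-graph $\Gamma^\times$ — or more to the point, for the highest-loop $\times$-graphs produced — Lemma \ref{lem: primary association} guarantees a partner $\Gamma^\star_B \neq \Gamma^\star_A$ in $L^\star$ so that $\Gamma^\times$ drops out of $\rho(\Gamma^\star_A - \Gamma^\star_B)$, and Lemma \ref{cor: nonspider} gives $\theta(\Gamma^\star_A - \Gamma^\star_B) = 0$. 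Pairing up these contributions, I would use Proposition \ref{prop: pyramid} applied to the $\ell$-looped part of $L^\star$: it produces an $L^\star_\ell$ with $\rho(L^\star_\ell) = (-1)^\ell \rho^{(0)}\circ\theta(L^\star_{(\ell)} + \mathrm{L}_{\text{spider}})$ where $L^\star_\ell - L^\star_{(\ell)}$ contains only graphs with at most $\ell-1$ loops, and $\mathrm{L}_{\text{spider}}$ is a linear combination of spiders. Replacing the $\ell$-looped part of $L^\star$ by $L^\star_\ell$ (absorbing the spider correction) yields a new expression $\tilde L^\star$ with $\rho(\tilde L^\star) = L^\times$ but with maximal loop number strictly less than $\ell$, plus an extra genuine-spider piece that is already of the desired form. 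By induction on $\ell$, the non-spider remainder collapses entirely to spiders, and assembling everything gives $L^\times = \rho^{(0)}\circ\theta(L_s)$ for some linear combination of spiders $L_s$.

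The main obstacle I anticipate is the bookkeeping of \emph{which} loops get undone and in \emph{what order}, and ensuring that the spider corrections $\mathrm{L}_{\text{spider}}$ generated at each stage of Proposition \ref{prop: pyramid} are themselves faithfully handled by $\rho^{(0)}\circ\theta$ — this is exactly where Corollary \ref{prop: spider pyramid} enters, since it asserts $\rho^{(0)}\circ\theta(L_s) \in \mathcal{R}^\times_\text{loopless}$ for any linear combination of spiders, so the spider pieces are self-consistent and never reintroduce higher-loop obstructions. A secondary subtlety is that when pairing $\Gamma^\star_A$ with $\Gamma^\star_B$ the pairing need not be unique or symmetric, so I would want to organize the cancellation globally — e.g., by choosing a linear ordering on $\ell$-looped $\times$-graphs and peeling them off one at a time via $\theta_i \circ \rho^{(\ell)}$ as in the explicit iteration \eqref{eq: L ell star} — rather than graph-by-graph, to avoid circularity. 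Verifying that the telescoping in Proposition \ref{prop: pyramid}'s proof indeed terminates at a spider-only expression $Z^\star_0$ is the crux, but that verification has essentially been carried out there, so the present proposition should follow by feeding the general primary-$\star$-graph relation into that apparatus and tracking the spider remainders.
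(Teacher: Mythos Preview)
Your approach is essentially the paper's: reduce to primary $\star$-graphs via Lemma~\ref{lem: primary generators}, then induct downward on the maximal loop number using Proposition~\ref{prop: pyramid} together with Lemmas~\ref{lem: primary association} and~\ref{cor: nonspider}, accumulating spider corrections handled by Corollary~\ref{prop: spider pyramid}. The paper organizes the descent graph-by-graph within each loop level (splitting into the spider and non-spider cases explicitly) rather than treating the $\ell$-looped part as a whole, but the mechanism is the same.

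One place where your sketch is genuinely thin is the base case $\ell=0$. You write that loopless primary $\star$-graphs ``can be taken to be a linear combination of spiders up to adjustments,'' but this requires an argument, not an adjustment. The paper's observation is that a $1$-looped $\times$-graph $\Gamma^\times$ is associated with a \emph{unique} loopless primary $\star$-graph (the one obtained by undoing its loop). Hence if a loopless non-spider $\Gamma^\star_A$ survived in $\tilde L^\star$, the $1$-looped $\Gamma^\times$ it produces under $\rho$ could not be cancelled by any \emph{other} loopless primary $\star$-graph, contradicting Lemma~\ref{lem: primary association}. This uniqueness is what forces $\tilde L^\star$ to consist only of loopless spiders, after which $\rho(\tilde L^\star)=\rho^{(0)}\circ\theta(\tilde L^\star)$ is immediate. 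You should make this step explicit rather than gesture at ``modding out.''
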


\begin{proof}
	In the following, we take $\text{L}_{\text{spider}}$ to stand for ``any linear combination of spiders".  Since $L^{ \times }$ is a $\times$-relation, there exists $L^{ \star } \in \mathcal{L}^{ \star }$, consisting of primary $\star$-graphs, such that $L^{ \times } = \rho ( L^{ \star } )$.  Take the set, $\mathcal{H}_\ell = \{ \Gamma^{ \star }_1, \ldots, \Gamma^{ \star }_{H_\ell} \}$, of $\star$-graphs in $L^{ \star }$ that contain the highest number, $\ell$, of loops.  Let $b^i_\ell$ be the coefficient of $\Gamma^{ \star }_i$ in $L^{ \star }$.  Therefore, $L^{ \star } - \sum_{i=1}^{H_\ell} b^i_\ell \hspace{0.3mm} \Gamma^{ \star }_i$ contains no graphs with more than $\ell - 1$ loops.  We implement the following procedure for all graphs in $\mathcal{H}_\ell$ in order from $\Gamma^{ \star }_{1}$ to $\Gamma^{ \star }_{H_{\ell}}$:
\begin{enumerate}

\item \label{Step 1}

	Define $\beta_\ell^{(1)} \equiv b_\ell^{(1)}$.  Apply to $\Gamma^{ \star }_1$ the construction outlined in Proposition \ref{prop: pyramid}:
\begin{enumerate}

\item 

	If $\Gamma^{ \star }_1$ is a spider: By Corollary \ref{prop: spider pyramid}, Proposition \ref{prop: pyramid}(\ref{Statement 2}) becomes that there exists a linear combination of primary $\star$-graphs $L^{(1)}_\ell$, such that
\begin{equation*}
	\rho \bigl( L^{(1)}_\ell \bigr) = \rho^{(0)} \circ \theta \left ( \Gamma^{ \star }_1 \right ) = \rho^{(0)} \circ \theta \left ( \text{L}_\text{spider} \right ) \in \mathcal{R}^{ \times }_\text{loopless}.
\end{equation*} 
\noindent By Proposition \ref{prop: pyramid}(\ref{Statement 1}), graphs in $L^{(1)}_\ell - \Gamma^{ \star }_1$ contain at most $\ell - 1$ loops.  
	
\item
    
	If $\Gamma^{ \star }_1$ is not a spider: $\rho ( \Gamma^{ \star }_1 )$ contains an $(\ell + 1)$-looped $\times$-graph $\Gamma^{ \times }$.  By Lemma \ref{lem: primary association}, there exists an $\ell$-looped $\star$-graph $\Gamma^{ \star }_j \in \mathcal{H}$, $\Gamma^{ \star }_j \neq \Gamma^{ \star }_1$, that is associated with $\Gamma^{ \times }$ and $\Gamma^{ \times }$ is not in $\rho ( \Gamma^{ \star }_1 - \Gamma^{ \star }_2 )$.  By Lemma \ref{cor: nonspider}, $\theta ( \Gamma^{ \star }_1 - \Gamma^{ \star }_j ) = 0$.  By Proposition \ref{prop: pyramid},
\begin{equation*}
	\rho \bigl( L^{(1)}_\ell \bigr) = \rho^{(0)} \circ \theta \left ( \text{L}_\text{spider} \right ) \in \mathcal{R}^{ \times }_\text{loopless},
\end{equation*} 
\noindent where graphs in $L^{(1)}_\ell - ( \Gamma^{ \star }_1 - \Gamma^{ \star }_j )$ contain at most $\ell - 1$ loops.
    
\end{enumerate}    

\item

    Define $\beta_\ell^{(i)}$, for $i>1$, as the coefficient (which may be zero) of $\Gamma^{ \star }_i$ in $L^{ \star } - \beta_\ell^{(1)} L^{(1)}_\ell$.  
    
\item \label{Step 3}

	If $\beta_\ell^{(2)} = 0$, skip this step; if not, repeat step \ref{Step 1} for $\Gamma^{ \star }_2$, resulting in an $L^{(2)}_\ell$ with
\begin{equation*}
	\rho \big ( L_\ell^{(2)} \big ) = \rho^{(0)} \circ \theta (\text{L}_\text{spider}) \in \mathcal{R}^{ \times }_\text{loopless}.
\end{equation*}
	
 Redefine $\beta_\ell^{(i)}$, for $i>2$, to be the coefficient of $\Gamma^{ \star }_i$ in $\big ( L^{ \star } - \beta_{\ell}^{(1)} L^{(1)}_\ell \big ) - \beta_{\ell}^{(2)} L^{(2)}_\ell$.  
	
\item

	Repeat step \ref{Step 3} for $\Gamma_\ell^{(3)}, \ldots, \Gamma_\ell^{(H_\ell)}$ in sequence.  This will eventually generate a linear combination of $\star$-graphs, $L^{ \star } - \sum_{i=1}^{H_\ell} \beta^{(i)}_\ell \hspace{0.3mm} L^{(i)}_\ell$, where 
\begin{equation*}
	\rho \left ( \sum_{i=1}^{H_\ell} \beta^{(i)}_\ell \hspace{0.3mm} L^{(i)}_\ell \right ) = \rho^{(0)} \circ \theta (\text{L}_\text{spider}) \in \mathcal{R}^{ \times }_\text{loopless},
\end{equation*}
\noindent and all graphs in $L^{ \star } - \sum_{i=1}^{H_\ell} \beta^{(i)}_\ell L^{(i)}_\ell$ contain at most $\ell - 1$ loops.
\end{enumerate}

\noindent We can now repeat this procedure for $L^{ \star } - \sum_{i=1}^{H_\ell} \beta^{(i)}_\ell L^{(i)}_\ell$.  We will obtain
\begin{align*}
	& L^{ \star } - \sum_{i=1}^{H_\ell} \beta^{(i)}_\ell \hspace{0.3mm} L^{(i)}_\ell - \sum_{i=1}^{H_{\ell-1}} \beta^{(i)}_{\ell-1} \hspace{0.3mm} L^{(i)}_{\ell-1}, & \rho \left ( \sum_{i=1}^{H_{\ell-1}} \beta^i_\ell \hspace{0.3mm} L^{(i)}_{\ell-1} \right ) = \rho^{(0)} \circ \theta (\text{L}_\text{spider}) \in \mathcal{R}^{ \times }_\text{loopless}.
\end{align*}
\noindent In the first expression graphs contain at most $\ell - 2$ loops.  Iterate $\ell$ times to get
\begin{equation*}
	\tilde{L}^{ \star } \equiv L^{ \star } - \sum_{\alpha=1}^\ell \sum_{i=1}^{H_\alpha} \beta^i_\alpha \hspace{0.6mm} L^{(i)}_\alpha, 
\end{equation*}
\noindent which contains only loopless primary $\star$-graphs.  In addition,
\begin{equation*}
	\rho \left ( \sum_{\alpha=1}^\ell \sum_{i=1}^{H_\alpha} \beta^i_\alpha \hspace{0.6mm} L^{(i)}_\alpha \right ) = \rho^{(0)} \circ \theta ( \text{L}_\text{spider} ) \in \mathcal{R}^{ \times }_\text{loopless}.
\end{equation*}
\noindent Therefore, 
\begin{equation*}
	 \rho ( \tilde{L}^{ \star } ) = L^{ \times } - \rho \left ( \sum_{\alpha=1}^\ell \sum_{i=1}^{H_\alpha} \beta^i_\alpha \hspace{0.6mm} L^{(i)}_\alpha \right ) \in \mathcal{R}^{ \times }_\text{loopless}.
\end{equation*}
    
	Next we show that $\tilde{L}^{ \star }$ is a linear combination of spiders.  Suppose there exists a $\star$-graph $\Gamma^{ \star }_A$ in $\tilde{L}^{ \star }$ that is not a spider.  Then, by Lemma \ref{lem: primary association}, there should exist another $\star$-graph $\Gamma^{ \star }_B \neq \Gamma^{ \star }_A$ in $\tilde{L}^{ \star }$ that is associated with the 1-looped $\times$-graph $\Gamma^{ \times }$ in $\rho (\Gamma^{ \star }_A)$.  But since $\Gamma^{ \times }$ is associated with a unique loopless $\star$-graph (resulting from undoing the loop), this is impossible.  Therefore, graphs in $\tilde{L}^{ \star }$ are loopless spiders, and thus $\rho (\tilde{L}^{ \star }) = \rho^{(0)} \circ \theta (\tilde{L}^{ \star })$.  Hence, 
\begin{equation*}
    L^{ \times } = \rho \left (\tilde{L}^{ \star } + \sum_{\alpha=1}^\ell \sum_{i=1}^{H_\alpha} \beta^i_\alpha \hspace{0.6mm} L^{(i)}_\alpha \right ) = \rho^{(0)} \circ \theta \left ( \text{L}_\text{spider} \right).
\end{equation*}
\noindent The $\text{L}_\text{spider}$ within the last pair of parentheses is the desired $L_s$.
\end{proof}

\noindent Thus we have shown that any loopless $\times$-relation can be written as $\rho^{(0)} \circ \theta ( L_s )$.  In fact, we can go further and show that it is equal to $\rho^{(0)} ( L_M )$, where $L_M$ is a linear combination of Medusas.  We start with the following Lemma:
\begin{Lem} \label{lem: lem}
	Given a spider $S$ that contains $\ell^{ \times }$ $\times$-loops, there exist two linear combinations of spiders $Y^{ \star }$, with graphs containing $\ell^{ \times }$ $\times$-loops but no $\bullet$-loop, and $W^{ \star }$, with graphs containing fewer than $\ell^{ \times }$ $\times$-loops, such that $\rho^{(0)} \circ \theta \left ( Y^{ \star } \right ) = \rho^{(0)} \circ \theta \bigl( S + W^{ \star } \bigr)$.
\end{Lem}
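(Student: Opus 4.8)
## Proof Proposal for Lemma \ref{lem: lem}

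The plan is to prove the statement by induction on $\ell^{\bullet}$, the number of $\bullet$-loops in the spider $S$. The base case $\ell^{\bullet} = 0$ is immediate: $S$ already has no $\bullet$-loop, so we take $Y^{\star} = S$ and $W^{\star} = 0$, and the claimed identity $\rho^{(0)} \circ \theta(S) = \rho^{(0)} \circ \theta(S)$ holds trivially. For the inductive step, suppose the lemma holds for all spiders with fewer than $\ell^{\bullet}$ $\bullet$-loops, and let $S$ be a spider with exactly $\ell^{\bullet} \geq 1$ $\bullet$-loops and $\ell^{\times}$ $\times$-loops. The idea is to pick one $\bullet$-loop, say at vertex $v_A$, and apply Proposition \ref{prop: order} to trade it for something with fewer $\bullet$-loops, then iterate.

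The key move is as follows. Let $\Gamma^{\star}$ denote $\theta_A(S)$ — that is, undo one $\bullet$-loop at $v_A$ by converting it to an edge incident to a fresh $\star$-vertex. Now $\Gamma^{\star}$ is a primary $\star$-graph with the $\star$-vertex adjacent to $v_A$, which is a $\bullet$-vertex (so $\Gamma^{\star}$ is not a spider), and it carries $\ell^{\bullet} - 1$ $\bullet$-loops plus $\ell^{\times}$ $\times$-loops. Apply $\rho^{(0)} \circ \theta$ to $\Gamma^{\star}$: by Corollary \ref{prop: spider pyramid} (or the relevant piece of Proposition \ref{prop: pyramid}), the expression $\rho^{(0)} \circ \theta(\Gamma^{\star})$ is a loopless $\times$-relation, and moreover, tracking through Proposition \ref{prop: order}, expanding $\rho$ on $\Gamma^{\star}$ and reorganizing produces a spider piece $\mathrm{L}_{\text{spider}}$ consisting of spiders with \emph{strictly fewer} $\times$-loops than $\Gamma^{\star}$ (hence than $S$), together with the "main" term. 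The upshot is an identity of the schematic form
\begin{equation*}
	\rho^{(0)} \circ \theta(S) = \rho^{(0)} \circ \theta\bigl( S' + \mathrm{L}_{\text{spider}} \bigr),
\end{equation*}
where $S'$ is a spider (or linear combination of spiders) with one fewer $\bullet$-loop than $S$ and the same number $\ell^{\times}$ of $\times$-loops, and $\mathrm{L}_{\text{spider}}$ has graphs with fewer $\times$-loops. Apply the inductive hypothesis to $S'$ to convert it into $Y'^{\star}$ (same $\ell^{\times}$ $\times$-loops, no $\bullet$-loops) plus a correction $W'^{\star}$ with fewer $\times$-loops; the $\mathrm{L}_{\text{spider}}$ term already has fewer $\times$-loops, so it folds into $W^{\star}$. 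Setting $Y^{\star} = Y'^{\star}$ and $W^{\star} = W'^{\star} + \mathrm{L}_{\text{spider}}$ completes the induction.

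The main obstacle, and the step requiring genuine care, is the bookkeeping in the inductive step: verifying precisely which of the three conditions (\ref{Condition 1})--(\ref{Condition 3}) of Proposition \ref{prop: order} are in force when we undo a $\bullet$-loop versus a $\times$-loop, and confirming that the spider correction term that appears genuinely has strictly fewer $\times$-loops (not merely fewer total loops). One must be careful that undoing a $\bullet$-loop cannot raise the $\times$-loop count, and that the $\mathrm{L}_{\text{spider}}$ produced by Proposition \ref{prop: order} consists of spiders (the $\star$-vertex adjacent to the $\times$-vertex) so that the inductive framework applies to them as well. A secondary subtlety is ensuring that when $S'$ is actually a \emph{sum} of spiders rather than a single one, linearity of $\rho^{(0)} \circ \theta$ lets us apply the inductive hypothesis term by term — this is routine but should be stated. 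Once these are pinned down, the argument is a clean downward induction, first eliminating all $\bullet$-loops at fixed $\times$-loop count, with leftover pieces pushed into ever-lower $\times$-loop strata.
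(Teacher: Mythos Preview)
Your overall strategy --- remove $\bullet$-loops one at a time while tracking corrections via Proposition~\ref{prop: order} --- is the same as the paper's. But the inductive step, as written, has a real gap, and one factual error signals that the mechanism is not yet under control.

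The error: $\Gamma^{\star} = \theta_A(S)$ is \emph{not} a primary $\star$-graph. A spider already carries one $\star$-vertex (adjacent to the $\times$-vertex), and $\theta_A$ adds a second one (adjacent to the $\bullet$-vertex $v_A$). Consequently $\theta(\Gamma^{\star}) = \theta(S)$, so invoking Corollary~\ref{prop: spider pyramid} on $\Gamma^{\star}$ yields only the tautology $\rho^{(0)}\circ\theta(S) = \rho^{(0)}\circ\theta(S)$. More seriously, you never construct $S'$ or derive the identity $\rho^{(0)}\circ\theta(S) = \rho^{(0)}\circ\theta(S' + \mathrm{L}_{\text{spider}})$; you assert it follows from ``tracking through Proposition~\ref{prop: order}.'' But applying that proposition directly with $v_A$ a $\bullet$-vertex gives $\mathrm{L}_{\text{spider}} = 0$ (condition~(\ref{Condition 1}) fails), and the $\star$-graphs appearing in $\rho^{(\ell-1)}\circ\theta_A(\cdot)$ have their $\star$-vertex sitting at $v_A$, not at the $\times$-vertex --- so they are not spiders, and the inductive hypothesis does not apply to them.

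The device the paper uses to close this gap is a modified derivative map $\rho_s$ which distributes only the \emph{new} $\star$-vertex (the one created by $\theta_A$) and leaves the spider's original $\star$-vertex fixed. With that in hand, $-\rho_s^{(\ell-1)}\circ\theta_A(S)$ is manifestly a linear combination of spiders with exactly one fewer $\bullet$-loop and the same number of $\times$-loops, and the iteration remains inside the spider world. The $\mathrm{L}_{\text{spider}}$ corrections with fewer $\times$-loops then enter not while undoing $\bullet$-loops, but at the final stage when $\rho^{(0)}\circ\theta$ is computed and Proposition~\ref{prop: order} is invoked on the remaining $\times$-loops. Your induction can be made to work, but it needs this tool (or an equivalent) to produce a genuine $S'$.
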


\begin{proof}
	Suppose $S$ contains $\ell^{ \bullet }$ $\bullet$-loops, labeled from $1$ to $\ell^{ \bullet }$.  Denote the total number of loops in $S$ to be $\ell = \ell^{ \times } + \ell^{ \bullet }$.  Label the $\times$-loops in $S$ from $\ell^{ \bullet } + 1$ to $\ell$.  We will follow the proof of Proposition \ref{prop: pyramid} to undo these $\bullet$-loops.  However, we want to keep the $\star$-vertex in $S$ untouched.  Therefore, define $\rho_s$ to be the usual derivative map except that it keeps the original $\star$-vertex in $S$ untouched; we can grade $\rho_s$ by number of loops in analogy with $\rho$.
	
    Apply $\theta_1$ to $S$ to undo the $1^{\mathrm{st}}$ $\bullet$-loop.  Define $Y^{ \star }_1 \equiv S - \rho_s \circ \theta_1 ( S )$ and note that $S$ drops out of $Y^{ \star }_1$.  Furthermore, since $S$ is the only $\ell$-looped graph in $\rho_s \circ \theta ( S )$, 
\begin{equation*}
	Y^{ \star }_1 = S - \rho_s \circ \theta_1 ( S ) = - \rho_s^{(\ell - 1)} \circ \theta_1 ( S ).
\end{equation*} 
\noindent Repeat this procedure for all graphs in $Y^{ \star }_1$ and the $2^{\mathrm{nd}}$ $\bullet$-loop, in place of $S$ and the $1^{\mathrm{st}}$ $\bullet$-loop, obtaining $	Y^{ \star }_2 \equiv Y^{ \star }_1 - \rho_s \circ \theta_2 ( Y^{ \star }_1 )$, which is a linear combination of $(\ell - 2)$-looped graphs.  However, since all graphs in $Y^{ \star }_1$ are $( \ell - 1 )$-looped, we have
\begin{equation*}
	Y^{ \star }_2 = (-1)^2 \rho_s^{(\ell - 2)} \circ \theta_2 \circ \rho_s^{(\ell - 1)} \circ \theta_1 (S).
\end{equation*}
\noindent Iterate this $\ell^{ \bullet }$ times, resulting in 
\begin{equation*}
	Y^{ \star }_{\ell^{ \bullet }} = (-1)^{\ell^{ \bullet }} \rho_s^{(\ell-\ell^{ \bullet })} \circ \theta_{\ell^{ \bullet }} \circ  \cdots \rho_s^{(\ell - 2)} \circ \theta_2 \circ \rho_s^{(\ell - 1)} \circ \theta_1 ( S ).
\end{equation*}
\noindent Graphs in $Y^{ \star }_{\ell^{ \bullet }}$ contain no $\bullet$-loops.  As in the derivation of \eqref{eq: final} in Proposition \ref{prop: pyramid},
\begin{align} \label{eq: star pyramid}
	Y^{ \star }_{\ell^{ \bullet }} = & (-1)^{\ell^{ \bullet }} \rho_s^{(\ell-\ell^{ \bullet })} \left ( \theta_{\ell^{ \bullet }} \circ \cdots \circ \theta_1 ( S ) - \sum^{\ell^{ \bullet }}_{\alpha = 2} \theta_{\ell^{ \bullet }} \circ \cdots \circ \theta_\alpha \left ( \text{L}^{(\ell - \alpha + 1)}_{\text{spider}} \right ) - \text{L}^{(\ell-\ell^{ \bullet })}_\text{spider} \right )
\end{align} 
\noindent By Proposition \ref{prop: order}, each graph in $\text{L}^{(\ell - \alpha + 1)}_{\text{spider}}$ and $\text{L}^{(\ell - \ell^{\bullet})}_\text{spider}$ contains fewer than $\ell^{ \times }$ $\times$-loops.  

    Take $Y^{ \star } = (-1)^{\ell^{ \bullet }} Y^{ \star }_{\ell^{{}^{\bullet}}}$.  Then, by Proposition \ref{prop: order}, 
\begin{align*}
    \rho^{(0)} \circ \theta ( Y^{ \star } ) &= \rho^{(0)} \circ \theta_{\ell} \circ \cdots \circ \theta_{\ell^{\bullet} + 1} ( Y^{ \star } ) = \rho^{(0)} \circ \theta ( S + W^{ \star } ),
\end{align*}
\noindent where $W^{ \star }$ is a linear combination of spiders containing fewer than $\ell^{ \times }$ $\times$-loops.  
\end{proof}

\noindent The next proposition finally allows us to relate spiders and Medusas.

\begin{Prop} \label{prop: star pyramid}
	For any linear combination of spiders $L_s \in \mathcal{L}^{ \star }$, there exists a linear combination of Medusas $L_M$, such that $\rho^{(0)} \circ \theta ( L_s ) = \rho^{(0)} ( L_M )$.  
\end{Prop}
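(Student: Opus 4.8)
The plan is to reduce an arbitrary linear combination of spiders, via the operation $\rho^{(0)}\circ\theta$, to a linear combination of Medusas applied with $\rho^{(0)}$. The key structural fact to exploit is the definition of a Medusa: a loopless $\star$-graph with all $\star$-vertices adjacent to the $\times$-vertex and $\mathrm{deg}(\times) = \pd + 1 - \nsv$. A spider, by contrast, has a single $\star$-vertex adjacent to the $\times$-vertex and $\mathrm{deg}(\times)=\pd$, but it is allowed to carry loops (both $\times$-loops and $\bullet$-loops). So the work splits naturally into two reductions: first eliminate loops from the spiders, and second recombine the resulting loopless $\star$-graphs (which have $\star$-vertex adjacent to $\times$ and $\mathrm{deg}(\times)=\pd$, i.e.\ $\nsv=1$ Medusas) together with the spider-pyramid construction to land on genuine Medusas with $\nsv>1$ where necessary.

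First I would handle the case of a single spider $S$, since the extension to linear combinations is immediate by linearity of $\rho^{(0)}\circ\theta$ and of $\rho^{(0)}$. Using Lemma \ref{lem: lem}, I can replace $S$ (modulo spiders with strictly fewer $\times$-loops, absorbed into a correction term $W^\star$) by a linear combination $Y^\star$ of spiders containing $\ell^\times$ $\times$-loops and no $\bullet$-loops, with $\rho^{(0)}\circ\theta(Y^\star) = \rho^{(0)}\circ\theta(S+W^\star)$. Iterating on the number of $\times$-loops (induction on $\ell^\times$), the base case $\ell^\times = 0$ is a loopless spider, which is precisely a Medusa with $\nsv = 1$. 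For the inductive step, I would use the map $\theta$ to convert the $\times$-loops into extra $\star$-vertices all adjacent to the $\times$-vertex: undoing a $\times$-loop decreases $\mathrm{deg}(\times)$ by $2$ and attaches a new $\star$-vertex via one edge to the $\times$-vertex, so after undoing all $\ell^\times$ $\times$-loops the $\times$-vertex has degree $\pd - 2\ell^\times + \ell^\times = \pd - \ell^\times$ and carries $\nsv = 1 + \ell^\times$ many $\star$-vertices, giving exactly $\mathrm{deg}(\times) = \pd + 1 - \nsv$ once one checks the bookkeeping. The point is that $\theta$ applied to a spider with only $\times$-loops produces a Medusa, and $\rho^{(0)}\circ\theta$ of such a spider equals $\rho^{(0)}$ of that Medusa, since $\rho^{(0)}$ kills any graph with a loop and acting with $\theta$ first then $\rho^{(0)}$ is the same as $\rho^{(0)}$ directly on the fully-unlooped $\star$-graph. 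The $W^\star$ correction terms, carrying strictly fewer $\times$-loops, are dispatched by the inductive hypothesis, and the process terminates because the number of loops is bounded above by $\Delta$.

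The main obstacle I expect is the careful tracking of degrees and loop counts through the interplay of $\theta$ and $\rho^{(0)}$ — in particular, verifying that undoing the $\times$-loops of the spiders produced by Lemma \ref{lem: lem} yields $\star$-graphs that are honest Medusas (all $\star$-vertices adjacent to $\times$, and the degree constraint $\mathrm{deg}(\times)=\pd+1-\nsv$ satisfied exactly), rather than $\star$-graphs violating one of these conditions. Since Lemma \ref{lem: lem} guarantees $Y^\star$ has no $\bullet$-loops, every $\star$-vertex created by $\theta$ from a loop is attached to the $\times$-vertex, which handles the adjacency requirement; the degree identity then follows from the arithmetic above, using that a spider starts with $\mathrm{deg}(\times)=\pd$. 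A secondary subtlety is ensuring the induction on $\ell^\times$ is well-founded and that the accumulated correction terms $W^\star$ (and the spider-corrections hidden inside Lemma \ref{lem: lem}) do not reintroduce higher loop counts — but Lemma \ref{lem: lem} explicitly states $W^\star$ has graphs with fewer than $\ell^\times$ $\times$-loops, so this is controlled. Assembling these pieces and summing over the spiders in $L_s$ yields the desired linear combination of Medusas $L_M$ with $\rho^{(0)}\circ\theta(L_s)=\rho^{(0)}(L_M)$.
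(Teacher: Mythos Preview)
Your proposal is correct and follows essentially the same approach as the paper: apply Lemma \ref{lem: lem} to strip $\bullet$-loops from each spider (at the cost of correction terms $W^\star$ with strictly fewer $\times$-loops), observe that $\theta$ applied to a spider with only $\times$-loops yields a Medusa via the degree arithmetic $\deg(\times)+\nsv=\pd+1$, and close by induction on $\ell^\times$. The paper organizes the iteration by processing all spiders in $L_s$ with the maximal $\ell^\times$ simultaneously rather than spider-by-spider, but the substance is identical.
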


\begin{proof}
	Take the set, 
\begin{equation*}
	\mathcal{H_{\ell^{ \times }}} = \left \{ S^{(\ell^{ \times })}_{1}, \ldots, S^{(\ell^{ \times })}_{H_{\ell^{ \times }}} \right \},
\end{equation*}
\noindent of spiders in $L_s$ that contain the highest number, $\ell^{ \times }$, of $\times$-loops.  Denote the coefficient of $S^{(\ell^{ \times })}_i$ in $L_s$ as $b_i^{(\ell^{ \times })}$.  Therefore, graphs in $L_s - \sum_{i = 1}^{H_{\ell^{ \times }}} b_i^{(\ell^{ \times })} S^{(\ell^{ \times })}_i$ contain at most $\ell^{ \times } - 1$ $\times$-loops.  Applying Lemma \ref{lem: lem} to each $S^{(\ell^{ \times })}_i$ yields two linear combinations of spiders $Y^{(\ell^{ \times })}$, comprised of graphs containing $\ell^{ \times }$ $\times$-loops but no $\bullet$-loop, and $W^{(\ell^{ \times })}$, comprised of graphs containing fewer than $\ell^{ \times }$ $\times$-loops, such that
\begin{equation} \label{eq: no proof}
	\sum_{i = 1}^{H_{\ell^{ \times }}} \rho^{(0)} \circ \theta \left ( b_i^{(\ell^{ \times })} S^{(\ell^{ \times })}_i \right ) = \rho^{(0)} \circ \theta \left ( Y^{(\ell^{ \times })} - W^{(\ell^{ \times })} \right ).
\end{equation} 
\noindent Furthermore,
\begin{equation} \label{eq: ell - 1}
	L_s - \left ( \sum_{i = 1}^{H_{\ell^{ \times }}} b_i^{(\ell^{ \times })} S^{(\ell^{ \times })}_i + W^{(\ell^{ \times })} \right )
\end{equation}
\noindent is a linear combination of spiders that contain at most $\ell^{ \times } - 1$ $\times$-loops.  Replace $L_s$ with \eqref{eq: ell - 1}, and the above procedure applies, resulting in a linear combination of spiders that contain at most $\ell^{ \times } - 2$ $\times$-loops.  Iterating $\ell^{ \times }$ times, we will obtain
\begin{equation} \label{eq: L s (0)}
	L_s^{(0)} \equiv L_s - \sum_{\alpha = 1}^{\ell^{ \times }} \left ( \sum_{i = 1}^{H_\alpha} b_i^{(\alpha)} \hspace{0.3mm} S^{(\alpha)}_i + W^{(\alpha)} \right ),
\end{equation}
\noindent By construction, $L_s^{(0)}$ is a linear combination of spiders containing no $\times$-loop.  Moreover,
\begin{equation} \label{eq: alpha}
	\rho^{(0)} \circ \theta \left ( Y^{(\alpha)} \right ) = \rho^{(0)} \circ \theta \left ( \sum_{i = 1}^{H_{\alpha}} b_i^{(\alpha)} S^{(\alpha)}_i + W^{(\alpha)} \right ),
\end{equation}
\noindent in analogy with \eqref{eq: no proof}.  Finally, repeat the same procedure one last time with $L^{(0)}_s$ in \eqref{eq: L s (0)} in place of $L_s$.  Since there are no longer any $\times$-loops in $L_{s}^{(0)}$, no $W^{(\alpha)}$'s will arise.  We obtain
\begin{align} \label{eq: 0}
	& L^{(0)}_s = \sum_{i = 1}^{H_{0}} b_i^{(0)} S^{(0)}_i, & \rho^{(0)} \circ \theta \left ( Y^{(0)} \right ) = \rho^{(0)} \circ \theta \left ( \sum_{i = 1}^{H_{0}} b_i^{(0)} S^{(0)}_i \right ).
\end{align}
\noindent Combining \eqref{eq: L s (0)}, \eqref{eq: alpha} and \eqref{eq: 0}, we obtain
\begin{equation*}
	\rho^{(0)} \circ \theta \left ( L_s \right ) = \rho^{(0)} \circ \theta \left ( \sum_{\alpha = 0}^{\ell^{ \times }} Y^{(\alpha)} \right ).
\end{equation*}
\noindent Since the $Y^{(\alpha)}$'s are linear combinations of spiders with no $\bullet$-loops and $\alpha$ $\times$-loops, $\theta(Y^{(\alpha)})$, is a linear combination of loopless $\star$-graphs with deg$(\times) = \pd - \alpha$ and $\nsv = \alpha + 1$.  This means deg$(\times) + \nsv = \pd + 1$ in these loopless $\star$-graphs.  By Definition \ref{def: Medusa}, such graphs are Medusas.  Therefore, 
\begin{equation*}
	L_M = \theta \left ( \sum_{\alpha = 0}^{\ell^{ \times }} Y^{(\alpha)} \right )
\end{equation*}
\noindent gives the desired linear combination of Medusas in $\rho^{(0)} \circ \theta \left ( L_s \right ) = \rho^{(0)} \left ( L_M \right )$.
\end{proof}

\begin{Thm} \label{thm: loopless times basis}
	$\rho^{(0)} \left ( \mathcal{M} \right )$ forms a basis of $\mathcal{R}^{ \times }_\text{\emph{loopless}}$.  
\end{Thm}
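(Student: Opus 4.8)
\textbf{Proof proposal for Theorem \ref{thm: loopless times basis}.}

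The plan is to show two things: first, that $\rho^{(0)}(\mathcal{M})$ spans $\mathcal{R}^{\times}_\text{loopless}$, and second, that the images $\rho^{(0)}(M)$ for $M\in\mathcal{M}$ are linearly independent. The spanning direction is essentially already assembled from the machinery developed in Appendices \ref{sec: pyramid} and \ref{sec: Basis of Relations}. Indeed, given an arbitrary $L^{\times}\in\mathcal{R}^{\times}_\text{loopless}$, Proposition \ref{prop: spider basis} produces a linear combination of spiders $L_s$ with $L^{\times}=\rho^{(0)}\circ\theta(L_s)$, and then Proposition \ref{prop: star pyramid} produces a linear combination of Medusas $L_M$ with $\rho^{(0)}\circ\theta(L_s)=\rho^{(0)}(L_M)$. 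Composing, $L^{\times}=\rho^{(0)}(L_M)\in\mathrm{span}\,\rho^{(0)}(\mathcal{M})$. One should also record that each $\rho^{(0)}(M)$ genuinely lies in $\mathcal{R}^{\times}_\text{loopless}$: this follows from the discussion around \eqref{eq: loopless rel Medusa} (made rigorous by Proposition \ref{prop: pyramid} applied with $L^{\star}$ built from $M$, together with Corollary \ref{prop: spider pyramid}), so that $\rho^{(0)}(M)$ is a genuine $\times$-relation consisting only of loopless $\times$-graphs.

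The harder half — and the step I expect to be the main obstacle — is linear independence of $\{\rho^{(0)}(M): M\in\mathcal{M}\}$. The strategy I would use is to exhibit, for each Medusa $M$, a ``marker'' $\times$-graph that appears in $\rho^{(0)}(M)$ but in $\rho^{(0)}(M')$ for no other Medusa $M'$; a triangular/unique-marker argument then forces all coefficients in a vanishing linear combination to be zero. The natural candidate marker is obtained by attaching \emph{all} of the $\nsv$ pendant edges of $M$ (the edges that carried $\star$-vertices) onto a single $\bullet$-vertex of $M$ — or, more robustly, by recording for each graph in $\rho^{(0)}(M)$ the partition of the $\nsv$ reattached edges among the $\bullet$-vertices, and selecting the graph corresponding to a distinguished partition. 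Because a Medusa has $\mathrm{deg}(\times)=\pd+1-\nsv$ and all $\star$-vertices adjacent to the $\times$-vertex, the data $(\,\text{underlying }\times\text{-graph of }M\ \text{with }\star\text{'s deleted},\ \nsv\,)$ is recoverable from such a marker graph together with the knowledge that no loops survive; one must check that two distinct Medusas cannot produce the same marker. This is where care is needed: distinct Medusas can share many graphs in their $\rho^{(0)}$-expansions (cf.\ the phenomenon in Figure \ref{fig: times relation}), so the marker must be chosen to break exactly these coincidences. A clean way to organize this is to grade by $\nsv$: among all Medusas with a fixed number $\nsv$ of $\star$-vertices, the map ``delete the $\star$-vertices, keep the pendant edges as half-edges at the $\times$-vertex'' is injective, and the image of such an $M$ under $\rho^{(0)}$ contains the graph where all $\nsv$ pendant half-edges are reattached to distinct $\bullet$-vertices (possible once $\nv-1\ge\nsv$, which holds in the regime of interest) with coefficient $1$ by Lemma \ref{lem: star star association}; a Medusa with a different $\nsv$ produces $\times$-graphs of a different loop-count profile in its intermediate stages but — more to the point — cannot produce that particular marker because reconstructing $M$ from the marker fixes $\nsv$ as the number of reattached edges incident to the $\times$-vertex's former neighbours. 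I would then combine the grading with a secondary induction (e.g.\ on $\Delta$, or on the number of multi-edges at the $\times$-vertex) to handle the residual overlaps within a fixed $\nsv$.

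Assembling these pieces: spanning gives $\mathrm{span}\,\rho^{(0)}(\mathcal{M})=\mathcal{R}^{\times}_\text{loopless}$, linear independence gives that $\rho^{(0)}$ restricted to the formal span of $\mathcal{M}$ is injective, and together these say precisely that $\rho^{(0)}(\mathcal{M})$ is a basis of $\mathcal{R}^{\times}_\text{loopless}$. I would present the spanning argument first (it is short, being a one-line composition of Propositions \ref{prop: spider basis} and \ref{prop: star pyramid}), then devote the bulk of the proof to the marker construction and the injectivity argument, stating explicitly any mild genericity hypothesis (such as $\nv$ large enough relative to $\pd$, or working in sufficiently high spatial dimension $\sdim$ so that no ``accidental'' index coincidences collapse distinct graphs) that the earlier appendices have been tacitly assuming.
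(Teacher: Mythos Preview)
Your spanning argument is exactly the paper's: compose Proposition~\ref{prop: spider basis} with Proposition~\ref{prop: star pyramid}. Good.

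The independence half, however, is left as a plan rather than a proof, and the plan is missing the two reductions that make the marker argument actually go through. You correctly identify ``attach all $\nsv$ pendant edges to a single $\bullet$-vertex'' as the natural marker, but then retreat to distributing over distinct $\bullet$-vertices (imposing an unnecessary $\nv-1\ge\nsv$), and finally to an unspecified ``secondary induction'' and possible genericity hypotheses. None of that is needed. The paper's argument is short and clean once you make two preliminary reductions you have not made:
\begin{enumerate}
\item If $M_A$ and $M_B$ have $\times$-vertices of different degrees, then $\rho^{(0)}(M_A)$ and $\rho^{(0)}(M_B)$ share no $\times$-graphs at all (the degree of the $\times$-vertex is preserved under $\rho^{(0)}$). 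So one may assume all Medusas in a putative dependence have the same $\mathrm{deg}(\times)$, hence by the Medusa constraint the same $\nsv$.
\item Among Medusas with the same $\mathrm{deg}(\times)$, cancellation in $\rho^{(0)}$ can only occur between Medusas whose ``bodies'' (the graph obtained by deleting the $\times$-vertex) are identical. So one may further assume the Medusas differ only in the edges incident to the $\times$-vertex.
\end{enumerate}
After these reductions the marker is chosen \emph{extremally}: pick a $\bullet$-vertex $v$, take the subset $\mathcal{H}$ of Medusas in the dependence having the maximal number $E$ of edges from $\times$ to $v$, and for each $\tilde M_i\in\mathcal{H}$ form $\Gamma_i^\times\in\rho^{(0)}(\tilde M_i)$ by sending all $\nsv$ reattached edges to $v$. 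Each $\Gamma_i^\times$ then has $E+\nsv$ edges between $\times$ and $v$, which is strictly more than any $\times$-graph produced by a Medusa outside $\mathcal{H}$; hence cancellation of $\Gamma_i^\times$ forces $\Gamma_i^\times=\Gamma_j^\times$ for some $j\neq i$, which in turn forces $\tilde M_i=\tilde M_j$, a contradiction. No induction, no hypothesis on $\nv$ or $\sdim$, is required.

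Your instinct to use a single-vertex marker was right; what you were missing was the extremal choice of \emph{which} $\bullet$-vertex (the one already maximally connected to $\times$), together with the two orthogonality reductions that isolate the problem to Medusas differing only near the $\times$-vertex.
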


\begin{proof}
    Proposition \ref{prop: spider basis} states that any $L^{ \times } \in \mathcal{R}^{ \times }_{\text{loopless}}$ can be written as $\rho^{(0)} \circ \theta (L_s)$, with $L_s$ a linear combination of spiders.  Proposition \ref{prop: star pyramid} states that there exists a linear combination of Medusas $L_M$, such that $\rho^{(0)} \circ \theta (L_s) = \rho^{(0)} ( L_M )$.  Therefore, $L^{ \times } = \rho^{(0)} ( L_M )$.  This proves the completeness.

   Suppose that there exists a linear combination of Medusas $L_M = \sum_{i = 1}^k \alpha_i M_i \neq 0$, such that $\rho^{(0)} ( L_M ) = 0$.  Note that if two Medusas $M_A$ and $M_B$ have $\times$-vertices of different degrees, then $\rho^{(0)}(M_A)$ and $\rho^{(0)}(M_B)$ do not have $\times$-graphs in common.  Therefore, without loss of generality, we can assume that the $\times$-vertices in all $M_i$ have the same degree.  By the definition of Medusas, $\pd = \text{deg}(\times) + \nsv - 1$, and this implies that they also have the same number of $\star$-vertices.  Furthermore, we can assume that, after deleting the $\times$-vertex, all $M_i$'s are identical since this must be the case in order for the $\rho^{(0)} ( M_i )$ to cancel each other.  Therefore, the only differences between the $M_i$ are in the edges incident to the $\times$-vertex.
   
For any $\bullet$-vertex, $v$, take the set $\mathcal{H} = \{ \tilde{M}_1, \ldots, \tilde{M}_{H} \}$ of distinct Medusas in $L_{M}$, such that each $\tilde{M}_i$ contains the highest number, $E$, of edges that join $v$ and the $\times$-vertex among all Medusas in $L_M$.  For each $\tilde{M}_i$, form a specific graph, $\Gamma_i^{ \times }$, in $\rho^{(0)} ( \tilde{M}_i )$ by deleting all $\nsv$ $\star$-vertices and adding $\nsv$ edges joining the $\times$-vertex and $v$.  Now, $\Gamma^{ \times }_i$ contains $E + \nsv$ edges joining $v$ and the $\times$-vertex.  By construction, the $\Gamma^{ \times }_i$ contain the highest number of edges joining $v$ and the $\times$-vertex, among all $\times$-graphs in $\rho^{(0)} ( L_M )$.  Therefore, the $\Gamma_i^{ \times }$ can only be canceled among $\rho^{(0)} (\tilde{M}_i)$'s.  Thus in order for $\Gamma^{ \times }_i$ to be canceled, we need $\Gamma^{ \times }_i=\Gamma^{ \times }_j$ for some $i\neq j$ (note that this $\Gamma^{ \times }_{i}$ appears in $\rho^{(0)} ( \tilde{M}_i )$ with unit coefficient).  But this means $\tilde{M}_i=\tilde{M}_j$ which is a contradiction since the $\tilde{M}_i$ are distinct.  This proves independence.
\end{proof}

We can use this result to systematically find any $\pd$-invariant.  When $\pd =0$ or $\pd =1$, a simple classification is possible for any $\nv$.  The case $\pd =1$ is studied in detail in Appendix \ref{sec: linear} and $\pd =0$ is dealt with in the following corollary.
\begin{Cor}
Any loopless $0$-invariant is an exact $0$-invariant.
\end{Cor}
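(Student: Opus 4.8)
The plan is to deduce this immediately from Theorem~\ref{thm: loopless times basis}, which identifies $\rho^{(0)}(\mathcal{M})$ as a basis of $\mathcal{R}^{ \times }_\text{loopless}$, together with the simple observation that there are no Medusas at all when $\pd = 0$. First I would recall that, by Corollary~\ref{Cor: kernel}, a loopless $0$-invariant is represented by some $L\in\mathcal{L}_\text{loopless}$ with $\delta_0(L)\in\mathcal{R}^{ \times }_\text{loopless}$; indeed, once $L$ is loopless every $\times$-graph in $\delta_0(L)$ is loopless as well (the remark following Proposition~\ref{prop: loopless_basis}), so $\delta_0(L)\in\mathcal{R}^{ \times }$ already forces $\delta_0(L)\in\mathcal{R}^{ \times }\cap\mathcal{L}^{ \times }_\text{loopless}=\mathcal{R}^{ \times }_\text{loopless}$. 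Thus the whole statement reduces to showing $\mathcal{R}^{ \times }_\text{loopless}=\{0\}$ when $\pd=0$, after which $\delta_0(L)=0$, i.e.\ $L$ is an exact $0$-invariant by definition.

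The key step is then to verify that the set $\mathcal{M}$ of Medusas is empty for $\pd=0$. By Definition~\ref{def: Medusa}, a Medusa is a loopless $\star$-graph in which each of its $\nsv\geq 1$ $\star$-vertices is adjacent to the $\times$-vertex and $\text{deg}(\times)=\pd+1-\nsv$. Each such adjacency contributes an edge incident to the $\times$-vertex, so $\text{deg}(\times)\geq\nsv\geq 1$; but for $\pd=0$ the Medusa condition gives $\text{deg}(\times)=1-\nsv\leq 0$, a contradiction. Hence $\mathcal{M}=\emptyset$, and since $\rho^{(0)}(\mathcal{M})$ spans $\mathcal{R}^{ \times }_\text{loopless}$ by Theorem~\ref{thm: loopless times basis}, we get $\mathcal{R}^{ \times }_\text{loopless}=\{0\}$. (One may also phrase this directly: for $\pd=0$ the $\times$-vertex must already be empty, so no $\star$-vertex can be attached to it, leaving no possible Medusa.)

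I do not expect any real obstacle here — all the substance is in Theorem~\ref{thm: loopless times basis}, which has already been proved. The only point requiring care is the bookkeeping at the level of formal linear combinations: I would spell out explicitly that ``loopless $0$-invariant'' means $L\in\mathcal{L}_\text{loopless}$ with $\delta_0(L)$ a $\times$-relation, and that the vanishing of $\mathcal{R}^{ \times }_\text{loopless}$ then kills $\delta_0(L)$ outright. From there the conclusion that $L$ is an exact $0$-invariant is immediate from the definition of exact $\pd$-invariant (equivalently, consistent with Corollary~\ref{lem: exact}, since $\delta_0(L)=0$ forces every vertex in every graph of $L$ to have degree at least $1$).
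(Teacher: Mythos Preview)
Your proposal is correct and follows essentially the same argument as the paper: both observe that for $\pd=0$ there are no Medusas (since $\nsv\geq 1$ forces $\text{deg}(\times)=1-\nsv\leq 0$ while $\text{deg}(\times)\geq\nsv\geq 1$), so Theorem~\ref{thm: loopless times basis} gives $\mathcal{R}^{\times}_\text{loopless}=\{0\}$ and hence $\delta_0(L)=0$. You spell out a bit more of the bookkeeping around Corollary~\ref{Cor: kernel}, but the substance is identical.
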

\begin{proof}
There are no Medusas in $\pd =0$ (since Medusas require $\nsv \geq 1$ and $\nsv \leq \text{deg}(\times)= \pd +1- \nsv \leq \pd$).  Therefore $\mathcal{R}^{ \times }_\text{loopless}=\{0\}$.
\end{proof}
\noindent Since we have already identified all exact invariants in Corollary \ref{lem: exact}, this classifies all loopless $0$-invariants.

\subsubsection{Lower Bound on Vertex Degree} \label{sec: lower bound}

	Within the loopless basis, we will set a lower bound on the degree of any $\bullet$- or $\times$-vertex in any graph appearing in a consistency equation for a $\pd$-invariant.  
	
\begin{Prop} \label{prop: min plain}
	The vertices of a plain-graph $\Gamma$ in a $\pd$-invariant $L \in \mathcal{L}_\text{loopless}$ are of degree no less than $\frac{1}{2} ( \pd +1 )$.  
\end{Prop}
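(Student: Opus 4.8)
The plan is to argue by contradiction, building on the machinery already established: Corollary \ref{lem: exact} (exact invariants have all vertices of degree $\geq \pd+1$), Proposition \ref{prop: loopless_basis} (loopless plain-graphs realize $\mathcal{L}/\mathcal{R}$), and Theorem \ref{thm: loopless times basis} ($\rho^{(0)}(\mathcal{M})$ is a basis of $\mathcal{R}^{\times}_\text{loopless}$). Suppose $L \in \mathcal{L}_\text{loopless}$ is a $\pd$-invariant and some plain-graph $\Gamma$ appearing in $L$ has a vertex $v$ with $\deg(v) = d < \frac{1}{2}(\pd+1)$. By Corollary \ref{cor: a}, $\delta_\pd(L)$ contains the $\times$-graph $\Gamma^{\times}$ obtained from $\Gamma$ by replacing $v$ with a $\times$-vertex (this $\times$-graph is nonzero since $d < \frac{1}{2}(\pd+1) \leq \pd$, and it is loopless since $\Gamma$ is). Since $L$ is a $\pd$-invariant, $\delta_\pd(L) \in \mathcal{R}^{\times}_\text{loopless}$, so by Theorem \ref{thm: loopless times basis} we may write $\delta_\pd(L) = \rho^{(0)}(L_M)$ for some linear combination of Medusas $L_M = \sum_i \alpha_i M_i$.

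The key step is then to analyze which $\times$-graphs can appear in $\rho^{(0)}(M)$ for a Medusa $M$, and in particular to bound the degree of the $\times$-vertex from below in every such $\times$-graph. First I would recall that for a Medusa, $\deg(\times) = \pd + 1 - \nsv$ and $\deg(\times) \geq \nsv$, hence $\deg(\times) \geq \frac{1}{2}(\pd+1)$ in $M$ itself. Applying $\rho^{(0)}$ deletes all $\nsv$ $\star$-vertices and reattaches their incident edges, keeping only the loopless resulting graphs; since all $\star$-vertices of a Medusa are adjacent to the $\times$-vertex, and we discard any graph where a $\star$-edge is reattached back to the $\times$-vertex as a loop, each reattachment sends that edge to a $\bullet$-vertex. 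Thus in every graph of $\rho^{(0)}(M)$ the $\times$-vertex \emph{loses} all $\nsv$ of its $\star$-edges and gains none, so its degree becomes exactly $\deg(\times) - \nsv = \pd + 1 - 2\nsv$. Wait — this can be smaller than $\frac{1}{2}(\pd+1)$; the correct statement to extract is the one needed, and I would need to combine it with the degree of $\bullet$-vertices in $M$. The cleaner route: in $\rho^{(0)}(M)$, the degree of the $\times$-vertex is $\pd+1-2\nsv$ and the degrees of the $\bullet$-vertices that receive $\star$-edges go \emph{up}. So I should instead argue that the $\times$-graph $\Gamma^{\times}$ in $\delta_\pd(L)$, whose $\times$-vertex has degree $d < \frac{1}{2}(\pd+1)$, must be cancelled or matched by graphs from the $\rho^{(0)}(M_i)$. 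I would look at the set of $\times$-graphs in $\rho^{(0)}(L_M)$ having the \emph{minimum} degree at the $\times$-vertex, and show by a maximality/uniqueness argument (paralleling the independence proof in Theorem \ref{thm: loopless times basis}) that among all $\times$-graphs produced from Medusas with a fixed ``core'' the ones of lowest $\times$-degree cannot cancel unless they coincide — forcing $d \geq \pd+1-2\nsv_{\min}$. The arithmetic $\nsv \leq \frac{1}{2}(\pd+1)$ then yields $d \geq \pd + 1 - (\pd+1) = 0$, which is too weak, so the degree bound must instead come from the $\bullet$-vertices: in $\rho^{(0)}(M)$ the $\bullet$-vertices incident to $\star$-edges in $M$ are exactly the ones whose degree exceeds a bound, and conversely, for $\Gamma^{\times}$ to appear we need its unique associated plain-graph (via $v$) to be realizable inside the reattachment, which forces \emph{some} vertex of $M$ to have low degree — contradicting Proposition \ref{prop: min star}, which is the companion statement bounding Medusa vertex degrees below by $\frac{1}{2}(\pd+1)$.

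Concretely, then, the argument I would write is: assume Proposition \ref{prop: min star} (the analogous lower bound for the vertices of Medusas generating relevant relations) — or prove the two propositions simultaneously by a joint induction on $\Delta$. Given $\Gamma^{\times}$ in $\delta_\pd(L) = \rho^{(0)}(L_M)$, there is a Medusa $M_i$ with $\alpha_i \neq 0$ such that $\Gamma^{\times}$ appears in $\rho^{(0)}(M_i)$; then $M_i$ is obtained from $\Gamma^{\times}$ by deleting some edges at the $\times$-vertex, attaching $\nsv$ new $\star$-vertices to the $\times$-vertex, so the $\times$-vertex of $M_i$ has degree $d + (\text{edges removed}) \leq d + (\text{number of non-}\star\text{ edges freed}) $; more carefully $\deg_{M_i}(\times) = d' + \nsv$ where $d' \le d$ is the degree of $\times$ in $\Gamma^{\times}$ after removing the edges that got freed, but by the Medusa condition $\deg_{M_i}(\times) = \pd+1-\nsv$, giving $d' = \pd+1-2\nsv \le d$. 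Since $\nsv \le \frac12(\pd+1)$ this alone gives $d \ge 0$; the sharper bound comes from also using that the $\bullet$-vertices of $M_i$ have degree $\ge \frac12(\pd+1)$ (Proposition \ref{prop: min star}), and that a $\bullet$-vertex $w$ of $\Gamma$ adjacent to $v$ corresponds to a $\bullet$-vertex of $M_i$ whose degree can only be \emph{lower} than in $\Gamma^{\times}$ (an edge $vw$ may have been freed and reattached elsewhere), so $\deg_\Gamma(w) \ge \deg_{M_i}(w) \ge \frac12(\pd+1)$, and then a counting argument on the total degree $2\Delta = \sum \deg$ together with the constraint at $v$ forces $d \ge \frac12(\pd+1)$, a contradiction. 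The main obstacle is getting this final counting/maximality step airtight: one must show that the low-degree $\times$-graph $\Gamma^{\times}$ genuinely cannot be cancelled among the $\rho^{(0)}(M_i)$ without some $M_i$ violating the Medusa vertex-degree bound, which is exactly the kind of ``maximal-edge-multiplicity'' uniqueness argument used in the independence half of Theorem \ref{thm: loopless times basis}; I expect the cleanest presentation bundles Propositions \ref{prop: min plain} and \ref{prop: min star} into a single statement proved by induction on $\Delta$, with the inductive step removing one edge from a minimal-degree vertex and appealing to $\rho^{(0)}$ acting on Medusas of smaller $\Delta$.
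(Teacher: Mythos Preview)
Your setup is exactly right through the point where you have $\Gamma^{\times}$ (the $\times$-graph obtained by replacing $v$ with a $\times$-vertex) appearing in $\delta_\pd(L)=\rho^{(0)}(L_M)$, hence in $\rho^{(0)}(M_j)$ for some Medusa $M_j$. The breakdown occurs in the sentence ``Thus in every graph of $\rho^{(0)}(M)$ the $\times$-vertex \emph{loses} all $\nsv$ of its $\star$-edges and gains none, so its degree becomes exactly $\deg(\times)-\nsv=\pd+1-2\nsv$.'' This misreads the derivative map. Look again at Definition~\ref{def: der map}: when a $\star$-vertex $v_i^{\star}$ is removed, the new edge joins $v_{j_i}$ to \emph{the vertex that was adjacent to $v_i^{\star}$}. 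In a Medusa that vertex is the $\times$-vertex. So the $\times$-vertex loses one edge (to the deleted $\star$) and simultaneously gains one edge (to $v_{j_i}$); its degree is unchanged unless $v_{j_i}$ is the $\times$-vertex itself, which produces a loop and is discarded by $\rho^{(0)}$. Hence in every $\times$-graph of $\rho^{(0)}(M_j)$ the $\times$-vertex has degree exactly $\deg_{M_j}(\times)=\pd+1-\nsv\geq\tfrac12(\pd+1)$, and you immediately have the contradiction with $\deg_{\Gamma^{\times}}(\times)=d<\tfrac12(\pd+1)$. This is precisely the paper's argument, and it is two lines once the degree is tracked correctly.

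Everything after your ``Wait'' is an attempt to repair a nonexistent hole, and it introduces a genuine problem: you propose invoking Proposition~\ref{prop: min star} to bound the $\bullet$-vertex degrees in $M_j$, but in the paper Proposition~\ref{prop: min star} is proved \emph{using} Proposition~\ref{prop: min plain}, so that appeal is circular. The suggested joint induction on $\Delta$ and the ``maximal-edge-multiplicity'' cancellation argument are both unnecessary once you correct the degree computation.
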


\begin{proof}
	Let $v$ be a vertex in $\Gamma$ of degree less than $\frac{1}{2} ( \pd +1)$.  Let $\Gamma^{ \times }$ in $\delta (\Gamma)$ be the term given by replacing $v$ with a $\times$-vertex.  By Corollary \ref{cor: a}, $\Gamma^{ \times }$ is in $\delta (L)$.  Since $L$ is $\pd$-invariant, by Theorem $\ref{thm: loopless times basis}$, $\delta ( L ) = \rho^{(0)} \left( \sum_i \alpha_i M_{i} \right)$ for $M_{i} \in \mathcal{M}$ and thus, $\Gamma^{ \times }$ is in $\rho^{(0)} ( M_{j} )$ for some $j$.  Since $M_{j}$ is a Medusa, the $\times$-vertex in $M_{j}$ is of degree no less than $\frac{1}{2} ( \pd +1)$.  This contradicts the fact that the $\times$-vertex in $\Gamma^{ \times }$ is of degree less than $\frac{1}{2} ( \pd +1)$.
\end{proof}

\begin{Prop} \label{prop: min star}
	If $L \in \mathcal{L}_\text{loopless}$ is a $\pd$-invariant and $L_M$ is a linear combination of Medusas, such that $\delta L = \rho^{(0)} ( L_M )$, then the vertices of any Medusa in $L_M$ are of degree no less than $\frac{1}{2} ( \pd +1)$.  
\end{Prop}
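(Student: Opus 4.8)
The plan is to show that every $\times$-graph occurring with nonzero coefficient in $\delta L$ has \emph{all} of its vertices of degree at least $\frac{1}{2}(\pd+1)$, and then to derive a contradiction from the hypothetical existence of a low-degree $\bullet$-vertex in a Medusa of $L_M$. First I would nail down the target property. Since $L\in\mathcal{L}_\text{loopless}$, Proposition \ref{prop: min plain} tells us every plain-graph $\Gamma$ in $L$ has all vertices of degree $\geq\frac{1}{2}(\pd+1)$. By Corollary \ref{cor: a}, $\delta(L)$ contains every $\times$-graph in $\delta(\Gamma)$ for each such $\Gamma$, and by Proposition \ref{lem: 1} distinct plain-graphs contribute non-overlapping $\times$-graphs to $\delta L$; hence the $\times$-graphs appearing in $\delta L$ are exactly the (non-null, loopless) graphs obtained from a plain-graph $\Gamma$ in $L$ by turning one $\bullet$-vertex into a $\times$-vertex. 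Since this operation changes no vertex degree, each such $\times$-graph has all vertices of degree $\geq\frac{1}{2}(\pd+1)$. On the other side, the $\times$-vertex of any Medusa automatically has $\text{deg}(\times)=\pd+1-\nsv\geq\frac{1}{2}(\pd+1)$ (because $\nsv\leq\text{deg}(\times)$), so it suffices to bound the degrees of $\bullet$-vertices. Suppose for contradiction that some Medusa $M^{*}$ occurs in $L_M$ with nonzero coefficient and has a $\bullet$-vertex $v$ with $\text{deg}_{M^{*}}(v)=d<\frac{1}{2}(\pd+1)$; recall $\nv\geq 3$, so there is at least one other $\bullet$-vertex.

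Next I would organize the Medusas. Deleting the $\times$-vertex from a Medusa leaves its induced subgraph on $\bullet$-vertices (its ``$\bullet$-part'') together with isolated $\star$-vertices, and every $\times$-graph in $\rho^{(0)}(M)$ has the same $\bullet$-part as $M$ and the same $\text{deg}(\times)=\pd+1-\nsv$. Grouping the Medusas of $L_M$ by $\bullet$-part and by $\nsv$, the $\rho^{(0)}$-images of distinct groups have disjoint support, so $\rho^{(0)}$ applied to the sub-sum $L_M|_{G}$ over $M^{*}$'s group $G$ equals the corresponding piece of $\delta L$. Within $G$ all Medusas share the $\bullet$-part $B$ and the number $s=\nsv$ of $\star$-vertices, so each $M\in G$ is determined by the multiset $\mu_M$, of size $\pd+1-2s$, of $\bullet$-endpoints of the non-$\star$ edges at the $\times$-vertex, and $\text{deg}_M(v)=\text{deg}_B(v)+\text{mult}_{\mu_M}(v)$. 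Computing $\rho^{(0)}$ within $G$ amounts to routing the $s$ $\star$-edges onto $\bullet$-vertices (routing onto the $\times$-vertex makes a loop and is dropped): $\rho^{(0)}(M)=\sum_{\nu}c_{\nu}\,\Gamma^{\times}(B,\mu_M+\nu)$, the sum over multisets $\nu$ of $\bullet$-vertices of size $s$ with positive weights $c_{\nu}$, where $\Gamma^{\times}(B,\lambda)$ is the $\times$-graph with $\bullet$-part $B$ and $\times$-to-$\bullet$ edge multiset $\lambda$.

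Then I would run the extremal step. Let $m^{*}$ be the minimum of $\text{mult}_{\mu_M}(v)$ over $M\in G$ with nonzero coefficient in $L_M$; then $\text{deg}_B(v)+m^{*}\leq\text{deg}_B(v)+\text{mult}_{\mu_{M^{*}}}(v)=d<\frac{1}{2}(\pd+1)$. A $\times$-graph $\Gamma^{\times}(B,\lambda)$ with $\text{mult}_{\lambda}(v)=m^{*}$ can only pick up a contribution from a Medusa $M$ with $\text{mult}_{\mu_M}(v)=m^{*}$ (forced, since $\text{mult}_{\mu_M}(v)\leq\text{mult}_{\lambda}(v)=m^{*}$ while nonvanishing coefficient forces $\text{mult}_{\mu_M}(v)\geq m^{*}$) via a routing $\nu$ avoiding $v$, which is possible precisely because $\nv\geq 3$. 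Writing $\mu_M=m^{*}v+\mu'_M$ for these $M$, one checks that, after deleting $v$, the restriction of $\rho^{(0)}(L_M|_{G})$ to the slice $\{\text{mult}(v)=m^{*}\}$ is exactly $\rho^{(0)}$ applied to the linear combination of the reduced $\star$-graphs $M'$ (with $\times$-to-$\bullet$ multiset $\mu'_M$ and the same $s$ $\star$-vertices), now for the parameter $\pd'=\pd-m^{*}$; each $M'$ is a bona fide Medusa for $\pd'$ since $\text{deg}(\times)=|\mu'_M|+s=\pd'+1-s$, and $\nsv\leq\text{deg}(\times)$ survives because $m^{*}\leq|\mu_{M^{*}}|=\pd+1-2s$. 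By the independence half of Theorem \ref{thm: loopless times basis} applied with $\pd'$, $\rho^{(0)}$ is injective on the span of Medusas; since the reduced $M'$ are pairwise distinct and carry the (nonzero, nonempty) family of coefficients of $M^{*}$'s group at the attained minimum, this slice is nonzero. Hence some $\Gamma^{\times}(B,\lambda)$ with $\text{mult}_{\lambda}(v)=m^{*}$ appears in $\delta L$ with nonzero coefficient, and in it $\text{deg}(v)=\text{deg}_B(v)+m^{*}\leq d<\frac{1}{2}(\pd+1)$, contradicting the first paragraph.

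The hard part will be the cancellation step in the last paragraph: one must make precise that the ``$\text{mult}(v)=m^{*}$ slice'' of $\rho^{(0)}(L_M|_{G})$ is literally an instance of $\rho^{(0)}$ for the smaller parameter $\pd-m^{*}$ on the vertex set with $v$ removed --- tracking the cap $\text{deg}(\times)\leq\pd$ under $\star$-edge routing, the positivity of the routing weights $c_{\nu}$, and the identification of reduced $\times$-graphs --- so that the independence statement of Theorem \ref{thm: loopless times basis} applies verbatim. Everything else is bookkeeping; in particular the degrees of the other $\bullet$-vertices of the surviving $\times$-graph never enter, which is what lets us localize the whole argument at the single vertex $v$.
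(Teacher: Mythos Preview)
Your argument is correct and is genuinely different from the paper's. Both proofs start the same way: assume a Medusa $M^{*}$ in $L_M$ has a $\bullet$-vertex $v$ of degree $d<\tfrac{1}{2}(\pd+1)$, and aim to exhibit a $\times$-graph in $\delta L$ in which $v$ still has low degree, contradicting Proposition~\ref{prop: min plain}. The divergence is in how the putative cancellation is ruled out. The paper fixes an auxiliary $\bullet$-vertex $v_0\neq v$, routes all $\star$-edges of $M^{*}$ to $v_0$ to produce a bad $\times$-graph, and argues that cancelling it forces a second Medusa $M_1$ with strictly larger $\deg(v_0)$; iterating yields an infinite sequence $M_0,M_1,M_2,\ldots$ in $L_M$ with $\deg(v_0)$ strictly increasing, impossible for a finite linear combination. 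This is an elementary ascent argument that does not invoke Theorem~\ref{thm: loopless times basis}.

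Your route is more structural: you group by $\bullet$-part and $\nsv$, take the minimal multiplicity $m^{*}$ of $v$ among the $\times$-to-$\bullet$ edge multisets, and show the $\{\text{mult}(v)=m^{*}\}$ slice of $\rho^{(0)}(L_M|_G)$ is, after deleting $v$, literally $\rho^{(0)}$ for parameter $\pd'=\pd-m^{*}$ on $\nv-1$ vertices applied to a nonzero combination of reduced Medusas; the independence half of Theorem~\ref{thm: loopless times basis} then forces the slice to be nonzero. The checks you flag (that $\pd'\geq 1$ since $m^{*}\leq \pd+1-2s\leq \pd-1$, that $\deg(\times)=\pd'+1-s\leq \pd'$ so nothing nullifies, that routing weights and graphs match under the bijection, and that the reduced Medusas are distinct because the $\mu'_M$ are) all go through. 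What your approach buys is a clean reduction to a theorem already in hand; what the paper's approach buys is that it is self-contained and avoids re-instantiating Theorem~\ref{thm: loopless times basis} at shifted parameters.
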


\begin{proof} Suppose that a Medusa $M_0$ in $L_M$ has a vertex, $v$, of degree lower than $\frac{1}{2} ( \pd +1)$.  For an interaction term, there exists at least one other vertex $v_0$ in $M_0$ to which the $\star$-vertices can be joined such that $v_0$ is neither $v$ nor the $\times$-vertex.  The resulting graph in $\rho^{(0)} ( M_0)$ is a $\times$-graph $\Gamma_0^{ \times }$ with vertex $v$ of degree lower than $\frac{1}{2} ( \pd +1)$, and thus, by Proposition $\ref{prop: min plain}$, does not appear in $\delta ( L )$.  Therefore, there must exist another Medusa $M_1$ in $L_M$ such that $M_0 \neq M_1$ and $\Gamma_0^{ \times }$ is absent from $\rho^{(0)} ( M_0 - M_1 )$.  Therefore, $M_1$ must produce $\Gamma_0^{ \times }$ after deleting the $\star$-vertices and adding the same number, $\nsv$, of edges joining the $\times$-vertex and the other vertices.  Since $M_0 \neq M_1$ at least one of these other vertices is not $v_0$, so the degree of $v_0$ is larger in $M_1$ than in $M_0$.  Now, form another $\times$-graph, $\Gamma_1^{ \times }$, in $\rho^{(0)} ( M_1 )$ by deleting the $\star$-vertices in $M_1$ and adding the same number of edges joining the $\times$-vertex and $v_0$.  $\Gamma_1^{ \times }$ again contains $v$ with degree lower than $\frac{1}{2} ( \pd +1)$.  This $\times$-graph must be canceled by introducing a third Medusa $M_2$.  This procedure can be iterated to obtain an infinite sequence of Medusas $M_0, M_1, M_2, \dots$ in $L_M$, with the number of edges incident to $v_0$ in $M_i$ monotonically increasing with $i$.  But this is impossible since the Medusas have a fixed finite number of edges and thus we have a contradiction.  
\end{proof}

\subsection{Linear Shift Symmetry, Trees and Galileons} \label{sec: linear} 

	For $\pd =1$, Medusas have a very limited configuration: A Medusa $M$ has two subgraphs, which are disconnected from each other, one of which is
$
\begin{minipage}{1.2cm}
\begin{tikzpicture}
	\draw [thick] (0,0) -- (0.6,0);
        	\node at (0,0) {\scalebox{0.8}{$\bigstar$}};
        	\filldraw [white] (0.6,0) circle [radius=0.115];
	\draw [thick] (0.6,0) circle [radius=0.115];
	\node at (0.6,0) {$\times$};
\end{tikzpicture}
\end{minipage}
$, and the other of which is a loopless plain-graph.  This strongly restricts the possible associations between graphs.  In addition, for $\pd =1$, $\rho (M ) = \rho^{(0)} ( M )$ for any Medusa $M$.
\begin{Prop} \label{lem: 2}
	For $\pd =1$, a loopless $\times$-graph is associated with at most one Medusa.  
\end{Prop}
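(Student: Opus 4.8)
The plan is to exploit the very rigid structure of $\pd=1$ Medusas. First I would recall that for $\pd=1$, a Medusa $M$ has $\text{deg}(\times)=\pd+1-\nsv$, and since $\nsv\geq 1$ we must have $\text{deg}(\times)\leq 1$, forcing $\nsv=1$ and $\text{deg}(\times)=1$. Thus every $\pd=1$ Medusa consists of exactly two connected components: the two-vertex $\star$--$\times$ component $\begin{minipage}{1.2cm}\begin{tikzpicture}\draw [thick] (0,0) -- (0.6,0);\node at (0,0) {\scalebox{0.8}{$\bigstar$}};\filldraw [white] (0.6,0) circle [radius=0.115];\draw [thick] (0.6,0) circle [radius=0.115];\node at (0.6,0) {$\times$};\end{tikzpicture}\end{minipage}$, and a loopless plain-graph $\Gamma_0$ on the remaining $\nv-1$ $\bullet$-vertices. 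In particular the $\times$-vertex of $M$ is isolated from all $\bullet$-vertices in $M$, and its only edge goes to the $\star$-vertex.

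Next I would describe the associated $\times$-graphs concretely. Applying $\rho$ (which equals $\rho^{(0)}$ here since no loops can be produced when only one $\star$-vertex is removed and it is attached to the $\times$-vertex) to $M$ means deleting the $\star$-vertex and reattaching its edge to each vertex in turn. Reattaching to the $\times$-vertex itself would produce a loop (killed by $\rho^{(0)}$, or anyway not loopless), so the loopless $\times$-graphs in $\rho^{(0)}(M)$ are exactly those obtained by joining the $\times$-vertex to one of the $\nv-1$ $\bullet$-vertices of $\Gamma_0$. Such a $\times$-graph $\Gamma^{\times}$ therefore has a distinguished $\times$-vertex of degree exactly $1$, whose unique neighbour lies in a plain-graph; and crucially, deleting the $\times$-vertex from $\Gamma^{\times}$ recovers $\Gamma_0$ together with one now-isolated vertex (the former neighbour), i.e. $\Gamma_0$ plus an isolated vertex.

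The key step is then a reconstruction/uniqueness argument: given a loopless $\times$-graph $\Gamma^{\times}$ that arises from some $\pd=1$ Medusa, I claim the Medusa is determined. Indeed, if $\Gamma^{\times}\in\rho^{(0)}(M)$ for a $\pd=1$ Medusa $M$, then by the above $\Gamma^{\times}$ must have $\text{deg}(\times)=1$, and $M$ is obtained from $\Gamma^{\times}$ by the reverse operation: delete the unique edge at the $\times$-vertex, and attach a fresh $\star$-vertex to the $\times$-vertex. This reverse operation is canonical — it uses only the data of $\Gamma^{\times}$ — so it produces a unique $M$. Hence no loopless $\times$-graph can belong to $\rho^{(0)}$ of two distinct $\pd=1$ Medusas, which is precisely the claimed statement. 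I would also remark (for completeness, matching Lemma~\ref{lem: star star association}) that such $\Gamma^{\times}$ appears in $\rho^{(0)}(M)$ with coefficient $1$.

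The main obstacle, and the only real subtlety, is handling the loop-producing term and being careful about whether the $\times$-vertex neighbour in $\Gamma^{\times}$ is genuinely recoverable as ``isolated-after-deletion'' — one must check that $\Gamma_0$ contains no edge that could be confused with the reattached edge, which follows because in $M$ the $\times$-vertex has degree $1$ with its sole edge going to the $\star$-vertex, so $\Gamma_0$ literally does not touch the $\times$-vertex. A secondary point is the edge case $\nv=2$ (or very small $\Gamma_0$), but for interaction terms $\nv\geq 3$ and in any case the argument is uniform. Everything else is bookkeeping about the degree-$1$ $\times$-vertex, so the proof should be short.
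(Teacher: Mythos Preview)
Your proposal is correct and follows essentially the same approach as the paper: both arguments observe that a $\pd=1$ Medusa has $\nsv=1$ and $\text{deg}(\times)=1$, so any associated loopless $\times$-graph has $\text{deg}(\times)=1$, and the Medusa is uniquely reconstructed by deleting the unique edge at the $\times$-vertex and attaching a $\star$-vertex there instead. Your write-up is more detailed (and contains a small slip in the aside about ``$\Gamma_0$ plus an isolated vertex'' --- deleting the $\times$-vertex simply yields $\Gamma_0$), but the substance is the same.
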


\begin{proof}
	If the $\times$-vertex in the loopless $\times$-graph has degree zero, then it cannot be associated with a Medusa.  If the $\times$-vertex in the loopless $\times$-graph has degree one, then the loopless $\times$-graph takes the form of a $\times$-vertex and an edge joining this $\times$-vertex to a vertex in a loopless plain-graph, $\Gamma$.  The unique Medusa associated with this $\times$-graph is given by deleting the edge incident to the $\times$-vertex and adding a $\star$-vertex together with an edge joining the $\star$-vertex and the $\times$-vertex.
\end{proof}

\noindent Following the same logic as in the proofs of Corollary \ref{prop: 1} and \ref{cor: a}, we obtain:

\begin{Cor} \label{prop: 3}
	For $\pd =1$, any two associated Medusas are identical to each other.  
\end{Cor}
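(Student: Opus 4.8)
The plan is to follow the same pattern used in the proof of Corollary~\ref{prop: 1}, where ``associated implies identical'' was deduced from a uniqueness statement (Proposition~\ref{lem: 1}). Here the relevant uniqueness statement is Proposition~\ref{lem: 2}: for $\pd =1$, a loopless $\times$-graph is associated with at most one Medusa. So I would argue by contradiction. Suppose $M_A$ and $M_B$ are two Medusas that are associated with each other but are not identical. By the definition of association between $\star$-graphs, since $M_A$ and $M_B$ are not identical they must be associated with a common $\times$-graph $\Gamma^{\times}$; that is, $\Gamma^{\times}$ is contained in both $\rho(M_A)$ and $\rho(M_B)$.

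The only point requiring a little care is to check that $\Gamma^{\times}$ is a \emph{loopless} $\times$-graph, so that Proposition~\ref{lem: 2} is applicable. This is immediate from the $\pd =1$ structure of Medusas: the constraints $\text{deg}(\times)=\pd +1-\nsv$ and $\text{deg}(\times)\geq \nsv$ force $\nsv =1$ and $\text{deg}(\times)=1$, and as already noted in the text preceding Proposition~\ref{lem: 2} this gives $\rho(M)=\rho^{(0)}(M)$ for every Medusa $M$ when $\pd =1$ (reattaching the freed $\star$-edge to the $\times$-vertex would raise its degree to $2>\pd$, yielding the null graph, so all surviving graphs in $\rho(M_A)$ and $\rho(M_B)$ are loopless). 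In particular $\Gamma^{\times}$ is loopless.

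Now Proposition~\ref{lem: 2} says $\Gamma^{\times}$ is associated with at most one Medusa; since it is associated with both $M_A$ and $M_B$, we conclude $M_A=M_B$, contradicting the assumption, and hence any two associated Medusas are identical. I do not anticipate a genuine obstacle: the entire content of the corollary is packaged in Proposition~\ref{lem: 2}, and the argument is a verbatim transcription of the Corollary~\ref{prop: 1}/Corollary~\ref{cor: a} reasoning with ``plain-graph'' replaced by ``Medusa'' and Proposition~\ref{lem: 1} replaced by Proposition~\ref{lem: 2}, the only added ingredient being the one-line loopless-ness check above.
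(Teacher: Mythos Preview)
Your proposal is correct and matches the paper's approach exactly: the paper does not give an explicit proof but simply states that the corollary follows ``following the same logic as in the proofs of Corollary~\ref{prop: 1} and~\ref{cor: a},'' which is precisely the contradiction argument you wrote out, using Proposition~\ref{lem: 2} in place of Proposition~\ref{lem: 1}. Your additional loopless-ness check is a sensible clarification of a point the paper treats as implicit (it notes $\rho(M)=\rho^{(0)}(M)$ for $\pd=1$ just before Proposition~\ref{lem: 2}).
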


\begin{Cor} \label{medusa inclusion under rho}
	For $\pd =1$, if $M$ is a Medusa in a sum of Medusas $L_M$, then $\rho ( L_M )$ contains all graphs in $\rho ( M )$.
\end{Cor}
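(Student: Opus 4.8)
The plan is to prove Corollary \ref{medusa inclusion under rho} by the same strategy that underlies Corollary \ref{cor: a}: exploit a uniqueness-of-association statement to show that no term can drop out under cancellation. Concretely, suppose $M$ appears in $L_M$ with some coefficient, which we may normalize to $1$ by dividing $L_M$ through (this does not affect the conclusion). Write $L_M = M + L_M'$, where $L_M'$ is a linear combination of Medusas none of which is identical to $M$. The goal is to show that every $\times$-graph $\Gamma^{\times}$ appearing in $\rho(M)$ still appears in $\rho(L_M) = \rho(M) + \rho(L_M')$ with nonzero coefficient, i.e. that $\Gamma^{\times}$ cannot be cancelled by anything in $\rho(L_M')$.

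The key input is Proposition \ref{lem: 2}, which says that for $\pd=1$ a loopless $\times$-graph is associated with at most one Medusa. First I would recall that for $\pd=1$ every Medusa is loopless and $\rho$ coincides with $\rho^{(0)}$ on Medusas, so all $\times$-graphs arising here are loopless; in particular Proposition \ref{lem: 2} applies to each $\Gamma^{\times}$ in $\rho(M)$. Next, by the definition of association (a $\star$-graph $\Gamma^{\star}$ and a $\times$-graph $\Gamma^{\times}$ are associated iff $\Gamma^{\times}$ is contained in $\rho(\Gamma^{\star})$), the Medusa $M$ is associated with $\Gamma^{\times}$. By Proposition \ref{lem: 2}, $M$ is the \emph{only} Medusa associated with $\Gamma^{\times}$. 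Hence no Medusa occurring in $L_M'$ is associated with $\Gamma^{\times}$, which means $\Gamma^{\times}$ is not contained in $\rho(N)$ for any Medusa $N$ in $L_M'$, and therefore $\Gamma^{\times}$ does not appear in $\rho(L_M')$ at all. Consequently the coefficient of $\Gamma^{\times}$ in $\rho(L_M) = \rho(M) + \rho(L_M')$ equals its coefficient in $\rho(M)$, which is nonzero by hypothesis. Since $\Gamma^{\times}$ was an arbitrary graph in $\rho(M)$, this shows $\rho(L_M)$ contains all graphs in $\rho(M)$.

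The only place that requires a little care—and the step I'd flag as the main obstacle—is making sure the normalization argument is clean when $M$ appears in $L_M$ with a coefficient that could be negative or could combine with contributions from $L_M'$; but since $\rho$ is linear and the graphs in $\rho(M)$ are, by Proposition \ref{lem: 2}, associated with $M$ alone, there is simply nothing in $\rho(L_M')$ occupying those slots, so no cancellation or reinforcement across the decomposition is possible and the coefficients in $\rho(M)$ transfer verbatim (up to the overall nonzero scalar one divided by). One should also note explicitly that ``$\rho(L_M)$ contains all graphs in $\rho(M)$'' is to be read as: every graph with nonzero coefficient in $\rho(M)$ has nonzero coefficient in $\rho(L_M)$—which is exactly what the coefficient-matching argument delivers. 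This parallels verbatim the proof of Corollary \ref{cor: a}, with Proposition \ref{lem: 2} playing the role that Proposition \ref{lem: 1} plays there, and Corollary \ref{prop: 3} available as the analogue of Corollary \ref{prop: 1} if one prefers to phrase the uniqueness step in terms of associated Medusas being identical.
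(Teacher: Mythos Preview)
Your proposal is correct and is precisely the argument the paper intends: the paper does not write out a separate proof but simply states that Corollaries~\ref{prop: 3} and~\ref{medusa inclusion under rho} follow ``the same logic as in the proofs of Corollary~\ref{prop: 1} and~\ref{cor: a}'', with Proposition~\ref{lem: 2} replacing Proposition~\ref{lem: 1}. You have spelled out exactly that parallel.
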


\begin{Cor} \label{cor: c}
	For $\pd =1$, if a Medusa $M$ is associated with a plain-graph in a 1-invariant, $L \in \mathcal{L}_{loopless}$, then $\delta(L)$ contains all graphs in $\rho (M)$.
\end{Cor}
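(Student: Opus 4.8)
The plan is to chain together the structural facts about $\pd=1$ Medusas established just above, using that a single graph of $\rho(M)$ is already forced into $\delta(L)$ and then bootstrapping to all of $\rho(M)$. First I would unpack the hypothesis: since the Medusa $M$ is associated with a plain-graph $\Gamma$ that appears in $L$, there is a $\times$-graph $\Gamma^{\times}$ with $v(\Gamma^{\times})=\Gamma$ (so $\Gamma^{\times}\in\delta(\Gamma)$) and with $\Gamma^{\times}$ contained in $\rho(M)$. Because $\Gamma$ occurs in $L$, Corollary~\ref{cor: a} gives $\Gamma^{\times}\in\delta(L)$; note also that $\Gamma^{\times}$ is loopless, since $L\in\mathcal{L}_\text{loopless}$ forces $\delta(L)$ to consist of loopless graphs.

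Next I would invoke the basis theorem. As $L$ is a $1$-invariant in $\mathcal{L}_\text{loopless}$, $\delta(L)\in\mathcal{R}^{\times}_\text{loopless}$, so Theorem~\ref{thm: loopless times basis} lets us write $\delta(L)=\rho^{(0)}(L_M)$ for a linear combination $L_M=\sum_i\alpha_i M_i$ of distinct Medusas. Here I would record the $\pd=1$ simplification: the defining relation $\text{deg}(\times)=\pd+1-\nsv$ together with $\text{deg}(\times)\geq\nsv$ forces $\nsv=1$ for every Medusa, so $\rho=\rho^{(0)}$ on $\mathcal{M}$ and, by Lemma~\ref{lem: star star association}, each $M_i$ contributes any $\times$-graph in $\rho(M_i)$ with coefficient $1$.

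The key step is then to show $M$ occurs in $L_M$ with nonzero coefficient. By Proposition~\ref{lem: 2} the loopless $\times$-graph $\Gamma^{\times}$ is associated with at most one Medusa, and since $\Gamma^{\times}\in\rho(M)$ that Medusa is $M$; by Corollary~\ref{prop: 3} any Medusa in $L_M$ whose image under $\rho$ contains $\Gamma^{\times}$ is literally $M$, each with unit coefficient, so the coefficient of $\Gamma^{\times}$ in $\rho(L_M)=\delta(L)$ equals the coefficient of $M$ in $L_M$ — which is nonzero precisely because $\Gamma^{\times}\in\delta(L)$. With $M$ a Medusa of nonzero coefficient in $L_M$, Corollary~\ref{medusa inclusion under rho} delivers the conclusion that $\delta(L)=\rho(L_M)$ contains every graph in $\rho(M)$. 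I do not anticipate a real obstacle, as this is a corollary assembled from Proposition~\ref{lem: 2}, Corollaries~\ref{cor: a}, \ref{prop: 3}, \ref{medusa inclusion under rho}, Lemma~\ref{lem: star star association}, and Theorem~\ref{thm: loopless times basis}; the only point needing care is the no-cancellation argument for $\Gamma^{\times}$ in the expansion $\rho^{(0)}(L_M)$, which is exactly what the at-most-one-Medusa property (Proposition~\ref{lem: 2}) supplies.
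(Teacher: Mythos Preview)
Your proof is correct and follows essentially the same chain as the paper's own argument: from the association, extract a common $\Gamma^{\times}$ in $\delta(\Gamma)\cap\rho(M)$; use Corollary~\ref{cor: a} to place $\Gamma^{\times}$ in $\delta(L)$; write $\delta(L)=\rho(L_M)$ via Theorem~\ref{thm: loopless times basis}; use Proposition~\ref{lem: 2} to force $M$ into $L_M$; then conclude via Corollary~\ref{medusa inclusion under rho}. You have simply made explicit the no-cancellation step (via Lemma~\ref{lem: star star association} and the $\pd=1$ identity $\rho=\rho^{(0)}$) that the paper leaves implicit.
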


\begin{proof}
	If $M$ is associated with $\Gamma$ in the 1-invariant $L$, then there exists $\Gamma^{ \times }$ shared by $\delta ( \Gamma )$ and $\rho ( M)$.  Corollary $\ref{cor: a}$ implies $\Gamma^{ \times }$ is in $\delta (L)$.  Theorem $\ref{thm: loopless times basis}$ implies $\delta (L) = \rho ( L_M )$ where $L_M$ is a sum of Medusas.  Therefore, $\Gamma^{ \times }$ is in $\rho ( L_M )$.  Proposition $\ref{lem: 2}$ implies $M$ is in $L_M$.  Corollary $\ref{medusa inclusion under rho}$ implies $\rho ( L_M )$ contains $\rho ( M )$ and thus $\delta (L)$ contains $\rho ( M )$.
\end{proof}

\subsubsection{Minimal Invariants}

\begin{Def} \label{def: minimal}
	A nonzero $\pd$-invariant $L_{\nv , \Delta}$ is \emph{minimal} if there is no nonzero $\pd$-invariant $L'_{\nv , \Delta'}$ for any $\Delta' < \Delta$.  For a given $\pd$ and $\nv$, let $\Delta_{\text{min}}$ denote the minimum $\Delta$ for which a $\pd$-invariant exists.
\end{Def}

\noindent Now, we prove that a minimal 1-invariant is a sum of trees.  We start with the following lemma.

\begin{Lem} [Leaf Shuffling] \label{lem: ls}
	If a graph $\Gamma_A$ that contains a leaf $v$ appears in a 1-invariant $L$, then any graph $\Gamma_B$ that contains $v$ as a leaf and satisfies $\Gamma_B - v = \Gamma_A - v$ is also contained in $L$.
\end{Lem}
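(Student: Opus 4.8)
The plan is to prove this by analyzing the graphical structure of the consistency equation for the 1-invariant $L$, using the special features of the $\pd=1$ case established above (Propositions~\ref{lem: 2}, \ref{lem: 1}, Corollaries~\ref{cor: a}, \ref{cor: c}). The key observation is that a leaf $v$ in $\Gamma_A$ corresponds to a vertex of degree $1$, and by Proposition~\ref{prop: min plain} with $\pd=1$ this is allowed (since $\frac{1}{2}(\pd+1)=1$). When we replace the leaf $v$ with a $\times$-vertex, we obtain a $\times$-graph $\Gamma_A^{\times}$ in $\delta_1(\Gamma_A)$, and by Corollary~\ref{cor: a}, $\Gamma_A^{\times}$ appears in $\delta_1(L)$. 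Now $\Gamma_A^{\times}$ has the form of a $\times$-vertex of degree one joined by an edge to the rest of the graph, which is exactly $\Gamma_A - v$; this $\times$-graph is therefore associated with a unique Medusa $M$ (by Proposition~\ref{lem: 2}), obtained by detaching that single edge from its $\bullet$-endpoint, reattaching it to a fresh $\star$-vertex, and joining the $\star$-vertex to the $\times$-vertex. The disconnected-$\star$-vertex-plus-$\times$-vertex piece of $M$ sits alongside the loopless plain-graph $\Gamma_A - v$.

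First I would apply Corollary~\ref{cor: c}: since $M$ is associated with the plain-graph $\Gamma_A$ in the 1-invariant $L$, the variation $\delta_1(L)$ contains \emph{all} graphs in $\rho(M)$. Next I would compute $\rho(M)$ explicitly. Undoing the single $\star$-vertex of $M$ means summing over all ways of reattaching that loose edge to one of the $\nv$ vertices (including the $\times$-vertex itself, which produces a looped graph that $\rho^{(0)}$ discards). Reattaching the loose edge to a $\bullet$-vertex $w$ of $\Gamma_A - v$ produces exactly the $\times$-graph $\Gamma_B^{\times}$ in which the $\times$-vertex is a leaf hanging off $w$, while the remaining structure is $\Gamma_A - v$. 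In other words, $\rho^{(0)}(M) = \sum_{w} \Gamma_{B,w}^{\times}$, where $\Gamma_{B,w}^{\times}$ ranges over precisely those $\times$-graphs whose underlying structure after removing the $\times$-leaf is $\Gamma_A - v$. So every such $\Gamma_{B,w}^{\times}$ lies in $\delta_1(L)$.

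Then I would run the argument backwards to conclude that each corresponding plain-graph $\Gamma_B$ must appear in $L$. Given any $\Gamma_B$ with $v$ a leaf and $\Gamma_B - v = \Gamma_A - v$, replacing the leaf $v$ by a $\times$-vertex yields one of the $\times$-graphs $\Gamma_{B,w}^{\times}$ just shown to lie in $\delta_1(L)$. By Proposition~\ref{lem: 1}, $\Gamma_{B,w}^{\times}$ is associated with exactly one plain-graph, namely $v(\Gamma_{B,w}^{\times}) = \Gamma_B$, so by Corollary~\ref{cor: d}, $L$ must contain $\Gamma_B$. To pin down the coefficient and show the statement holds with $\Gamma_B$ appearing with the \emph{same} coefficient as $\Gamma_A$, I would observe that each $\Gamma_{B,w}^{\times}$ (with $w$ a $\bullet$-vertex) appears in $\rho^{(0)}(M)$ with coefficient $1$ by Lemma~\ref{lem: star star association}, and that $M$ is the unique Medusa associated with it, so the coefficient of $\Gamma_{B,w}^{\times}$ in $\delta_1(L)$ equals the coefficient of $\Gamma_B$ in $L$, which in turn equals the coefficient of $M$ in the Medusa expansion of $\delta_1(L)$, which equals the coefficient of $\Gamma_A$.

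The main obstacle I anticipate is the bookkeeping around the vertex $w=$ the original $\bullet$-neighbor of $v$ in $\Gamma_A$: reattaching the loose edge there recovers $\Gamma_A$ itself, so one must be careful that the argument treats $\Gamma_A$ and the new $\Gamma_B$'s on equal footing and does not double-count. A secondary subtlety is ensuring that all $\Gamma_{B,w}^{\times}$ are genuinely loopless and hence survive $\rho^{(0)}$ — this needs $w \neq$ the $\times$-vertex, which is automatic since $w$ ranges over $\bullet$-vertices of $\Gamma_A - v$. Once these are handled, the proof is essentially a matter of chasing the unique-association properties in both directions.
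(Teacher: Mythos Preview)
Your proposal is correct and follows essentially the same route as the paper: pass from $\Gamma_A$ to $\Gamma_A^{\times}$ by replacing the leaf $v$ with a $\times$-vertex, identify the unique $\pd=1$ Medusa $M$ associated with it, invoke Corollary~\ref{cor: c} to conclude that $\delta_1(L)$ contains all of $\rho(M)$, and then apply Corollary~\ref{cor: d} to recover $\Gamma_B$ in $L$. The paper's proof is simply a terse pictorial version of this chain $\Gamma_A \Rightarrow \Gamma_A^{\times} \Rightarrow M \Rightarrow \Gamma_B^{\times} \Rightarrow \Gamma_B$; your added discussion of coefficients and the bookkeeping around $w$ is correct but goes beyond what the lemma as stated requires.
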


\begin{proof}
	Depicted below is a series of graphs, which are all associated with each other:
	
\begin{center}
\begin{minipage}{2.7cm}
\begin{tikzpicture}
	\draw [thick] (0.5,0.5) circle [radius=0.5];
	\filldraw (0.5,0.7) circle [radius=0.08];
	\filldraw (1.6,0.5) circle [radius=0.08];
	\filldraw (0.5,0.3) circle [radius=0.08];
	\draw [thick] (1.6,0.5) -- (0.5,0.7);
	\node [above] at (1.6,0.6) {$v$};
	\node [below] at (0.5,0) {$\Gamma^{\phantom{\times}}_A$};
	\node at (2.2,0.5) {$\Rightarrow$};
\end{tikzpicture}
\end{minipage}
\begin{minipage}{2.7cm}
\begin{tikzpicture}
	\draw [thick] (0.5,0.5) circle [radius=0.5];
	\filldraw (0.5,0.7) circle [radius=0.08];
	\filldraw (0.5,0.3) circle [radius=0.08];
	\draw [thick] (1.6,0.5) -- (0.5,0.7);
	\filldraw [white] (1.6,0.5) circle [radius=0.115];
	\draw [thick] (1.6,0.5) circle [radius=0.115];
	\node at (1.6,0.5) {$\times$};
	\node [above] at (1.6,0.6) {$v$};
	\node [below] at (0.5,0) {$\Gamma_A^\times$};
	\node at (2.2,0.5) {$\Rightarrow$};
\end{tikzpicture}
\end{minipage}
\begin{minipage}{2.9cm}
\begin{tikzpicture}
	\draw [thick] (0.5,0.5) circle [radius=0.5];
	\filldraw (0.5,0.7) circle [radius=0.08];
	\filldraw (0.5,0.3) circle [radius=0.08];
	\draw [thick] (1.8,0.5) -- (1.3,0.5);
	\node at (1.3,0.5) {\scalebox{0.8}{$\bigstar$}};
	\filldraw [white] (1.8,0.5) circle [radius=0.115];
	\draw [thick] (1.8,0.5) circle [radius=0.115];
	\node at (1.8,0.5) {$\times$};
	\node [above] at (1.8,0.6) {$v$};
	\node [below] at (0.5,0) {$M^{\phantom{\times}}_{\phantom{A}}$};
	\node at (2.4,0.5) {$\Rightarrow$};
\end{tikzpicture}
\end{minipage}
\begin{minipage}{2.7cm}
\begin{tikzpicture}
	\draw [thick] (0.5,0.5) circle [radius=0.5];
	\filldraw (0.5,0.7) circle [radius=0.08];
	\filldraw (0.5,0.3) circle [radius=0.08];
	\draw [thick] (1.6,0.5) -- (0.5,0.3);
	\filldraw [white] (1.6,0.5) circle [radius=0.115];
	\draw [thick] (1.6,0.5) circle [radius=0.115];
	\node at (1.6,0.5) {$\times$};
	\node [above] at (1.6,0.6) {$v$};
	\node [below] at (0.5,0) {$\Gamma_B^\times$};
	\node at (2.2,0.5) {$\Rightarrow$};
\end{tikzpicture}
\end{minipage}
\begin{minipage}{2.7cm}
\begin{tikzpicture}
	\draw [thick] (0.5,0.5) circle [radius=0.5];
	\filldraw (0.5,0.7) circle [radius=0.08];
	\filldraw (1.6,0.5) circle [radius=0.08];
	\filldraw (0.5,0.3) circle [radius=0.08];
	\draw [thick] (1.6,0.5) -- (0.5,0.3);
	\node [above] at (1.6,0.6) {$v$};
	\node [below] at (0.5,0) {$\Gamma^{\phantom{\times}}_B$};
\end{tikzpicture}
\end{minipage}
\end{center}

\noindent where the circles denote subgraphs.  In particular, both $\Gamma_A$ and $\Gamma_B$ are associated with the Medusa $M$.  By Corollary \ref{cor: c}, $\delta L$ contains all graphs in $\rho (M)$.  Furthermore, by Corollary \ref{cor: d}, $L$ contains both $\Gamma_A$ and $\Gamma_B$.
\end{proof}

\noindent We call the procedure that relates $\Gamma_B$ to $\Gamma_A$ described in the above lemma \emph{leaf shuffling}.  The corollaries below follow immediately.  

\begin{Cor} \label{cor: rp}
	If a plain-graph $\Gamma_A$ is contained in a 1-invariant $L$, and $\Gamma_B$ is formed from $\Gamma_A$ by shuffling leaves, then $\Gamma_B$ is also contained in $L$.
\end{Cor}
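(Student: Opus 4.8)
The plan is to derive Corollary~\ref{cor: rp} by iterating Lemma~\ref{lem: ls}. First I would spell out what ``$\Gamma_B$ is formed from $\Gamma_A$ by shuffling leaves'' means: there is a finite chain of graphs $\Gamma_A = \Gamma^{(0)}, \Gamma^{(1)}, \ldots, \Gamma^{(m)} = \Gamma_B$ in which each $\Gamma^{(i+1)}$ arises from $\Gamma^{(i)}$ by a single elementary leaf shuffle, that is, there is a leaf $v$ of $\Gamma^{(i)}$ which is also a leaf of $\Gamma^{(i+1)}$ with $\Gamma^{(i+1)} - v = \Gamma^{(i)} - v$. (If one wishes to move several leaves ``at once'', that operation trivially factors as a sequence of such single-leaf moves, so no generality is lost in assuming the chain form.)

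Next I would run an induction on the length $m$ of the chain. The base case $m = 0$ is immediate, since then $\Gamma_B = \Gamma_A$, which is assumed to lie in $L$. For the inductive step, suppose $\Gamma^{(i)}$ is contained in $L$. An elementary leaf shuffle merely relocates the unique edge incident to the leaf $v$, re-attaching it to a vertex of $\Gamma^{(i)} - v$, hence to a vertex distinct from $v$; therefore $\Gamma^{(i+1)}$ has the same ($\bullet$-)vertex set as $\Gamma^{(i)}$, creates no loop, and is again a loopless plain-graph. In particular it lies in $\mathcal{L}_\text{loopless}$, the realization in which Lemma~\ref{lem: ls} (via Corollaries~\ref{cor: c} and \ref{cor: d}) is proved. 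Applying Lemma~\ref{lem: ls} to the pair $\Gamma^{(i)}, \Gamma^{(i+1)}$ and the leaf $v$ gives that $\Gamma^{(i+1)}$ is contained in $L$. Iterating, $\Gamma_B = \Gamma^{(m)}$ is contained in $L$.

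There is essentially no obstacle here: the corollary is bookkeeping on top of Lemma~\ref{lem: ls}, where the genuine work --- identifying the common Medusa shared by $\Gamma_A$ and $\Gamma_B$ and pushing membership through the association maps --- has already been done. The only points that require care are (i) unwinding ``shuffling leaves'' into elementary single-leaf moves and (ii) verifying that each such move keeps us inside the class of loopless plain-graphs so that Lemma~\ref{lem: ls} actually applies at every step; both follow directly from the definitions. This corollary, together with Lemma~\ref{lem: ls}, is exactly what is needed next in order to show that a minimal $1$-invariant, expanded in the loopless basis, is closed under leaf shuffling and hence realized as an equal-weight sum of labeled trees.
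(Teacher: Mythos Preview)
Your proposal is correct and is exactly the intended argument: the paper states that this corollary ``follow[s] immediately'' from Lemma~\ref{lem: ls}, and your induction on a chain of single-leaf shuffles is precisely the unpacking of that claim. The only remark is that the paper defines ``leaf shuffling'' as the procedure in Lemma~\ref{lem: ls}, so your decomposition into elementary moves is already built into the terminology.
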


\begin{Cor} \label{cor: ist}
	If a plain-graph $\Gamma$ in a loopless 1-invariant $L$ contains more than one connected component, then none of these connected components is a tree.
\end{Cor}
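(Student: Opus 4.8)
The plan is to argue by contradiction, combining leaf shuffling (Corollary \ref{cor: rp}, which rests on Lemma \ref{lem: ls}) with the lower bound on vertex degrees (Proposition \ref{prop: min plain}). So suppose that $\Gamma$ is a plain-graph occurring in a loopless 1-invariant $L$, that $\Gamma$ has at least two connected components, and that one of them, call it $C$, is a tree. By Proposition \ref{prop: min plain} with $\pd=1$, every vertex of $\Gamma$ has degree at least $\frac{1}{2}(\pd+1)=1$, so $\Gamma$ has no empty vertices; in particular $C$ has at least two vertices, and hence at least one leaf.

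First I would peel a leaf off $C$. Choose a leaf $v$ of $C$, let $u$ be its neighbour in $C$, and — using that $\Gamma$ has a second component — choose a vertex $w$ lying in a different component of $\Gamma$ from $v$. Moving the unique edge at $v$ from $u$ to $w$ is an admissible leaf shuffle ($\Gamma$ is unchanged after deleting $v$, and $v$ is still a leaf), so by Corollary \ref{cor: rp} the resulting plain-graph $\Gamma_1$ again lies in $L$; it is loopless, because the new edge joins two distinct vertices, and it still has at least two components, since the component of $w$ gained exactly one vertex as $C$ lost one. Moreover $C\setminus\{v\}$ — a tree with a leaf removed — is again a tree, now with one fewer vertex, and it is a component of $\Gamma_1$.

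Iterating this peeling (equivalently, inducting on the number of vertices of the tree component), I obtain after finitely many steps a plain-graph still contained in $L$ in which the distinguished component has shrunk to a single vertex carrying no incident edge, i.e., an empty vertex. This contradicts Proposition \ref{prop: min plain}, so no component of $\Gamma$ can be a tree. I do not expect a genuine obstacle here; the only thing to watch is the bookkeeping across the iteration — that each move is genuinely a leaf shuffle (a leaf of a component is a leaf of the whole graph, since degrees are computed within a component), that loops are never introduced (ensured by always reattaching into a different component), that the graphs stay plain-graphs, and that there is always a second component available to reattach into (which holds, since leaf shuffling between components preserves the component count).
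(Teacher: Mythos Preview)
Your proof is correct and follows essentially the same route as the paper's: assume a tree component exists, use leaf shuffling (Corollary~\ref{cor: rp}) to repeatedly peel leaves off it and reattach them into another component, and conclude once the tree component has been reduced to a single empty vertex, contradicting Proposition~\ref{prop: min plain}. The only cosmetic difference is that the paper phrases each round as shuffling ``all leaves in $T$'' at once before iterating, whereas you shuffle one leaf at a time; your version is a bit more explicit in checking that looplessness, the leaf property, and the presence of a second component are maintained throughout.
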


\begin{proof}
	Note that there are at least two leaves in a tree, if the tree is not an empty vertex.  If in $\Gamma$ there is a connected component $T$ that is a tree, then we can shuffle all leaves in $T$ to be joined to other connected components, while turning $T$ into another tree with at least one fewer vertex.  This procedure can be iterated until all but one vertex in $T$ are shuffled to be joined to other connected components, which turns $\Gamma$ into a graph $\tilde{\Gamma}$ containing an empty vertex.  By Corollary \ref{cor: rp}, since $L$ contains $\Gamma$, it also contains $\tilde{\Gamma}$, which violates the lower bound on vertex degree.  Therefore, such $\Gamma$ cannot appear in loopless 1-invariants.
\end{proof}

\begin{Prop} \label{prop: 4}
	If $L$ is a nonzero minimal $N$-point loopless 1-invariant, then it contains all trees with $N$ vertices.  
\end{Prop}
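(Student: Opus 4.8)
\textbf{Proof proposal for Proposition \ref{prop: 4}.}

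The plan is to argue by contradiction, invoking connectivity structure together with the leaf-shuffling technology of Lemma \ref{lem: ls} and its corollaries. Suppose $L$ is a nonzero minimal $N$-point loopless $1$-invariant. First I would record the two constraints we already have on the plain-graphs appearing in $L$: by Proposition \ref{prop: min plain} every vertex in every such graph has degree at least $\frac{1}{2}(1+1)=1$, so there are no empty vertices; and by Corollary \ref{cor: ist}, if a graph in $L$ is disconnected, then none of its connected components may be a tree. Next I would establish that $L$ is nonempty and contains at least one tree. Since $L\ne 0$, it contains some plain-graph $\Gamma$ with $N$ vertices and (by minimality) the smallest possible number of edges $\Delta_{\text{min}}$. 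A connected graph on $N$ vertices has at least $N-1$ edges, with equality iff it is a tree, so I expect to argue that $\Delta_{\text{min}}=N-1$ and that every graph in $L$ is in fact connected: if some $\Gamma$ in $L$ were disconnected, then by Corollary \ref{cor: ist} each component has a cycle, hence at least as many edges as vertices, forcing $\Delta\ge N>N-1$; but one can exhibit an $N$-vertex tree-summation which is a $1$-invariant with $\Delta=N-1$ (this is essentially the content of \eqref{eq: galileon_graphs} / the main result of \S\ref{sec: Galileons}), contradicting minimality of such a disconnected $\Gamma$ unless $\Delta_{\text{min}}<\Delta$, i.e. $L$ cannot be supported purely on graphs with cycles. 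I would then conclude that $\Delta_{\text{min}}=N-1$ and every graph in $L$ is connected with exactly $N-1$ edges, hence is a tree.

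The heart of the argument is then to show that if $L$ contains \emph{one} tree on $N$ vertices, it contains \emph{all} of them. This is where Corollary \ref{cor: rp} (leaf shuffling preserves membership in a $1$-invariant) does the work: I would show that the set of all labeled trees on $N$ vertices is connected under the leaf-shuffling move, i.e., any labeled tree can be transformed into any other by a finite sequence of operations of the form ``detach a leaf $v$ from its unique neighbor and reattach $v$ to some other vertex'' (which, by Lemma \ref{lem: ls}, is precisely the relation $\Gamma_B-v=\Gamma_A-v$ with $v$ a leaf in both). Concretely, given a target tree $T'$, pick a vertex $r$ that is a leaf of $T'$; in the current tree $T$, repeatedly shuffle leaves so as to make $r$ a leaf adjacent to the correct vertex, then induct on the tree $T-r$ and $T'-r$, both on $N-1$ vertices. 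A small subtlety: leaf shuffling requires $v$ to be a leaf of \emph{both} the source and target graph, so I would need to check that at each stage there is a leaf available to move and that moving it does not create an empty vertex (impossible for $N\ge 2$ since a tree on $\ge 2$ vertices stays connected when a leaf is reattached) — this is routine but worth stating. Since any single tree in $L$ generates, under repeated application of Corollary \ref{cor: rp}, every labeled tree on $N$ vertices, and we have shown $L$ contains at least one tree, $L$ contains all of them.

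I expect the main obstacle to be the first half — rigorously pinning down that $\Delta_{\text{min}}=N-1$ and that no graph with a cycle (connected or not) can appear in a \emph{minimal} $1$-invariant. The cleanest route is probably: (i) minimality plus the existence of the equal-weight tree sum as a $1$-invariant with $\Delta=N-1$ forces $\Delta_{\text{min}}\le N-1$; (ii) any graph in $L$ with $\le N-1$ edges on $N$ vertices with no empty vertex (Proposition \ref{prop: min plain}) and no tree component (Corollary \ref{cor: ist}) must actually be a tree, because a disconnected such graph needs $\ge N$ edges and a connected one with $N-1$ edges is a tree; (iii) hence $\Delta_{\text{min}}=N-1$ and $L$ is supported entirely on trees. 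After that, the leaf-shuffling transitivity argument closes the proof. (In the paper's own development this proposition is presumably followed by the complementary statement that the equal-weight sum of all $N$-vertex trees \emph{is} a $1$-invariant, and the two together give uniqueness; here I only need the ``contains all trees'' direction, so I would not dwell on the converse.)
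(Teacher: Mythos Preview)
Your proposal is correct and follows essentially the same approach as the paper's proof: both invoke the known Galileon invariants from \S\ref{sec: Galileons} to get $\Delta_{\text{min}}\le N-1$, use Corollary~\ref{cor: ist} together with Proposition~\ref{prop: min plain} to force every graph in a minimal loopless $1$-invariant to be a tree, and then use leaf-shuffling (Corollary~\ref{cor: rp}) to propagate from one tree to all trees. The only cosmetic difference is that the paper shuffles directly to a star (and notes that any tree can be shuffled to the star) rather than running your inductive argument for transitivity of the leaf-shuffling move on labeled trees.
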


\begin{proof}
	If $L$ contains no plain-graphs with leaves, then the vertices in $L$ have degree at least $2$ (note that an empty vertex is disallowed by Proposition \ref{prop: min plain}).  Then the number of edges $\Delta$ satisfies 
\begin{equation} \label{eq: ine1}
	\Delta \geq N.  
\end{equation}

	Otherwise, consider a plain-graph $\Gamma$ contains a leaf $v$ and appears in $L$.  If there exist any other leaves in $\Gamma$, shuffle them to be adjacent to $v$.  Iterate for the resulting graphs until reaching a plain-graph $\Gamma_0$ in which all leaves are adjacent to $v$.  By Corollary \ref{cor: rp}, $\Gamma_0$ is also contained in $L$.  Denote the subgraph of $\Gamma_0$ consisting of all leaves in $\Gamma_0$ and $v$ by $T$, which is a particular type of tree usually called a \emph{star}.  Define $\Gamma'$ as a subgraph of $\Gamma_0$ that is formed by deleting from $\Gamma_0$ all vertices in $T$.
	
	If $\Gamma'$ is not null, by Corollary \ref{cor: ist}, $T$ cannot be disconnected from $\Gamma'$, since otherwise $T$ would be a tree disconnected from at least one other connected component in $\Gamma_0$.  Moreover, since $v$ is a leaf in $\Gamma$, $\Gamma'$ is joined to $T$ by exactly one edge incident to $v$.  Define $N_T$ as the number of vertices in $T$ and $N'$ as the number of vertices in $\Gamma'$, then $N = N_T + N'$.  Note that there is no leaf in $\Gamma'$, and thus the vertices in $\Gamma'$ have degree at least 2.  Therefore,
\begin{equation} \label{eq: ine2}
	\Delta \geq (N_T - 1) + N' + 1 = N.
\end{equation}
	
	If $\Gamma'$ is null, then $\Gamma_0 = T$ and the original graph $\Gamma$ is a tree.  In this case, $\Delta = N - 1$, which is lower than the bounds (\eqref{eq: ine1} and \eqref{eq: ine2}) for the other cases.  By Corollary \ref{cor: rp}, $L$ contains $T$. The Galileon invariants presented in \S\ref{sec: Galileons} realize this bound $\Delta_\text{min} = N - 1$.  In fact, any tree with $N$ vertices can be turned into a star by shuffling leaves.  Since $L$ contains the star with $N$ vertices, it contains all trees with $N$ vertices.	
\end{proof}

\noindent Next, we prove the uniqueness of the minimal term.

\begin{Prop} \label{thm: 3}
    The minimal $\nv$-point loopless 1-invariant with $\Delta = \nv -1$ is unique up to proportionality.
\end{Prop}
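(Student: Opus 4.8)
By Proposition~\ref{prop: 4}, any nonzero minimal $\nv$-point loopless $1$-invariant $L$ with $\Delta = \nv-1$ must contain every tree on $\nv$ vertices. The plan is therefore to show that no \emph{other} plain-graphs can appear in $L$, and then to pin down the relative coefficients of the trees. For the first part, I would argue that a plain-graph $\Gamma$ with $\nv$ vertices and exactly $\nv-1$ edges that is not a tree must either contain a loop (excluded since $L \in \mathcal{L}_\text{loopless}$), contain a cycle on two or more vertices (in which case, by an Euler-characteristic count, some connected component has fewer edges than vertices and hence contains an empty vertex or a leaf-heavy subtree that can be shuffled to produce an empty vertex), or be disconnected. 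In the disconnected case, a component with $k$ vertices and at most $k-1$ edges is itself a tree or has an empty vertex; if it is a tree, Corollary~\ref{cor: ist} forbids it, and otherwise Proposition~\ref{prop: min plain} (the lower bound on vertex degree, which for $\pd=1$ says every vertex has degree $\geq 1$) forbids the empty vertex. The upshot is that the only plain-graphs that can appear in a minimal $1$-invariant with $\Delta = \nv-1$ are trees, so $L$ is a linear combination of precisely the $\nv^{\nv-2}$ labeled trees on $\nv$ vertices.

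Next I would fix the coefficients. The key tool is leaf shuffling (Lemma~\ref{lem: ls}, Corollary~\ref{cor: rp}): if $\Gamma_A$ appears in $L$ with some coefficient $c_A$, and $\Gamma_B$ is obtained from $\Gamma_A$ by moving a leaf $v$ from one attachment point to another, then $\Gamma_B$ also appears in $L$. I would strengthen this to the statement that $\Gamma_A$ and $\Gamma_B$ appear with \emph{equal} coefficients: indeed, both $\Gamma_A$ and $\Gamma_B$ are associated with the same Medusa $M$ (the one obtained by detaching $v$ and attaching a $\star$-vertex to the resulting $\times$-vertex), and the $\times$-graph $\Gamma^\times$ sitting ``at $v$'' appears in $\rho(M)$ with coefficient $1$ by Lemma~\ref{lem: star star association}; since $\Gamma^\times$ is associated (Proposition~\ref{lem: 2}) with only this one Medusa among all Medusas, and with only $\Gamma_A$ among all trees containing $v$ as a leaf attached \emph{there}, matching the coefficient of $\Gamma^\times$ on the two sides of the consistency equation $\delta(L) = \rho^{(0)}(L_M)$ forces $c_A = c_B$ (up to the coefficient of $M$ in $L_M$, which is common to both). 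Since any tree on $\nv$ vertices can be transformed into any other by a finite sequence of leaf shuffles --- for instance, route everything through the star on $\nv$ vertices, as in the proof of Proposition~\ref{prop: 4} --- all trees appear in $L$ with the same coefficient. Hence $L$ is, up to an overall constant, the equal-weight sum of all labeled trees on $\nv$ vertices, which is exactly the uniqueness claim.

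The remaining gap is to check that this equal-weight sum of trees is genuinely a $1$-invariant, i.e.\ that the coefficients are \emph{consistent} --- that there is no obstruction forcing $L=0$. Here I would invoke the Galileon invariants of \S\ref{sec: Galileons}, equations~\eqref{eq: galileon_invariants}--\eqref{eq: galileon_graphs}, which exhibit explicit nonzero $\nv$-point $1$-invariants with $\Delta = \nv-1$ for $\nv \leq 5$, together with the general construction in Appendix~\ref{sec: superposition} (a single minimal loopless $1$-invariant is the superposition corresponding to $\nts=1$, $\pd_E<0$), which produces the equal-weight tree sum for all $\nv$ and whose invariance is established there. So existence is not in question, and the work above shows it is unique.

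The main obstacle I anticipate is the coefficient-matching step: I need to be careful that, in the consistency equation $\delta(L) = \rho^{(0)}(L_M)$, the $\times$-graph $\Gamma^\times$ attached at the leaf $v$ really does isolate the pair $(\Gamma_A, M)$ cleanly. The subtlety is that $\Gamma^\times$ might \emph{also} arise as $\delta$ of a non-tree plain-graph, or be associated with a second tree if $v$ is a leaf in more than one way; Proposition~\ref{lem: 2} (loopless $\times$-graph associated with at most one Medusa) and the fact that we have already pruned $L$ down to trees are precisely what rule this out, but assembling these pieces into a clean coefficient identity --- rather than merely an inclusion of graphs --- is where the care is needed. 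Everything else (the Euler-characteristic pruning, the shuffle-connectivity of the set of trees) is routine.
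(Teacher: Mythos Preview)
Your argument is correct, but it takes a substantially longer route than the paper's. The paper's proof is a two-line subtraction trick: given two nonzero minimal loopless $1$-invariants $L_1$ and $L_2$, pick a tree $T$ (which both contain by Proposition~\ref{prop: 4}), rescale so $T$ has unit coefficient in each, and observe that $L_1 - L_2$ is again a minimal loopless $1$-invariant but does not contain $T$; Proposition~\ref{prop: 4} then forces $L_1 - L_2 = 0$. No discussion of which graphs can appear, no coefficient matching, and no appeal to existence is needed.

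Your approach instead characterises $L$ directly: first exclude non-tree graphs, then force equal coefficients on all trees by leaf-shuffling. This is correct --- your coefficient-matching step is sound, since for $\pd=1$ each loopless $\times$-graph $\Gamma_A^\times$ with $\deg(\times)=1$ is associated with exactly one plain-graph (Proposition~\ref{lem: 1}) and exactly one Medusa (Proposition~\ref{lem: 2}), both appearing with unit coefficient, so the consistency equation pins $c_A$ to the Medusa coefficient and hence $c_A = c_B$. The pruning step is also fine, though your phrasing is slightly tangled: a loopless connected graph on $\nv$ vertices with $\nv-1$ edges is automatically a tree, so the ``cycle'' case you separate out is already subsumed in the disconnected case, where a pigeonhole count shows at least one component is a tree and Corollary~\ref{cor: ist} applies. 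What your approach buys is that it simultaneously proves Proposition~\ref{thm: 3} and Theorem~\ref{thm: 4} (the explicit equal-weight form), at the cost of more bookkeeping. The paper instead separates uniqueness (Proposition~\ref{thm: 3}, via subtraction) from the explicit form (Theorem~\ref{thm: 4}, via a direct check that the equal-weight sum satisfies the consistency equation), and each piece is shorter. Your discussion of existence is not needed for Proposition~\ref{thm: 3} itself, since uniqueness up to proportionality is vacuously true if no such invariant exists.
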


\begin{proof}
	Let $L_1$ and $L_2$ be minimal $\nv$-point loopless 1-invariants with $\Delta = \nv -1$.  Let $T$ be a tree in $L_1$ and $L_2$, which exists by Proposition $\ref{prop: 4}$.  Rescale $L_1$ and $L_2$ so that $T$ appears in each with unit coefficient.  Then, $T$ is not in $L_1 - L_2$.  However, $L_1 - L_2$ is a minimal $\nv$-point loopless 1-invariant.  Therefore, Proposition $\ref{prop: 4}$ implies that $L_1 - L_2$ vanishes.
\end{proof}

\noindent Finally, we prove the existence of the minimal $\nv$-point loopless 1-invariant with $\Delta = \nv -1$:

\begin{Thm} \label{thm: 4}
	Any minimal $\nv$-point loopless $1$-invariant is proportional to the sum with unit coefficients of all trees with $\nv$ vertices.
\end{Thm}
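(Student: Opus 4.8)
The plan is to reduce the statement to the uniqueness result of Proposition \ref{thm: 3} by producing one invariant of the asserted form. Write $L_{\mathrm{tree}}$ for the formal sum, with unit coefficients, of all labeled trees on $\nv$ vertices; since every such tree has $\nv-1$ edges, $L_{\mathrm{tree}}$ is a loopless element of $\mathcal{L}_{\nv,\nv-1}$, and to show it is a $1$-invariant it suffices, by Corollary \ref{Cor: kernel}, to check that $\delta_1(L_{\mathrm{tree}}) \in \mathcal{R}^{\times}_{\mathrm{loopless}}$. The bulk of the work is this single verification; the rest is assembly of results already proved.

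First I would compute $\delta_1(L_{\mathrm{tree}})$ explicitly. The crucial simplification at $\pd=1$ is Rule 1 of the graphical representation: a $\times$-vertex has degree at most $1$, so any $\times$-graph whose $\times$-vertex has degree $\geq 2$ is the null graph. Hence in $\delta_1(T)$, for a tree $T$, the only surviving terms are those in which a \emph{leaf} of $T$ is recolored to a $\times$-vertex. Summing over all labeled trees and collecting terms, one finds that $\delta_1(L_{\mathrm{tree}})$ is the unit-coefficient sum of all labeled $\times$-graphs whose underlying structure is a tree on $\nv$ vertices and whose $\times$-vertex is a leaf: each such $\times$-graph arises from exactly one (tree, leaf) pair, since both the uncolored tree and the colored vertex are recovered from it.

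Next I would match this against $\rho^{(0)}$ of a family of Medusas. At $\pd=1$ the Medusa condition $\deg(\times)=\pd+1-\nsv$ together with $\deg(\times)\geq\nsv$ forces $\nsv=1$ and $\deg(\times)=1$, so every $\pd=1$ Medusa is the disjoint union of an edge joining a $\star$-vertex to the $\times$-vertex, together with a loopless plain-graph on the other $\nv-1$ vertices. Applying $\rho^{(0)}$ reattaches the pendant edge to each $\bullet$-vertex in turn (reattaching it to the $\times$-vertex would make a $\times$-loop, which $\rho^{(0)}$ discards); taking the plain-graph part to be a labeled tree $T'$, the output is exactly a sum of tree-shaped $\times$-graphs with leaf $\times$-vertex of the kind found above. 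Letting $T'$ range over all labeled trees and over each choice of which of the $\nv$ vertices carries the $\times$-label, the same (tree, leaf)-bijection shows each such $\times$-graph is produced with total multiplicity $1$; hence $\delta_1(L_{\mathrm{tree}}) = \rho^{(0)}(L_M)$ for an explicit sum of Medusas $L_M$, and so $\delta_1(L_{\mathrm{tree}}) \in \mathcal{R}^{\times}_{\mathrm{loopless}}$ by Theorem \ref{thm: loopless times basis}. Thus $L_{\mathrm{tree}}$ is a $1$-invariant with $\Delta = \nv-1$, which shows $\Delta_{\mathrm{min}} \leq \nv-1$; combined with the lower bound $\Delta \geq \nv-1$ for any nonzero $1$-invariant established inside the proof of Proposition \ref{prop: 4}, we get $\Delta_{\mathrm{min}} = \nv-1$.

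To finish, let $L$ be any minimal $\nv$-point loopless $1$-invariant; then $L$ has $\Delta = \nv-1$, so Proposition \ref{thm: 3} applied to $L$ and $L_{\mathrm{tree}}$ gives $L = c\,L_{\mathrm{tree}}$ for some constant $c$, which is the claim. I expect the main obstacle to be the combinatorial bookkeeping of the middle step — keeping track of vertex labels so that the (tree, leaf)-to-Medusa correspondence yields multiplicity exactly $1$ on both sides of $\delta_1(L_{\mathrm{tree}}) = \rho^{(0)}(L_M)$. (As an alternative to invoking Proposition \ref{thm: 3} at the end, one can argue directly that leaf-shuffling-adjacent trees carry equal coefficients in $L$: by Proposition \ref{lem: 2} each relevant $\times$-graph is associated with a single Medusa, the two $\times$-graphs obtained by detaching and reattaching a common leaf are associated with the \emph{same} Medusa, so their coefficients in $\delta_1(L) = \rho^{(0)}(L_M)$ — hence the coefficients of the two trees in $L$ — agree, and connectivity of the labeled trees under leaf moves does the rest.)
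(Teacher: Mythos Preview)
Your proposal is correct and follows essentially the same route as the paper: construct the consistency equation $\delta_1(L_{\mathrm{tree}}) = \rho(L^\star)$ by exhibiting the bijection between (labeled tree, leaf) pairs and (Medusa, target $\bullet$-vertex) pairs, then invoke Proposition~\ref{thm: 3} for uniqueness. The only cosmetic difference is that the paper works directly with $\rho$ (noting that $\rho = \rho^{(0)}$ on $P=1$ Medusas), so your detour through Theorem~\ref{thm: loopless times basis} is unnecessary but harmless.
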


\begin{proof}
	Let $T_\nv$ be a general tree with $\nv$ $\bullet$-vertices.  Let $T^{ \times }_{\nv}$ be a general tree with $( \nv -1)$ $\bullet$-vertices and one $\times$ vertex.  Let $T^{ \star }_{\nv}$ have two connected components, one of which is 
$
\begin{minipage}{1.2cm}
\begin{tikzpicture}
	\draw [thick] (0,0) -- (0.6,0);
        	\node at (0,0) {\scalebox{0.8}{$\bigstar$}};
        	\filldraw [white] (0.6,0) circle [radius=0.115];
	\draw [thick] (0.6,0) circle [radius=0.115];
	\node at (0.6,0) {$\times$};
\end{tikzpicture}
\end{minipage}
$
and the other is a tree $T_{\nv -1}$.  Let $L$, $L^{ \times }$ and $L^{ \star }$ be the sum with unit coefficients of all $T_\nv$, $T^{ \times }_{\nv}$ and $T^{ \star }_{\nv}$, respectively.  Replacing a $\bullet$-vertex in $T_\nv$ with a $\times$ produces a unique $T^{ \times }_{\nv}$, since vertices are labeled.  Therefore, $\delta (L)$ is simply $L^{ \times }$.  Similarly, $\rho ( L^{ \star } ) = L^{ \times }$.  Therefore, $\delta (L ) = \rho (L^{ \star } )$ and $L$ is 1-invariant, which is unique by virtue of Proposition $\ref{thm: 3}$.
\end{proof}

\noindent It is shown in \cite{gal} that the Galileon-like term in spacetime dimension $d = \sdim + 1 \geq \nv$:
\begin{equation}\label{eq: Galileon}
	\epsilon^{i_1 \cdots i_{\nv -1} \, k_\nv \cdots k_\sdim}_{\phantom{k_\sdim}} \epsilon^{j_1 \cdots j_{\nv -1}}_{\hspace{1.2 cm} k_\nv \cdots k_\sdim} \partial_{i_1} \phi \, \partial_{j_1} \phi \, \partial_{i_2} \partial_{j_2} \phi \cdots \partial_{i_{\nv -1}} \partial_{j_{\nv -1}} \phi,
\end{equation} 
\noindent is invariant up to a total derivative.  By Theorem \ref{thm: 4}, the sum with unit coefficients of all trees with $\nv$ vertices is proportional to \eqref{eq: Galileon}, up to a total derivative.  In fact, the constant of proportionality is 1.

\subsubsection{Non-minimal Invariants} 

	So far we have found the unique minimal 1-invariant for each $\nv$, that is with $\Delta= \nv -1$.  In what follows we argue that any non-minimal 1-invariant ($\Delta > \nv -1$) is equal to an exact 1-invariant up to a plain-relation (a total derivative).  We begin with the following definition:

\begin{Def} [Frame]
	For $\Gamma \in \mathcal{G}_{\nv , \Delta}^{}$, delete all edges incident to leaves;  iterate this procedure until reaching a graph $\Gamma_f$ that contains no leaves.  We call $\Gamma_f$ the \emph{frame} of $\Gamma$.
\end{Def}

 \noindent Note that, by definition, after deleting from $\Gamma$ all edges that appear in $\Gamma_f$, each of the connected components in the resulting graph is a tree.  We also define:

\begin{Def} [Frame Invariant]
	A loopless 1-invariant $L = \sum_{i=1}^k \alpha_i \Gamma_i$ is a \emph{frame invariant} if the frames of $\Gamma_{i}$ for all $i = 1, \ldots, k$ are identical.  
\end{Def}

\noindent For convenience, we define a map $f$ on frame invariants, such that $f(L)$ is the frame which is common to all of the graphs contained in $L$.

	By definition, the nonempty vertices in any frame have degree greater than 1.  No such vertex can be turned into a $\times$-vertex or be adjacent to a $\star$-vertex in a $\star$-graph associated with $\Gamma$.  This means that terms with different frames cannot lead to cancellations in the consistency equation.  Therefore, a consistency equation naturally splits into multiple consistency equations, each one a frame invariant.  This is summarized in the lemma below:

\begin{Lem} \label{lem: framinvlincomb}
	Any loopless 1-invariant is a linear combination of frame invariants.
\end{Lem}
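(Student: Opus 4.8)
The plan is to establish Lemma \ref{lem: framinvlincomb} by showing that the consistency equation $\delta_1(L) = \rho(L^\star)$ decomposes according to the frame of the plain-graphs involved. The key structural observation, already flagged in the text preceding the lemma, is that a vertex lying in the frame of a graph $\Gamma$ has degree at least $2$ in that frame, hence degree at least $2$ in $\Gamma$; such a vertex can never be the one promoted to a $\times$-vertex in a term of $\delta_1(\Gamma)$ that survives (because for $P=1$ only degree-$\leq 1$ vertices can be turned into $\times$-vertices without producing the null graph) — wait, more carefully: the $\times$-vertex in a $\star$-graph or $\times$-graph associated to $\Gamma$ must have degree $\leq P = 1$ off its $\star$-edges, so the edges of the frame are ``rigid'' under the variation and derivative maps. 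I would make this precise by defining, for each frame $F$, the subspace $\mathcal{L}_F \subset \mathcal{L}_\text{loopless}$ spanned by loopless plain-graphs whose frame is exactly $F$, and similarly $\mathcal{L}^\times_F$, $\mathcal{L}^\star_F$ for $\times$-graphs and $\star$-graphs (where the frame of a $\times$- or $\star$-graph is computed by first replacing the $\times$-vertex with a $\bullet$-vertex and deleting $\star$-vertices along with their incident edges, then taking the frame of the resulting plain-graph).

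First I would verify that $\mathcal{L}_\text{loopless} = \bigoplus_F \mathcal{L}_F$ as a vector space — this is immediate since every loopless plain-graph has a unique well-defined frame. Next, the crucial compatibility claims: (i) $\delta_1(\mathcal{L}_F) \subseteq \mathcal{L}^\times_F$, and (ii) $\rho(\mathcal{L}^\star_F) \subseteq \mathcal{L}^\times_F$. For (i), if $\Gamma$ has frame $F$ and we replace a vertex $v$ by a $\times$-vertex, then either $v$ had degree $\geq 2$, in which case the resulting $\times$-graph is null (for $P=1$) and contributes nothing, or $v$ is a leaf of $\Gamma$, in which case $v$ is not in $F$ and deleting edges incident to leaves leaves $F$ unchanged; so every nonzero term of $\delta_1(\Gamma)$ has frame $F$. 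For (ii), given a $\star$-graph $\Gamma^\star$ with frame $F$, the derivative map $\rho$ reattaches the $\star$-edge to some vertex $v_j$; since for $P=1$ a Medusa-type $\star$-graph has its single $\star$-vertex attached to a degree-$\leq 0$ $\times$-vertex and the plain part is a loopless graph, reattaching produces a $\times$-graph whose ``plain skeleton'' differs from that of $\Gamma^\star$ only at leaves — one must check the reattached edge creates a leaf edge, not a frame edge, which follows because the vertex that lost the $\star$-edge had degree $\leq 1$ there. Then, given a 1-invariant $L = \sum_F L_F$ with $L_F \in \mathcal{L}_F$ and $\delta_1(L) = \rho(L^\star)$, I would split $L^\star = \sum_F L^\star_F$ by frame as well, and conclude from the directness of the decomposition on the $\times$-side that $\delta_1(L_F) = \rho(L^\star_F)$ for each $F$ separately, i.e., each $L_F$ is itself a frame invariant, which is exactly the assertion.

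The main obstacle I anticipate is pinning down precisely how the frame of a $\star$-graph or $\times$-graph should be defined so that both $\delta_1$ and $\rho$ are genuinely frame-preserving, and in particular handling the edge-reattachment in $\rho$: when the $\star$-edge of a primary $\star$-graph gets reattached to a vertex $v_j$ of high degree, one must confirm that the new edge at $v_j$ is still ``peripheral'' in the sense of being removable by the leaf-stripping procedure that defines the frame — equivalently, that the frame is computed from the plain skeleton obtained by ignoring all edges incident to the vertex originally carrying the $\star$-edge. Here the restriction $P=1$ is essential (it forces the $\times$-vertex to have degree $\leq 1$ apart from $\star$-edges, so Medusas are ``spider-like'' as noted in Proposition \ref{lem: 2} and its corollaries), and I would lean on Corollary \ref{cor: c} and the leaf-shuffling machinery (Lemma \ref{lem: ls}, Corollary \ref{cor: rp}) to control which graphs can co-occur. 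Once the frame-preservation of $\delta_1$ and $\rho$ is established, the decomposition of the 1-invariant into frame invariants is a purely formal consequence of the direct-sum structure on $\mathcal{L}^\times_\text{loopless}$.
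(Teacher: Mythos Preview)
Your proposal is correct and follows essentially the same approach as the paper: the paper's argument (given in the paragraph immediately preceding the lemma rather than in a formal proof environment) is precisely that frame vertices have degree $\geq 2$, hence for $P=1$ cannot be turned into $\times$-vertices or be adjacent to $\star$-vertices, so the consistency equation decomposes frame-by-frame. You have simply spelled out in more detail the direct-sum decomposition $\mathcal{L}_\text{loopless} = \bigoplus_F \mathcal{L}_F$ and the frame-preservation of $\delta_1$ and $\rho$ that the paper leaves implicit; your anticipated obstacle about $\rho$ reattaching an edge to a high-degree vertex is indeed a non-issue, since for $P=1$ the $\times$-vertex ends up a leaf and is stripped first in the frame computation.
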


\noindent The proofs presented in Proposition \ref{prop: 4} and \ref{thm: 3} and Theorem \ref{thm: 4} are directly applicable to frame invariants, from which we conclude:

\begin{Prop} \label{prop: frame}
	Up to proportionality, any loopless frame invariant $L$ is equal to the sum with unit coefficients of all plain-graphs that satisfy the following conditions:
\begin{enumerate}[$\emph{(}$a$\emph{)}$]

\item

	The plain-graph has a frame $f ( L )$.

\item \label{propframeb}

	If there is more than one connected component in the plain graph, then none of them is a tree.
	
\end{enumerate}
\end{Prop}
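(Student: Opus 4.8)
The plan is to follow the exact template that worked for the minimal $1$-invariants (Proposition~\ref{prop: 4}, Proposition~\ref{thm: 3}, Theorem~\ref{thm: 4}), but now carried out within a single frame class. The key structural input is Lemma~\ref{lem: framinvlincomb}: since no nonempty vertex of a frame can ever become a $\times$-vertex or be adjacent to a $\star$-vertex, a consistency equation for a loopless $1$-invariant decomposes into independent pieces, one for each fixed frame $\Gamma_f$. So it suffices to prove the statement for a single frame invariant $L$ with $f(L)=\Gamma_f$, and all the work is in characterizing which plain-graphs with that frame must appear.

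First I would establish the analogue of leaf shuffling \emph{relative to a fixed frame}. Given a plain-graph $\Gamma$ with frame $\Gamma_f$, deleting the frame edges leaves a disjoint union of trees, one ``hanging'' from each frame vertex (and possibly free-standing trees on vertices not in the frame). Leaf shuffling (Corollary~\ref{cor: rp}) lets me move leaves among these hanging trees without changing the frame, exactly as in the frameless case: the Medusa associated to a leaf $v$ of $\Gamma$ only ``sees'' the rest of $\Gamma$ through $\Gamma-v$, and reattaching $v$ to any other $\bullet$-vertex produces a graph still present in $L$. The second ingredient is Corollary~\ref{cor: ist}, which forbids a tree connected component; here it rules out the case where one of the hanging trees (or a free-standing tree) can be shuffled down to an isolated empty vertex, which would violate the vertex-degree lower bound (Proposition~\ref{prop: min plain}, here with $\pd=1$ so degree $\geq 1$). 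This is condition~(b) of the statement.

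Next I would prove \emph{existence}: the sum $L_0$ with unit coefficients of \emph{all} plain-graphs having frame $\Gamma_f$ and satisfying condition~(b) is in fact a $1$-invariant. For this I would exhibit the consistency equation directly, mimicking Theorem~\ref{thm: 4}. Define $L_0^\times$ to be the unit-coefficient sum of all $\times$-graphs obtained by turning one $\bullet$-vertex of such a plain-graph into a $\times$-vertex, and $L_0^\star$ the unit-coefficient sum of the corresponding $\star$-graphs (a disjoint $\star$--$\times$ edge plus a plain-graph on the remaining $\nv-1$ vertices with the appropriate frame). Because vertices are labeled, replacing a $\bullet$ by $\times$ is a bijection on the relevant index sets, so $\delta(L_0)=L_0^\times$; likewise, for $\pd=1$, $\rho(L_0^\star)=\rho^{(0)}(L_0^\star)=L_0^\times$ by the rigidity of $\pd=1$ Medusas/spiders (Propositions~\ref{lem: 2}, \ref{lem: 1}, Corollary~\ref{medusa inclusion under rho}). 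Hence $\delta(L_0)=\rho(L_0^\star)$. \emph{Uniqueness} is then the same argument as Proposition~\ref{thm: 3}: if $L$ is any loopless frame invariant with frame $\Gamma_f$, rescale so some fixed graph appears with coefficient $1$, subtract, and conclude from the shuffling/non-tree-component characterization (which shows every admissible graph must appear with equal coefficient, determined by any one of them) that the difference vanishes.

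I expect the main obstacle to be making the shuffling argument airtight when the frame has vertices but the ``hanging trees'' interact with free-standing trees on frame-less vertices — one must verify that condition~(b) is exactly the obstruction, i.e. that leaf shuffling can always reduce a forbidden configuration (a tree component not touching the frame) to an empty-vertex graph, and conversely that no admissible graph can be so reduced. The bookkeeping is essentially that of Proposition~\ref{prop: 4} (the ``star'' reduction), now applied to each connected piece separately while holding the frame fixed; the only genuinely new point is checking that the frame edges are never disturbed by shuffling, which follows because frame vertices have degree $\geq 2$ and so are never leaves at any stage of the reduction. Everything else is a routine transcription of the frameless proofs.
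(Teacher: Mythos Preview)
Your proposal is correct and follows essentially the same approach as the paper. The paper's own treatment of this proposition is in fact a one-line remark that ``the proofs presented in Proposition~\ref{prop: 4} and~\ref{thm: 3} and Theorem~\ref{thm: 4} are directly applicable to frame invariants,'' together with the observation that condition~(\ref{propframeb}) follows from Corollary~\ref{cor: ist}; your writeup is precisely a careful unpacking of that transcription, including the verification that leaf shuffling preserves the frame (since frame vertices have degree $\geq 2$ and are never leaves) and that condition~(b) is exactly the obstruction to reducing to an empty-vertex configuration.
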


\noindent Note that condition (\ref{propframeb}) follows directly from Corollary \ref{cor: ist}. Proposition \ref{prop: frame} effectively provides an equivalent definition for frame invariants.  Classifying all non-minimal 1-invariants is thus reduced to classifying all non-minimal frame invariants.  Furthermore, by Proposition \ref{prop: loopless_basis}, we can restrict our search to loopless frame invariants.  In the following we will show that any loopless frame invariant is equal to an exact invariant up to total derivatives.  We start with a useful lemma:

\begin{Lem} \label{lem: l loop exact}
	Let $\Gamma^{(\ell)}\in \mathcal{G}$ be an $\ell$-looped exact 1-invariant, such that all vertices with loops have degree 2.  Then $\rho^{(0)} \circ \theta ( \Gamma^{(\ell)} )$ is a loopless 1-invariant and is equal to $\Gamma^{(\ell)}$ up to a plain-relation.  
\end{Lem}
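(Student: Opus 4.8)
\textbf{Proof plan for Lemma \ref{lem: l loop exact}.}

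The plan is to unpack the definitions of $\theta$ and $\rho^{(0)}$ on a single $\ell$-looped graph $\Gamma^{(\ell)}$ whose looped vertices all have degree exactly $2$, and to show directly that the resulting loopless linear combination differs from $\Gamma^{(\ell)}$ only by a plain-relation (and hence remains an exact $1$-invariant up to total derivatives, since plain-relations are exactly the total-derivative ambiguities). First I would write $\Gamma^{(\ell)}$ as a fixed loopless ``core'' graph $\Gamma_0$ (obtained by deleting all $\ell$ loops) with $\ell$ loops reattached at vertices $v_{A_1},\dots,v_{A_\ell}$, each of which has degree $2$ in $\Gamma^{(\ell)}$ and therefore has no other edge in $\Gamma_0$; call these vertices \emph{empty} in $\Gamma_0$. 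Since a degree-$2$ looped vertex becomes an isolated ($\bullet$-)empty vertex once its loop is removed, $\theta$ replaces each such loop by an edge to a fresh $\star$-vertex, producing a $\star$-ed plain-graph $\theta(\Gamma^{(\ell)})$ in which the $\ell$ formerly-looped vertices $v_{A_j}$ are now leaves attached to $\star$-vertices. Then $\rho^{(0)}$ distributes each of the $\ell$ derivatives over the $\bullet$-vertices, keeping only the loopless terms, i.e.\ only those terms where no derivative lands back on its own originating vertex $v_{A_j}$ (that would recreate a loop) and, more subtly, no two derivatives originating from distinct $v_{A_j}$ land so as to create a loop in some other way (they cannot, since each such edge joins two \emph{distinct} vertices by construction of $\rho^{(0)}$).

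The key computational step is to recognize $\rho^{(0)}\circ\theta(\Gamma^{(\ell)})$ as $\rho$ applied to a sum of $\star$-ed plain-graphs, up to loopless terms, using exactly the machinery already assembled. Concretely, $\theta(\Gamma^{(\ell)})$ is a $\star$-ed plain-graph, and by Proposition \ref{prop: loopless_basis} any graph with loops can be rewritten modulo plain-relations using loopless graphs; here the statement to verify is the sharper one that the specific combination $\rho^{(0)}\circ\theta(\Gamma^{(\ell)})$ equals $\Gamma^{(\ell)}$ modulo $\mathcal{R}$. I would prove this by induction on $\ell$. For $\ell = 1$: $\theta(\Gamma^{(1)})$ is the $\star$-ed plain-graph with one $\star$-vertex attached to the degree-$1$ vertex $v_A$; applying $\rho$ to it gives $\Gamma^{(1)}$ (the term where the derivative returns to $v_A$, reforming the loop) plus $\rho^{(0)}\circ\theta(\Gamma^{(1)})$ (all loopless terms), so $\rho^{(0)}\circ\theta(\Gamma^{(1)}) = -\Gamma^{(1)} + \rho(\theta(\Gamma^{(1)}))$, and the right-hand side is $\Gamma^{(1)}$ modulo the plain-relation $\rho(\theta(\Gamma^{(1)}))\in\mathcal{R}$ --- wait, this gives $-\Gamma^{(1)}$, so I should track signs carefully and note that the claim is equality \emph{up to a plain-relation}, which absorbs the overall sign issue only if $2\Gamma^{(1)}\in\mathcal R$; instead the correct bookkeeping is to iterate the construction of Proposition \ref{prop: pyramid} (with $L^\star = \theta(\Gamma^{(\ell)})$), whose conclusion (b) gives precisely $\rho(L^\star_\ell) = (-1)^\ell\rho^{(0)}\circ\theta(L^\star+\mathrm{L}_{\mathrm{spider}})$ with $L^\star_\ell - L^\star$ containing only lower-loop graphs, and conclusion (a) then lets me peel off $\Gamma^{(\ell)}$ itself. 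So the cleaner route is: apply Proposition \ref{prop: pyramid} directly to $L^\star = \theta(\Gamma^{(\ell)})$ after checking its hypothesis (all $\times$-graphs --- here, all $\bullet$-graphs, via the $\star$-ed analog --- in $\rho(L^\star)$ are appropriately loop-graded), obtaining a loopless primary-$\star$ combination $L^\star_\ell$ with $\rho(L^\star_\ell)\in\mathcal R$ loopless, $\rho(L^\star_\ell) = \pm\rho^{(0)}\circ\theta(\theta(\Gamma^{(\ell)}))$ modulo spiders; then unwind to conclude $\rho^{(0)}\circ\theta(\Gamma^{(\ell)})$ equals $\Gamma^{(\ell)}$ plus an element of $\mathcal R$. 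Since $\Gamma^{(\ell)}$ is exact $1$-invariant, $\delta_1(\Gamma^{(\ell)}) = 0$; and $\delta_1$ of anything in $\mathcal R$ lies in $\mathcal R^\times$, so $\delta_1(\rho^{(0)}\circ\theta(\Gamma^{(\ell)}))\in\mathcal R^\times$, making it a $1$-invariant --- in fact the loopless representative --- which is the assertion. (The stronger claim that it is again \emph{exact} would follow if the plain-relation correction itself has vanishing $\delta_1$, but the Lemma as stated only requires ``$1$-invariant,'' so I will aim only for that.)

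The main obstacle I anticipate is the sign-and-multiplicity bookkeeping in the iterated $\theta$/$\rho^{(0)}$ peeling: one must be sure that when $\rho$ redistributes several derivatives, the ``reform-a-loop'' terms combine to give back exactly $\Gamma^{(\ell)}$ with coefficient $\pm 1$ (not some larger combinatorial factor), and that the remaining terms genuinely form a plain-relation rather than a bare loopless combination. The degree-$2$ hypothesis on looped vertices is what makes this clean: it guarantees that each loop-originating vertex contributes exactly one ``new'' edge under $\rho^{(0)}$ and that the $\star$-vertices introduced by $\theta$ are attached to leaves, so the structure is precisely that of the primary-$\star$-graph reduction in Proposition \ref{prop: pyramid}. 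I would therefore spend most of the write-up verifying the hypotheses of Proposition \ref{prop: pyramid} for $L^\star = \theta(\Gamma^{(\ell)})$ and then quoting its conclusions (a), (b), plus Corollary \ref{prop: spider pyramid} to handle any residual spiders, and finally Corollary \ref{Cor: kernel} (or simply Definition of $\pd$-invariant) to upgrade ``equal modulo $\mathcal R$'' to ``is a loopless $1$-invariant.''
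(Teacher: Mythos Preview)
Your detour through Proposition~\ref{prop: pyramid} and Corollary~\ref{prop: spider pyramid} does not work as stated: those results are formulated for $\star$-graphs, i.e.\ graphs containing a $\times$-vertex, and they produce elements of $\mathcal{R}^{\times}_{\text{loopless}}$. Here $\Gamma^{(\ell)}$ is a plain-graph and $\theta(\Gamma^{(\ell)})$ is a $\star$-ed \emph{plain}-graph with no $\times$-vertex at all, so neither proposition applies. You noticed this in your parenthetical ``via the $\star$-ed analog,'' but the analog you would need is not on the shelf; you would have to prove it, and doing so amounts to carrying out the paper's argument directly. Likewise there are no spiders in sight (spiders have a $\times$-vertex of degree $P$), so Corollary~\ref{prop: spider pyramid} is simply irrelevant.

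The irony is that the paper's proof is exactly your abandoned first idea, iterated. For $\ell=1$ you correctly wrote
\[
\rho\bigl(\theta_1(\Gamma^{(1)})\bigr)=\Gamma^{(1)}+\rho^{(0)}\!\circ\theta_1(\Gamma^{(1)}),
\]
and since $\rho(\theta_1(\Gamma^{(1)}))$ is a plain-relation by definition, $\rho^{(0)}\circ\theta_1(\Gamma^{(1)})$ equals $-\Gamma^{(1)}$ modulo $\mathcal{R}$. The paper simply repeats this: set $L_1=\rho^{(\ell-1)}\circ\theta_1(\Gamma^{(\ell)})$, note $L_1=-\Gamma^{(\ell)}$ modulo $\mathcal{R}$ (hence still an exact $1$-invariant up to a plain-relation), then set $L_2=\rho^{(\ell-2)}\circ\theta_2(L_1)$, and so on down to $L_\ell=\rho^{(0)}\circ\theta(\Gamma^{(\ell)})$. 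The sign discrepancy you flagged is genuine --- one actually gets $L_\ell=(-1)^\ell\Gamma^{(\ell)}$ modulo $\mathcal{R}$ --- but it is harmless: the lemma is only used to say that $\rho^{(0)}\circ\theta(\Gamma^{(\ell)})$ is a loopless $1$-invariant lying in the same class (up to scalar) as the exact invariant $\Gamma^{(\ell)}$, and the overall sign never matters downstream. You should not have let that sign push you off the direct argument and onto machinery that does not fit the plain-graph setting.
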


\begin{proof}
	Label the loops in $\Gamma^{(\ell)}$ from $1$ to $\ell$.  Note that $\rho \circ \theta_1 ( \Gamma^{(\ell)} ) = \Gamma^{(\ell)} + \rho^{(\ell - 1)} \circ \theta_1 ( \Gamma^{(\ell)} )$.  Since $\theta_1 ( \Gamma^{(\ell)} )$ is a $\star$-ed plain graph, $\rho \circ \theta_1 ( \Gamma^{(\ell)} )$ is a plain-relation.  Hence $\rho^{(\ell - 1)} \circ \theta_1 ( \Gamma^{(\ell)} )$ is 1-invariant.  Define \begin{equation*}
	L_1 \equiv \rho^{(\ell - 1)} \circ \theta_1 ( \Gamma^{(\ell)} ) = \rho \circ \theta_1 ( \Gamma^{(\ell)} ) - \Gamma^{(\ell)},
\end{equation*}
\noindent $L_1$ is by definition an exact 1-invariant up to a plain-relation $\rho \circ \theta_1 ( \Gamma^{\ell} )$.  Furthermore, all graphs in $L_1$ are $(\ell - 1)$-looped.  Therefore, $\rho \circ \theta_2 ( L_1 ) = L_1 + \rho^{(\ell - 2)} \circ \theta_2 ( L_1 )$.  Define 
\begin{equation*}
	L_2 \equiv \rho^{(\ell - 2)} \circ \theta_2 ( L_1 ) = \rho \circ \theta_2 (L_1) - L_1,
\end{equation*}
\noindent which is also an exact 1-invariant up to plain-relations and consists of graphs that are $(\ell - 2)$-looped.  Iterating $\ell$ times, we obtain
\begin{equation*}
	L_\ell = \rho^{(0)} \circ \theta_{\ell - 1} \circ \cdots \circ \rho^{(\ell - 2)} \circ \theta_{2} \circ \rho^{(\ell - 1)} \circ \theta_1 (\Gamma^{(\ell)}) = \rho^{(0)} \circ \theta (\Gamma^{(\ell)}),
\end{equation*}
\noindent which is an exact 1-invariant up to plain-relations.  
\end{proof}

\begin{Thm}
	A non-minimal frame invariant is an exact invariant up to a plain-relation.
\end{Thm}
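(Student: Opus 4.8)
The plan is to reduce the statement to the structural characterization of loopless frame invariants already established in Proposition~\ref{prop: frame}, and then to exhibit, for any given frame, an explicit exact invariant that differs from the sum of all such graphs only by total derivatives (plain-relations). Let $L$ be a non-minimal loopless frame invariant with common frame $F = f(L)$. By Proposition~\ref{prop: frame}, up to proportionality $L$ is the sum with unit coefficients of all loopless plain-graphs $\Gamma$ satisfying (a) the frame of $\Gamma$ is $F$, and (b) if $\Gamma$ is disconnected then none of its connected components is a tree. The graphs $\Gamma$ arising this way are obtained from $F$ by attaching "tree decorations": deleting the edges of $F$ from $\Gamma$ leaves a disjoint union of trees, each rooted at a vertex of $F$ (or forming an isolated tree-component, which condition (b) forbids). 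So the combinatorial content is that $L$ sums over all ways of growing rooted forests on the vertices of $F$, with every tree anchored to $F$.

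The key step will be to show that this sum equals an exact invariant modulo plain-relations. The natural candidate is built by ``collapsing the decorations'': on each vertex $v$ of $F$, the sum over all rooted trees with a given number $k_v$ of extra vertices attached at $v$ should, by the leaf-shuffling arguments of Lemma~\ref{lem: ls} and Corollary~\ref{cor: rp}, be expressible (up to plain-relations) in terms of graphs with loops at $v$ of the appropriate degree --- since attaching a tree with $k$ vertices contributes $2k$ edges to that vertex's ``decoration budget.'' Concretely, I would argue that the equal-weight sum over all trees rooted at $v$ with a fixed vertex count is, up to a plain-relation, proportional to the graph where those $2k$ edges are realized as $k$ loops at $v$; this is the content of running the minimal-1-invariant tree identity (Theorem~\ref{thm: 4}) ``in reverse'' via $\rho^{(0)}\circ\theta$ and its inverse as in Lemma~\ref{lem: l loop exact}. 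Performing this collapse simultaneously at every vertex of $F$ converts $L$, modulo plain-relations, into a looped graph $\Gamma^{(\ell)}$ whose frame is $F$, whose loops sit precisely on the vertices where trees were attached, and whose underlying frame vertices all have degree $\ge 2$. Because the looped graph $\Gamma^{(\ell)}$ has every vertex of degree $\ge 2$ by construction (frame vertices have frame-degree $\ge 2$, and loop vertices have degree exactly $2$ from the collapse, matching the hypotheses of Lemma~\ref{lem: l loop exact}), Corollary~\ref{lem: exact} identifies it --- after re-expanding via $\rho^{(0)}\circ\theta$, which by Lemma~\ref{lem: l loop exact} returns $\Gamma^{(\ell)}$ up to a plain-relation --- with an exact $1$-invariant up to total derivatives, which is exactly what we want.

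I expect the main obstacle to be the bookkeeping in the ``collapse'' step: making precise that the equal-weight sum over all rooted-tree decorations at a single vertex $v$ equals, modulo plain-relations, the single graph with $k_v$ loops at $v$, \emph{uniformly across all of $F$'s vertices at once}, and that no cross-terms between different vertices spoil this. The leaf-shuffling technology (Lemma~\ref{lem: ls}, Corollaries~\ref{cor: rp} and~\ref{cor: ist}) guarantees that within a frame invariant every tree-decoration of a given ``shape budget'' appears with the same coefficient, so one may freely shuffle all decoration leaves onto a single anchor vertex, turning each tree into a star; then the star-with-$k$-petals at $v$ can be traded, via iterated $\rho$-relations exactly as in the proof of Lemma~\ref{lem: l loop exact}, for $k$ loops at $v$ plus lower-loop-number corrections, and an induction on the total loop number closes it. The only delicate point is ensuring the induction is well-founded and that the ``lower-loop-number corrections'' remain frame invariants with the same frame $F$, so that the hypothesis of Proposition~\ref{prop: frame} continues to apply at each stage; this follows because shuffling and $\rho$ never alter the frame. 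Once that is nailed down, the theorem is immediate from Lemma~\ref{lem: l loop exact} and Corollary~\ref{lem: exact}.
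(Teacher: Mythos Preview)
Your proposal has the right ingredients --- Lemma~\ref{lem: l loop exact}, the looped exact invariant built from the frame, and an induction --- but it contains a genuine error that would make the argument fail as written. You assert that the ``lower-loop-number corrections'' remain frame invariants with the \emph{same} frame $F$, justified by the claim that ``shuffling and $\rho$ never alter the frame.'' This is false. When you expand $\rho^{(0)}\circ\theta(\Gamma^{(\ell)})$ for the looped graph $\Gamma^{(\ell)}$ obtained by placing loops on the empty vertices of $F$, you produce graphs whose frames are \emph{different} from $F$. For instance, with two empty vertices $v_3,v_4$ in $F$, one term in the expansion joins $v_3$ to $v_4$ by two edges (each star-edge landing on the other empty vertex), and the resulting graph has a frame with \emph{zero} empty vertices rather than two. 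So the corrections are frame invariants with strictly fewer empty vertices in their frames, not the same frame.

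This is precisely what drives the paper's proof, which runs in the opposite direction from yours. Rather than trying to ``collapse'' $L$ into a looped graph, the paper starts from the looped exact invariant $\Gamma^{(\alpha)}$ (frame $F$ with one loop on each of its $\alpha$ empty vertices), applies Lemma~\ref{lem: l loop exact} to get that $\rho^{(0)}\circ\theta(\Gamma^{(\alpha)})$ equals $\Gamma^{(\alpha)}$ up to a plain-relation, and then decomposes this loopless $1$-invariant via Lemma~\ref{lem: framinvlincomb} into a sum $\sum_i F_i$ of frame invariants. Exactly one summand $\tilde{F}$ has frame $F$; all others have frames with fewer than $\alpha$ empty vertices. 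The induction is on $\alpha$, the number of empty vertices in the frame (base case $\alpha=0$ is immediate since then $L$ equals its own frame and is already exact). By the induction hypothesis the other $F_i$ are exact up to plain-relations, hence so is $\tilde{F}$, and Proposition~\ref{prop: frame} identifies $\tilde{F}$ with $L$ up to proportionality. Your ``collapse'' direction and your placement of loops ``at $v$'' rather than at the empty vertices obscure this structure; once you correct the frame claim and reverse the direction, you recover the paper's argument.
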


\begin{proof}
	It is sufficient to consider any loopless non-minimal frame invariant $L^{(k)}\in\mathcal{L}_{\nv , \Delta}^{}$, with $k$ the number of empty vertices in $f ( L^{(k)} )$.  Note that since $L^{(k)}$ is non-minimal, it is not a tree and therefore $k < \nv$.  We prove the theorem by induction on $k$.  
\begin{enumerate}

\item

	$k = 0$: In this case, $f ( L^{(0)} ) = L^{(0)}$.  By construction, a vertex in a graph in $L^{(0)}$ is of degree no less than 2, and thus $L^{(0)}$ is already exactly invariant.

\item

	If any $L^{(k)}$ with $k < \alpha$ is an exact invariant plus a plain-relation: Consider any loopless non-minimal frame invariant $L^{(\alpha)}$.  Form an $\ell$-looped exact 1-invariant $\Gamma^{(\alpha)}$ from $f (L^{(\alpha)})$ by adding a loop to each empty vertex.  By Lemma \ref{lem: l loop exact}, $\rho^{(0)} \circ \theta (\Gamma^{(\alpha)}) $ is a loopless 1-invariant, equal to $\Gamma^{(\alpha)}$ up to plain-relations.
	Using Lemma \ref{lem: framinvlincomb}, $\rho^{(0)} \circ \theta ( \Gamma^{(\alpha)} ) =  \sum_i F_i$, where each $F_i$ is a frame invariant with a distinct frame.  Provided $\alpha<n$, there is exactly one $F_i$, say $\tilde{F}$, with $f ( \tilde{F} )=f (L^{(\alpha)})$, and all other $F_i$ have fewer than $\alpha$ empty vertices in $f ( F_i )$.  Since all other $F_i$'s have fewer than $\alpha$ empty vertices in $f(F_i)$, they are exact 1-invariants up to a plain-relation, by the induction hypothesis.  But this means that $\tilde{F}$ is also an exact 1-invariant up to plain-relations.  By Proposition \ref{prop: frame}, since $\tilde{F}$ and $L^{(\alpha)}$ share the same frame, $\tilde{F} $ is proportional to $L^{(\alpha)}$ and thus $L^{(\alpha)}$ is also an exact 1-invariant up to plain-relations.
\end{enumerate}
\noindent By induction, $L^{(k)}$ is exactly 1-invariant up to plain-relations for any $0\leq k < \nv$.
\end{proof}

\noindent From the above discussion, we can conclude: The set of all exact 1-invariants with all looped vertices of degree 2 generates all non-minimal 1-invariants, up to plain-relations.  An example of these graphs for $\nv =3$ and $\Delta=4$ is shown in Figure \ref{fig: nonminimal}.
\begin{figure}[t!]
\begin{align*}
\begin{minipage}{\mini cm}
\begin{tikzpicture}
\draw [thick] (0.4,0.69) ..  controls (-0.168,0.115) and (0.968,0.115) ..  (0.4,0.69);
\filldraw (0,0) circle [radius=0.08];
\filldraw (\1,0) circle [radius=0.08];
\filldraw (0.4,0.69) circle [radius=0.08];
\draw [thick] (0,0) -- (\1,0);
\draw [thick] (0,0) to [out=30,in=150] (\1,0);
\draw [thick] (0,0) to [out=-30,in=-150] (\1,0);
\end{tikzpicture}
\end{minipage} 
\hspace{1cm}
\begin{minipage}{\mini cm}
\begin{tikzpicture}
\filldraw (0,0) circle [radius=0.08];
\filldraw (\1,0) circle [radius=0.08];
\filldraw (0.4,0.69) circle [radius=0.08];
\draw [thick] (0,0) -- (0.4,0.69) -- (\1,0);
\draw [thick] (0,0) to [out=30,in=150] (\1,0);
\draw [thick] (0,0) to [out=-30,in=-150] (\1,0);
\end{tikzpicture}
\end{minipage} 
\hspace{1cm}
\begin{minipage}{\mini cm}
\begin{tikzpicture}
\filldraw (0,0) circle [radius=0.08];
\filldraw (\1,0) circle [radius=0.08];
\filldraw (0.4,0.69) circle [radius=0.08];
\draw [thick] (0,0) to [out=30,in=150] (\1,0);
\draw [thick] (0,0) to [out=-30,in=-150] (\1,0);
\draw [thick] (0,0) to [out=90,in=-150] (0.4,0.69);
\draw [thick] (0,0) to [out=30,in=-90] (0.4,0.69);
\end{tikzpicture}
\end{minipage}
\end{align*}
\caption{All 1-invariant terms up to total derivatives, with $\nv =3$, $\Delta=4$ and $\pd =1$.}
\label{fig: nonminimal}
\end{figure}
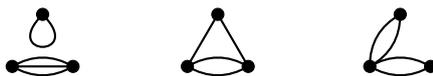

\vspace{0.3cm}

We end our discussion of the linear shift symmetry with a summary of the full classification of 1-invariants: 

\begin{Thm} [Classification of 1-invariants] \label{thm: classification of 1-invariants}
	The sum with unit coefficients of all trees with $n$ vertices is the unique 1-invariant with $\Delta = \nv -1$ \emph{(}up to proportionality and plain-relations\emph{)}.  The set of all graphs consisting of $\nv$ vertices, with all vertices of degree higher than 1 and any looped vertex of degree 2, generate all 1-invariants with $\Delta > \nv - 1$ \emph{(}up to plain-relations\emph{)}.  There are no invariants with $\Delta < \nv - 1$.
\end{Thm}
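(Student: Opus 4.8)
The final statement, Theorem~\ref{thm: classification of 1-invariants}, is a synthesis of the three preceding results in Appendix~\ref{sec: linear}, so the plan is to assemble it rather than prove it from scratch. I would organize the proof into three parts corresponding to the three clauses of the statement. For the first clause---uniqueness (up to proportionality and plain-relations) of the $\Delta = \nv - 1$ invariant together with its identification as the equal-weight sum of all trees on $\nv$ vertices---I would simply invoke Theorem~\ref{thm: 4}, which states precisely that any minimal $\nv$-point loopless $1$-invariant is proportional to the sum with unit coefficients of all trees with $\nv$ vertices, and Proposition~\ref{thm: 3}, which gives uniqueness up to proportionality. The passage from ``loopless'' to the full quotient $\mathcal{L}/\mathcal{R}$ is handled by Proposition~\ref{prop: loopless_basis} (the Loopless Basis), which tells us that restricting to loopless plain-graphs loses nothing up to plain-relations.

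For the second clause---that the graphs on $\nv$ vertices with all vertices of degree $> 1$ and any looped vertex of degree $2$ generate all $1$-invariants with $\Delta > \nv - 1$ up to plain-relations---I would cite the final Theorem of the ``Non-minimal Invariants'' subsection, which establishes that a non-minimal frame invariant is an exact invariant up to a plain-relation, together with Lemma~\ref{lem: framinvlincomb}, which decomposes any loopless $1$-invariant into a linear combination of frame invariants. Combining these: any non-minimal $1$-invariant is (up to a plain-relation) a linear combination of exact $1$-invariants, and by Corollary~\ref{lem: exact} an exact $1$-invariant consists of graphs all of whose vertices have degree at least $2$. The refinement that looped vertices may be taken to have degree exactly $2$ comes from Lemma~\ref{lem: l loop exact}, which rewrites any $\ell$-looped exact invariant with looped vertices of degree $2$ via $\rho^{(0)}\circ\theta$, showing these suffice to generate the rest modulo plain-relations. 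I would spell out briefly why a looped vertex of degree exceeding $2$ is redundant: one integrates by parts (the graphical content of $\rho\circ\theta$) to push a derivative off the loop, reducing to lower-loop graphs, exactly as in the inductive structure of Lemma~\ref{lem: l loop exact} and the non-minimal frame invariant theorem.

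For the third clause---no invariants with $\Delta < \nv - 1$---I would argue as in the case analysis inside the proof of Proposition~\ref{prop: 4}. By Proposition~\ref{prop: loopless_basis} it suffices to consider loopless $1$-invariants, and by Proposition~\ref{prop: min plain} every vertex has degree at least $1$ (for $\pd = 1$ the bound $\tfrac12(\pd+1) = 1$), so empty vertices are forbidden. If some plain-graph in the invariant has no leaves, then every vertex has degree $\geq 2$ and $\Delta \geq \nv$. If there is a graph with a leaf, leaf-shuffling (Lemma~\ref{lem: ls} and Corollary~\ref{cor: rp}) lets us collect all leaves onto a single star $T$; either the complement $\Gamma'$ is nonempty, forcing $\Delta \geq (N_T - 1) + N' + 1 = \nv$ via Corollary~\ref{cor: ist} (so the star is attached by exactly one edge) and the degree-$2$ bound on $\Gamma'$, or $\Gamma'$ is empty and the graph is a tree with $\Delta = \nv - 1$. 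In every case $\Delta \geq \nv - 1$, so no invariant has $\Delta < \nv - 1$.

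The main obstacle is not any single new argument---every ingredient is already proved---but rather the bookkeeping of making sure the three clauses are stated with consistent qualifiers (``up to proportionality,'' ``up to plain-relations,'' ``loopless''), and in particular justifying the reduction from arbitrary $1$-invariants to loopless ones uniformly across all three clauses so that the statement reads cleanly. I expect the most delicate point to be the second clause: one must be careful that the phrase ``generate all $1$-invariants with $\Delta > \nv - 1$'' is interpreted as ``span, modulo plain-relations,'' and that the restriction to looped vertices of degree $2$ is genuinely without loss of generality---this is where Lemma~\ref{lem: l loop exact} does the real work, and the proof should cite it explicitly rather than gesturing at integration by parts.
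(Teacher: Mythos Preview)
Your proposal is correct and matches the paper's own treatment: the paper presents Theorem~\ref{thm: classification of 1-invariants} explicitly as a summary of the preceding results (``We end our discussion of the linear shift symmetry with a summary of the full classification of 1-invariants''), without a separate proof, immediately after noting that ``the set of all exact 1-invariants with all looped vertices of degree 2 generates all non-minimal 1-invariants, up to plain-relations.'' Your identification of the ingredients---Theorem~\ref{thm: 4} and Proposition~\ref{thm: 3} for the first clause, the non-minimal frame-invariant theorem together with Lemmas~\ref{lem: framinvlincomb} and~\ref{lem: l loop exact} for the second, and the case analysis within Proposition~\ref{prop: 4} for the third---is exactly the assembly the paper has in mind.
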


\subsection{Invariants from Superpositions} \label{sec: superposition} 

In this section, we describe a method of combining invariants to form other invariants.  Therefore, we will need to keep track of the degree of the polynomial shifts under which the variation of various terms are taken.  It is important to recall at this point that the variation map $\delta_\pd$ depends crucially on $\pd$.  Therefore, all the different types of graphs depend on $\pd$ as well.  Until now, this dependence on $\pd$ has been kept implicit. We will now make it explicit by referring to graphs as \emph{$\pd$-graphs}. 

The method of combining invariants involves the notion of superposition, defined below, which combines graphs with different values of $\pd$.

\begin{Def} [Superposition of Graphs]
	Given a $\pd_A$-graph $\Gamma_A$ and a $\pd_B$-graph $\Gamma_B$, which each have the same value of $\nv$, the \emph{superposition} of $\Gamma_A$ and $\Gamma_B$ is a $\pd$-graph formed by applying the the following procedure: 
\begin{enumerate}

\item
	
	If there is a $\times$-vertex in $\Gamma_B$, replace the vertex in $\Gamma_A$ that has the same label as the $\times$-vertex in $\Gamma_B$ with a $\times$-vertex.
	
\item
	
	Add any $\star$-vertices in $\Gamma_B$ to $\Gamma_A$.   
	
\item
	
	Take all edges in $\Gamma_B$ and add them to $\Gamma_A$, joining the same vertices as they do in $\Gamma_B$.

\item
	
	Identify the resulting graph as a null graph if deg$(\times)$ is higher than $\pd$ or there are two $\times$-vertices.
	
\end{enumerate}

\noindent The resulting graph is denoted by $\Gamma_A \cup \Gamma_B$.  
\end{Def}

\noindent Note that $\Gamma_A \cup \Gamma_B = \Gamma_B \cup \Gamma_A$. Note that the above definition of superposition depends on $\pd$.  We will refer to such a superposition as a \emph{$\pd$-superposition}.

\begin{Def} [Superposition of Linear Combinations]
	Given the linear combinations $L_A = \sum_{i=1}^{k_A} a_i \hspace{0.6mm} \Gamma^A_i$ and $L_B = \sum_{i=1}^{k_B} b_i \hspace{0.6mm} \Gamma^B_i$, where $\Gamma^A_i, \Gamma^B_j$ are graphs with the same $n$, the superposition of $L_A$ and $L_B$ is defined as 
\begin{equation*}
	L_A \cup L_B \equiv \sum_{i=1}^{k_A} \sum_{j=1}^{k_B} a_i \hspace{0.6mm} b_j \hspace{0.6mm} \Gamma^A_i \cup \Gamma^B_j.
\end{equation*}
\end{Def}

\subsubsection{Superposition of a \texorpdfstring{$\pd$}{\pd}-invariant and an Exact Invariant}

This section involves the construction of new invariants by taking the superposition of a $\pd$-invariant with an exact invariant.  
\begin{Lem}\label{lem: exactsuper}
\quad
\begin{enumerate}
\item     

    \label{lem: exactsuper s1}Given a $\pd$-graph $\Gamma \in \mathcal{L}_{\nv , \Delta}^{}$ and a $\pd_E$-graph $\Gamma_E \in \mathcal{L}_{\nv , \Delta_E}$, where $\Gamma_E$ is an exact $\pd_E$-invariant,
\begin{equation} \label{eq: supdel=delsup}
    \Gamma_E \cup \delta_\pd ( \Gamma ) = \delta_{\pd + \pd_E + 1}(\Gamma_E \cup \Gamma),
\end{equation}
where $\cup$ denotes $( \pd + \pd_E + 1)$-superposition.
\item

    \label{lem: exactsuper s2}Given a $\pd$-graph $\Gamma^{ \star }\in \mathcal{L}^{ \star }_{\nv , \Delta}$ and a $\pd_E$-graph $\Gamma_E \in \mathcal{L}_{\nv , \Delta_E}$, where $\Gamma_E$ is an exact $\pd_E$-invariant,
\begin{equation} \label{eq: suprho=rhosup}
    \Gamma_E \cup \rho(\Gamma^{ \star }) = \rho(\Gamma_E \cup \Gamma^{ \star }),
\end{equation}
where $\cup$ denotes $(\pd + \pd_E + 1)$-superposition.
\end{enumerate}
\end{Lem}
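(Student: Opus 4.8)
The plan is to prove both identities by direct comparison of the graph-theoretic operations on each side, exploiting the fact that $\Gamma_E$ being an exact $\pd_E$-invariant forces every vertex of $\Gamma_E$ to have degree at least $\pd_E+1$ (Corollary~\ref{lem: exact}). First I would fix notation: write $V(\Gamma) = \{v_1,\dots,v_\nv\}$ for the common vertex set (recall superposition is only defined for graphs with the same $\nv$, and by convention we are working with labeled graphs so the vertices are identified across $\Gamma$ and $\Gamma_E$). The key numerical fact I would record up front is that superposing $\Gamma_E$ onto a graph adds $\deg_{\Gamma_E}(v_i)$ edges at each $v_i$, so if $v_i$ carries a $\times$-vertex after superposition its degree is raised by at least $\pd_E+1$; this is precisely why the relevant $\times$-vertex degree budget works out to $\pd + \pd_E + 1$ rather than anything smaller, and why no spurious nullifications occur.

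For part~(\ref{lem: exactsuper s1}): expand the left side as $\Gamma_E \cup \delta_\pd(\Gamma) = \Gamma_E \cup \sum_{i=1}^\nv \Gamma_i^\times = \sum_{i=1}^\nv (\Gamma_E \cup \Gamma_i^\times)$, where $\Gamma_i^\times$ replaces $v_i$ by a $\times$-vertex. Expand the right side as $\delta_{\pd+\pd_E+1}(\Gamma_E \cup \Gamma) = \sum_{i=1}^\nv (\Gamma_E\cup\Gamma)_i^\times$. The two sums are term-by-term equal as graphs because $(\Gamma_E \cup \Gamma)_i^\times$ and $\Gamma_E \cup \Gamma_i^\times$ both consist of the union of the edges of $\Gamma$ and $\Gamma_E$ with the vertex $v_i$ marked $\times$ --- the only subtlety is the nullity convention. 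Here I would argue: a summand on the right vanishes iff $v_i$ has degree $>\pd+\pd_E+1$ in $\Gamma_E\cup\Gamma$, i.e. $\deg_\Gamma(v_i) + \deg_{\Gamma_E}(v_i) > \pd+\pd_E+1$. On the left, $\Gamma_i^\times$ is already null (as a $\pd$-graph) iff $\deg_\Gamma(v_i) > \pd$; otherwise $\Gamma_E\cup\Gamma_i^\times$ is computed as a $(\pd+\pd_E+1)$-graph and is null iff $\deg_\Gamma(v_i)+\deg_{\Gamma_E}(v_i) > \pd+\pd_E+1$. Since $\deg_{\Gamma_E}(v_i)\ge \pd_E+1$ always, $\deg_\Gamma(v_i)>\pd$ implies $\deg_\Gamma(v_i)+\deg_{\Gamma_E}(v_i) > \pd + \pd_E + 1$, so the two nullity conditions agree, and the two sums coincide termwise.

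For part~(\ref{lem: exactsuper s2}): here $\Gamma^\star$ carries one $\times$-vertex and $k = \nsv$ $\star$-vertices $v_1^\star,\dots,v_k^\star$, and $\rho(\Gamma^\star) = \sum_{j_1,\dots,j_k=1}^\nv \Gamma^\times_{j_1\cdots j_k}$, where $\Gamma^\times_{j_1\cdots j_k}$ is obtained by, for each $\alpha$, deleting $v_\alpha^\star$ and re-attaching its incident edge to $v_{j_\alpha}$. I would expand both sides over the same index set $(j_1,\dots,j_k)\in\{1,\dots,\nv\}^k$ and show $\Gamma_E\cup\Gamma^\times_{j_1\cdots j_k} = (\Gamma_E\cup\Gamma^\star)^\times_{j_1\cdots j_k}$ for each tuple, since $\Gamma_E$ has no $\star$- or $\times$-vertices, so superposing it commutes with the edge-reattachment operations defining $\rho$; again the one place needing care is the $\times$-degree nullity convention, handled exactly as in part~(\ref{lem: exactsuper s1}) using $\deg_{\Gamma_E}(\text{$\times$-vertex}) \ge \pd_E+1$. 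I expect the main obstacle to be bookkeeping the nullity/degree conventions carefully enough that the termwise identities hold literally as equalities of (possibly null) graphs, rather than merely up to graphs that happen to cancel --- in particular verifying that a summand is null on one side exactly when the corresponding summand is null on the other. Once that is pinned down, both identities are immediate from linearity (extend $\cup$, $\delta$, $\rho$ over formal sums) and the observation that all the operations involved act only on local incidence data that superposition with an exact invariant leaves untouched.
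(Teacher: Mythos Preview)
Your proposal is correct and follows essentially the same approach as the paper: both argue that the underlying graph operations commute termwise, with the only issue being whether a summand is null on one side but not the other, and both resolve this by invoking Corollary~\ref{lem: exact} to get $\deg_{\Gamma_E}(v_i)\ge \pd_E+1$ and hence that $\deg_\Gamma(v_i)>\pd$ forces the combined degree to exceed $\pd+\pd_E+1$. Your version is slightly more explicit in writing out the termwise expansions, but the substance is identical.
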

\begin{proof}
\quad
\begin{enumerate}

\item 

Operationally, a graph in the LHS of \eqref{eq: supdel=delsup} is given by substituting one $\bullet$-vertex, $v$, in $\Gamma$ with a $\times$-vertex and then adding the edges in $\Gamma_E$ to the result.  Meanwhile, the RHS is given by adding the edges in $\Gamma_E$ to $\Gamma$ first before substituting $v$ by a $\times$-vertex.  Thus, \eqref{eq: supdel=delsup} is violated only when a graph vanishes from one side and not the other.  A graph vanishes from the RHS if and only if the degree of the vertex $v$ in $\Gamma_E \cup \Gamma$ is greater than $\pd + \pd_E + 1$.  If this condition holds, then the graph also vanishes from the LHS by the rules of $( \pd + \pd_E + 1)$-superposition.  A graph could also possibly vanish from the LHS if deg$(v) > \pd$ in $\Gamma$.  However, if deg$(v) > \pd$ in $\Gamma$, then deg$(v) > \pd + \pd_E + 1$ in $\Gamma_E \cup \Gamma$ since the degree of a vertex in $\Gamma_E$ is at least $\pd_E +1$ (by Corollary \ref{lem: exact}).  Therefore, the conditions for the vanishing of a graph from either side of \eqref{eq: supdel=delsup} are identical and thus the equation holds.

\item 

Once again, \eqref{eq: suprho=rhosup} is violated only when a graph vanishes from one side and not the other.  A graph vanishes from the RHS if and only if the degree of the $\times$-vertex in $\rho(\Gamma_E \cup \Gamma^{ \star })$ is greater than $\pd + \pd_E + 1$.  If this condition holds, then the graph also vanishes from the LHS by the rules of $( \pd + \pd_E + 1)$-superposition.  A graph on the LHS could also possibly vanish if deg$(\times) > \pd$ for a graph $\Gamma^{ \times }$ in $\rho(\Gamma^{ \star })$.  However, then deg$(\times) > \pd + \pd_E + 1$ in $\Gamma_E \cup \Gamma^{ \times }$, since the degree of a vertex in $\Gamma_E$ is at least $\pd_E +1$ (by Corollary \ref{lem: exact}).  Therefore, the conditions for the vanishing of a graph from either side of \eqref{eq: suprho=rhosup} are identical and thus the equation holds.
\end{enumerate}
\vspace{-0.5cm}
\end{proof} 
 
\noindent Now we apply Lemma \ref{lem: exactsuper} to prove the main result:
 
\begin{Thm}
For fixed $\nv$, the superposition of a $\pd$-invariant and an exact $\pd_E$-invariant is a $( \pd + \pd_E + 1)$-invariant.
\end{Thm}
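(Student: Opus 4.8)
The plan is to reduce the statement to the consistency equation \eqref{eq: consistency} and then superpose it with the exact invariant, using Lemma \ref{lem: exactsuper} to commute superposition past both the variation map $\delta$ and the derivative map $\rho$. First I would let $L$ be the given $\pd$-invariant and $L_E$ the given exact $\pd_E$-invariant, both on the same number of fields $\nv$. Since $L$ is a $\pd$-invariant, Definition of $\pd$-invariant together with Definition \ref{def: relations} gives a $\star$-graph linear combination $L^{ \star }\in\mathcal{L}^{ \star }$ with $\delta_\pd(L)=\rho(L^{ \star })$. I then propose to show that $L_E\cup L$ is the desired $( \pd + \pd_E + 1)$-invariant by exhibiting an explicit witness for its consistency equation, namely $L_E\cup L^{ \star }$.

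The key computation is the chain
\begin{equation*}
\delta_{ \pd + \pd_E + 1}(L_E\cup L)=L_E\cup\delta_\pd(L)=L_E\cup\rho(L^{ \star })=\rho(L_E\cup L^{ \star }),
\end{equation*}
where the first equality is Lemma \ref{lem: exactsuper}(\ref{lem: exactsuper s1}), the second is the consistency equation for $L$, and the third is Lemma \ref{lem: exactsuper}(\ref{lem: exactsuper s2}). Both lemma parts are stated for single graphs, so the step that needs care is extending them by bilinearity over the formal sums defining $L$, $L_E$ and $L^{ \star }$; this is routine since $\cup$, $\delta$ and $\rho$ are all defined by linear extension over formal sums, but I would spell out that the $( \pd + \pd_E + 1)$-superposition on the left must match the superposition appearing in the lemmas, which it does by construction. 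Once the chain is in place, $\rho(L_E\cup L^{ \star })$ exhibits $\delta_{ \pd + \pd_E + 1}(L_E\cup L)$ as a $\times$-relation, i.e. an element of $\mathcal{R}^{ \times }$, which is precisely the definition of $L_E\cup L$ being a $( \pd + \pd_E + 1)$-invariant.

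One subtlety I would flag and dispose of is the possibility that $L_E\cup L$, $L_E\cup L^{ \star }$, or the various $\delta$ and $\rho$ images collapse to the null graph for degree reasons: a superposed graph is nulled when the resulting $\times$-vertex exceeds degree $\pd + \pd_E + 1$ or when two $\times$-vertices would appear. The content of Lemma \ref{lem: exactsuper} is exactly that these nulling conditions are compatible on the two sides of \eqref{eq: supdel=delsup} and \eqref{eq: suprho=rhosup}, using that every vertex of an exact $\pd_E$-invariant has degree at least $\pd_E+1$ (Corollary \ref{lem: exact}); so this is already handled upstream and I would just cite it rather than re-derive it. There is at most one $\times$-vertex in each of $L$, $\delta_\pd(L)$, $L^{ \star }$ and $L_E$ (the exact invariant is a plain-graph combination, hence has no $\times$-vertex), so no clash of two $\times$-vertices arises, and the single $\times$-vertex bookkeeping is exactly what the lemmas track.

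The main obstacle, such as it is, is bookkeeping rather than conceptual: ensuring that "the same $\nv$" hypothesis is genuinely used (it is needed for $\cup$ to be defined at all, since superposition identifies vertices by label), and that the associativity/commutativity needed to write $L_E\cup\rho(L^{ \star })$ and then $\rho(L_E\cup L^{ \star })$ as the same object is legitimate — here I would invoke $\Gamma_A\cup\Gamma_B=\Gamma_B\cup\Gamma_A$ and the fact that $\rho$ only touches the $\star$-vertices and the $\times$-vertex, which $L_E$ does not carry, so the edges contributed by $L_E$ are simply carried along inertly through the derivative map. After that the proof is two lines plus the two lemma citations; no genuinely hard step remains.
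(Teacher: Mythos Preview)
Your proposal is correct and follows essentially the same argument as the paper: write the consistency equation $\delta_\pd(L)=\rho(L^{\star})$, superpose with $L_E$, and invoke the two parts of Lemma~\ref{lem: exactsuper} to commute $L_E\cup(\cdot)$ past $\delta$ and $\rho$, yielding $\delta_{\pd+\pd_E+1}(L_E\cup L)=\rho(L_E\cup L^{\star})$. The paper spells out the bilinear extension by writing out the sums explicitly, but otherwise the structure and the witness $\tilde L^{\star}=L_E\cup L^{\star}$ are identical to yours.
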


\begin{proof}
Denote the $P$-invariant by $L=\sum_{i=1}^{k} b_i\Gamma_i$ and the exact $\pd_E$-invariant by $L_E=\sum_{i=1}^{k_E} a_i \Gamma^E_i$.  By Corollary \ref{lem: exact}, all vertices in $\Gamma^E_i$ have degree greater than $P_E$.  Since $L$ is a $P$-invariant, there exists a linear combination of $\star$-graphs, $L^{ \star }=\sum_{i=1}^{k^{ \star }} c_i \Gamma^{ \star }_i$, such that the folowing consistency equation holds:
\begin{equation} \label{eq: consistency superposition}
    \delta_{P} (L)=\rho(L^{ \star }).  
\end{equation}
\noindent Define $\tilde{L} \equiv L_E \cup L = \sum_{i,j} a_i  b_j   \Gamma^E_i\cup\Gamma_j $.  Then, using Statement \ref{lem: exactsuper s1} of Lemma \ref{lem: exactsuper}:
\begin{equation} \label{eq: superexacteqn1}
\begin{aligned}
    \delta_{\pd + \pd_E+1}(\tilde{L}) & = \sum_{i,j} a_i  b_j  \, \delta_{\pd + \pd_E+1}(\Gamma^E_i\cup\Gamma_j) \\
    & = \sum_{i,j} a_i  b_j \, \Gamma^E_i\cup\delta_{\pd}(\Gamma_j) = \sum_{i} a_i \, \Gamma^E_i\cup\delta_{\pd}(L)
\end{aligned}
\end{equation}

\noindent Furthermore, define $\tilde{L}^{ \star }\equiv\sum_{i,j} a_i c_j  \Gamma^E_i\cup \Gamma^{ \star }_j$ using $( \pd + \pd_E + 1)$-superposition.  Using Statement \ref{lem: exactsuper s2} of Lemma \ref{lem: exactsuper}:
\begin{align}\label{eq: superexacteqn2}
&\rho(\tilde{L}^{ \star })= \sum_{i,j} a_i c_j  \rho(\Gamma^E_i\cup \Gamma^{ \star }_j)=\sum_{i,j} a_i c_j  \Gamma^E_i\cup\rho( \Gamma^{ \star }_j)=\sum_{i} a_i   \Gamma^E_i\cup\rho(L^{ \star })
\end{align}
Combining \eqref{eq: consistency superposition}, \eqref{eq: superexacteqn1} and \eqref{eq: superexacteqn2} we have that $\delta_{\pd + \pd_E + 1} (\tilde{L})=\rho(\tilde{L}^{ \star })$ and therefore $\tilde{L}\equiv L_E \cup L$ is a $(\pd + \pd_E + 1)$-invariant.
\end{proof}

\subsubsection{Superposition of Minimal Loopless 1-invariants}

    In this section we show that the superposition of $\nts$ minimal loopless 1-invariants results in a $(2\nts-1)$-invariant.  To prove this statement, we need to construct a linear combination of Medusas in order to write down a valid consistency equation.  This construction requires intermediate $\star$-graphs called ``hyper-Medusas", which we now define:

\begin{Def} [Hyper-Medusa]
   A \emph{hyper-Medusa} is a loopless $\star$-graph with all $\star$-vertices adjacent to the $\times$-vertex, such that the degree of the $\times$-vertex $\text{\emph{deg}}(\times)$ and the number of $\star$-vertices $N(\star)$ satisfy $\text{\emph{deg}}(\times) \geq \pd + 1 - \nsv$.  
\end{Def}

\begin{Lem}\label{lem: rhohm}
    Given $M_h$ a hyper-Medusa, there exists a linear combination of Medusas $L_M$ that satisfies $\rho^{(0)} ( M_h ) = \rho^{(0)} ( L_M )$.  
\end{Lem}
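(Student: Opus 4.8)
\textbf{Proof proposal for Lemma \ref{lem: rhohm}.}

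The plan is to show that a hyper-Medusa can be rewritten, modulo the kernel of $\rho^{(0)}$, as a linear combination of genuine Medusas by iteratively ``peeling off'' the excess degree at the $\times$-vertex. Let $M_h$ be a hyper-Medusa with $\nsv$ $\star$-vertices and $\text{deg}(\times) = \pd + 1 - \nsv + s$ for some excess $s \geq 0$. When $s = 0$, $M_h$ is already a Medusa and there is nothing to prove, so assume $s > 0$. The key observation is that since $M_h$ is loopless and all its $\star$-vertices are adjacent to the $\times$-vertex, applying $\theta$ (the loop-undoing map of Definition \ref{def: undo loop}) in reverse, i.e. converting some of the edges at the $\times$-vertex into loops, produces loopless spiders; but more directly, I would argue by downward induction on $s$ that $\rho^{(0)}(M_h)$ equals $\rho^{(0)}$ of a hyper-Medusa with one fewer unit of excess plus a correction that lies in the span of hyper-Medusas with strictly more $\star$-vertices (hence strictly smaller $\text{deg}(\times)$, eventually reaching $s=0$).

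First I would make precise the relation between a hyper-Medusa and the spiders/Medusas of Appendix \ref{sec: Basis of Relations}. Given $M_h$ with excess $s > 0$, pick one edge $e$ at the $\times$-vertex joining it to a $\bullet$-vertex $v_0$ (such an edge exists for an interaction term, since $\text{deg}(\times) \geq \pd+1-\nsv+1 > \nsv$ forces at least one non-$\star$ neighbor when $\nsv < \text{deg}(\times)$; the boundary cases need separate bookkeeping). Form $M_h'$ by deleting $e$ and attaching a new $\star$-vertex to the $\times$-vertex via a new edge. Then $M_h'$ is again a hyper-Medusa, now with $\nsv + 1$ $\star$-vertices and excess $s - 1$. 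The task is to compare $\rho^{(0)}(M_h)$ with $\rho^{(0)}(M_h')$: expanding $\rho$ on $M_h'$ distributes the new $\star$-vertex's edge over all $\nv$ vertices, and the summand where it lands back on $v_0$ reproduces (a copy of) $M_h$ before loopless truncation, while the other summands are either looped (killed by $\rho^{(0)}$) or are graphs obtainable from other hyper-Medusas in the same class. This is exactly the mechanism already exploited in Proposition \ref{prop: pyramid} and Lemma \ref{lem: lem}, and I would invoke those results — or their proof techniques — rather than redo the loop-counting from scratch. The cleanest route is probably: express $\theta$ applied to a loopless spider with the right loop configuration, then apply Proposition \ref{prop: star pyramid} verbatim, since that proposition already converts an arbitrary linear combination of spiders into a linear combination of Medusas under $\rho^{(0)} \circ \theta$. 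Concretely, I expect $M_h = \theta(\text{some loopless spider } S_h)$ up to relabeling, and then Proposition \ref{prop: star pyramid} gives $\rho^{(0)} \circ \theta(S_h) = \rho^{(0)}(L_M)$ directly.

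The main obstacle I anticipate is the careful identification of which spider $S_h$ (or which linear combination of spiders) has the property that $\theta(S_h)$ recovers exactly $M_h$ and no spurious graphs: the loop-versus-$\star$-vertex bookkeeping is delicate because converting $\star$-edges at the $\times$-vertex into $\times$-loops is only valid up to the constraint $\text{deg}(\times) \leq \pd$, and the hyper-Medusa condition $\text{deg}(\times) \geq \pd + 1 - \nsv$ is precisely what guarantees enough room to do this consistently. I would therefore spend most of the effort verifying that the excess $s$ can be absorbed by introducing $s$ extra $\star$-vertices without ever exceeding degree $\pd$ at any intermediate stage, and that the resulting object is in the image of $\theta$ applied to a loopless spider. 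Once that is established, the conclusion $\rho^{(0)}(M_h) = \rho^{(0)}(L_M)$ with $L_M$ a linear combination of Medusas follows from Proposition \ref{prop: star pyramid}. The only other thing to check is the degenerate case where $M_h$ has no $\bullet$-vertex adjacent to the $\times$-vertex at all (all $\text{deg}(\times)$ edges going to $\star$-vertices); but then $\text{deg}(\times) = \nsv$ forces $s = \nsv - (\pd + 1 - \nsv) = 2\nsv - \pd - 1$, and since $\text{deg}(\times) \leq \pd$ this is a bounded finite check that either already makes $M_h$ a Medusa or is handled by the same peeling argument applied to a $\star$-edge.
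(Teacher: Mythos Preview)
Your proposal has two concrete errors. First, in the inductive step the operation $M_h \to M_h'$ (delete an edge from $\times$ to a $\bullet$-vertex $v_0$, then attach a fresh $\star$-vertex to $\times$) keeps $\text{deg}(\times)$ fixed while raising $\nsv$ by one; hence the excess $s = \text{deg}(\times) + \nsv - (\pd+1)$ goes \emph{up} by one, not down, and your induction runs in the wrong direction. Second, writing $M_h = \theta(S_h)$ for a loopless spider is impossible in general: $\theta$ acts as the identity on loopless graphs, so this would force $M_h$ itself to be a loopless spider. Allowing $S_h$ to carry loops does not help either: if $S_h$ has only $\times$-loops (needed so that every $\star$-vertex of $\theta(S_h)$ is adjacent to the $\times$-vertex), then $\theta(S_h)$ satisfies $\text{deg}(\times) + \nsv = \pd + 1$ exactly, i.e.\ it is already a Medusa, never a hyper-Medusa with positive excess. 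So Proposition~\ref{prop: star pyramid} cannot be invoked as you suggest.

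The paper's argument is essentially one line and goes in the direction opposite to your first attempt: \emph{remove} excess $\star$-vertices rather than add them. Under $\rho^{(0)}$, each of the $\nsv$ $\star$-edges of $M_h$ (all incident to $\times$) gets distributed only over the $\bullet$-vertices, since sending it back to $\times$ would create a loop. Do this in two stages: first distribute $s$ of the $\star$-vertices, obtaining a linear combination $L_M$ of loopless $\star$-graphs each with $\nsv - s = \pd + 1 - \text{deg}(\times)$ remaining $\star$-vertices (still adjacent to $\times$) and with $\text{deg}(\times)$ unchanged --- these are Medusas by definition. Distributing the remaining $\star$-vertices then reproduces exactly $\rho^{(0)}(M_h)$, so $\rho^{(0)}(L_M) = \rho^{(0)}(M_h)$. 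No induction, no spiders, no appeal to the heavier machinery of Appendix~\ref{sec: Basis of Relations}.
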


\begin{proof}
    Within the action of $\rho^{(0)}$, we can delete deg$(\times) + \nsv - \pd + 1 \geq 0$ $\star$-vertices and then add the same number of edges in $M_h$ , yielding a linear combination of $\star$-graphs with exactly $\pd + 1 - \text{deg}(\times)$ $\star$-vertices.  These resulting graphs are Medusas.  
\end{proof}

\noindent The following definition will allow us to construct the desired hyper-Medusas:

\begin{Def} \label{def: hm}
	Take $\Gamma_1$ to be any $\times$-graph or $\star$-graph and for each $i=2,...,\nts$, take $T_i$ to be any tree, such that $\Gamma$ and $T_i$ have the same value of $n$.  Label the $\times$-vertex in $\Gamma_1$ by $v^\times$ and define $T^{ \times }_i$ to be the graph formed from $T_i$ by replacing the vertex that is labeled by $v^\times$ with a $\times$-vertex.  If $v^\times$ is a leaf in $T_i$, then define $\tilde{T}_i$ to be the unique $\pd = 1$ Medusa that is associated with $T^{ \times }_i$, otherwise $\tilde{T}_i  = T^{ \times }_i$.  Then we define:
\begin{equation*}	
	\chi ( \Gamma_1 \cup T_2\cup\cdots\cup T^{}_\nts ) \equiv \Gamma_1 \cup \tilde{T}_1 \cup \cdots \cup  \tilde{T}^{}_\nts.
\end{equation*}
\end{Def}

\begin{Thm}
	For fixed $\nv$, the superposition of $\nts$ minimal loopless 1-invariants is a $(2 \nts - 1)$-invariant.
\end{Thm}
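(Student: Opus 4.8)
The plan is to reduce the claim to the superposition theorem of the previous subsection together with an explicit construction of the required linear combination of Medusas, using the map $\chi$ of Definition \ref{def: hm}. Let $L^{(1)},\ldots,L^{(\nts)}$ be minimal loopless $1$-invariants (for the fixed value of $\nv$), so that by Theorem \ref{thm: 4} each $L^{(a)}$ is, up to proportionality, the equal-weight sum of all labeled trees with $\nv$ vertices. By Theorem \ref{thm: classification of 1-invariants} and Theorem \ref{thm: 4}, for each $a$ there is a linear combination of $\pd=1$ Medusas $L_M^{(a)}$ with $\delta_1(L^{(a)})=\rho^{(0)}(L_M^{(a)})$; concretely, the proof of Theorem \ref{thm: 4} exhibits $\delta_1$ of the equal-weight tree sum as the equal-weight sum of all $\times$-trees, which is $\rho^{(0)}$ of the equal-weight sum of the associated $\pd=1$ Medusas (those consisting of a $\star\!-\!\times$ edge disjoint from a tree on $\nv-1$ vertices). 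First I would fix this bookkeeping: write $L^{(a)}=\sum_i b_i^{(a)}\Gamma_i^{(a)}$ and the corresponding $\star$-graph side $L^{\star(a)}$, so that $\delta_1(L^{(a)})=\rho(L^{\star(a)})$ for each $a$.

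Next I would take the superposition $L \equiv L^{(1)}\cup\cdots\cup L^{(\nts)}$, a $(2\nts-1)$-superposition of $\pd=1$ graphs, and the candidate $\star$-graph side $L_M \equiv \chi\big(L^{\star(1)}\cup L^{(2)}\cup\cdots\cup L^{(\nts)}\big)$ (extended bilinearly over the linear combinations), where $\chi$ is the map of Definition \ref{def: hm} that, for each factor after the first, replaces $T_i$ by the unique $\pd=1$ Medusa $\tilde T_i$ associated with $T_i^{\times}$ whenever the marked vertex is a leaf. The key computational step is to verify the consistency equation
\begin{equation*}
\delta_{2\nts-1}(L)=\rho^{(0)}(L_M),
\end{equation*}
where we may use $\rho=\rho^{(0)}$ since by Proposition \ref{prop: loopless_basis} and the loopless nature of trees everything in sight is loopless. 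I expect this to follow by iterating Lemma \ref{lem: exactsuper}-type reasoning: superposition intertwines $\delta$ and $\rho$ with the only possible obstruction being the identification of graphs with an over-degree $\times$-vertex as null, and one must check that for a $(2\nts-1)$-superposition no spurious vanishing occurs — the point being that each tree factor contributes degree at least $1$ at the marked vertex, so a vertex that becomes a $\times$-vertex has degree at most $1+\cdots+1=\nts$ from the non-marked factors plus its contribution in the marked factor; since $\nts+\text{(marked degree)}\le \nts + \nts \le 2\nts-1 + 1$ one has to be slightly careful, but the $\chi$-construction is designed precisely so that the $\star$-vertices absorb the extra edge in the leaf case, keeping everything consistent. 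Then by Lemma \ref{lem: rhohm}, $L_M$ — which is a linear combination of hyper-Medusas by construction — can be rewritten, inside $\rho^{(0)}$, as $\rho^{(0)}$ of an honest linear combination of Medusas, so that $\delta_{2\nts-1}(L)\in\mathcal{R}^{\times}_{\text{loopless}}$ by Theorem \ref{thm: loopless times basis}, which is exactly the statement that $L$ is a $(2\nts-1)$-invariant.

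The main obstacle, I expect, is the careful verification that $\chi$ produces \emph{hyper-Medusas} with the right degree inequality $\text{deg}(\times)\ge \pd+1-\nsv$ for $\pd=2\nts-1$, and that the intertwining identity $\delta_{2\nts-1}(L)=\rho^{(0)}(L_M)$ holds term-by-term after all the leaf/non-leaf case distinctions in Definition \ref{def: hm} are unwound. In the leaf case, replacing $T_i^{\times}$ by the associated $\pd=1$ Medusa $\tilde T_i$ converts an edge at the $\times$-vertex into a $\star$-vertex; across $\nts-1$ such factors this can introduce up to $\nts-1$ $\star$-vertices, and one must confirm that $\text{deg}(\times)$ in the superposed graph, counting the surviving marked-factor edge plus the non-leaf contributions, equals $(2\nts-1)+1-\nsv$ or exceeds it — i.e. lands in the hyper-Medusa regime — so Lemma \ref{lem: rhohm} applies. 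A secondary subtlety is that the superposition of two equal-weight tree sums is an equal-weight sum over \emph{ordered} pairs of trees, so after removing labels the coefficients are multiplicities of isomorphism classes (as seen in the explicit $\nv=3,4$ examples in \S\ref{sec: intro to superposition}); one should phrase the whole argument in the labeled category, where all these sums are over distinct labeled graphs with unit coefficients, and only pass to the unlabeled invariants at the very end via Appendix \ref{sec: unlabeled}. Once the labeled consistency equation is in hand, invariance is immediate, and the theorem follows.
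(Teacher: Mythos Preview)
Your proposed $L_M \equiv \chi\big(L^{\star(1)}\cup L^{(2)}\cup\cdots\cup L^{(\nts)}\big)$ does not satisfy the consistency equation $\delta_{2\nts-1}(L)=\rho^{(0)}(L_M)$. The problem is that $L^{\star(1)}$ consists of $\pd=1$ Medusas, whose $\times$-vertex has degree exactly $1$; after superposing with the remaining trees and applying $\rho^{(0)}$, the first-color contribution to $\deg(\times)$ in every resulting $\times$-graph is exactly $1$. But $\delta_{2\nts-1}(L)$ contains $\times$-graphs in which the $\times$-vertex has degree $\ge 2$ already in the first tree factor $T^{(1)}$ (any non-leaf of $T^{(1)}$). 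These terms are simply absent from your $\rho^{(0)}(L_M)$. The paper handles this by splitting $\delta_\infty T^{(1)}=\delta_1 T^{(1)}+(\delta_\infty-\delta_1)T^{(1)}$: your $L_M$ is (essentially) the paper's $X_R$, coming from the $\delta_1$ piece, but a second hyper-Medusa combination $\tilde X_R$, built by applying $\chi$ to $(\delta_\infty-\delta_1)T^{(1)}\cup T^{(2)}\cup\cdots\cup T^{(\nts)}$, is needed to capture the non-leaf contributions. Only $X_R+\tilde X_R$ gives the full $\delta(L)$.

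Relatedly, the appeal to ``Lemma~\ref{lem: exactsuper}-type reasoning'' does not go through: that lemma requires one factor to be an \emph{exact} invariant (all vertex degrees $\ge \pd_E+1$), which trees are not --- leaves have degree $1$. This is precisely why the intertwining of $\rho$ and $\cup$ fails here and why the paper works instead in the $\pd\to\infty$ limit (so no graph is prematurely declared null), proves $\rho^{(0)}(X_L)=\rho^{(0)}(X_R)$ by a direct two-way containment argument using the coloring, and only at the end restricts to $\pd=2\nts-1$, checking that graphs with over-degree $\times$ drop out simultaneously from both sides. Your degree estimate ``$\nts+\text{(marked degree)}\le \nts+\nts$'' is also off: a vertex can have degree up to $\nv-1$ in a single tree, so $\deg(\times)$ in the superposition can far exceed $2\nts$; what makes the argument work is not a bound on $\deg(\times)$ but the hyper-Medusa inequality $\deg(\times)\ge \pd+1-\nsv$, which the paper verifies from $\deg(\times)\ge \nsv+2(\nts-\nsv)$.
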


\begin{proof}
	By Theorem \ref{thm: 4}, any minimal $\nv$-point loopless $1$-invariant is equal to the sum with unit coefficients of all trees with $\nv$ vertices, up to proportionality.  Therefore, denote the $\nts$ copies of the minimal loopless 1-invariants by $L^{(c)}_n=\sum_{\alpha^{}_c=1}^{\nv^{\nv -2}}T^{(c)}_{\alpha^{}_{c}}$, for $c=1,\ldots,\nts$.  We add an additional structure to all graphs in this proof: We color all edges in all graphs in $L_\nv^{(c)}$ by a distinct color $(c)$.  Throughout this proof, two graphs are equal if and only if they are the same graph and, in addition, their edges are the same colors.  Taking into account this coloring, all of the plain-graphs in $L\equiv \bigcup_{c = 1}^{\, \nts} L^{(c)}_\nv$ now have unit coefficients, and the number of these plain-graphs is $\left ( \nv^{\nv -2} \right )^{\nts}$.  Moreover, $L$ is the sum over $\alpha^{}_1, \ldots, \alpha^{}_\nts$ of all such $T^{(1)}_{\alpha^{}_1} \cup T^{(2)}_{\alpha^{}_2} \cup \cdots \cup T^{(\nts)}_{\alpha^{}_{\nts}}$'s with unit coefficients.  By Theorem \ref{thm: 4}, there is a unique linear combination of Medusas $\sum_{\beta_c} M^{(c)}_{\beta_c}$ satisfying
\begin{equation*}
	\delta_1 \left ( L^{(c)}_\nv \right ) = \delta_1 \left ( \sum_{\alpha^{}_c=1}^{\nv^{\nv -2}} T^{(c)}_{\alpha^{}_c} \right ) = \rho^{(0)} \left ( \sum_{\beta_c=1}^{\nv (\nv - 1)^{\nv -3}} M^{(c)}_{\beta_c} \right ),
\end{equation*} 
\noindent where each $T_{\alpha^{}_c}^{(c)}$ is a distinct tree and each $M^{(c)}_{\beta_c}$ is a distinct $\pd = 1$ Medusa, consisting of a subgraph tree and a disconnected subgraph
$
\begin{minipage}{1.2cm}
\begin{tikzpicture}
	\draw [thick] (0,0) -- (0.6,0);
        	\node at (0,0) {\scalebox{0.8}{$\bigstar$}};
        	\filldraw [white] (0.6,0) circle [radius=0.115];
	\draw [thick] (0.6,0) circle [radius=0.115];
	\node at (0.6,0) {$\times$};
\end{tikzpicture}
\end{minipage}
$.  In the following, we take the limit $\pd \rightarrow \infty$, so that no graph vanishes.  Note that we have
\begin{align} \label{eq: start}
	& \sum_{\alpha^{}_1} \delta_\infty \left ( T^{(1)}_{\alpha^{}_1} \cup T^{(2)}_{\alpha^{}_2} \cup \cdots \cup T^{(\nts)}_{\alpha^{}_\nts}  \right ) = \sum_{\alpha^{}_1} \left ( \delta_\infty ( T^{(1)}_{\alpha^{}_1}) \right ) \cup T^{(2)}_{\alpha^{}_2} \cup \cdots \cup T^{(\nts)}_{\alpha^{}_\nts} \notag \\
	= & \sum_{\beta_1} \rho^{(0)} \left ( M^{(1)}_{\beta_1} \right ) \cup T^{(2)}_{\alpha^{}_2} \cup \cdots \cup T^{(\nts)}_{\alpha^{}_\nts} + \sum_{\alpha^{}_1} ( \delta_\infty - \delta_1 ) (T^{(1)}_{\alpha^{}_1})  \cup T^{(2)}_{\alpha^{}_2} \cup \cdots \cup T^{(\nts)}_{\alpha^{}_\nts}.
\end{align}

    Define $X_L \equiv \sum_{\beta_1, \alpha^{}_2, \ldots , \alpha^{}_\nts} M_{\beta_1}^{(1)} \cup T_{\alpha^{}_2}^{(2)} \cup \cdots \cup T_{\alpha^{}_\nts}^{(\nts)}$ and $X_R$ to be the sum with unit coefficients of all distinct graphs contained in $\sum_{\beta_1, \alpha^{}_2, \ldots , \alpha^{}_\nts} \chi ( M_{\beta_1}^{(1)} \cup T_{\alpha^{}_2}^{(2)} \cup \cdots \cup T_{\alpha^{}_\nts}^{(\nts)} )$.  In the following, we show that
\begin{equation} \label{eq: equal hyper}
	\rho^{(0)} (X_L) = \rho^{(0)} (X_R).
\end{equation}

\noindent Since $T_{\alpha^{}_c}^{(c)}$, $\alpha^{}_c = 1 , \ldots , \nv^{\nv -2}$, and $M_{\beta_c}^{(c)}$, $\beta_c = 1 , \ldots , \nv ( \nv -1)^{\nv -3}$, are all distinct from each other, all elements in $X_L$ and $X_R$ have unit coefficient.  Therefore, it will suffice to show that any graph in $\rho^{(0)} (X_R)$ is also in $\rho^{(0)} (X_L)$, and vice versa.

\vspace{3mm}

\noindent \textbf{RHS contains LHS:} Let $\Gamma^{ \times }$ be a $\times$-graph in $\rho^{(0)} (X_L)$.  Then, $\Gamma^{ \times }$ is contained in $\rho^{(0)} \bigl( M_{\beta_1}^{(1)} \cup T_{\alpha^{}_2}^{(2)} \cup \cdots \cup T_{\alpha^{}_\nts}^{(\nts)} \bigr)$ for some $\beta_1$, $\alpha^{}_2, \ldots , \alpha^{}_\nts$.  The $\star$-graph, $M_{\beta_1}^{(1)} \cup T_{\alpha^{}_2}^{(2)} \cup \cdots \cup T_{\alpha^{}_\nts}^{(\nts)}$ induces a unique $\chi \bigl( M_{\beta_1}^{(1)} \cup T_{\alpha^{}_2}^{(2)} \cup \cdots \cup T_{\alpha^{}_\nts}^{(\nts)} \bigr)$, such that all $\times$-graphs in $\rho^{(0)} \bigl( M_{\beta_1}^{(1)} \cup T_{\alpha^{}_2}^{(2)} \cup \cdots \cup T_{\alpha^{}_\nts}^{(\nts)} \bigr)$ (including $\Gamma^{ \times }$) are contained in $\rho^{(0)} \circ \chi \bigl( M_{\beta_1}^{(1)} \cup T_{\alpha^{}_2}^{(2)} \cup \cdots \cup T_{\alpha^{}_\nts}^{(\nts)} \bigr)$.  Therefore, $\Gamma^{ \times }$ is in $\rho^{(0)} (X_R)$ and $\rho^{(0)} (X_R)$ contains $\rho^{(0)} (X_L)$.
	
\vspace{3mm}
	
\noindent \textbf{LHS contains RHS:} Let $\Gamma^{ \times }$ be a $\times$-graph in $\rho^{(0)} (X_R)$.  Then, $\Gamma^{ \times }$ is contained in $\rho^{(0)} \circ \chi \bigl( M_{\beta_1}^{(1)} \cup T_{\alpha^{}_2}^{(2)} \cup \cdots \cup T_{\alpha^{}_\nts}^{(\nts)} \bigr)$ for some $\beta_1$, $\alpha^{}_2, \ldots , \alpha^{}_\nts$.  In particular, $\Gamma^{ \times }$ is contained in some $\rho^{(0)} \bigl ( M_{\beta'_1}^{(1)} \cup T_{\alpha'_2}^{(2)} \cup \cdots \cup T_{\alpha'_\nts}^{(\nts)} \bigr )$ with $T_{\alpha'_2}^{(2)} \cup \cdots \cup T_{\alpha'_\nts}^{(\nts)}$ in $\rho^{(0)} \circ \chi \bigl( T_{\alpha^{}_2}^{(2)} \cup \cdots \cup T_{\alpha^{}_\nts}^{(\nts)} \bigr)$.  Since $M_{\beta'_1}^{(1)} \cup T_{\alpha'_2}^{(2)} \cup \cdots \cup T_{\alpha'_\nts}^{(\nts)}$ is in $X_L$, $\Gamma^{ \times }$ is in $\rho^{(0)} (X_L)$ and $\rho^{(0)} (X_L)$ contains $\rho^{(0)} (X_R)$.

\vspace{3mm}

	From \eqref{eq: equal  hyper} we obtain
\begin{equation} \label{eq: equal hyperp}
	\sum_{\beta^{}_1, \ldots, \alpha^{}_\nts} \rho^{(0)} \left ( M^{(1)}_{\beta_1} \right ) \cup T^{(2)}_{\alpha^{}_2} \cup \cdots \cup T^{(\nts)}_{\alpha^{}_\nts} =  \rho^{(0)} \left ( X_R \right ).
\end{equation}
\noindent Similarly,  
\begin{equation} \label{eq: inf minus 1}
	\sum_{\alpha^{}_1, \ldots, \alpha^{}_\nts} \left ( \delta_\infty - \delta_1 \right ) T^{(1)}_{\alpha^{}_1} \cup T^{(2)}_{\alpha^{}_2} \cup \cdots \cup T^{(\nts)}_{\alpha^{}_\nts} = \rho^{(0)} \left ( \tilde{X}_R \right ),
\end{equation}
\noindent with $\tilde{X}_R$ given by the sum with unit coefficients of all graphs contained in 
\begin{equation*}
	\sum_{\alpha^{}_1, \ldots, \alpha^{}_\nts} \chi \left (( \delta_\infty - \delta_1 ) T^{(1)}_{\alpha^{}_1} \cup  T^{(2)}_{\alpha^{}_2} \cup \cdots \cup T^{(\nts)}_{\alpha^{}_\nts} \right ).
\end{equation*} 
\noindent Therefore, by \eqref{eq: start}, \eqref{eq: equal hyperp} and \eqref{eq: inf minus 1}, we conclude that
\begin{align} \label{eq: hyper-Medusa}
	& \delta_\infty ( L ) = \sum_{\alpha^{}_1, \ldots, \alpha^{}_\nts} \delta_\infty \left ( T^{(1)}_{\alpha^{}_1}  \cup T^{(2)}_{\alpha^{}_2} \cup \cdots \cup T^{(\nts)}_{\alpha^{}_\nts}\right ) = \rho^{(0)} \left ( X_R+\tilde{X}_R \right ).
\end{align}

	Finally, switch back to $P = 2 \nts - 1$.  Then graphs with deg$(\times) > 2 \nts - 1$ will vanish simultaneously on both sides of \eqref{eq: hyper-Medusa}, and thus, in $P=2\nts-1$:
\begin{equation} \label{eq: 2p-1}
	\delta_{2\nts - 1} ( L ) = \rho^{(0)} \left ( X_R+\tilde{X}_R \right ).
\end{equation}
\noindent Next we show that any graph, $\Gamma$, in $X_R+\tilde{X}_R$ is a hyper-Medusa.  By construction, $\Gamma$ results from the superposition of graphs with either a $\times$-vertex of degree 1 joined to a $\star$-vertex or a $\times$-vertex of degree larger than 1.  Therefore, deg$(\times)$ in $\Gamma$ satisfies deg$(\times) \geq \nsv + 2 \left ( \nts - \nsv \right )$.  So, with $P = 2 \nts - 1$, deg$(\times) \geq \pd + 1 - \nsv$ and $\Gamma$ is a hyper-Medusa.  Hence, by Lemma \ref{lem: rhohm}, there exists a linear combination, $L_M$, of Medusas, such that $\rho^{(0)} \left ( X_R + \tilde{X}_R \right ) = \rho^{(0)} \left ( L_M \right )$.  Therefore, combined with \eqref{eq: 2p-1}, we obtain $\delta_{2 \nts - 1} ( L ) = \rho^{(0)} ( L_M )$, which proves that $L$ is a $( 2 \nts - 1 )$-invariant.
\end{proof}

	We end our search for $\pd$-invariants with a summary of all invariants that we found:

\begin{Thm} \label{thm: superposition summary}
	For fixed $\nv$, the superposition of any exact $\pd_E$-invariant with the superposition of $\nts$ minimal loopless 1-invariants results in a $\pd$-invariant, provided $\pd_E + 2 \nts \geq \pd$.\footnote{This theorem applies even for $\pd_E < 0$.  Recall that, by Corollary \ref{lem: Pnegative}, an exact $\pd$-invariant for $\pd < 0$ is just any possible linear combination of plain-graphs.  For example, the plain-graph consisting only of empty vertices is an exact $\pd$-invariant for any $\pd < 0$.  } 
\end{Thm}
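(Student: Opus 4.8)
The plan is to reduce the general statement to the two superposition theorems already proved in this section — namely that the superposition of a $\pd$-invariant and an exact $\pd_E$-invariant is a $(\pd + \pd_E + 1)$-invariant, and that the superposition of $\nts$ minimal loopless $1$-invariants is a $(2\nts - 1)$-invariant — together with Corollary \ref{lem: Pnegative}, which tells us that for $\pd_E < 0$ an ``exact $\pd_E$-invariant'' is just an arbitrary linear combination of plain-graphs. First I would dispose of the edge case $\nts = 0$: then the hypothesis $\pd_E \geq \pd$ and the statement asserts that an exact $\pd_E$-invariant is a $\pd$-invariant, which is immediate since being exactly invariant under the degree-$\pd_E$ shift implies invariance (up to total derivatives, indeed exactly) under any lower-degree shift — this follows from Corollary \ref{lem: exact}, as the degree bound ``all vertices of degree $\geq \pd_E + 1$'' implies ``all vertices of degree $\geq \pd + 1$'' when $\pd \leq \pd_E$.

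Next, for $\nts \geq 1$, I would proceed in two stages. Stage one: by the superposition-of-$1$-invariants theorem, the superposition $L_1 \cup \cdots \cup L_\nts$ of $\nts$ minimal loopless $1$-invariants is a $(2\nts - 1)$-invariant; call this $L$. Stage two: apply the superposition-of-an-exact-invariant theorem to $L$ and an exact $\pd_E$-invariant $L_E$, which yields that $L_E \cup L$ is a $\bigl((2\nts - 1) + \pd_E + 1\bigr)$-invariant, i.e.\ a $(\pd_E + 2\nts)$-invariant. Finally, since $\pd_E + 2\nts \geq \pd$ by hypothesis, and since any $\pd'$-invariant is automatically a $\pd$-invariant for $\pd \leq \pd'$ (observed right after \eqref{eq: polyshift}: ``if a term is invariant under a polynomial shift of order $\pd$, then it is also invariant under a polynomial shift of order $\pd'$ with $0 \leq \pd' \leq \pd$''), we conclude that $L_E \cup L$ is a $\pd$-invariant, as claimed. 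The footnote case $\pd_E < 0$ is handled identically: Corollary \ref{lem: Pnegative} licenses treating any linear combination of plain-graphs as an exact $\pd_E$-invariant, and the superposition-of-an-exact-invariant theorem (whose proof only uses the degree bound from Corollary \ref{lem: exact}, which for $\pd_E < 0$ is vacuous) still applies verbatim.

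The one place requiring care — and the main obstacle — is checking that the composition of the two superposition steps is well-posed and that the bookkeeping of the $\pd$-parameters is consistent. In particular, the superposition operation is $\pd$-dependent: the output degree at which one declares the final graph to live must be chosen so that the rules ``deg$(\times) \leq \pd$'' and ``at most $\pd$ edges at the $\times$-vertex'' are applied with the same $\pd$ on both sides of every consistency equation. The proof of the superposition-of-an-exact-invariant theorem already shows that passing from a $\pd$-invariant to its exact-superposition works cleanly at degree $\pd + \pd_E + 1$ because the vertex-degree lower bound on the exact invariant's graphs (Corollary \ref{lem: exact}) exactly matches the shift in the null-graph threshold; I would simply invoke that argument with $\pd$ there replaced by $2\nts - 1$. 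The only genuinely new observation needed is the monotonicity remark — that a $\pd'$-invariant is a $\pd$-invariant for all $0 \le \pd \le \pd'$ — which lets us slide from degree $\pd_E + 2\nts$ down to the possibly-smaller target degree $\pd$; this is elementary but should be stated explicitly since it is the step that uses the inequality $\pd_E + 2\nts \geq \pd$ rather than equality. No new calculation is involved; the content is entirely in correctly chaining the existing theorems.
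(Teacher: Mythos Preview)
Your proposal is correct and is precisely the argument the paper has in mind: the theorem is stated explicitly as ``a summary of all invariants that we found'' and is given no separate proof, because it is the immediate combination of the two preceding theorems in \S\ref{sec: superposition} (superposition with an exact $\pd_E$-invariant raises the degree by $\pd_E+1$; superposition of $\nts$ minimal loopless $1$-invariants gives a $(2\nts-1)$-invariant) together with the monotonicity remark at the start of \S\ref{sec: beyond}. Your handling of the edge cases $\nts=0$ and $\pd_E<0$, and your observation that the superposition of plain-graphs is $\pd$-independent so that the only $\pd$-bookkeeping occurs inside the already-proved consistency equations, are exactly the minor checks needed to make the chain airtight.
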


\noindent We conjecture that the above theorem captures all $\pd$-invariants, up to total derivatives.  Since we have classified all exact invariants and all 1-invariants, it is straightforward to construct the $\pd$-invariants in the above theorem for any specific case.

	Note that we have classified exact invariants and 1-invariants using the two parameters, $\nv$ (number of vertices) and $\Delta$ (number of edges).  Finite connected graphs can always be embedded on a Riemann surface of some genus, in which case Euler's theorem relates $\nv$, $\Delta$ and the number of faces $\nf$ of the embedding to the genus $g$ of the surface.  Therefore, one could also use the parameters $\nv$ and $\nf$ instead to classify invariants\footnote{In principle, $\Delta$ and $\nf$ could also be used, but this seems less natural.}.  Any finite graph that can be embedded into a 2-sphere can also be embedded into a plane, and is known as a planar graph.  In particular, this is true for any graph with $\nv \leq 4$ and also for any tree.  Furthermore, any superposition of planar graphs is again a planar graph.  The number of faces of these planar graphs is exactly the number of ``loops'' when the graph is interpreted as a Feynman diagram\footnote{Thanks to Kurt Hinterbichler for bringing this issue to our attention at the 2014 BCTP Tahoe Summit.}.  One can check this statement for all of the examples in \S\ref{sec: intro to superposition} since they were all generated by superposition of planar graphs.  Note that, in \S\ref{sec: intro to superposition}, except for $\pd = 0$ (which is a trivial case), all superpositions involve trees, so that all superposed graphs are connected.  For example, all of the graphs in Figure \ref{fig: (4,6)} have three faces when embedded into a plane.  Indeed, when interpreted as Feynman diagrams, these graphs have three ``loops".  In general, the superposition of $\nts$ minimal loopless 1-invariants yields graphs have $\nf$ ``loops" as Feynman diagrams, where $\nf$ is given by $\nf = (\nts-1)(\nv-1)$.  Theorem \ref{thm: superposition summary} says that these superpositions will be $\pd$-invariant for $\pd \leq 2\nts-1$.  For example, the superposition of three minimal loopless 1-invariants with $\nv = 4$ produces a 5-invariant with 6 faces.

\subsection{Unlabeled Invariants} \label{sec: unlabeled} 

	So far we have been dealing entirely with labeled graphs, which represent algebraic terms where each $\phi$ is given a distinct label.  But we are primarily interested in invariants where all $\phi$'s are the same.  These are represented by \emph{unlabeled} graphs, that is, where isomorphic graphs are identified with each other.  One may wonder whether or not the labeled $\pd$-invariants capture all of the unlabeled ones.   The following proposition addresses this question, and shows that our restriction to labeled $\pd$-invariants still allows us to find all unlabeled $\pd$-invariants.

\begin{Prop} \label{prop: unlab_lab}
	Given an unlabeled $\pd$-invariant $L_{\emph{unlab}}$, there exists a labeled $\pd$-invariant $L_{\emph{lab}}$, such that $L_{\emph{lab}}$ reduces to an integer multiple of $L_{\emph{unlab}}$ once the labels are removed.
\end{Prop}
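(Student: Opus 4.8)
The plan is to prove Proposition \ref{prop: unlab_lab} by a symmetrization argument: given an unlabeled $\pd$-invariant $L_{\text{unlab}}$, I will produce a canonical labeled preimage by summing over all ways of labeling, and show that this labeled linear combination is again a $\pd$-invariant and descends to a positive integer multiple of $L_{\text{unlab}}$ upon forgetting labels. The key structural fact I will rely on is that the maps $\delta_\pd$ and $\rho$ (Definitions \ref{def: der map} and the variation map) commute with the symmetric group action that permutes the vertex labels $v_1, \ldots, v_\nv$ of the underlying graphs; this is essentially immediate from the definitions, since relabeling a plain-graph and then varying it gives the same multiset of $\times$-graphs as varying first and then relabeling.

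First I would set up notation: let $\pi: \mathcal{L}_{\text{lab}} \to \mathcal{L}_{\text{unlab}}$ denote the forgetful map that sends a labeled graph to its isomorphism class, extended linearly; analogously define $\pi^{\times}$ and $\pi^{\star}$ on the $\times$- and $\star$-graph vector spaces. Because every unlabeled algebraic term is the image under $\pi$ of some labeled graph, I can choose, for each unlabeled graph $[\Gamma]$ appearing in $L_{\text{unlab}}$, one representative labeled graph $\Gamma$, and then form $\widehat{\Gamma} \equiv \sum_{\sigma \in S_\nv} \sigma \cdot \Gamma$, the orbit sum under the relabeling action. Extending by linearity gives $\widehat{L} \equiv \sum_i \alpha_i \widehat{\Gamma_i} \in \mathcal{L}_{\text{lab}}$, where $L_{\text{unlab}} = \sum_i \alpha_i [\Gamma_i]$. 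By construction $\pi(\widehat{\Gamma}) = |\mathrm{Aut}(\Gamma)| \cdot [\Gamma]$, so $\pi(\widehat{L})$ is a (graph-by-graph possibly different-weighted) combination of the $[\Gamma_i]$'s; to get a clean integer multiple one can instead weight each orbit sum by $|S_\nv|/|\mathrm{Aut}(\Gamma_i)| \cdot (\text{something})$ — or, more simply, multiply $L_{\text{unlab}}$ through by $\nv!$ at the outset and take $\widehat{L}$ to be the full sum over all labeled graphs lying over the graphs in $L_{\text{unlab}}$ with the inherited coefficients, so that $\pi(\widehat{L}) = \nv! \cdot L_{\text{unlab}}$ exactly. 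I would state this bookkeeping carefully but not belabor it.

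The heart of the argument is showing $\widehat{L}$ is a labeled $\pd$-invariant, i.e. $\delta_\pd(\widehat{L}) \in \mathcal{R}^{\times}$. Since $L_{\text{unlab}}$ is an unlabeled $\pd$-invariant, there is an unlabeled consistency equation $\delta_\pd(L_{\text{unlab}}) = \rho(L^{\star}_{\text{unlab}})$ for some unlabeled $L^{\star}_{\text{unlab}}$; pulling this back, choose labeled representatives $\widehat{L^{\star}}$ by the same orbit-sum / full-sum-over-labelings procedure. Then I need $\delta_\pd(\widehat{L}) = \rho(\widehat{L^{\star}})$ as labeled linear combinations. This follows from $S_\nv$-equivariance of $\delta_\pd$ and $\rho$ together with the fact that both sides lie over the same unlabeled combination: applying $\pi^{\times}$ to $\delta_\pd(\widehat{L}) - \rho(\widehat{L^{\star}})$ gives zero (that is the unlabeled consistency equation, times $\nv!$), and because $\delta_\pd(\widehat{L})$ and $\rho(\widehat{L^{\star}})$ are each $S_\nv$-invariant elements of $\mathcal{L}^{\times}_{\text{lab}}$ — being built from orbit sums and manifestly equivariant operations — any $S_\nv$-invariant element in the kernel of $\pi^{\times}$ that is also a difference of such symmetrized images must itself vanish, since the restriction of $\pi^{\times}$ to the $S_\nv$-invariant subspace is injective up to the automorphism-count scaling that I will have already normalized away. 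I expect the main obstacle to be exactly this last point: verifying that $\rho$ applied to a symmetrized $\star$-combination genuinely equals the symmetrization of $\rho$ applied to a representative, given that $\rho$ sums over \emph{all} vertices $j_1, \ldots, j_k$ (not just orbit representatives) when distributing derivatives — one must check that the relabeling action and this internal summation interact correctly, in particular that Rule 1 (nullification of graphs with $\deg(\times) > \pd$) is respected uniformly across an orbit, which it is since degree is permutation-invariant. Once equivariance is pinned down, the conclusion $\delta_\pd(\widehat{L}) = \rho(\widehat{L^{\star}}) \in \mathcal{R}^{\times}$ is immediate, $\widehat{L}$ is the desired $L_{\text{lab}}$, and $\pi(L_{\text{lab}}) = \nv! \, L_{\text{unlab}}$ completes the proof.
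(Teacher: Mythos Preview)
Your proposal is correct and follows essentially the same symmetrization strategy as the paper: pick labeled representatives, sum over the $S_\nv$-action, and use that $\delta_\pd$ and $\rho$ commute with relabeling to obtain a labeled consistency equation whose invariant reduces to $\nv!\,L_{\text{unlab}}$. Your phrasing via injectivity of $\pi^{\times}$ on the $S_\nv$-invariant subspace is a slightly cleaner way of packaging what the paper does by explicit coefficient-chasing (particularly for the $\rho$ side, where cancellations among isomorphic labeled $\times$-graphs are handled), but the content is the same.
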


\begin{proof}
	Define $L^{ \times }_{\text{unlab}} = \delta ( L_{\text{unlab}} )$, where $\delta ( L_{\text{unlab}} )$ contains $\delta ( \Gamma_{\text{unlab}} )$ for all $\Gamma_{\text{unlab}}$ in $L_{\text{unlab}}$.  Label $L_{\text{unlab}}$ (i.e., label the vertices from 1 to $\nv$) to form $L_{\text{lab}}$.  This labeling is fiducial since we will eventually sum over all possible labelings.  Do the same for $L^{ \times }_{\text{unlab}}$ to form $L^{ \times }_{\text{lab}}$.  Define $L^{ \times }_{\text{lab}'} = \delta ( L_{\text{lab}} )$, which is a labeling of $L^{ \times }_{\text{unlab}}$, possibly distinct from $L^{ \times }_{\text{lab}}$.  However, if $\Gamma^{ \times }_{\text{lab}}$ in $L^{ \times }_{\text{lab}}$ and $\Gamma^{ \times }_{\text{lab}'}$ in $L^{ \times }_{\text{lab}'}$ reduce to the same $\Gamma^{ \times }_{\text{unlab}}$ in $L^{ \times }_{\text{unlab}}$ once the labels are removed, then $\Gamma^{ \times }_{\text{lab}}$ and $\Gamma^{ \times }_{\text{lab}'}$ are simply related by a permutation.  Therefore,
\begin{equation} \label{eq: lab and labprime}
	\sum_{\sigma \in S_\nv} \sigma \circ \delta ( L_{\text{lab}} ) = \sum_{\sigma \in S_\nv} \sigma ( L^{ \times }_{\text{lab}} ) = \sum_{\sigma \in S_\nv} \sigma ( L^{ \times }_{\text{lab}'} ),
\end{equation}

\noindent where $S_\nv$ is the group of permutations on the $\nv$ vertices.

Since $L_{\text{unlab}}$ is $\pd$-invariant, there exists $L^{ \star }_{\text{unlab}}$ such that $\delta ( L_{\text{unlab}} ) = \rho ( L^{ \star }_{\text{unlab}} )$.  Label $L^{ \star }_{\text{unlab}}$ to form $L^{ \star }_{\text{lab}}$ and define $L^{ \times }_{\text{lab}''} = \rho ( L^{ \star }_{\text{lab}} )$.  Generically, there can be cancellations between isomorphic graphs in $L^{ \times }_{\text{lab}''}$, once the labels are removed, since one $\times$-graph can be associated with more than one $\star$-graph.  Therefore, $L^{ \times }_{\text{lab}''}$ is not necessarily a labeling of $L^{ \times }_{\text{unlab}}$.  Nevertheless, if $\alpha \, \Gamma_{\text{unlab}}^{ \times }$ appears in $L_{\text{unlab}}^{ \times }$ with $\alpha \neq 0$, then all of the graphs, $\Gamma^{ \times }_{1} , \ldots , \Gamma^{ \times }_{k}$ in $L^{ \times }_{\text{lab}''}$ which are isomorphic to $\Gamma^{ \times }_{\text{unlab}}$ appear in $L^{ \times }_{\text{lab}''}$ as a linear combination $\sum_{i=1}^{k} \alpha^{}_i \, \Gamma^{ \times }_{i}$ with $\sum_{i=1}^{k} \alpha_i = \alpha$.  Conversely, if $\sum_{i=1}^{k} \alpha^{}_i \, \Gamma^{ \times }_{i}$ appears in $L^{ \times }_{\text{lab}''}$, but the graph, $\Gamma^{ \times }_{\text{unlab}}$, to which $\Gamma^{ \times }_{i}$ reduces once the labels are removed, does not appear in $L^{ \times }_{\text{unlab}}$, then $\sum_{i=1}^{k} \alpha_i = 0$.  Therefore,
\begin{equation*}
	\sum_{\sigma \in S_\nv} \sum_{i=1}^{k} \alpha_i \cdot \sigma ( \Gamma^{ \times }_{i} ) =
	\sum_{i=1}^{k} \alpha_k \sum_{\sigma \in S_\nv} \sigma ( \Gamma^{ \times }_{\text{lab}} ) = \alpha \sum_{\sigma \in S_\nv} \sigma ( \Gamma^{ \times }_{\text{lab}} ),
\end{equation*}

\noindent which implies
\begin{equation} \label{eq: lab and labpprime}
	\sum_{\sigma \in S_\nv} \sigma \circ \rho ( L^{ \star }_{\text{lab}} ) = \sum_{\sigma \in S_\nv} \sigma ( L^{ \times }_{\text{lab}''} ) = \sum_{\sigma \in S_\nv} \sigma ( L^{ \times }_{\text{lab}} ).
\end{equation}

\noindent Combining Eqs.  $\eqref{eq: lab and labprime}$ and $\eqref{eq: lab and labpprime}$ with the facts that $\sigma \circ \delta = \delta \circ \sigma$ and $\sigma \circ \rho = \rho \circ \sigma$ for each $\sigma \in S_\nv$, yields the desired labeled consistency equation:
\begin{equation*}
	\delta \biggl( \sum_{\sigma \in S_\nv} \sigma ( L_{\text{lab}} ) \biggr) = \rho \biggl( \sum_{\sigma \in S_\nv} \sigma ( L^{ \star }_{\text{lab}} ) \biggr).
\end{equation*}

\noindent $\sum_{\sigma \in S_\nv} \sigma ( L_{\text{lab}} )$ is $\pd$-invariant and reduces to $\nv ! \, L_{\text{unlab}}$ once the labels are removed.
\end{proof}


\section{Coset Construction} \label{AppC}
	
	The standard technique for finding terms which are invariant under a nonlinear realization of a symmetry is to use a coset construction \cite{pl1, pl2, spacetime_symmetry, nonlinear_realization}.  In this appendix we explore the connection between our invariant Lagrangians and this construction.  Although we find that the coset construction can reproduce some of the invariants that we have discovered, using this method is computationally difficult when compared to the graphical method introduced in this paper.  
	
	We first review the coset construction applied to the polynomial shift symmetry as presented in \cite{galileon_WZ, extended_shift}. In dimension $D$, consider the polynomial shift transformations of the Goldstone fields defined in (\ref{eq: polyshift}), accompanied with space-time translations and spatial rotations.  Denote the corresponding generators by $Z, Z^i, \ldots, Z^{i_1 \ldots i_\pd}$ for polynomial shifts, $\mathcal{P}_i$ for spacial translations, $\mathcal{P}_0$ for temporal translations, and $J^{ij}$ for spatial rotations.  Note that there are no boost symmetries.  For the nonlinear realization of space-time symmetry, the translation generators are treated as the broken generators \cite{spacetime_symmetry, nonlinear_realization}.  The Goldstone fields transform as
\begin{align*}
    \delta_{\mathcal{P}_i} \phi & = \partial_i \phi, &%
    \delta_Z \phi & = 1, &%
    \delta_{Z^{i_1 \ldots i_k}} \phi & = \frac{1}{k!} x^{i_1} \ldots x^{i_k},\hspace{2mm} k = 1, \ldots, \pd.
\end{align*}

\noindent The commutators between the operators can be readily calculated,
\begin{align*} 
    \left [ \mathcal{P}_i, Z \right ] & = 0, &%
    \left [ \mathcal{P}_i, Z^{i_1 \ldots i_k} \right ] & = - i \sum_{j} \frac{1}{k} \, \delta_{i}^{j} Z^{i_1 \ldots \hat{\textit{\j}} \ldots i_k}, &%
    \left [ Z^{i_1 \ldots i_k}, Z^{j_1 \ldots j_\ell} \right ] & = 0,
\end{align*}

\noindent where $\hat{\textit{\j}}$ means that the index $j$ is omitted.

    The commutators above given by the generators $\mathcal{P}_0$, $\mathcal{P}_i$, $J^{ij}$, $Z$ and $Z^{i_1 \ldots i_k}$, $k = 1, \ldots, \pd$ define the Lie algebra of a Lie group $G$, and $\mathcal{P}_0$ and $J^{ij}$ correspond to the unbroken normal subgroup $H$.  Take left invariant differential $\nv$-forms on $G/H$ to be $\nv$-cochains, and take the coboundary operator $d^{(k)}$ to be the exterior derivative of differential forms.  Denote the group of $\nv$-cocycles by $\mathcal{Z}^k = \text{Ker } d^{(k)}$, and the group of $k$-coboundaries by $\mathcal{B}^k = \text{Im } d^{(k-1)}$.  The Chevalley-Eilenberg cohomology group $\mathcal{E}^k (G/H)$ is defined to be
\begin{equation*}
	\mathcal{E}^k (G/H) = \mathcal{Z}^k / \mathcal{B}^k,
\end{equation*}

\noindent which is isomorphic to the Lie algebra cohomology $\mathcal{H}^k_0 (G/H; \mathbb{Z})$.  (See \cite{lie_cohomology} for details.)

    We associate the generator $Z$ with the Goldstone field $\phi$.  To each generator $Z^{i_1 \ldots i_k}$, $k = 1, \ldots, \pd$ we associate a symmetric $k$-tensor field $\phi_{i_1 \ldots i_k}$.  Indices can be lowered or raised by Kronecker delta symbols.  The coset space is parametrized by
\begin{equation*}
	g = \exp \left ( i \mathcal{P}_i \, x^i \right ) \exp \left (i Z \phi + i \sum_{k=1}^{\pd} Z^{i_1 \ldots i_k} \phi_{i_1 \ldots i_k} \right ).
\end{equation*}
\noindent The Maurer-Cartan form is
\begin{equation*}
	- i g^{-1} d g = \mathcal{P}_i dx^i + Z ( d \phi + \phi_i \, d x^i ) + \sum_{n=1}^{\pd -1} Z^{i_1 \cdots i_n} \bigl( d \phi_{i_1 \cdots i_n} + \phi_{i_1 \cdots i_n i} dx^i \bigr) + Z^{i_1 \ldots i_\pd} d \phi_{i_1 \ldots i_\pd}.
\end{equation*}
\noindent Therefore, the basis dual to the generators is
\begin{align} \label{eq: 1-forms} 
	\omega_{\mathcal{P}}^{i} &= dx^i, &%
	\omega_{i_1 \cdots i_\pd} &= d \phi_{i_1 \cdots i_\pd}, \notag \\%
	\omega_{\phantom{P}}^{\phantom{i}} &= d \phi + \phi_i \, d x^i, &%
	\omega_{i_1 \cdots i_k} &= d \phi_{i_1 \cdots i_k} + \phi_{i_1 \cdots i_k i} \, dx^i,\hspace{5mm} k = 1, \ldots, \pd -1.
\end{align}

\noindent Moreover,
\begin{align*}
	d \omega_{\mathcal{P}}^{i} &= 0, &%
	d \omega_{i_1 \cdots i_\pd} &= 0,\\
	d \omega_{\phantom{P}}^{\phantom{i}} &= d \phi_i \wedge dx^i, &%
	d \omega_{i_1 \cdots i_k} &= d \phi_{i_1 \cdots i_k i} \wedge dx^i,\hspace{5mm} k = 1, \ldots, \pd -1.
\end{align*}
\noindent The inverse Higgs constraints \cite{inverse_higgs_1975,inverse_higgs} imply the vanishing of $\omega$ and $\omega_{i_1 \cdots i_k}$ ($k < \pd$) in \eqref{eq: 1-forms}:
\begin{equation} \label{eq: inverse Higgs}
	\phi_{i_1 \cdots i_k} = (-1)^k \partial_{i_1} \cdots \partial_{i_k} \phi, \hspace{5mm} k = 1, \ldots, \pd.
\end{equation}

	Having reviewed the coset construction, we now present examples for $\pd =2$ and $3$ (the $\pd =1$ scenario is essentially the same as the Galileon case \cite{galileon_WZ}). 

\vspace{0.3cm}

\noindent \textbf{\textit{\pd}=2 Case}: For $\nv =3$, the cohomology group is trivial for $\sdim <2$.  Therefore, let us start with the simplest nontrivial case, $\sdim =2$.  We are looking for a closed form involving the wedge of three $\omega$'s, which are not $\omega_\mathcal{P}$.  There is one independent cohomology element, with the lowest number of indices on $\omega$'s:
\begin{equation*} 
    \Omega_3 = \epsilon_{ij} \, \omega_{ab} \wedge \omega_{ia} \wedge \omega_{jb} = d \bigl( \epsilon_{ij} \phi_{ab} \, d \phi_{ia} \wedge d \phi_{jb} \bigr) \equiv d \beta_2.
\end{equation*}

\noindent These expressions can be extended to $\sdim \geq 2$:
\begin{align*}
    \Omega_{\sdim +1} &= \epsilon_{ijs^{}_3 \ldots s^{}_\sdim} \, \omega_{ab} \wedge \omega_{ia} \wedge \omega_{jb} \wedge \omega_{\mathcal{P}}^{s^{}_3} \wedge \cdots \wedge \omega_{\mathcal{P}}^{s^{}_\sdim}, \\
    \beta_\sdim &= \epsilon_{ijs^{}_3 \ldots s^{}_\sdim} \, \phi_{ab} \, d \phi_{ia} \wedge d \phi_{jb} \wedge d x^{s^{}_3} \wedge \cdots \wedge d x^{s^{}_\sdim}.
\end{align*}

\noindent Taking the pullback of $\beta_\sdim$ to the spacetime manifold and then applying the inverse Higgs constraints in \eqref{eq: inverse Higgs} gives a term proportional to
\begin{equation*} 
    \epsilon_{ijs^{}_3 \ldots s^{}_\sdim} \epsilon_{k \ell s^{}_3 \ldots s^{}_\sdim} \, \partial_a \partial_b \phi \, \partial_i \partial_k \partial_a \phi \, \partial_j \partial_{\ell} \partial_b \phi,
\end{equation*}

\noindent which is already contained in \cite{extended_shift}.  One can verify that this is equivalent up to integration by parts and overall prefactor to the invariant in \eqref{eq: p=2 n=3 Delta=4}.  That term was found to be invariant for $\pd =3$, and is therefore also invariant for $\pd =2$.

	Next, consider $\nv =4$.  The simplest nontrivial case is $\sdim =3$.  We seek a closed 4-form given as the wedge of four $\omega$'s, which are not $\omega_\mathcal{P}$.  There is one independent cohomology element, with the lowest number of indices on $\omega$'s:
\begin{align*}
    \Omega_4^\prime = & \epsilon_{ijk} \, \omega_a \wedge \omega_{ia} \wedge \omega_{jb} \wedge \omega_{kb} \notag \\
    = & d \biggl[ \epsilon_{ijk} \phi_{ac} \biggl( \frac{1}{2} \phi_{ac} \, d \phi_{jb} + \phi_{bc} \, d \phi_{ja} \biggr) \wedge d \phi_{kb} \wedge dx_i \biggr]  \equiv d \beta_3^\prime
\end{align*}

\noindent These expressions can be extended to $\sdim \geq 3$:
\begin{align*}
    \Omega_{\sdim +1}^\prime &= \epsilon_{ijks^{}_4 \ldots s^{}_\sdim} \, \omega_a \wedge \omega_{ia} \wedge \omega_{jb} \wedge \omega_{kb} \wedge \omega^{s^{}_4}_\mathcal{P} \wedge \cdots \wedge \omega^{s^{}_\sdim}_\mathcal{P}, \\
    \beta_\sdim^\prime &= \epsilon_{ijks^{}_4 \ldots s^{}_\sdim} \phi_{ac} \biggl( \frac{1}{2} \phi_{ac} \, d \phi_{jb} + \phi_{bc} \, d \phi_{ja} \biggr) \wedge d \phi_{kb} \wedge dx^i \wedge dx^{s^{}_4} \wedge \cdots \wedge dx^{s^{}_\sdim}.
\end{align*}

\noindent Taking the pullback of $\beta_\sdim^\prime$ to the spacetime manifold and then applying the inverse Higgs constraints in \eqref{eq: inverse Higgs} gives a term proportional to
\begin{equation*}
    \epsilon_{ijs^{}_3 \ldots s^{}_\sdim} \epsilon_{k \ell s^{}_3 \ldots s^{}_\sdim} \, \partial_{(a} \partial_{\raisebox{-1pt}{$\scriptstyle{b}$}} \phi \, \partial_{c)}\partial_j \partial_{\ell} \phi \, \partial_a \partial_{b} \phi \, \partial_c \partial_i \partial_{k} \phi.
\end{equation*}

\noindent One can verify that this is equivalent up to integration by parts and an overall prefactor to the invariant in \eqref{eq: 245}.  

\vspace{0.3cm}

\noindent \textbf{\textit{\pd}=3 Case}: Let us focus on $\nv =3$.  Again, we start with $\sdim =2$.  There is one independent cohomology element:
\begin{align*}
    \Omega_3 = \epsilon_{ij} \bigl( & \omega_{ab} \wedge \omega_{ia} \wedge \omega_{jb} + 2 \omega_a \wedge \omega_{ib} \wedge \omega_{jab} - 2 \omega_a \wedge \omega_{ia} \wedge \omega_{jbb} \notag \\
    & - \omega \wedge \omega_{iab} \wedge \omega_{jab} + \omega \wedge \omega_{iaa} \wedge \omega_{jbb} \bigr).
\end{align*}
\noindent It is quite a challenge to determine the potential, $\beta_2$, for this $\Omega_3$.  One can appreciate the power of the graphical method at this point: We have already determined that there is one independent $3$-invariant with $\nv =3$.  Therefore, we can immediately conclude without calculation that the pullback of $\beta_2$ must be proportional to the invariant in \eqref{eq: n=3 graph} up to total derivatives.  Again, $\Omega_3$ can be generalized to $\sdim \geq 2$ by wedging the appropriate number of $\omega_\mathcal{P}$'s on the end:
\begin{align*}
    \Omega_{\sdim +1} = \epsilon_{ijs^{}_3 \ldots s^{}_\sdim} \bigl( & \omega_{ab} \wedge \omega_{ia} \wedge \omega_{jb} + 2 \omega_a \wedge \omega_{ib} \wedge \omega_{jab} - 2 \omega_a \wedge \omega_{ia} \wedge \omega_{jbb} \notag \\
    & - \omega \wedge \omega_{iab} \wedge \omega_{jab} + \omega \wedge \omega_{iaa} \wedge \omega_{jbb}  \bigr) \wedge \omega_{\mathcal{P}}^{s^{}_3} \wedge \ldots \wedge \omega_{\mathcal{P}}^{s^{}_\sdim}.
\end{align*}

\bibliographystyle{JHEP}
\bibliography{pol}
\end{document}